\newcommand*{\addFileDependency}[1]{
  \typeout{(#1)}
  \@addtofilelist{#1}
  \IfFileExists{#1}{}{\typeout{No file #1.}}
}
\setlist[enumerate]{leftmargin=1.5cm,rightmargin=0.5cm,noitemsep, topsep=2pt}
\definecolor{clemson-orange}{RGB}{234,106,32}
\definecolor{chicago-maroon}{RGB}{128,0,0}
\definecolor{northwestern-purple}{RGB}{82,0,99}
\definecolor{cornell-red}{RGB}{179,27,27}
\definecolor{sauder-green}{RGB}{171,180,0}
\definecolor{gray}{RGB}{192,192,192}
\definecolor{lawngreen}{RGB}{0,250,154}
\def\BState{\State\hskip-\ALG@thistlm}
\newcommand{\Q}{{\bb Q}}
\newcommand{\E}{\bb E}
\newcommand{\I}{\bb I}
\DeclareMathOperator\supp{supp}
\DeclareMathOperator{\argmax}{arg\,max}
\theoremstyle{definition}
\newtheorem{theorem}{Theorem}
\newtheorem{lemma}{Lemma}
\newtheorem{fact}{Fact}
\newtheorem{corollary}{Corollary}
\newtheorem{proposition}{Proposition}
\newtheorem{definition}{Definition}
\newtheorem{remark}{Remark}
\newtheorem{example}{Example}
\newtheorem{conjecture}{Conjecture}
\newtheorem{step}{Step}
\theoremstyle{definition}
\patchcmd{\@addmarginpar}{\ifodd\c@page}{\ifodd\c@page\@tempcnta\m@ne}{}{}
\crefname{assumption}{Assumption}{Assumptions}
\crefname{lemma}{Lemma}{Lemmas}
\crefname{theorem}{Theorem}{Theorems}
\crefname{corollary}{Corollary}{Corollaries}
\crefname{proposition}{Proposition}{Propositions}
\crefname{claim}{Claim}{Claims}
\crefname{procedure}{Procedure}{Procedures}
\crefname{algorithm}{Algorithm}{Algorithms}
\crefname{figure}{Figure}{Figures}
\crefname{remark}{Remark}{Remarks}
\crefname{section}{Section}{Sections}
\crefname{procedure}{Procedure}{Procedures}
\crefname{example}{Example}{Examples}
\crefname{definition}{Definition}{Definitions}
\crefname{table}{Table}{Tables}
\crefname{equation}{}{}
\crefname{enumi}{}{}
\crefname{conjecture}{Conjecture}{Conjectures}
\crefname{step}{Step}{Steps}
\def \m{\mu}
\def \l{\lambda}
\def \m{p}
\def \Q{\mathcal Q}
\def \I{\mathcal I}
\def \E{\mathbb{E}}
\def \I{\mathcal {I}}
\def \l{\lambda}
\def \Q{\mathcal Q}
\def \ll{\lower1.6truept\hbox{${{\scriptstyle =\atop \scriptstyle <}}$}}
\def \gl{\lower1.6truept\hbox{${{\scriptstyle >\atop \scriptstyle
=}\atop{\scriptstyle <}}$}}
\def \lg{\lower1.6truept\hbox{${{\scriptstyle <\atop \scriptstyle =}\atop
{\scriptstyle >}}$}}
\def\l{\lambda}
\def\tsum{\text{$\sum$}}
\chardef\@x10\chardef\@xv60
\def\tcitime{
\def\@time{%
  \@minute\time\@hour\@minute\divide\@hour\@xv
  \ifnum\@hour<\@x 0\fi\the\@hour:%
  \multiply\@hour\@xv\advance\@minute-\@hour
  \ifnum\@minute<\@x 0\fi\the\@minute
  }}%
\def\QCTOpt[#1]#2{%
  \def\QCTOptB{#1}
  \def\QCTOptA{#2}
}
\def\QCTNOpt#1{%
  \def\QCTOptA{#1}
  \let\QCTOptB\empty
}
\def\Qct{%
  \@ifnextchar[{%
    \QCTOpt}{\QCTNOpt}
}
\def\QCBOpt[#1]#2{%
  \def\QCBOptB{#1}
  \def\QCBOptA{#2}
}
\def\QCBNOpt#1{%
  \def\QCBOptA{#1}
  \let\QCBOptB\empty
}
\def\Qcb{%
  \@ifnextchar[{%
    \QCBOpt}{\QCBNOpt}
}
\def\PrepCapArgs{%
  \ifx\QCBOptA\empty
    \ifx\QCTOptA\empty
      {}%
    \else
      \ifx\QCTOptB\empty
        {\QCTOptA}%
      \else
        [\QCTOptB]{\QCTOptA}%
      \fi
    \fi
  \else
    \ifx\QCBOptA\empty
      {}%
    \else
      \ifx\QCBOptB\empty
        {\QCBOptA}%
      \else
        [\QCBOptB]{\QCBOptA}%
      \fi
    \fi
  \fi
}
\def\GRAPHICSPS#1{%
 \ifcase\GRAPHICSTYPE
   \special{ps: #1}%
 \or
   \special{language "PS", include "#1"}%
 \fi
}%
\def\graffile#1#2#3#4{%
    \bgroup
    \leavevmode
    \@ifundefined{bbl@deactivate}{\def~{\string~}}{\activesoff}
    \raise -#4 \BOXTHEFRAME{%
        \hbox to #2{\raise #3\hbox to #2{\null #1\hfil}}}%
    \egroup
}%
\def\draftbox#1#2#3#4{%
 \leavevmode\raise -#4 \hbox{%
  \frame{\rlap{\protect\tiny #1}\hbox to #2%
   {\vrule height#3 width\z@ depth\z@\hfil}%
  }%
 }%
}%
\newif\ifwasdraft
\def\GRAPHIC#1#2#3#4#5{%
 \ifnum\draft=\@ne\draftbox{#2}{#3}{#4}{#5}%
  \else\graffile{#1}{#3}{#4}{#5}%
  \fi
 }%
\def\addtoLaTeXparams#1{%
    \edef\LaTeXparams{\LaTeXparams #1}}%
\newif\ifBoxFrame \BoxFramefalse
\newif\ifOverFrame \OverFramefalse
\newif\ifUnderFrame \UnderFramefalse
\def\BOXTHEFRAME#1{%
   \hbox{%
      \ifBoxFrame
         \frame{#1}%
      \else
         {#1}%
      \fi
   }%
}
\def\doFRAMEparams#1{\BoxFramefalse\OverFramefalse\UnderFramefalse\readFRAMEparams#1\end}%
\def\readFRAMEparams#1{%
 \ifx#1\end%
  \let\next=\relax
  \else
  \ifx#1i\dispkind=\z@\fi
  \ifx#1d\dispkind=\@ne\fi
  \ifx#1f\dispkind=\tw@\fi
  \ifx#1t\addtoLaTeXparams{t}\fi
  \ifx#1b\addtoLaTeXparams{b}\fi
  \ifx#1p\addtoLaTeXparams{p}\fi
  \ifx#1h\addtoLaTeXparams{h}\fi
  \ifx#1X\BoxFrametrue\fi
  \ifx#1O\OverFrametrue\fi
  \ifx#1U\UnderFrametrue\fi
  \ifx#1w
    \ifnum\draft=1\wasdrafttrue\else\wasdraftfalse\fi
    \draft=\@ne
  \fi
  \let\next=\readFRAMEparams
  \fi
 \next
 }%
\def\IFRAME#1#2#3#4#5#6{%
      \bgroup
      \let\QCTOptA\empty
      \let\QCTOptB\empty
      \let\QCBOptA\empty
      \let\QCBOptB\empty
      #6%
      \parindent=0pt%
      \leftskip=0pt
      \rightskip=0pt
      \setbox0 = \hbox{\QCBOptA}%
      \@tempdima = #1\relax
      \ifOverFrame
          \typeout{This is not implemented yet}%
          \show\HELP
      \else
         \ifdim\wd0>\@tempdima
            \advance\@tempdima by \@tempdima
            \ifdim\wd0 >\@tempdima
               \textwidth=\@tempdima
               \setbox1 =\vbox{%
                  \noindent\hbox to \@tempdima{\hfill\GRAPHIC{#5}{#4}{#1}{#2}{#3}\hfill}\\%
                  \noindent\hbox to \@tempdima{\parbox[b]{\@tempdima}{\QCBOptA}}%
               }%
               \wd1=\@tempdima
            \else
               \textwidth=\wd0
               \setbox1 =\vbox{%
                 \noindent\hbox to \wd0{\hfill\GRAPHIC{#5}{#4}{#1}{#2}{#3}\hfill}\\%
                 \noindent\hbox{\QCBOptA}%
               }%
               \wd1=\wd0
            \fi
         \else
            \ifdim\wd0>0pt
              \hsize=\@tempdima
              \setbox1 =\vbox{%
                \unskip\GRAPHIC{#5}{#4}{#1}{#2}{0pt}%
                \break
                \unskip\hbox to \@tempdima{\hfill \QCBOptA\hfill}%
              }%
              \wd1=\@tempdima
           \else
              \hsize=\@tempdima
              \setbox1 =\vbox{%
                \unskip\GRAPHIC{#5}{#4}{#1}{#2}{0pt}%
              }%
              \wd1=\@tempdima
           \fi
         \fi
         \@tempdimb=\ht1
         \advance\@tempdimb by \dp1
         \advance\@tempdimb by -#2%
         \advance\@tempdimb by #3%
         \leavevmode
         \raise -\@tempdimb \hbox{\box1}%
      \fi
      \egroup%
}%
\def\DFRAME#1#2#3#4#5{%
 \begin{center}
     \let\QCTOptA\empty
     \let\QCTOptB\empty
     \let\QCBOptA\empty
     \let\QCBOptB\empty
     \ifOverFrame 
        #5\QCTOptA\par
     \fi
     \GRAPHIC{#4}{#3}{#1}{#2}{\z@}
     \ifUnderFrame 
        \nobreak\par\nobreak#5\QCBOptA
     \fi
 \end{center}%
 }%
\def\FFRAME#1#2#3#4#5#6#7{%
 \begin{figure}[#1]%
  \let\QCTOptA\empty
  \let\QCTOptB\empty
  \let\QCBOptA\empty
  \let\QCBOptB\empty
  \ifOverFrame
    #4
    \ifx\QCTOptA\empty
    \else
      \ifx\QCTOptB\empty
        \caption{\QCTOptA}%
      \else
        \caption[\QCTOptB]{\QCTOptA}%
      \fi
    \fi
    \ifUnderFrame\else
      \label{#5}%
    \fi
  \else
    \UnderFrametrue%
  \fi
  \begin{center}\GRAPHIC{#7}{#6}{#2}{#3}{\z@}\end{center}%
  \ifUnderFrame
    #4
    \ifx\QCBOptA\empty
      \caption{}%
    \else
      \ifx\QCBOptB\empty
        \caption{\QCBOptA}%
      \else
        \caption[\QCBOptB]{\QCBOptA}%
      \fi
    \fi
    \label{#5}%
  \fi
  \end{figure}%
 }%
\def\makeactives{
  \catcode`\"=\active
  \catcode`\;=\active
  \catcode`\:=\active
  \catcode`\'=\active
  \catcode`\~=\active
}
   \gdef\activesoff{%
      \def"{\string"}
      \def;{\string;}
      \def:{\string:}
      \def'{\string'}
      \def~{\string~}
    }
\def\FRAME#1#2#3#4#5#6#7#8{%
 \bgroup
 \ifnum\draft=\@ne
   \wasdrafttrue
 \else
   \wasdraftfalse%
 \fi
 \def\LaTeXparams{}%
 \dispkind=\z@
 \def\LaTeXparams{}%
 \doFRAMEparams{#1}%
 \ifnum\dispkind=\z@\IFRAME{#2}{#3}{#4}{#7}{#8}{#5}\else
  \ifnum\dispkind=\@ne\DFRAME{#2}{#3}{#7}{#8}{#5}\else
   \ifnum\dispkind=\tw@
    \edef\@tempa{\noexpand\FFRAME{\LaTeXparams}}%
    \@tempa{#2}{#3}{#5}{#6}{#7}{#8}%
    \fi
   \fi
  \fi
  \ifwasdraft\draft=1\else\draft=0\fi{}%
  \egroup
 }%
\def\TEXUX#1{"texux"}
\long\def\QQQ#1#2{%
     \long\expandafter\def\csname#1\endcsname{#2}}%
\long\def\QQA#1#2{}%
\def\QTR#1#2{{\csname#1\endcsname #2}}
\def\EXPAND#1[#2]#3{}%
\def\NOEXPAND#1[#2]#3{}%
\def\LaTeXparent#1{}%
\def\ChildStyles#1{}%
\def\ChildDefaults#1{}%
\def\QTagDef#1#2#3{}%
  \providecommand{\UNICODE}[2][]{}
\def\QQfnmark#1{\footnotemark}
 \def\abstract{%
  \if@twocolumn
   \section*{Abstract (Not appropriate in this style!)}%
   \else \small 
   \begin{center}{\bf Abstract\vspace{-.5em}\vspace{\z@}}\end{center}%
   \quotation 
   \fi
  }%
   \def\registered{\relax\ifmmode{}\r@gistered
                    \else$\m@th\r@gistered$\fi}%
 \def\r@gistered{^{\ooalign
  {\hfil\raise.07ex\hbox{$\scriptstyle\rm\text{R}$}\hfil\crcr
  \mathhexbox20D}}}}{}%
\newdimen\theight
\def\Column{%
 \vadjust{\setbox\z@=\hbox{\scriptsize\quad\quad tcol}%
  \theight=\ht\z@\advance\theight by \dp\z@\advance\theight by \lineskip
  \kern -\theight \vbox to \theight{%
   \rightline{\rlap{\box\z@}}%
   \vss
   }%
  }%
 }%
\def\qed{%
 \ifhmode\unskip\nobreak\fi\ifmmode\ifinner\else\hskip5\p@\fi\fi
 \hbox{\hskip5\p@\vrule width4\p@ height6\p@ depth1.5\p@\hskip\p@}%
 }%
\def\miss{\hbox{\vrule height2\p@ width 2\p@ depth\z@}}%
\def\tcol#1{{\baselineskip=6\p@ \vcenter{#1}} \Column}  %
\def\newfmtname{LaTeX2e}
\def\alpha{{\Greekmath 010B}}%
\def\beta{{\Greekmath 010C}}%
\def\gamma{{\Greekmath 010D}}%
\def\delta{{\Greekmath 010E}}%
\def\epsilon{{\Greekmath 010F}}%
\def\zeta{{\Greekmath 0110}}%
\def\eta{{\Greekmath 0111}}%
\def\theta{{\Greekmath 0112}}%
\def\iota{{\Greekmath 0113}}%
\def\kappa{{\Greekmath 0114}}%
\def\lambda{{\Greekmath 0115}}%
\def\mu{{\Greekmath 0116}}%
\def\nu{{\Greekmath 0117}}%
\def\xi{{\Greekmath 0118}}%
\def\pi{{\Greekmath 0119}}%
\def\rho{{\Greekmath 011A}}%
\def\sigma{{\Greekmath 011B}}%
\def\tau{{\Greekmath 011C}}%
\def\upsilon{{\Greekmath 011D}}%
\def\phi{{\Greekmath 011E}}%
\def\chi{{\Greekmath 011F}}%
\def\psi{{\Greekmath 0120}}%
\def\omega{{\Greekmath 0121}}%
\def\varepsilon{{\Greekmath 0122}}%
\def\vartheta{{\Greekmath 0123}}%
\def\varpi{{\Greekmath 0124}}%
\def\varrho{{\Greekmath 0125}}%
\def\varsigma{{\Greekmath 0126}}%
\def\varphi{{\Greekmath 0127}}%
\def\nabla{{\Greekmath 0272}}
\def\FindBoldGroup{%
   {\setbox0=\hbox{$\mathbf{x\global\edef\theboldgroup{\the\mathgroup}}$}}%
}
\def\Greekmath#1#2#3#4{%
    \if@compatibility
        \ifnum\mathgroup=\symbold
           \mathchoice{\mbox{\boldmath$\displaystyle\mathchar"#1#2#3#4$}}%
                      {\mbox{\boldmath$\textstyle\mathchar"#1#2#3#4$}}%
                      {\mbox{\boldmath$\scriptstyle\mathchar"#1#2#3#4$}}%
                      {\mbox{\boldmath$\scriptscriptstyle\mathchar"#1#2#3#4$}}%
        \else
           \mathchar"#1#2#3#4%
        \fi 
    \else 
        \FindBoldGroup
        \ifnum\mathgroup=\theboldgroup 
           \mathchoice{\mbox{\boldmath$\displaystyle\mathchar"#1#2#3#4$}}%
                      {\mbox{\boldmath$\textstyle\mathchar"#1#2#3#4$}}%
                      {\mbox{\boldmath$\scriptstyle\mathchar"#1#2#3#4$}}%
                      {\mbox{\boldmath$\scriptscriptstyle\mathchar"#1#2#3#4$}}%
        \else
           \mathchar"#1#2#3#4%
        \fi     	    
	  \fi}
\newif\ifGreekBold  \GreekBoldfalse
\let\SAVEPBF=\pbf
\def\pbf{\GreekBoldtrue\SAVEPBF}%
  \newcounter{equationnumber}  
  \def\mathletters{%
     \addtocounter{equation}{1}
     \edef\@currentlabel{\theequation}%
     \setcounter{equationnumber}{\c@equation}
     \setcounter{equation}{0}%
     \edef\theequation{\@currentlabel\noexpand\alph{equation}}%
  }
    \def\BibTeX{{\rm B\kern-.05em{\sc i\kern-.025em b}\kern-.08em
                 T\kern-.1667em\lower.7ex\hbox{E}\kern-.125emX}}}{}%
\def\AmS{{\protect\usefont{OMS}{cmsy}{m}{n}%
                A\kern-.1667em\lower.5ex\hbox{M}\kern-.125emS}}}{}%
\def\@@eqncr{\let\@tempa\relax
    \ifcase\@eqcnt \def\@tempa{& & &}\or \def\@tempa{& &}%
      \else \def\@tempa{&}\fi
     \@tempa
     \if@eqnsw
        \iftag@
           \@taggnum
        \else
           \@eqnnum\stepcounter{equation}%
        \fi
     \fi
     \global\tag@false
     \global\@eqnswtrue
     \global\@eqcnt\z@\cr}
\def\TCItag{\@ifnextchar*{\@TCItagstar}{\@TCItag}}
\def\@TCItag#1{%
    \global\tag@true
    \global\def\@taggnum{(#1)}}
\def\@TCItagstar*#1{%
    \global\tag@true
    \global\def\@taggnum{#1}}
\def\tsum{\mathop{\textstyle \sum }}%
\def\dsum{\mathop{\displaystyle \sum }}%
\def\dprod{\mathop{\displaystyle \prod }}%
\let\DOTSI\relax
\def\RIfM@{\relax\ifmmode}%
\def\FN@{\futurelet\next}%
\def\iint{\DOTSI\intno@\tw@\FN@\ints@}%
\def\iiint{\DOTSI\intno@\thr@@\FN@\ints@}%
\def\iiiint{\DOTSI\intno@4 \FN@\ints@}%
\def\idotsint{\DOTSI\intno@\z@\FN@\ints@}%
\def\ints@{\findlimits@\ints@@}%
\newif\iflimtoken@
\newif\iflimits@
\def\findlimits@{\limtoken@true\ifx\next\limits\limits@true
 \else\ifx\next\nolimits\limits@false\else
 \limtoken@false\ifx\ilimits@\nolimits\limits@false\else
 \ifinner\limits@false\else\limits@true\fi\fi\fi\fi}%
\def\multint@{\int\ifnum\intno@=\z@\intdots@                          
 \else\intkern@\fi                                                    
 \ifnum\intno@>\tw@\int\intkern@\fi                                   
 \ifnum\intno@>\thr@@\int\intkern@\fi                                 
 \int}
\def\multintlimits@{\intop\ifnum\intno@=\z@\intdots@\else\intkern@\fi
 \ifnum\intno@>\tw@\intop\intkern@\fi
 \ifnum\intno@>\thr@@\intop\intkern@\fi\intop}%
\def\intic@{%
    \mathchoice{\hskip.5em}{\hskip.4em}{\hskip.4em}{\hskip.4em}}%
\def\negintic@{\mathchoice
 {\hskip-.5em}{\hskip-.4em}{\hskip-.4em}{\hskip-.4em}}%
\def\ints@@{\iflimtoken@                                              
 \def\ints@@@{\iflimits@\negintic@
   \mathop{\intic@\multintlimits@}\limits                             
  \else\multint@\nolimits\fi                                          
  \eat@}
 \else                                                                
 \def\ints@@@{\iflimits@\negintic@
  \mathop{\intic@\multintlimits@}\limits\else
  \multint@\nolimits\fi}\fi\ints@@@}%
\def\intkern@{\mathchoice{\!\!\!}{\!\!}{\!\!}{\!\!}}%
\def\plaincdots@{\mathinner{\cdotp\cdotp\cdotp}}%
\def\intdots@{\mathchoice{\plaincdots@}%
 {{\cdotp}\mkern1.5mu{\cdotp}\mkern1.5mu{\cdotp}}%
 {{\cdotp}\mkern1mu{\cdotp}\mkern1mu{\cdotp}}%
 {{\cdotp}\mkern1mu{\cdotp}\mkern1mu{\cdotp}}}%
\def\RIfM@{\relax\protect\ifmmode}
\def\text{\RIfM@\expandafter\text@\else\expandafter\mbox\fi}
\let\nfss@text\text
\def\text@#1{\mathchoice
   {\textdef@\displaystyle\f@size{#1}}%
   {\textdef@\textstyle\tf@size{\firstchoice@false #1}}%
   {\textdef@\textstyle\sf@size{\firstchoice@false #1}}%
   {\textdef@\textstyle \ssf@size{\firstchoice@false #1}}%
   \glb@settings}
\def\textdef@#1#2#3{\hbox{{%
                    \everymath{#1}%
                    \let\f@size#2\selectfont
                    #3}}}
\newif\iffirstchoice@
\def\Let@{\relax\iffalse{\fi\let\\=\cr\iffalse}\fi}%
\def\vspace@{\def\vspace##1{\crcr\noalign{\vskip##1\relax}}}%
\def\multilimits@{\bgroup\vspace@\Let@
 \baselineskip\fontdimen10 \scriptfont\tw@
 \advance\baselineskip\fontdimen12 \scriptfont\tw@
 \lineskip\thr@@\fontdimen8 \scriptfont\thr@@
 \lineskiplimit\lineskip
 \vbox\bgroup\ialign\bgroup\hfil$\m@th\scriptstyle{##}$\hfil\crcr}%
\def\Sb{_\multilimits@}%
\def\endSb{\crcr\egroup\egroup\egroup}%
\def\Sp{^\multilimits@}%
\newdimen\ex@
\def\rightarrowfill@#1{$#1\m@th\mathord-\mkern-6mu\cleaders
 \hbox{$#1\mkern-2mu\mathord-\mkern-2mu$}\hfill
 \mkern-6mu\mathord\rightarrow$}%
\def\leftarrowfill@#1{$#1\m@th\mathord\leftarrow\mkern-6mu\cleaders
 \hbox{$#1\mkern-2mu\mathord-\mkern-2mu$}\hfill\mkern-6mu\mathord-$}%
\def\leftrightarrowfill@#1{$#1\m@th\mathord\leftarrow
\mkern-6mu\cleaders
 \hbox{$#1\mkern-2mu\mathord-\mkern-2mu$}\hfill
 \mkern-6mu\mathord\rightarrow$}%
\def\overrightarrow{\mathpalette\overrightarrow@}%
\def\overrightarrow@#1#2{\vbox{\ialign{##\crcr\rightarrowfill@#1\crcr
 \noalign{\kern-\ex@\nointerlineskip}$\m@th\hfil#1#2\hfil$\crcr}}}%
\def\overleftarrow{\mathpalette\overleftarrow@}%
\def\overleftarrow@#1#2{\vbox{\ialign{##\crcr\leftarrowfill@#1\crcr
 \noalign{\kern-\ex@\nointerlineskip}$\m@th\hfil#1#2\hfil$\crcr}}}%
\def\overleftrightarrow{\mathpalette\overleftrightarrow@}%
\def\overleftrightarrow@#1#2{\vbox{\ialign{##\crcr
   \leftrightarrowfill@#1\crcr
 \noalign{\kern-\ex@\nointerlineskip}$\m@th\hfil#1#2\hfil$\crcr}}}%
\def\underrightarrow{\mathpalette\underrightarrow@}%
\def\underrightarrow@#1#2{\vtop{\ialign{##\crcr$\m@th\hfil#1#2\hfil
  $\crcr\noalign{\nointerlineskip}\rightarrowfill@#1\crcr}}}%
\def\underleftarrow{\mathpalette\underleftarrow@}%
\def\underleftarrow@#1#2{\vtop{\ialign{##\crcr$\m@th\hfil#1#2\hfil
  $\crcr\noalign{\nointerlineskip}\leftarrowfill@#1\crcr}}}%
\def\underleftrightarrow{\mathpalette\underleftrightarrow@}%
\def\underleftrightarrow@#1#2{\vtop{\ialign{##\crcr$\m@th
  \hfil#1#2\hfil$\crcr
 \noalign{\nointerlineskip}\leftrightarrowfill@#1\crcr}}}%
\def\qopnamewl@#1{\mathop{\operator@font#1}\nlimits@}
\let\nlimits@\displaylimits
\def\setboxz@h{\setbox\z@\hbox}
\def\varlim@#1#2{\mathop{\vtop{\ialign{##\crcr
 \hfil$#1\m@th\operator@font lim$\hfil\crcr
 \noalign{\nointerlineskip}#2#1\crcr
 \noalign{\nointerlineskip\kern-\ex@}\crcr}}}}
 \def\rightarrowfill@#1{\m@th\setboxz@h{$#1-$}\ht\z@\z@
  $#1\copy\z@\mkern-6mu\cleaders
  \hbox{$#1\mkern-2mu\box\z@\mkern-2mu$}\hfill
  \mkern-6mu\mathord\rightarrow$}
\def\leftarrowfill@#1{\m@th\setboxz@h{$#1-$}\ht\z@\z@
  $#1\mathord\leftarrow\mkern-6mu\cleaders
  \hbox{$#1\mkern-2mu\copy\z@\mkern-2mu$}\hfill
  \mkern-6mu\box\z@$}
\def\projlim{\qopnamewl@{proj\,lim}}
\def\injlim{\qopnamewl@{inj\,lim}}
\def\varinjlim{\mathpalette\varlim@\rightarrowfill@}
\def\varprojlim{\mathpalette\varlim@\leftarrowfill@}
\def\varliminf{\mathpalette\varliminf@{}}
\def\varliminf@#1{\mathop{\underline{\vrule\@depth.2\ex@\@width\z@
   \hbox{$#1\m@th\operator@font lim$}}}}
\def\varlimsup{\mathpalette\varlimsup@{}}
\def\varlimsup@#1{\mathop{\overline
  {\hbox{$#1\m@th\operator@font lim$}}}}
\def\align{\@verbatim \frenchspacing\@vobeyspaces \@alignverbatim
You are using the "align" environment in a style in which it is not defined.}
\let\csname endalign*\endcsname =\endtrivlist
\def\alignat{\@verbatim \frenchspacing\@vobeyspaces \@alignatverbatim
You are using the "alignat" environment in a style in which it is not defined.}
\let\csname endalignat*\endcsname =\endtrivlist
\def\xalignat{\@verbatim \frenchspacing\@vobeyspaces \@xalignatverbatim
You are using the "xalignat" environment in a style in which it is not defined.}
\let\csname endxalignat*\endcsname =\endtrivlist
\def\gather{\@verbatim \frenchspacing\@vobeyspaces \@gatherverbatim
You are using the "gather" environment in a style in which it is not defined.}
\let\csname endgather*\endcsname =\endtrivlist
\def\multiline{\@verbatim \frenchspacing\@vobeyspaces \@multilineverbatim
You are using the "multiline" environment in a style in which it is not defined.}
\let\csname endmultiline*\endcsname =\endtrivlist
\def\arrax{\@verbatim \frenchspacing\@vobeyspaces \@arraxverbatim
You are using a type of "array" construct that is only allowed in AmS-LaTeX.}
\def\tabulax{\@verbatim \frenchspacing\@vobeyspaces \@tabulaxverbatim
You are using a type of "tabular" construct that is only allowed in AmS-LaTeX.}
\let\csname endarrax*\endcsname =\endtrivlist
\let\csname endtabulax*\endcsname =\endtrivlist
 \def\endequation{%
     \ifmmode\ifinner 
      \iftag@
        \addtocounter{equation}{-1} 
        $\hfil
           \displaywidth\linewidth\@taggnum\egroup \endtrivlist
        \global\tag@false
        \global\@ignoretrue   
      \else
        $\hfil
           \displaywidth\linewidth\@eqnnum\egroup \endtrivlist
        \global\tag@false
        \global\@ignoretrue 
      \fi
     \else   
      \iftag@
        \addtocounter{equation}{-1} 
        \eqno \hbox{\@taggnum}
        \global\tag@false%
        $$\global\@ignoretrue
      \else
        \eqno \hbox{\@eqnnum}
        $$\global\@ignoretrue
      \fi
     \fi\fi
 } 
 \newif\iftag@ \tag@false
 \def\TCItag{\@ifnextchar*{\@TCItagstar}{\@TCItag}}
 \def\@TCItag#1{%
     \global\tag@true
     \global\def\@taggnum{(#1)}}
 \def\@TCItagstar*#1{%
     \global\tag@true
     \global\def\@taggnum{#1}}
     \def\tag{\@ifnextchar*{\@tagstar}{\@tag}}
     \def\@tag#1{%
         \global\tag@true
         \global\def\@taggnum{(#1)}}
     \def\@tagstar*#1{%
         \global\tag@true
         \global\def\@taggnum{#1}}
\begin{document}

\title{\textbf{Optimal Queue Design\thanks{%
We are grateful to Ethan Che, Laurens Debo, Laura Doval, Drew Fudenberg,  Refael Hassin, Moshe Haviv, Yash Kanoria, Krishnamurthy
Iyer, Ioannis Karatzas, Jinwoo Kim, Jacob Leshno, Shengwu Li, Vahideh Manshadi,  Chiara Margaria, Afshin Nikzad, Chris
Ryan, Robert Shumsky, Eduardo Teixeira, and seminar participants at Chicago-Booth, Columbia DRO-IEOR, Harvard/MIT, NYU-Stern, Dartmouth-Tuck, NUS, North American Summer Meeting of ES (2023 Semi-plenary), KER Conference, ACM-EC, INFORMS, Hong-Kong virtual,  VSET, AMET, and Toulouse,  for their helpful comments. We
acknowledge research assistance from Will Grimme, Dong Woo Hahm, and Sara Shahanaghi. Yeon-Koo Che is
supported by the Ministry of Education of the Republic of Korea and the
National Research Foundation of Korea (NRF-2020S1A5A2A03043516) }}}
\author{Yeon-Koo Che\thanks{
Department of Economics, Columbia University, USA. Email: \href{mailto: yeonkooche@gmail.com}%
{\texttt{yeonkooche@gmail.com}}.} \, \and \, Olivier Tercieux\thanks{
Department of Economics, Paris School of Economics, France. \ Email: \href{mailto: tercieux@pse.ens.fr}%
{\texttt{tercieux@pse.ens.fr}}.}}
\date{\today \endgraf }
\maketitle

\begin{abstract}
We study the optimal method for rationing scarce resources  through a queue system. The designer controls agents'
\textit{entry} into a queue  and their \textit{exit}, their
\textit{service priority}---or \textit{queueing discipline}---as well as their \textit{information} about queue priorities, while providing them with the incentive    to \textit{join} the queue
and, importantly, to \textit{stay} in the queue, when recommended by the designer. Under a mild condition, the optimal mechanism induces agents to enter up to a certain queue length and never removes any agents from the queue; serves them  according to a
first-come-first-served (FCFS) rule; and provides them with no information
throughout the process beyond the recommendations they receive.  FCFS is also necessary for optimality in a rich domain. We
identify a novel role for queueing disciplines in regulating agents'
beliefs and their dynamic incentives, and uncover a hitherto unrecognized virtue of FCFS in this regard. \vskip0.2cm \noindent \textbf{%
JEL Classification Numbers}: C78, C61, D47, D83, D61 \newline
\textbf{Keywords:} Queueing disciplines, information design, mechanism
design, dynamic matching.
\end{abstract}

\setstretch{1.22} 

\section{Introduction}

As a method for allocating scarce resources, queueing, or ``waiting in line,'' remains as old and ubiquitous as its equally-celebrated brethren---market-clearing prices.  Unlike the price mechanism, however, queueing is time-consuming and imposes deadweight losses for the agents in the queue.  To this date, providing and managing the incentives to queue remains the fundamental challenge for businesses that must rely on queueing for providing goods and services.

In managing the queueing incentives,  real-world queues often deploy  several instruments.   First, they often control agents'  entry into the queue, and sometimes their exit. For instance, service call
centers sometimes encourage customers to wait in line (i.e., to be put on
hold); other times, presumably in the face of high call volume, they tell
customers to try another time. Some call centers ask customers  to leave the queue and return later.

Second, they decide how to prioritize service  among agents in the queue.  In this regard, \textit{first-come-first-served} (\textsf{FCFS}) is the oldest and
by far the most common queue discipline, but \textit{service-in-random-order}
(\textsf{SIRO}) which assigns priority at random, has been also used. Some authors
have proposed other rules such as \textit{last-come-first-served} (\textsf{LCFS})
(e.g., \cite{hassin1985}, \cite{su-zenios2004}, and \cite{platz2017}).

Finally, they can often control the information available to an agent, both when he arrives at
the queue and while he is in the queue.
Many call centers keep the customers completely in the dark about the queue
length, their relative positions, or their estimated waiting times. Similarly, many offices for social housing do not disclose any information on
positions on waiting lists.\footnote{%
This is the case, for instance, for several housing choice voucher programs
in California, e.g., \href{http://www.plumascdc.org/housingFAQ.html}{PCCDS
Housing Service} or \href{http://www.haca.net/applicants/current-wait-list-applicants/}%
{HACA} among others.}  Meanwhile, other  systems provide customers with some information. For instance, popular
ride-hailing apps provide a customer with not only the estimated arrival
time of a vehicle but also its current location on a map.

We ask: {\it how should the queue system be chosen along these three dimensions?} To ask this question, we consider a queueing model in which agents' arrival and  servicing follow general Markov processes.  As in the standard model (e.g., \cite{naor1969regulation}),  agents have homogeneous preferences; they realize some positive lump-sum surplus from service and incur linear costs from waiting until  the service concludes. Given these primitive processes, the designer chooses a queue system that is incentive compatible.  While our designer can keep an agent from joining the queue or remove one
from the queue, she cannot coerce an agent to enter the queue or to stay in the queue against his will. In other words, when recommended to either join or stay in a queue, an agent
must have the incentive to obey this recommendation given the information that he
has.  Subject to this incentive constraint, the designer maximizes a  weighted sum of the agents' welfare and the service provider's profit. Since the
weight is arbitrary, the designer could be a service provider who maximizes the profit, a consumer advocate who maximizes agents' welfare, or
a regulator who values both.


 The queue system, together with the primitive arrival and service process, induces a Markov chain on the length of the queue. Our analysis focuses on the steady state, or the invariant distribution, of this Markov chain.
Under a very mild \textit{regularity condition} on the
process, our answer is strikingly simple and consistent with many observed
practices of queue design.
(i) The optimal queue design has a \emph{cutoff} policy: namely,
there exists a maximal queue length $K\ge 0$ such that agents are
recommended to enter the queue if and only if its length is less than $K$.%
\footnote{\label{fn:eq class} {When the queue length is $K-1$, an agent is
recommended to enter with a positive probability possibly equal to one. If
this probability is less than one, the entry is ``rationed'' at $K-1$.
}}
(ii) Those who join the queue are then prioritized to receive a
service according to FCFS. (iii) No information is provided to
agents beyond the recommendations they receive to join or to stay in the
queue.\footnote{%
Since recommendations contain information about the state, this policy
should not be confused with ``no information'' authors often use, which
refers to ``no communication'' whatsoever. Agents can make Bayesian
inferences on their expected waiting times, based on the recommendation they
receive, the queue design that the designer commits to, and the elapsed time
after joining the queue.}

 Result (i) (shown in \cref{sec:cutoff}) means that one can achieve
an optimal queue design, without removing agents or incentivizing them to
leave the queue once they join the queue.
Reneging---or abandonment of the queue---is then never part of
our optimal queue behavior.\footnote{Removal of agents can only be consistent with optimality if it occurs when the queue is
full or near full.} Results (ii) and (iii) (both shown in \cref{sec:FCFS}) mean that, at least in the canonical model we
consider, the most tried-and-true queueing norm is (at least weakly) better
than any others, provided that agents receive no information beyond
the recommendations from the designer.

 The optimal design we identify is consistent with  many commonly-observed queue practices. The cutoff policy conforms to the standard
practice of capping the queue length at some level (e.g., offices for social housing often cap waiting lists when they are too long). The optimality of FCFS accords well with its prevalent use in practice.   The \textit{no
information beyond recommendation} policy also conforms to standard practice in call centers which often put customers on hold with little or no information. Similarly, as we already pointed out, offices for social housing often provide applicants with very limited information on their position on the list. Offering a rough estimate on the waiting time, another common practice, is also consistent with our policy, which can be implemented via two estimates, a short estimate that encourages entry and a long estimate that discourages entry.

The simplicity of our optimal design, particularly the optimality of FCFS, contrasts with the existing literature which finds it suboptimal (see our literature review).  As we explain below, these earlier findings can be traced to some aspects of queue design, particularly the information policy, being exogenously fixed in a suboptimal manner.  Allowing for {\it all} aspects of queue design to be chosen optimally leads us to find FCFS optimal.  This finding is   reassuring in light of the perceived fairness of FCFS (see \cite{larson1987}).  According to the common perception,  ``...the universally acknowledged standard is first-come-first-served: any deviation is, to most, a mark of iniquity and can lead to violent queue rage''  (``Why Waiting is Torture,'' Alex Grey, {\it New York Times}, Aug 18, 2012).


The intuition behind the information policy---no information beyond
recommendation---is explained as follows.
It is well known and intuitive that incentive constraints are relaxed most
when agents are given as little information as possible. If an agent has the
incentive to join or to stay in a queue for a set of signals, he must also
have the same incentives when all these signals are pooled into one,
regardless of the queueing discipline. Since this ``pooled'' signal is
precisely what the agent will have given ``no information'' beyond the
recommendation, the no information policy is optimal.

To explain why FCFS is optimal, fix an optimal entry and exit policy---i.e.,
a cutoff policy with some maximal length $K$. Assuming agents obey the
recommendation, this induces a distribution of queue length in the steady
state. Since our agents are homogeneous, the expected waiting time \textit{%
when averaged across possible initial queue lengths} is the same for each
agent, and does not depend on the queueing discipline in use. Then, given no
information, the incentive for joining the queue will be the same across all
queueing disciplines, 
and on this account, FCFS is not particularly necessary or desirable.

However, the dynamic incentives that agents face---their incentive 
to ``continue'' queueing once they join the queue---differ across
queueing disciplines, assuming the no information policy.
The reason is that the distribution of waiting times differs across queueing
disciplines, so one updates beliefs about the remaining waiting times
differently as time passes under different queueing disciplines.
Our main insight is that, under the \emph{regularity} condition on the
primitive process, the evolution of these beliefs become progressively more
favorable under FCFS. Consequently, under the condition, agents are willing
to stay in the queue under FCFS with no information, thus implementing the
optimal queueing outcome.


The progressively improving beliefs under FCFS stem from its fundamental
property: namely, that one's service priority can only improve over time
under FCFS. 
Hence, starting with any initial queue length, the elapse of time is indeed
\textit{good news} about the remaining waiting time. 
But there is also a countervailing force. Since an agent is not told about
the queue length $k$ when he joins the queue (recall that agents get no information beyond the designer's recommendations), his belief about this will be also updated as time progresses. On this
account, the elapse of time is actually \emph{bad news}, since it indicates
that the agent likely underestimated the initial length of the queue when he
joined it. We show that the good news dominates the bad news under the
regularity condition. As noted above, this means that incentive
compatibility is maintained once an agent is willing to join the queue under FCFS.

The belief evolution is not as favorable for other queueing disciplines,
however. Consider SIRO. Since priority is assigned randomly, one's queue
position does not matter; instead, his belief about the current queue length
is what matters for his incentives: the more agents there are in the queue, the less
likely it is for an agent to receive service.
Hence, the passage of time
(without being served) is a signal that there are more agents in the queue than
he initially thought. Further, unlike FCFS, his priority does not improve over time. So, the agent becomes more pessimistic as
time passes. Indeed, we can find simple examples
in which an agent's belief worsens over time to such a degree that he  leaves the queue in the midstream, thus undermining the implementation of the optimal cutoff policy.

While the optimality of FCFS does not preclude the possibility that another queueing rule may  be also optimal, we establish the sense in which the FCFS is uniquely best in dealing with
the dynamic incentives problem. In \cref{sec:necessity}, we show that for \textit{any} queueing discipline differing from FCFS, there exists a (regular) environment under which it
is strictly suboptimal no matter the information policy adopted. That is, FCFS does not just attain the optimal outcome under the no-information policy, but its use is also \emph{necessary} to achieve optimality in a rich domain.

The reason for this can be traced to the fairness property of FCFS:  among all queueing rules, the distribution of  wait times is least {\it dispersed} under FCFS, meaning both unusually short waits and unusually long waits are rare under FCFS (\cite{Shanthikumar1987}).  By contrast, other rules, such as  LCFS, induce more dispersed wait times, making more probable both lucky early breaks and unlucky long delays. Such a dispersion is bad for {\it conditional} belief about one's residual waiting time and his dynamic incentives.  As time passes, the fact that one {\it still remains in the queue} indicates that he has ``missed the early breaks'' and therefore  the residual wait   will be longer.  The fairness property of FCFS alleviates this problem.  To the best of our knowledge, we are the first to connect the distributional fairness of the queueing rules  with the agents' dynamic incentives and identify the crucial role it plays in the optimal queue design.


%
\smallskip

\indent\textbf{Related Literature.} 
The current paper follows the long line of queueing theory research, in
particular, the \emph{rational queueing} literature---which
has developed into a significant body of work since the seminal work by \cite%
{naor1969regulation}---studies the strategic behavior of rational Bayesian
agents in a variety of queueing scenarios.\footnote{%
See \cite{hassin-haviv2003} and \cite{hassin2016}, for an excellent survey
of the literature.} While sharing their focus and approach, the current
paper is distinguished from standard works by the generality and comprehensiveness of the queue designs, designer objectives, primitive processes, as well as agents' queue incentives we consider.

The existing literature typically studies one aspect of design such as the queueing discipline, while taking other aspects such as entry/exit or information policies as exogenously given.  In particular, exising papers show FCFS to be suboptimal in a variety of environments.

For instance,   \cite%
{naor1969regulation} finds that FCFS produces excessive incentives for agents to queue, due to the ``congestion'' externality they face under FCFS.\footnote{Plainly, under FCFS agents ignore the delay their joining the queue causes for the agents who will arrive later.}  \cite{hassin1985} and \cite{su-zenios2004} argue that LCFS can ``cure'' this externality and is thus optimal for a designer who maximizes consumer welfare.\footnote{\cite%
{platz2017} find a similar result when there are a continuum of agents who
enter at their endogenously chosen times. See also \cite{haviv-oz2016} for
alternative schemes in the observable environment and \cite{haviv-oz2018}
for extensions to the unobservable queue environment.} But, this literature assumes that agents fully observe the queue length upon arrival and the designer can't control their entry into the queue.  Indeed, the negative externality problem  can be easily fixed, and optimality achieved, under FCFS  if  entry is controlled, as in our optimal cutoff policy.

Meanwhile,  FCFS may give too few
incentives if the designer maximizes (or is close to maximizing) the service
provider's profit or his service utilization, or there is an excessive supply
of agents as in the case of \cite{leshno2019dynamic}.  Then, other mechanisms such as SIRO were shown to outperform FCFS by providing greater incentives for queueing. But this conclusion rests crucially on agents having full information about the queue
length.  The result does not hold  if  the designer can control the agents' information; in fact, it can be drastically overturned if agents can freely leave the queue,  an issue that the existing literature largely ignores.\footnote{A few papers
consider incentives by agents to abandon a queue, or to ``renege''; see \cite%
{hassin-haviv1995impatient}, \cite{haviv-ritov2001renege}, \cite%
{mandelbaum2000}, \cite{sherzer-kerner2018}, and \cite{cripps-thomas2019}.
However, their approach is  positive rather than normative; they seek to explain reneging as an equilibrium phenomenon arising from nonlinear waiting costs or aggregate uncertainty, rather than as an incentive constraint to be controlled in an optimal mechanism. }  Of course, there are important settings in which  not all design instruments, particularly information, can be controlled by the designer; our results do not apply to them.\footnote{In many ``physical'' queue settings (such as grocery check-out lanes), the length of  the queue is visible, so  the scope for information design is limited. Even in this case, our theory offers some useful insight: organizing a single serpentine line (as is done by {\it Trader's Joe}) is better than organizing multiple parallel lines.  The former admits less variance in wait times; this is not only fairer to the customers but more importantly reduces their incentives to leave the queue in the midstream. } The reader should therefore view the alternative works as  complementing one another.

 Indeed, we show that FCFS is \textit{always} optimal  regardless of the designer's objective, {\it provided that she can also control the entry of the agents and their information optimally.}\footnote{Several papers study alternative queueing disciplines in environments that
are less related or comparable to ours. FCFS is shown to be optimal in \cite%
{bloch-cantala2016} and a part of the optimal design in \cite{margaria2020}
in models where, unlike the standard queueing model, the lengths of queues
are non-stochastic, either because arrival occurs only when an agent exits
(the former) or because there is a continuum of agents (the latter).
Further, they do not consider information design, so the reason for the
optimality of FCFS is completely different in these models than in our model.
\cite{Kittsteiner2005} consider the allocation of priority in queues via
bidding mechanisms where processing time is private information. The crucial
difference is the use of transfers implicit in bidding mechanisms, which is
not allowed in our model.}  Further, FCFS is uniquely optimal and strictly dominates the other rules, if agents cannot be prevented from leaving the queue. In particular, any rule departing from FCFS such as LCFS and SIRO is likely to run afoul of this issue, as the fear of losing priority grows large with the elapse of time on the queue and convinces them to abandon the queue.

Finally, our paper is related to the burgeoning literature in queueing that considers  information design; see \cite{simhon2016optimal}, \cite%
{hassin-koshman2017}, \cite{lingenbrink2019optimal}, \cite{Kamenica2017}, and \cite%
{Anunrojwong2020}.\footnote{%
In a less related model, \cite{Ashlagi2020} study optimal dynamic matching
with information design, showing that FCFS, together with an information
disclosure scheme, can be used to implement the optimal outcome. Although
similar at first glance, their model is quite different from, and not easily
comparable to, ours. There is a continuum of agents in their model, and
their information policy pertains to the quality of goods rather than to
agents' queue position. In particular, the virtue of FCFS in regulating
agents' beliefs on where they stand in the queue is orthogonal to \cite%
{Ashlagi2020}'s insights.}
While the latter two papers identify the same optimal information design as the current paper, they do not study the optimal queueing discipline but
they instead take FCFS as given.  All of them also ignore the dynamic incentives issue, a crucial necessary condition for FCFS to be uniquely optimal.
\section{Model and Preliminaries}

\label{sec:model}

We consider a generalization of a canonical queueing model (e.g., \cite{naor1969regulation}) in which agents
arrive sequentially at a queue to receive a service. Time indexed by $t\in
\mathbb{R}_+$ is continuous.

\paragraph{Agents' payoffs.} 
There are three parties: a \textit{designer}, who organizes resource
allocation including the queueing policy, a \textit{service provider} who
services agents, and \textit{agents} who receive service. As will be seen,
the designer may be the service provider, a representative of the agents, or
a planner who reflects the welfare of both parties.

The agents are homogeneous in their preferences. Each agent enjoys a payoff
of
\begin{equation*}
U(t)\triangleq V- C\cdot t,
\end{equation*}
if she receives service after waiting $t\ge 0$ time period, where $V>0 $ is
the net surplus from service (possibly after paying a service fee to
the provider) and $C>0$ is a per-period cost of waiting. The service
provider earns profit $R>0$ for each agent she serves. In a customer
service context, the profit may not take the form of monetary fees collected from customers but rather the shadow value of fulfilling a warranty service or more generally addressing any customer needs
(See \Cref{sec:conclusion} for a discussion of an endogenously set monetary fee collected from customers). The designer's objective (to be specified below) is a weighted sum of the service
provider's and agents' payoffs. An agent's outside option, which she
collects when not joining the queue or exiting one, yields zero payoffs.

\paragraph{Primitive process.} At each instant, given the number of
agents in the queue, or \textsf{queue length}, $k\in \mathbb{Z}_+$, an agent
arrives at a Poisson rate of $\lambda_k\ge 0$. The technology allows for an agent to be served at each instant at the Poisson rate of $\mu _k>0$.\footnote{Different interpretations apply to different settings. In the service scenario (imagine a call center or  in a Apple repair center), multiple servers are serving customers simultaneously, but each takes a stochastic amount of time for completion; the service time for the first to be completed is then distributed exponentially with mean $1/\mu_k$.  In the housing assignment context, a housing becomes available at the Poisson rate $\mu_k$. }
Hence, a pair $(\lambda,\mu )$, where $\lambda \triangleq \{\lambda_k\}$ and $\mu
\triangleq\{\mu _k\}$, $\mu_0=0$, and $\l_0>0$, specifies a \textsf{primitive process}.
We view $%
(\lambda,\mu )$ as arrival and service rates that arise in many queueing
environments of interest, including $M/M/c$ queue models and
dynamic matching models, as illustrated in \Cref{sec:applications}; for
instance, the possibility of arrival and service rates depending on the
current queue length $k$ emerges naturally from a dynamic matching context.

We interpret $\mu_j$ as the maximal service rate
that \textit{any} set of $j$ or fewer agents may receive in any queue of length $k\ge j$.  It
is then natural to assume that $\mu_k$ is nondecreasing in $k$.\footnote{See \Cref{online_sec:axiomatic} in the online appendix for further details.}
 We also assume that $\mu_k$ is bounded uniformly in $k$.  
In addition, our
results invoke one of the following conditions:

\begin{definition} \label{def: regular}
(i) The \textbf{service process} $\mu =\{\mu _k\}$ is \textsf{regular} if $%
\mu _k-\mu _{k-1}$ is nonincreasing in $k$. (ii) The \textbf{primitive
process} $(\lambda,\mu )$ is \textbf{regular} if the service process $\mu $
is regular and $\lambda_k-\lambda_{k-1}\le \mu _k-\mu _{k-1}$ for each $k
\geq 2$.
\end{definition}

These two regularity conditions are extremely mild.
\Cref{sec:applications} shows that  all the canonical queueing models, as well as
dynamic matching models,   satisfy these two conditions.\footnote{%
In particular, as shown in the online appendix \Cref{online_sec:axiomatic}, the regularity of the service process,
namely, (i), has a desirable axiomatic foundation.}

\paragraph{Designer's policy.} The designer has a
number of instruments at her disposal. We focus on an anonymous stationary Markovian policy
that treats all agents identically based on two \textbf{state}
variables: the queue length $k$ and the queue position $\ell$, namely the
arrival order of an agent among those in a queue. The stationarity
restriction means that the policy does not depend on the calendar time.
The designer chooses the following set of policies. \smallskip

\textbf{$\bullet$ Entry and exit rule:}
The entry and exit rules specify how the designer regulates the entry of agents
who arrive at a queue and exit from those who are already in the queue.
Formally, an \textbf{entry rule} is given by $x= (x_k) $, where $x_k\in
[0,1] $ denotes the probability that an arriving agent is asked to join a
queue of length $k$. An \textbf{exit rule} is given by $(y,z) =(y_{k,\ell}
,z_{k,\ell})_{k,\ell}$. The designer removes the agent with queue position $\ell$ from the queue of length $k\ge \ell$ at a Poisson rate $y_{k,\ell}\ge 0$. In addition, upon a new arrival in the queue, the designer can keep the queue length constant by
removing an agent currently in the queue: $z_{k,\ell}\in [0,1]$ denotes the probability that
an agent with queue position $\ell$ is removed from a queue of length $k$
when another agent is joining the queue (where $k$ is the length of the queue before the new arrival).\footnote{By definition, if an agent $\ell$ is removed, no other agent $\ell' \neq \ell$ is removed.} The entry rule could
accommodate the possibility of non-entry that is either involuntary or voluntary.
Similarly, the exit rules $y$ and $z$ capture both the explicit policy of removing some agent away from a service pool (e.g., \cite{mandelbaum2000})
as well as the abandonment induced by a queueing policy (to be described
below). The main difference between $y$ and $z$ pertains to whether the
removal is conditional on the entry of another agent. In particular, $z$
captures the possibility of an agent being ``preempted" by a
new arrival, e.g., under an LCFS rule (see \cite{hassin1985}). We let $(\mathcal{X}%
, \mathcal{Y}, \mathcal{Z})$ denote the set of all feasible $(x,y, z)$'s.
\smallskip

 \textbf{$\bullet$ Queueing rule:}
A queueing rule specifies the allocation of service priority  among
agents in the queue.  Although we can accommodate any arbitrary  policy in this regard, for expositional ease, here we restrict attention to a ``Markovian" policy that depends on the queue length $k$ and the agent's queue position $\ell\le k$, or her arrival order, at any point.\footnote{There are two reasons for this restriction.  First, the current restriction makes the queueing rule more easily interpretable with respect to the standard queueing disciplines than the general class described in \Cref{app-sec: general}. Second, even the restricted class of queueing rules is quite broad and encompasses any standard service allocation rule.}
A queueing rule specifies the allocation of an available service rate
based on the queue length and agents' queue positions.%
\footnote{%
In fact, we can allow queueing rules to be fully general, i.e., without
limiting ourselves to those that depend only on $(k,\ell )$; examples
include rules that allow service probabilities to vary with time and to
depend on the history leading up to the current queue length and positions.
However, our class entails no loss since the optimal rule in this fully
general class belongs to the current class that we focus on.} Formally, a \textsf{%
queueing rule} is given by $q= (q_{k,\ell})$, where $q_{k,\ell }\geq 0$ is
the Poisson rate at which an agent receives service when the queue length is
$k$ and her position in the queue is $\ell$. Feasibility requires that
$\sum_{ \ell\in S} q_{k,\ell}\le \mu_{|S|}$, for all $k$ and all $S\subset \{1,..., k\}$: that is, the total service rate received by  a subset of agents in the queue cannot exceed the service rate available for the number of those agents.\footnote{ Recall that we interpret $\mu_{j}$ as the maximal rate at which a set of $j$ (or fewer) agents in the queue can be served collectively. Hence, the feasibility condition simply requires that any subset of agents of size $j$ must be collectively served at a rate no greater than this maximal service rate $\mu_{j}$. For instance, in the $M/M/c$ queue model, there are $c$ servers each able to serve an agent at rate, say $\mu$. Then, any  $j $ agents can be served at most at rate $\mu_j= \min\{j, c\} \mu$ in total.}   As is standard, we also require a feasible queueing rule to
be \textsf{work conserving}: $\sum_{\ell =1}^{k}q_{k,\ell }=\mu _{k}$, for
all queue length $k$. This means that the allocation of service is ``non-wasteful,'' or
exhausts the available service capacity. We let $\mathcal{Q}$ denote the set
of all work-conserving queueing rules.  The set $\mathcal{Q}$ encompasses all
standard queueing disciplines. For instance, assuming the service process is
regular, \textbf{first-come-first-served (FCFS)} satisfies $q_{k,\ell
}\triangleq \mu_{\ell }-\mu _{\ell -1}$. Namely, the agent in position 1
enjoys the highest possible service rate $\mu_1$ for any single agent; given
this, the agent in position 2 receives the highest possible service rate, $%
\mu_2-\mu_1 \geq 0$, and so on. The regularity condition guarantees the
service rate can only fall as one's position gets worse. (We will see in %
\Cref{sec:applications} how this corresponds to more familiar expressions in
the canonical queuing models such as $M/M/1$, $M/M/c$, or dynamic matching
models.) Similarly,  \textbf{last-come-first-served (LCFS)} satisfies $%
q_{k,\ell }\triangleq \mu _{k-\ell +1}-\mu _{k-\ell }$, and \textbf{%
service-in-random-order (SIRO)} satisfies $q_{k,\ell }\triangleq \mu_{k}/k$,
for all $k\in \mathbb{N}, \ell \le k$.\footnote{In online appendix \Cref{online_sec:axiomatic}, we provide a definition of FCFS based on the concept that the
priority must be assigned greedily to maximize the service rates for earlier arriving agents.  If the class of allocation rules  satisfies feasibility and the service process is regular, it is shown that FCFS indeed corresponds to our formula. In addition, under regularity,  we show that these standard queueing
disciplines  (FCFS, LCFS, and SIRO) are work-conserving. Conversely, the regularity property is
necessary if one requires FCFS and LCFS to be work-conserving.} Our results remain valid beyond the class $\Q$, in fact, for any arbitrary work-conserving rules; see \Cref{app-sec: general}.

\smallskip

 \textbf{$\bullet$ Information rule:}
An information rule specifies the payoff-relevant information given to an agent in the queue after each time $t\ge 0$
he has spent in the queue, including $t=0$ when he has just arrived at the
queue.
Since an agent has a linear waiting cost, the only payoff-relevant
information at each time  $t\ge 0$ spent on the queue is the probability $\sigma^t\in[0,1]$ that he
will be eventually served and the expected remaining waiting time $\tau^t\in
[0,\infty]$.\footnote{%
The waiting time refers to the duration of time an agent spends in the
queue, including the service time. In the queueing literature, this is
sometimes referred to as \textit{sojourn time}. Since the waiting cost is linear, the waiting time distribution matters only through its expectation.} Given the memoryless nature
of the process $(\lambda, \mu, x,y,z, q)$, these two variables depend only on the
current queue length $k$ and one's queue position $\ell\le k$ and are independent
of the time $t$ one has spent in the queue, so we write $(\sigma_{k,\ell},
\tau_{k,\ell})\in[0,1]\times [0,\infty]$ for each $(k,\ell)$. An agent's
(payoff-relevant) information then boils down to his information regarding $%
(k,\ell)$ at each time $t\ge 0$. As is well-known, say from \cite{KG2011},
this information can be represented as a distribution of \textquotedblleft
posterior beliefs\textquotedblright\ about $(k,\ell)$, which does, in general,
depend on time in the queue $t\ge 0$.

Formally, an \textbf{information rule}
is given by $I= (I^{t})_{t\in \mathbb{R}_{+}}$, where $I^{t}\in
\Delta( \Delta(\mathbb{Z}_+\times \mathbb{N}))$ specifies the distribution of
posterior beliefs on $(k^t, \ell^t)$ conditional on the time-on-the queue $t$.\footnote{Note that the process $(I^{t})_{t\in \mathbb{R}_{+}}$ does not form a martingale since the belief distributions are conditional on staying in the queue.} Feasibility requires that
posterior beliefs at each $t$ must be adapted to the filtration generated by
the process $(\lambda, \mu, x,y,z,q)$ and must satisfy Bayes rule given his
prior belief and knowledge of the process $(\lambda, \mu, x,y,z,q)$.  The agents' prior belief is given by the steady state distribution of the stochastic process induced by the entry and exit rule (see next paragraphs).\footnote{This is formally justified by the PASTA property (\cite{wolff1982}). One can think of an agent's {\it unconditional} (i.e., before conditioning on her arrival or on recommendations) belief about the state  as  given by the invariant distribution over states.}
Let $\mathcal{I}$
denote the set of all feasible information rules. (We suppress the
dependence both of $(\sigma_{k,\ell}, \tau_{k,\ell})$ and $\mathcal{I}$ on $%
(\lambda, \mu, x,y,z, q)$ for notational ease.)

The set $\mathcal{I}$ is large enough to include all realistic information rules, particularly given the Markovian queueing rule $q$. Special cases include \textbf{full information}, in which case
$I^{t}$ coincides with the true distribution of $(k^t, \ell^t)$, and \textbf{%
no information}, in which case the posterior $I^{t}$ is degenerate on the
belief obtained by Bayes updating via $(\lambda, \mu, x,y,z, q)$ from the prior beliefs $I_{0}$.   We allow for many other rules between the two.
For instance, the designer may simply reveal whether, upon joining the queue,
the agent's expected waiting time
is below or above some predetermined threshold.\footnote{ For many queueing rules (e.g., FCFS), this will mean specifying whether the agent's position is above or below a certain predetermined
integer $L$. Formally, $I^{0}$ will put weight only on two possible posterior beliefs, one with support in
$\{1,\dots,L\}$, the other one with support in $\{L+1,\dots,K\}$. 
}
As we show in \Cref{app-sec: general}, our main results hold beyond $\I$ under the fully unrestricted class of information rules.\footnote{ The information rules considered there allow for information to be any garbling of all events observable by the designer, including  a possible change of information in a non-stationary fashion.}

\paragraph{Steady State.}
Given the primitive process $(\lambda,\mu )$, a Markov policy $(x,y,z)$
generates a Markov chain---more specifically, a birth-and-death process---on
the queue length $k$. Given $(\lambda,\mu)$, we only consider a Markov
policy that induces an invariant distribution $p\triangleq (p_0,
p_1,\dots)$ on the queue length.
Specifically, this means that the distribution $p$ must satisfy the
following balance equation:
\begin{equation*}
\lambda_k x_k (1-\tsum_{\ell}z_{k,\ell}) p _k = (\mu
_{k+1}+\tsum_{\ell}y_{k+1,\ell})p _{k+1}, \,\forall k   \tag{$B$}
\label{B}
\end{equation*}
The LHS of the equation is the rate at which the queue length transits
from $k$ to $k+1$: with probability $p _k$ the queue length is $k$, in which
case an agent arrives at rate $\lambda_k$, is recommended to join the queue
with probability $x_k$, and no agent is removed from the queue with
probability $1-\sum_{\ell}z_{k,\ell}$. The balance equation \cref{B}
requires this rate to equal the rate at which the queue length transits
from $k+1$ to $k$, namely its RHS: with probability $p _{k+1}$ the queue
length is $k+1$, in which case an agent is served at rate $\mu _{k+1}$ or is
removed at rate $\sum_{\ell}y_{k+1,\ell}$ from the queue.
We say that an
entry/exit policy $(x,y,z)\in \mathcal{X}\times \mathcal{Y} \times \mathcal{Z%
}$ \textbf{generates} an invariant distribution $p$ if $(x,y,z,p)$ satisfies %
\cref{B}, and call the associated tuple $(x,y,z,p)$ an \textbf{outcome}.
From now on, we evaluate the policy at the associated outcome, assuming that the dynamic system is at a steady state.   This treatment is largely for expositional ease; \Cref{app:beyond-stationary} in the online appendix shows how our analysis carries through even when we focus on  a long-run time average of the Markov process that starts at an empty queue with $k=0$.\footnote{We prove that the Markov chain satisfying the incentive constraint must converge to a unique invariant distribution.  Further, our optimal queue design is optimal for this long-run time average formulation of the problem as long as the optimal queue length is finite, which, for instance, holds true when the designer puts a nonzero weight on the agent welfare in his objective.}

\smallskip

\paragraph{Incentives.}  The designer may keep an agent from joining the queue or remove an agent from the queue,\footnote{This assumption can be dispensed with under a broad set of circumstances, see the discussion at the end of \Cref{sec:FCFS}.} but
 the designer cannot coerce an agent to join or stay in the queue
against his preference.  Consequently, when recommended to enter the queue or
to stay in the queue, an agent must be provided with the incentive to obey that
recommendation, given the information available to him.

Formally, this obedience constraint is specified in terms of an agent's
beliefs about the queue length and position $(k_t, \ell_t)$ at each time,
which in turn determines the conditional service probability and expected
residual waiting times $(\sigma_{k,\ell }, \tau_{k,\ell})$. We evaluate
these variables when the system is at its invariant distribution $p$.
Obedience then requires:
\begin{equation*}
\sum_{k,\ell }\gamma_{k,\ell }^{t}\left[ V \cdot \sigma_{k,\ell }-C\cdot
\tau _{k,\ell }\right] \geq 0,\forall \gamma^{t}\in \supp(I^{t}),\forall
t\geq 0,  \tag{$IC$}  \label{IC}
\end{equation*}%
where $(\sigma_{k,\ell }, \tau_{k,\ell})$ is induced by the policy $(x,y,z,
q)$.    In words, \cref{IC} states that each agent, when recommended to join or
stay in the queue, must find the prospect of being
served to be high enough to justify the remaining waiting cost, given each possible
belief $(\gamma_{k,\ell}^t)$ at each $t\ge 0$.

In the sequel, we refer to the incentive constraint for $t$ by $(IC_{t})$.
We say that a queueing/information policy $(q,I)\in \mathcal{Q}\times
\mathcal{I}$ \textsf{implements} an outcome $(x,y,z,p)$ if \cref{IC} holds.
Even though we interpret an implemented outcome as resulting from the
designer's policy choice, this is without loss, due to the revelation
principle. Our model can capture any equilibrium outcome, both regulated and
unregulated.\footnote{\label{fn:mixing}For instance, consider the textbook unregulated and
unobservable $M/M/1$ queue (where agents arrive at rate $\lambda$ and where there is a single server serving an agent at rate $\mu$) governed by FCFS, in which agents make their
entry decisions without any recommendation or any information about the
queue length (see \cite{hassin-haviv2003} for
instance). If $%
\lambda $ is sufficiently large so that  $(\mu -\lambda )V<C$,  then there
exists a random entry probability $e\in (0,1)$ such that if all agents adopt
this mixing strategy, each agent becomes indifferent to entry, making it an
equilibrium behavior. In our model, this corresponds to our entry policy of
$x_{k,\ell }=e$ and $y_{k,\ell }=z_{k,\ell }=0$, for all $k,\ell $ (along
with FCFS and no information).} \medskip

\noindent\textbf{Problem statement.} 
The designer's objective is evaluated at the invariant distribution $p=(p_k)
$ of the Markov chain. It can be written as follows:
\begin{equation*}
W(p)\triangleq(1-\alpha) R \sum_{k=1}^{\infty}p _k \mu _k + \alpha
\sum_{k=1}^{\infty}p_k (\mu _k V - k C),
\end{equation*}
where $\alpha\in [0,1]$. The first term is the flow expected profit for the
service provider: with probability $p_k$, the queue has $k$ agents, and an
agent is served at rate $\mu_k$, generating a profit (or shadow
value) of $R$ for each agent served. The second term is the flow expected
utility for agents: again with probability $p_k$, the queue has $k$ agents,
each of whom pays a holding/waiting cost of $C$ per unit time (the second
term), and an agent is served and realizes a surplus of $V$, at rate $\mu_k$%
. The objective is a weighted sum of these two terms, with weight $\alpha\in
[0,1]$.  One can show that this objective corresponds to the expectation of the long-run time average of the designer's payoff
(see online appendix \Cref{app:beyond-stationary}).

The designer's problem is to choose $(p,x,y,z,q,I)\in \Delta (\mathbb{Z}%
_{+})\times \mathcal{X}\times \mathcal{Y}\times \mathcal{Z}\times \mathcal{Q}%
\times \mathcal{I}$ to 
\begin{equation*}
\sup \, W(p) \mbox{   subject to  }\mbox{\cref{B} and \cref{IC}}, %
\leqno{[P]}
\end{equation*}%
%
%
%
where the conditional service probabilities and residual waiting times $%
(\sigma_{k,\ell }, \tau_{k,\ell})$ in \cref{IC} are induced by $(p,x,y,z,q)$.%
\footnote{%
While the entry/exit policy $(x,y,z)$ uniquely pins down the invariant
distribution, we include $p $ as part of the designer's choice.} In words, the designer
picks the outcome that maximizes her objective among those that are
implementable by some queueing/information policy. Let $\mathcal{W}$ denote
the supremum of the value of program $[P]$.

\smallskip

\section{Scope of Applications}

\label{sec:applications}

Our model encompasses a variety of queueing and dynamic matching models
considered by the existing literature. \smallskip


 \textsf{$\bullet$ $M/M/c$ queue model:}
In this model, agents arrive at some constant rate $\lambda$. There are $c\ge 1$ servers each serving at a constant rate $\mu$.  A special case with $c=1$, known as $M/M/1$, is particularly common in the literature.\footnote{This model is adopted by \cite{naor1969regulation}, \cite{hassin1985}, \cite{simhon2016optimal}, \cite{hassin-koshman2017}, \cite%
{lingenbrink2019optimal}, among others.}
 $M/M/c$ model is a special case of our model in which $\lambda_k\equiv\lambda$, and
the service rate is linear up to
the number of available servers, so $\mu _k=\min\{k, c\}\mu $.
Clearly, this model satisfies regularity. In this model, our queueing formula simplifies to $q_{k,\ell}=\mathbf{1}%
_{\{\ell\le c\}} \cdot \mu$ under FCFS, $q_{k,\ell}=\mathbf{1}_{\{k-\ell+1\le
c\}} \cdot \mu$ under LCFS, and $q_{k,\ell}=\min\{k, c\}\mu/k$ under SIRO.
In fact, our model may capture a more general, and arguably more realistic, version of the $M/M/c$ model in which servers differ in their service rates.\footnote{That is, server $j$ serves at rate $\tilde \mu_j:=\mu_j-\mu_{j-1}$, with $\tilde{\mu}_0:=0$.
}

\smallskip
 \textsf{$\bullet$ Team servicing model:}
Suppose there are $m$ customers (or machines) each having a service need
arising at an independent Poisson rate while operating (see \cite%
{gnedenko1989}, p. 42). There are $c$ servers each of whom can serve a customer at  rate $\mu$. When there are $k$ agents in the queue, the arrival rate is then $%
\lambda_k=(m-k)\lambda$ and the service rate is $\mu _k=\min\{k,
c\}\mu $. Again, our regularity condition holds.
 \smallskip

 \textsf{$\bullet$ Dynamic one-sided matching with
stochastic compatibility:}
Suppose each agent is compatible with another agent with probability $%
\theta\in (0,1]$. In this model, an agent joins a queue only when he arrives
at some rate $\eta$ \emph{and} is incompatible with the agents already in
the queue, which occurs with
probability $(1- \theta)^k$, or else, he matches with a compatible partner and does not join
the queue,
which occurs with probability $(1-(1-\theta)^k)$. This is a special
case of our model in which $\lambda_k= \eta (1-\theta)^k$ and $\mu _k=\eta
(1-(1-\theta)^k)$. Observe that $\mu_k$ is increasing at a decreasing rate,
and $\lambda_k$ is decreasing, in $k$, so the process is regular. Our
queueing formula for FCFS, for instance, yields the service rate for $\ell$%
-th positioned agent to be $q_{\ell}= \mu_{\ell}-\mu_{\ell-1}= \eta
(1-\theta)^{\ell-1} \theta$, the probability that all agents ahead of him
are incompatible, and he is compatible, with an incoming agent. Likewise,
LCFS and SIRO formula have intuitive interpretations. \cite%
{doval2018efficiency} consider such a model with $\theta=1 $ and study
agents' incentive to join a queue under FCFS. \cite{akbarpour2017thickness}
study the limit as $\theta\in (0,1)$ tends to 0 but the arrival rate
increases.\footnote{Their focus differs from ours; for instance, they do not consider
the incentive to join or stay in a queue, the queueing rule, or information
design. Instead, they study the benefit from thickening the market, which we
do not consider.}

 \textsf{$\bullet$ Dynamic two-sided matching with
stochastic compatibility:} Heterogeneous agents on one side match with
heterogeneous agents or objects (e.g., housing) on the other side. If the
types of the matched pair are compatible, then high surplus is realized; if
not, a low surplus is realized. The designer operates buffer queues for
different types of agents or objects to keep the agents waiting until a
compatible match is found. \cite{leshno2019dynamic} and \cite%
{baccara2020optimal} consider such models. In these models, if one buffer
queue is active, the other is empty. Hence, the system can be analyzed as a
one-dimensional Markov chain. Some
of our results below rely on the system induced by a given policy to exhibit
birth and death processes. Indeed, this feature is satisfied under the
optimal policy under \cite{baccara2020optimal} but not under \cite%
{leshno2019dynamic}. Nevertheless, our central results apply to the latter
setup, as we show in \Cref{app: formal arg
	discussion sec} of the online appendix.\footnote{\cite{baccara2020optimal} consider optimal matching policy
under both FCFS and LCFS, whereas \cite{leshno2019dynamic} considers a
general class of queueing rules, and finds FCFS to be suboptimal.
Again, the current paper is differentiated by its consideration of
broad incentive issues (i.e., the incentive to stay in, not just to join, a
queue) and a general class of queueing rules as well as information design.
The fact that we draw a different conclusion on the optimal queueing
rule---namely, FCFS---relative to \cite{leshno2019dynamic} is attributed to
the combination of information design and choice of queueing rule together with our consideration of agents' dynamic incentives
(see \cref{sec:conclusion} for further discussion).}

\section{Main Result}

Below we state the main result of the paper: Under regularity of the primitive process,  FCFS and no information beyond recommendations together with the following particularly intuitive form of entry/exit policy solves the designer's program $[P]$:
\begin{definition}
An entry/exit policy $(x,y,z)$ is a \textsf{cutoff policy} if there exists $%
K\in \mathbb{Z}_{+}\cup \{+\infty \}$ such that $x_{k}=1$ for all $%
k=0,1,...,K-2$, $x_{K-1}\in (0,1]$, and $x_{k}=0$ for all $k\geq K$ and that
$y_{k,\ell }=z_{k,\ell }=0$ for all $k,\ell $.
\end{definition}
In words,  a cutoff policy  sets a maximum queue
length $K$ and recommends that an arriving agent joins a queue as long as $%
k\leq K-1$ and that those who join the queue stay in the queue
until they are served. Thus, no agent is removed
or induced to abandon the queue once he has joined it. It is possible that $%
x_{K-1}\in (0,1)$, in which case the $K$-th entrant may be randomly rationed.%
\footnote{%
While we assume $y_{k,\ell }=z_{k,\ell }=0$ for all $k,\ell $, this is just
a convenient normalization. If $x_{K-1}\in (0,1)$ in a cutoff policy, the
same $p^{\ast }$ can be implemented by any $(x^{\prime },y^{\prime
},z^{\prime })$ such that ${x_{K-1}^{\prime }}=\frac{\mu _{K}+\sum_{\ell
}y_{K,\ell }^{\prime }}{\mu _{K}(1-\sum_{\ell }z_{K-1,\ell }^{\prime })}%
x_{K-1}$; see \cref{B}. In this sense, the reader should interpret the
cutoff policy as an equivalence class involving a set of such pairs. This
means that while it is unnecessary to induce an agent to exit from a queue
after he joins it, doing so when the queue length is $K-1$ (and $x_{K-1}\in
(0,1)$) or $K$ is consistent with a cutoff policy. In other words,
encouraging a customer to come back later is not at odds with a cutoff
policy.}

We are now in a position to state our main theorem.
\begin{theorem}
\label{thm: main} Assume that the primitive process is regular. There is an optimal solution $(x^*,y^*,z^*,q^*,I^*)$ of $[P]$ s.t.
(i) $(x^*,y^*,z^*)$ is a cutoff policy; (ii) $q^*$ is FCFS; and (iii) $I^*$ is the no information rule.
\end{theorem}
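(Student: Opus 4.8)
I would argue by a two-step relaxation. First, define a relaxed program $[P^{\mathrm{rel}}]$ that keeps \cref{B} but replaces \cref{IC} by only the time-$0$ ``join'' constraint evaluated under the no-information rule. Since pooling signals can only relax incentive constraints, any $(q,I)$ implementing an outcome satisfies this weaker requirement; moreover, by the PASTA property together with a Little's-law-type accounting identity, the expected sojourn time of an agent \emph{averaged over the queue lengths he may encounter on arrival} is independent of the queueing rule $q$. Hence the time-$0$ no-information constraint is a single inequality $(\widehat{IC})$, linear in $p$ and free of $(q,I)$, so $[P^{\mathrm{rel}}]$ is effectively an optimization over $(p,x,y,z)$ alone, and $\mathcal W\le\mathcal W^{\mathrm{rel}}$, the value of $[P^{\mathrm{rel}}]$.

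Second, I would invoke the analysis of \cref{sec:cutoff}. The per-state weight $w_k:=(1-\alpha)R\mu_k+\alpha(\mu_k V-kC)$ satisfies $w_0=0$ and is concave in $k$ because $w_k-w_{k-1}=[(1-\alpha)R+\alpha V](\mu_k-\mu_{k-1})-\alpha C$ with $\mu_k-\mu_{k-1}$ nonincreasing (regularity of the service process). Combining this concavity with the birth-and-death structure of \cref{B} and the constraint $(\widehat{IC})$, the linear objective $W(p)=\sum_k w_k p_k$ over the achievable distributions is maximized by an outcome with $y=z=0$ and $x$ of cutoff form---remove nobody, fill the queue up to some length $K$, with the rationing $x_{K-1}\in(0,1)$ used only to tune $(\widehat{IC})$. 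Denote this optimal cutoff outcome $(x^*,0,0,p^*)$, so $\mathcal W^{\mathrm{rel}}=W(p^*)$.

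It then remains (the content of \cref{sec:FCFS}) to implement $p^*$ in the full program $[P]$ by pairing $(x^*,0,0)$ with $q^*=\text{FCFS}$ and $I^*=\text{no information}$; with $\mathcal W\le\mathcal W^{\mathrm{rel}}=W(p^*)$ this gives optimality. The key simplification is that under FCFS (with regularity, which makes $q_{k,\ell}=\mu_\ell-\mu_{\ell-1}$ work-conserving) an agent's service priority is nonincreasing over time and his ``career'' depends only on his position: from position $\ell$ he advances to $\ell-1$ at rate $\sum_{j<\ell}(\mu_j-\mu_{j-1})=\mu_{\ell-1}$ and is served at rate $\mu_\ell-\mu_{\ell-1}$, independently of the total queue length. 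Hence his eventual-service probability is $1$ and his expected residual wait is $\tau_\ell=\ell/\mu_\ell$, which is nondecreasing in $\ell$ since $\mu$ is concave through the origin. Because no one is ever removed, $\operatorname{supp}(I^t)$ is a single belief under no information, so $(IC_t)$ reduces to $e(t):=\E[\,\tau_{\ell_t}\mid\text{not yet served at }t\,]\le V/C$. As $e(0)=\E[S]$, where $S$ is the (random) sojourn time of a recommended-to-join agent, and $(\widehat{IC})$ is exactly $e(0)\le V/C$, it suffices to show that $e(\cdot)$ attains its maximum at $t=0$, i.e. that $S$ is NBUE; decreasing mean residual life would suffice.

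\textbf{Main obstacle.} The last claim is the crux. Conditionally on the queue length found on arrival, $S$ is the absorption time of the above position chain and is well behaved (increasing hazard rate, hence DMRL), but $S$ itself is a \emph{mixture} of such absorption times over the steady-state arrival distribution $\pi$ (with $\pi_k\propto\lambda_k x^*_k p^*_k$), and mixtures of DMRL distributions need not be DMRL: the ``bad news'' that long survival signals a longer initial queue could a priori overwhelm the ``good news'' that one's position has improved. I would control this by showing that the conditional law of $\ell_t$ given survival is monotone (decreasing in the likelihood-ratio order) in $t$, which together with $\tau_\ell$ increasing forces $e(t)$ to be nonincreasing; the second regularity inequality $\lambda_k-\lambda_{k-1}\le\mu_k-\mu_{k-1}$ should enter precisely here, guaranteeing that the cutoff distribution $\pi$ has a log-concavity-type shape ``concentrated'' enough for the mixture to inherit the residual-life monotonicity of its components. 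Establishing this belief-monotonicity under the two regularity conditions is the hard part; the relaxation and the cutoff step are comparatively routine given \cref{sec:cutoff}.
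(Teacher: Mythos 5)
Your overall decomposition---relax $[P]$ to an LP over steady-state distributions, show the optimizer is a cutoff outcome, then implement it with FCFS and no information by showing beliefs about one's queue position become progressively more favorable---matches the paper's strategy (the paper's $[P']$ with \cref{B'} and your $[P^{\mathrm{rel}}]$ with \cref{B} have the same feasible $p$'s, since given any $p$ one can always recover $(x,y,z)$ satisfying \cref{B} exactly when $\lambda_k p_k\geq\mu_{k+1}p_{k+1}$). You also correctly locate the crux and correctly predict where the second regularity inequality $\lambda_k-\lambda_{k-1}\leq\mu_k-\mu_{k-1}$ must enter.

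The genuine gap is that you stop exactly at the crux: the claim that the law of $\ell_t$ given survival is decreasing in likelihood-ratio order is precisely \cref{lem:dyn-incentives}, and your proposed route to it---``the cutoff distribution has a log-concavity-type shape concentrated enough for the mixture to inherit residual-life monotonicity of its DMRL components''---is not carried out, and in fact the DMRL/IHR route is the one the paper explicitly abandons as intractable outside $M/M/1$ and $M/M/c$ (see the footnote in \cref{sec:FCFS}). What the paper actually does is derive the ODE system $\dot r_\ell^t = r_\ell^t(\mu_{\ell-1}-\mu_\ell-\mu_{\ell-1}r_\ell^t+\mu_\ell r_{\ell+1}^t)$ for the likelihood ratios $r_\ell^t=\tilde\gamma_\ell^t/\tilde\gamma_{\ell-1}^t$, observe that the boundary ratios $r_\ell^0=\tilde\lambda_{\ell-1}/\mu_{\ell-1}$ give $\dot r_\ell^0\leq 0$ if and only if $\mu_{\ell-1}-\mu_\ell\leq\tilde\lambda_{\ell-1}-\tilde\lambda_\ell$ (your regularity inequality), and then rule out any later sign change by a first-crossing argument: if $t=T_\ell$ were the first time some $\dot r_\ell$ turns positive, differentiating the ODE at that time yields $\ddot r_\ell^t=r_\ell^t\mu_\ell\dot r_{\ell+1}^t\leq 0$, contradicting $\ddot r_\ell^t>0$. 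Without this (or an equivalent substitute), your proposal establishes only that a solution to the relaxed LP is a cutoff and that $(IC_0)$ holds under no information, but not that the dynamic constraints $(IC_t)$, $t>0$, are respected under FCFS. You should also not rely on the conditional absorption time having IHR: that side claim is unproven in the general birth-death setting and, more importantly, is not needed once the likelihood-ratio monotonicity is established directly.
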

In order to prove this statement we study a relaxed problem for the designer where, in essence, the designer only chooses the entry/exit policy $(x,y,z)$ (or, equivalently, the invariant distribution). We define this relaxed problem in the next section (\Cref{sec:cutoff}) and prove that the optimal solution is a cutoff policy when the service process is regular (\Cref{thm:cutoff}). In \Cref{sec:FCFS}, we show that this cutoff policy together with FCFS and the no information rule satisfy all constraints of problem $[P]$ proving that this forms an optimal solution of $[P]$ (\Cref{thm:dyn-fcfs}). These two results together yield \Cref{thm: main}.

Intuitions for \Cref{thm: main} will be provided in the next sections when we establish the intermediary theorems (\Cref{thm:cutoff} and \Cref{thm:dyn-fcfs}).

\subsection{ Optimality of the Cutoff Policy}

\label{sec:cutoff}

The designer's problem $[P]$ is, in general, difficult to solve. Instead, we
consider the following  relaxed problem: \setstretch{0.5}
\begin{equation*}
\max_{p \in \Delta(\mathbb{Z}_+)} W(p) \leqno{[P']}
\end{equation*}
subject to
\begin{align*}
\sum_{k=1}^{\infty}p_k (\mu _k V - k C) &\ge 0;  \tag{${IR}$}  \label{IR} \\
\lambda_k p _k - \mu _{k+1} p _{k+1}&\ge 0, \forall k.  \tag{${B}'$}  \label{B'} \\
\end{align*}
\setstretch{1.22} Here, the planner maximizes the designer's objective
subject only to individual rationality \cref{IR} and a weakening %
\cref{B'} of the balance equation \cref{B}. The problem constitutes a linear
program (LP) involving an infinite-dimensional measure $p$.

Clearly, $[P^{\prime }]$ is a relaxation of $[P]$. First, \cref{IR}
must be implied by \cref{IC}. If the former condition fails,
the agents do not ex ante break even. Then, there must exist \textit{%
some} agent and \textit{some} belief induced by that mechanism such that the
agent with that belief would not wish to join a queue when called upon to do
so. Hence, \cref{IC} would fail.  (A rigorous proof is provided in \Cref{lem:general} of \Cref{app-sec: general} of the online appendix.\footnote{\label{fn:IR and IC_0}  The proof of that lemma can be sketched here.
 Fix any $(x,y,z,p,q,I)$ that satisfies $%
(IC^0)$. Aggregating $(IC^{0})$ across all beliefs $\gamma^{0}\in \supp%
(I^{0})$, we get
\begin{equation*}
\int_{\gamma ^{0}}\sum_{k,\ell }\gamma _{k,\ell
}^{0}[V \sigma _{k,\ell }-C\tau _{k,\ell }]I^{0}(d\gamma ^{0})\geq 0.
\end{equation*}%
Since the queueing rule is work-conserving, the ex-ante  probability of eventually receiving service, $%
\int_{\gamma ^{0}}\sum_{k,\ell }\gamma _{k,\ell
}^{0} \sigma _{k,\ell } I^{0}(d\gamma ^{0})$, must equal $\sum_{k}p_{k}\mu
_{k}/[\sum_{k}p_{k}\lambda _{k}x_{k}]$---the average rate of receiving
service divided by the average rate of entering the queue at $p$. Next, by
Little's law, the  {ex-ante} expected waiting time, $\int_{\gamma
^{0}}\sum_{k,\ell }\gamma _{k,\ell }^{0}\tau _{k,\ell
}I^{0}(d\gamma ^{0})$, equals $\sum_{k}p_{k}k/[\sum_{k}p_{k}\lambda
_{k}x_{k}]$---the average queue length divided by the average entry rate.
Substituting these two expressions and simplifying the terms, the above
inequality implies \cref{IR}.}) Next, since the $y_{k,\ell }$ are nonnegative
and $z_{k,\ell },x_{k,\ell }$ are all in $[0,1]$, \cref{B} implies \cref{B'}.  Let $\mathcal{W}^{\ast }$ denote the supremum of the value of program $%
[P^{\prime }]$. Then, whenever $\mathcal{W}^{\ast }<\infty $, we must have $%
\mathcal{W}^{\ast }\geq \mathcal{W}$.

The program $[P^{\prime }]$ is interesting in its own right: it can be
interpreted as the problem facing a planner who chooses the invariant
distribution $p$ directly to maximize her objective, simply facing the
primitive process $(\lambda,\mu )$, but disregarding agents' incentives
altogether, except for guaranteeing some minimal payoff for them.
Ultimately, however, we are interested in $[P^{\prime }]$ as an analytical
tool for characterizing an optimal queue design that solves $[P]$, since a
solution to this relaxed program $[P^{\prime }]$ may be attained by a mix of
policy tools $(x,y,z, q, I)$.

Indeed, our ultimate goal is to prove such a policy mix exists, which will
then imply that it optimally solves $[P]$, the real object of interest. The analysis proceeds in three claims: (i) an optimal solution $%
p^{\ast }$ to $[P^{\prime }]$ exists, (ii)
under regular service processes, the optimal solution to the relaxed problem is implemented by a simple entry/exit rule, called a cutoff policy;
(iii)  FCFS, together with  no information  rule, satisfies
\cref{IC} under the optimal cutoff policy.  Since $\mathcal{W}^{\ast }\geq \mathcal{W}$, it would then follow that the latter policy mix solves $[P]$, our original problem of interest.
The remainder of this section
will address (i) and (ii), while claim (iii) will be taken up in the next
section.






 Our next result establishes that under regular service processes, an optimal solution of $[P']$ can be implemented by a cutoff policy. All proofs of the paper are relegated to the Appendix.

\begin{theorem}
\label{thm:cutoff} An optimal solution of $[P^{\prime }]$ exists. If $\mu $
is regular, there is an optimal solution to $[P^{\prime }]$ implemented by a
cutoff policy with maximal queue length $K^*\ge \argmax_k \mu _k V- k C$.
\end{theorem}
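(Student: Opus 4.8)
The plan is to prove the three claims --- existence, the cutoff structure, and $K^{\ast}\ge\argmax_{k}(\mu_{k}V-kC)$ --- in turn. Throughout, set $g_{k}:=\mu_{k}V-kC$ and $w_{k}:=(1-\alpha)R\mu_{k}+\alpha g_{k}$, so $W(p)=\sum_{k}p_{k}w_{k}$ and \cref{IR} reads $\sum_{k}p_{k}g_{k}\ge0$; since $\mu$ is bounded with $\mu_{0}=0$, both $g_{k}$ and $w_{k}$ tend to $-\infty$. For existence, I would first note that \cref{IR} makes the feasible set uniformly tight: bounding $g_{k}\ge-j$ for $k\ge M_{j}$ yields $\sum_{k\ge M_{j}}p_{k}\le\bar\mu V/j$ for every feasible $p$. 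Hence the feasible set is convex and weak-$*$ (i.e.\ pointwise) compact, $p\mapsto W(p)$ is affine and weak-$*$ continuous (as $\mu$ is bounded), and the same quantitative bound gives uniform integrability of $g_{k}^{-}$, so \cref{IR} survives weak-$*$ limits; thus $\mathcal{W}^{\ast}$ is attained. (If $\alpha>0$ one can argue more elementarily: truncating any feasible $p$ at a level $N$ beyond which $g_{k}<0$ and $w_{k}<0$, then renormalizing, preserves feasibility and weakly raises $W$, reducing $[P']$ to a finite-dimensional LP.)

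For the cutoff structure I rely on two facts about \cref{B'} alone, then on regularity. First, \cref{B'} forces the support of any feasible $p$ to be a down-set $\{0,\dots,K\}$ (if $p_{k_{0}}=0$ then $p_{k}=0$ for all $k\ge k_{0}$, and $p_{0}>0$ lest $p\equiv0$); among distributions obeying only \cref{B'}, those with \cref{B'} tight at every $k$ are exactly the cutoff distributions $p^{K}$ with $x_{K-1}=1$ (so $p^{K}_{k}\propto\rho_{k}:=\prod_{j=1}^{k}\lambda_{j-1}/\mu_{j}$ on $\{0,\dots,K\}$), and these are precisely the vertices of the \cref{B'}-polytope, consecutive $p^{K},p^{K+1}$ joined by an edge whose relative interior is the family of cutoff distributions with $x_{K-1}\in(0,1)$. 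Consequently a vertex of the full feasible set is either such a cutoff distribution, or --- when \cref{IR} binds --- a convex combination $\theta p^{k_{0}}+(1-\theta)p^{K}$ of two cutoff distributions, a ``bump.'' To produce an honest cutoff optimum I pass to the Lagrangian relaxation of \cref{IR}. Writing $\Pi_{0}$ for the set of $p\in\Delta(\mathbb{Z}_{+})$ obeying only \cref{B'} and $\tilde w^{\beta}_{k}:=(1-\alpha)R\mu_{k}+(\alpha+\beta)g_{k}$, LP strong duality (a Slater point is $p^{1}$, with \cref{IR}-value $p^{1}_{1}(\mu_{1}V-C)>0$ whenever $\mu_{1}V>C$; the complementary case is degenerate, $g$ being then nonincreasing and nonpositive so that $\delta_{0}=p^{0}$ is the only feasible distribution and the theorem immediate) gives
\[
\mathcal{W}^{\ast}=\min_{\beta\ge0}\ \max_{p\in\Pi_{0}}\ \sum_{k}p_{k}\bigl(w_{k}+\beta g_{k}\bigr)=\min_{\beta\ge0}\ \max_{K}\ \frac{\sum_{k\le K}\rho_{k}\,\tilde w^{\beta}_{k}}{\sum_{k\le K}\rho_{k}}\,,
\]
the second equality because the inner problem is an LP over $\Pi_{0}$, maximized at a vertex $p^{K}$ ($K\in\mathbb{Z}_{+}\cup\{+\infty\}$). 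Let $\beta^{\ast}$ attain the outer minimum. Regularity now enters: concavity of $\mu$ makes $\tilde w^{\beta^{\ast}}$ concave with $\tilde w^{\beta^{\ast}}_{0}=0$, whence $K\mapsto(\sum_{k\le K}\rho_{k}\tilde w^{\beta^{\ast}}_{k})/(\sum_{k\le K}\rho_{k})$ is unimodal in $K$ --- once it decreases, $\tilde w^{\beta^{\ast}}$ is already past its peak, so it cannot rise again. Hence the maximizing set $I^{\ast}$ is an integer interval on which, together with the cutoff segments joining its members, the affine Lagrangian equals $\mathcal{W}^{\ast}$. If some $p^{K}$, $K\in I^{\ast}$, obeys \cref{IR}, it is feasible and optimal. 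Otherwise $\beta^{\ast}>0$, and since the \cref{IR}-values over $I^{\ast}$ average to $0$ at the primal optimum by complementary slackness, two consecutive $K,K+1\in I^{\ast}$ have \cref{IR}-values of opposite sign; sliding $x_{K}$ from $1$ down to $0$ along the cutoff distributions between $p^{K}$ and $p^{K+1}$ moves the \cref{IR}-value continuously through $0$ while the affine Lagrangian stays at $\mathcal{W}^{\ast}$, so at the crossing $W=\mathcal{W}^{\ast}$ and \cref{IR} holds with equality: an optimal cutoff policy. (When $\alpha=0$ and $\sum_{k}\rho_{k}<\infty$, this may be the full birth-and-death chain, $K=+\infty$, which the definition of a cutoff policy permits.)

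For the bound, put $k^{\dagger}:=\argmax_{k}g_{k}$ (finite, as $g_{k}\to-\infty$) and start from the optimal cutoff policy just obtained, of maximal length $K^{\ast}$; suppose $K^{\ast}<k^{\dagger}$. Concavity of $g$ with $g_{0}=0$ forces $g$ --- hence $w$, since $\mu$ is nondecreasing --- to be nondecreasing and nonnegative on $\{0,\dots,k^{\dagger}\}$. Then (a) raising $x_{K^{\ast}-1}$ to $1$, i.e.\ shifting mass onto state $K^{\ast}$, weakly raises $W$ and keeps \cref{IR}$\,\ge0$; and (b) extending the cutoff from $K^{\ast}$ to $K^{\ast}+1$ weakly raises $W$ --- by the running-average identity $W(p^{K^{\ast}+1})\ge W(p^{K^{\ast}})\iff w_{K^{\ast}+1}\ge W(p^{K^{\ast}})$, which holds since $w$ is nondecreasing up to $K^{\ast}+1\le k^{\dagger}$ --- and keeps \cref{IR}$\,\ge0$ because $g_{K^{\ast}+1}\ge0$. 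So $p^{K^{\ast}+1}$ is feasible with value $\ge\mathcal{W}^{\ast}$, hence optimal; iterating produces an optimal cutoff policy of maximal length $\ge k^{\dagger}$.

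The step I expect to be the main obstacle is dispatching the non-cutoff ``bump'' vertices $\theta p^{k_{0}}+(1-\theta)p^{K}$: it is precisely regularity --- through the unimodality of the Lagrangian value in the cutoff level, which lets one slide a binding-\cref{IR} optimum to the \cref{IR} boundary without losing optimality --- that rules these out as the \emph{only} optima. A lesser technical point is the weak-$*$ limit in the existence argument, where \cref{IR} involves the unbounded function $g_{k}$; the quantitative tail bound above is what makes it go through.
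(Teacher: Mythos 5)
Your proof is correct and takes a genuinely different route from the paper's. The paper sidesteps infinite-dimensional duality altogether: it first truncates $[P']$ to a finite state space $\{0,\dots,K\}$, establishes the cutoff structure there via a Lagrangian saddle-point argument in which the regularity input is single-peakedness of $f(k;\xi)=\mu_k((1-\alpha)R+(\alpha+\xi)V)-(\alpha+\xi)Ck$ plus mass-shifting (stochastic-dominance) perturbations, and then recovers the infinite-dimensional optimum and the bound $K^*\ge\argmax_k(\mu_k V-kC)$ by passing to the limit along truncated optima via tightness and Prokhorov. You instead dualize only the single scalar constraint \cref{IR} directly in the infinite-dimensional problem, which avoids the CILP pathologies the paper's footnote warns about because, with a single affine inequality and a Slater point, strong duality reduces to a one-dimensional concave value-function argument. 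You then lean on two observations that do not appear in the paper: (i) the vertices of the $\cref{B'}$-polytope $\Pi_0$ are exactly the cutoff distributions $p^K$, with consecutive ones joined by edges of intermediate $x$; and (ii) concavity of $\tilde w^{\beta^*}$ with $\tilde w^{\beta^*}_0=0$ makes the running average $K\mapsto\sum_{k\le K}\rho_k\tilde w^{\beta^*}_k/\sum_{k\le K}\rho_k$ unimodal --- a cousin of the paper's single-peakedness lemma, but packaged so that the Lagrangian argmax set is seen directly as a contiguous chain of cutoff edges, along which a complementary-slackness/intermediate-value slide finds a cutoff with binding \cref{IR}. Your bound argument is likewise more direct than the paper's (an explicit one-step extension $K^*\to K^*+1$ using that $g$ and $w$ are nondecreasing up to $k^\dagger$, versus the paper's indirect derivation through the limit of truncated optima). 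Two small imprecisions, neither fatal: the parenthetical ``$W$ is weak-$*$ continuous as $\mu$ is bounded'' is off, since the $-Ck$ term is unbounded --- what you actually have (and all you need) is upper semicontinuity, or continuity via the uniform-integrability tail bound you derive in the next clause; and in the degenerate case $\mu_1V\le C$, if $\mu_1 V=C$ the feasible set is not $\{\delta_0\}$ alone but any $p\in\Pi_0$ supported on $\{k:g_k=0\}$, though the conclusion that the theorem is immediate there still stands.
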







 The intuition behind the result can be traced to the fundamental trade-off associated with queueing. Although queueing agents
may at first glance appear wasteful, it serves  as an ``insurance'' against the risk of the service capacity going idle and wasted when too few agents show up for the queue. While this insurance benefit is positive for any queue length, it falls as more agents enter the queue due to the concavity of $\mu_k$ in $k$. Moreover, the waiting costs of agents increase as more of them enter the queue. These two observations explain that a cutoff policy would be optimal.

\subsection{Optimality of FCFS with No Information}

\label{sec:FCFS}

In this section, we establish the general optimality of FCFS with no
information. From now on, we assume that the service process is regular (i.e., part (i) of
\Cref{def: regular}). Then, by \cref{thm:cutoff}, the optimal solution $p^*$ to $%
[P^{\prime }]$ is implemented by a cutoff policy $(x^*,y^*,z^*)$ with a
maximal queue length $K^*\in \mathbb{Z}_+ \cup \{+\infty\}$. To
avoid the trivial case, we assume that $K^*>1$. Further, recall that the
optimal cutoff policy has $y^*_{k,\ell}=z^*_{k,\ell}=0$. For notational ease, we sometimes simply write this optimal cutoff policy as $%
x^*$, and similarly, write the optimal policy $(x^*, y^*, z^*, p^*)$ as $%
(x^*, p^*)$.

In what follows, we fix the optimal outcome $x^*$ and the maximal queue length $K^*>1$. We will then show that FCFS, together with an optimal information
design, implements $(x^*, p^*)$; namely, \cref{IC} holds under that policy.
Since $[P^{\prime }]$ is a relaxation of $[P]$, this will prove that the
identified policy mix solves $[P]$.

We denote the first-come-first-served (FCFS) rule by $q^*$, where, as
defined before, the service rate is given by $q_{k,\ell}^*=\mu _{\ell}-\mu
_{\ell-1} \triangleq q^*_{\ell}$ for each $(k,\ell)$ with $k\ge \ell$. Not surprisingly, under
FCFS the expected waiting time depends only on one's queue position $\ell$,
so we use $\tau^*_{\ell}$ to denote the expected waiting time for an agent
with queue position $\ell$. Given the primitives, this can be pinned down
exactly.

\begin{lemma}
\label{lem:waiting-time-FCFS} For any $\ell=1,..., K^*$, $\tau^*_{\ell}= {%
\ell}/{\mu _{\ell}}$. $\tau^*_{\ell}$ is
nondecreasing in $\ell$. If $2\mu_1>\mu_2$, then $\tau^*_{\ell}$ is strictly
increasing in $\ell$.
\end{lemma}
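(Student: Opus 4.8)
The plan is to obtain the formula $\tau^*_\ell=\ell/\mu_\ell$ by describing the remaining sojourn of a tagged agent currently at position $\ell$ as a pure-jump Markov chain on his \emph{position}, solving the resulting first-order recursion by induction, and then reading off the two monotonicity claims from the regularity of $\mu$ (part (i) of \cref{def: regular}).

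First I would isolate the tagged agent's dynamics under $q^*$ and the cutoff policy. Since $y^*=z^*=0$, no agent is ever removed, and under FCFS the set of agents strictly ahead of the tagged agent can only shrink, by service; a new arrival joins behind him, and a service strictly behind him changes neither his position nor his service rate $q^*_\ell=\mu_\ell-\mu_{\ell-1}$, which by definition of FCFS does not depend on the queue length $k$ as long as $k\ge\ell$ (and $k\ge\ell$ always holds while he occupies position $\ell$). Hence, by the memoryless property, the only transitions that matter from a position $\ell\ge 1$ are: he is served, at rate $q^*_\ell=\mu_\ell-\mu_{\ell-1}$ (he leaves the queue), or some agent strictly ahead of him is served, at combined rate $\sum_{j=1}^{\ell-1}q^*_j=\mu_{\ell-1}$ (he moves to position $\ell-1$). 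The total rate of relevant transitions is $\mu_\ell$, and since his position only decreases and each $\mu_j>0$, he is served with probability one; with the convention $\tau^*_0=0$,
\[
\tau^*_\ell=\frac1{\mu_\ell}+\frac{\mu_{\ell-1}}{\mu_\ell}\,\tau^*_{\ell-1},\qquad \ell=1,\dots,K^*.
\]
A one-line induction then finishes it: $\tau^*_1=1/\mu_1$, and if $\tau^*_{\ell-1}=(\ell-1)/\mu_{\ell-1}$ then $\tau^*_\ell=\frac1{\mu_\ell}+\frac{\ell-1}{\mu_\ell}=\frac{\ell}{\mu_\ell}$.

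For monotonicity, note $\tau^*_{\ell+1}\ge\tau^*_\ell$ is equivalent to $\ell\mu_{\ell+1}\le(\ell+1)\mu_\ell$, i.e. to $\ell(\mu_{\ell+1}-\mu_\ell)\le\mu_\ell$. Writing $\mu_\ell=\sum_{j=1}^\ell(\mu_j-\mu_{j-1})$ (using $\mu_0=0$) and using that $\mu_j-\mu_{j-1}$ is nonincreasing in $j$, each of the $\ell$ summands is at least $\mu_\ell-\mu_{\ell-1}\ge\mu_{\ell+1}-\mu_\ell$, so $\mu_\ell\ge\ell(\mu_{\ell+1}-\mu_\ell)$, giving $\tau^*_{\ell+1}\ge\tau^*_\ell$. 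For the strict claim under $2\mu_1>\mu_2$: the first summand satisfies $\mu_1=\mu_1-\mu_0>\mu_2-\mu_1\ge\mu_{\ell+1}-\mu_\ell$ (strict by hypothesis, then weakly by regularity), while the remaining $\ell-1$ summands are each $\ge\mu_{\ell+1}-\mu_\ell$; summing yields $\mu_\ell>\ell(\mu_{\ell+1}-\mu_\ell)$, i.e. $\tau^*_{\ell+1}>\tau^*_\ell$ for all $\ell\ge 1$ (the case $\ell=1$ being exactly the hypothesis).

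The only genuinely delicate step, which I would write out carefully, is the reduction of the sojourn to the position-only chain: one must verify (a) that $k\ge\ell$ throughout the time the tagged agent sits at position $\ell$, so his service rate is the \emph{constant} $\mu_\ell-\mu_{\ell-1}$ and arrivals play no role; (b) that a service completion among positions $1,\dots,\ell$ is precisely the event ``tagged agent moves up or leaves,'' with the rate split $q^*_\ell$ versus $\mu_{\ell-1}$ as above; and (c) that every other event (an arrival anywhere, or a service strictly behind him) leaves his position and his service rate unchanged, hence can be discarded by memorylessness. Everything after that is the routine induction and the algebraic rearrangement of the regularity inequality sketched above.
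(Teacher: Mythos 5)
Your proposal is correct and follows essentially the same route as the paper: isolate the tagged agent's position-only Markov chain under FCFS (his rate to move up from $\ell$ is $\mu_{\ell-1}$, to be served is $q^*_\ell=\mu_\ell-\mu_{\ell-1}$), derive the recursion $\mu_\ell\tau^*_\ell=1+\mu_{\ell-1}\tau^*_{\ell-1}$, solve to get $\tau^*_\ell=\ell/\mu_\ell$, and obtain monotonicity from $q^*_j=\mu_j-\mu_{j-1}$ being nonincreasing. Your explicit decomposition $\mu_\ell=\sum_{j=1}^{\ell}(\mu_j-\mu_{j-1})\ge\ell\,(\mu_{\ell+1}-\mu_\ell)$ is algebraically the same as the paper's $\sum_{j=1}^{\ell}q^*_j-\ell q^*_{\ell+1}\ge 0$, including the strict version when $2\mu_1>\mu_2$.
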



From now on, we denote the no information rule by $I^*\in
\mathcal{I}$. Recall that, under this rule, \emph{no information is provided to
each agent both at the time of joining the queue and after joining the
queue, beyond
recommendations to join or stay in
the queue}.
This means that when he joins the queue, he
forms a belief about his position $\ell$, or the length of the queue, based on his prior belief (given by the invariant distribution) and the recommendation to join the queue.  From
then on, he updates the belief about his queue position at each $t>0$
according to Bayes rule without any further information (given that he is
recommended to stay from then on). In practice, the no-information rule can be implemented by sending a message consisting of either ``join'' or ``leave,'' or by providing a coarse (i.e., binary) estimate of the ``expected'' waiting time, to an arriving agent.

Given the cutoff policy $x^*$ and the queueing and information rules $(q^*, {I}^*)$, the incentive
constraint at time $t$ is given by
\begin{equation*}
V- C \sum_{\ell=1}^{K^*} \tilde{\gamma}_{\ell}^{t} \cdot \tau^*_{\ell}\ge 0, %
\leqno(IC_{t})
\end{equation*}%
where $\tilde{\gamma}^{t}=(\tilde{\gamma}_1^{t}, ...., \tilde{\gamma}%
_{K^*}^{t}) \in \Delta (\{1,..., K^*\}) $ is the belief on his position in
the queue after spending time $t$ on the queue.\footnote{Note that $\sigma_{k,\ell}=1$ for all $k,\ell$ since, by definition of the cutoff policy,
the designer never removes agents from the queue.} Since the expected waiting
time depends only on one's position, the belief on other variables such as
the queue length $k$ does not affect the agent's incentive to join or stay
in the queue.

Given the information rule ${I}^*$, the belief at the time of joining the
queue must be:
\begin{equation}
\tilde{\gamma}_{\ell }^{0}=\left\{%
\begin{array}{ll}
\frac{p^*_{\ell-1 }\tilde\lambda_{\ell-1 }}{\sum_{i=0}^{K^*-1}p^*
_{i}\tilde\lambda_{i} } & \mbox{ if } \ell=1,..., K^* \\
\label{eq:belief0} 0 & \mbox{ if } \ell >K^*, \cr%
\end{array}
\right.
\end{equation}
where $\tilde \lambda_k$ is an ``effective'' arrival rate given by: $\tilde
\lambda_k\triangleq \lambda_k$ for $k=0,...,K^*-2$, and $\tilde%
\lambda_{K^*-1}\triangleq x^*_{K^*-1}\lambda_{K^*-1}$.\footnote{Recall that the
optimal cutoff policy may involve random entry at $k=K^*-1$; recall that $%
x^*_{K^*-1}\in (0,1]$ stands for the optimal randomization at $k=K^*-1$.} This formulation
rests on the consistency of an agent's belief about the rule in
place---namely, $(x^*, q^*,I^*)$---as well as the invariant distribution $%
p^*$. Specifically, \cref{eq:belief0} computes the probability of an agent
occupying position $\ell$ conditional on entering the queue. Its numerator
is the probability that an agent joins the queue in state $\ell-1$, which
equals the probability of there being $\ell-1$ agents already in the queue
multiplied by the probability of entry per unit time in that state $%
\tilde\lambda_{\ell-1 }$.\footnote{The formula in \cref{eq:belief0} is justified as follows. Recall that (by the PASTA property---\cite{wolff1982}),  one can think of an agent's {\it unconditional} belief about the state  as  given by the invariant distribution over states. The conditional belief is then obtained by conditioning based on the entry $\{x_k\}$ policy as well as the heterogeneity in the arrival rate $\{\lambda_k\}$.}
Its denominator is the total probability of entering the queue per unit of time.

It is easy to show that the candidate policy $(q^*, I^*)$ provides the agents with incentives to enter the queue, i.e., it satisfies $(IC_0)$.  In fact, $(IC_0)$  follows from \Cref{IR} {\it regardless of} the queueing rules, under no information $I^* $.\footnote{\Cref{fn:IR and IC_0} shows how this is implied by \cref{IR}. See also online appendix \Cref{online_app:IC_0} for an alternative argument using directly the characterization of waiting times under FCFS given in \Cref{lem:waiting-time-FCFS}.}  By contrast, it is more challenging to show that   $(q^*, I^*)$ satisfies $(IC_t)$ for $t>0$, namely, that the agents have the incentive to stay in the queue once they join it.  To
examine the latter, we need to study how an agent's belief evolves after he joins the queue. Since no agent is
recommended to abandon the queue, $(IC_t)$ for $t>0$ boils down to whether
an agent's belief about his queue position becomes (at least weakly) more
favorable---or put more probability at lower $\ell$'s---as time passes.

Suppose that an agent has belief $\tilde{\gamma}^t$ after spending time $%
t\ge 0$ in the queue. By Bayes rule, after time $t+dt$, his belief is
updated to:\footnote{  \Cref{online_app:infty} derives this belief recursion equation rigorously.}
\begin{equation*}
\tilde{\gamma}_{\ell }^{t+dt}=\frac{\tilde{\gamma}_{\ell
}^{t}(1-\sum_{i=1}^{\ell } {q}^{\ast }_i dt)+\tilde{\gamma}_{\ell
+1}^{t}\sum_{i=1}^{\ell } {q}^{\ast }_i dt }{\sum_{i=1}^{K^* }\tilde{ \gamma}%
_{i}^{t}(1- {q}^{\ast }_idt)} +o(dt)\text{.}
\end{equation*}
The numerator is the probability that his queue position is $\ell$ after
staying in the queue for length $t+dt$ of time. This event occurs if either
(i) the agent already has position $\ell$ in the queue at time $t$ and none
of the agents ahead of him and himself have been served during time increment $dt$; or
(ii) if he has position $\ell+1$ at $t$ and one agent ahead of him is served
by $t+dt$.\footnote{%
The probability of multiple agents ahead of him being served during $[t,
t+dt)$ has a lower order of magnitude denoted by $o(dt)$.} The denominator in turn gives the total probability that the agent has
not been served by time $t$. Hence, given that an agent has not been served
by $t $, the above expression gives the conditional belief that his position
in the queue is $\ell $ at time $t+dt$. By the definition of FCFS, we have $\sum_{i=1}^{\ell }{q}^{\ast }_i=\mu _{\ell}$, so we can
rewrite the belief updating rule as:

\begin{equation}
\tilde{\gamma}_{\ell }^{t+dt}=\frac{(1-\mu _{\ell}dt)\tilde{\gamma}_{\ell
}^{t }+\mu _{\ell}dt \tilde{\gamma}_{\ell +1}^{t }}{\sum_{i=1}^{K^* }\tilde{
\gamma}_{i}^{t}(1- {q}^{\ast }_idt)}+o(dt)\text{.}  \label{Eq: cond beliefs}
\end{equation}

We now study how the belief updates dynamically over time under $(q^*, I^*)$. The statistic we focus on is the \textbf{likelihood ratio} $%
r^t_{\ell}\triangleq \frac{\tilde \gamma^t_{\ell}}{\tilde \gamma^t_{\ell-1}}$
in beliefs of being in queue position $\ell$ to being in queue
position $\ell-1$ after spending time $t$ on the queue. One can use %
\cref{Eq: cond beliefs} to derive a system of ordinary differential
equations (ODEs) on the likelihood ratios:
\begin{equation}  \label{eq:lr}
\dot r_{\ell}^{t}=r^{t}_{\ell} \left(\mu _{\ell-1}-\mu _{\ell} -\mu
_{\ell-1} r^{t}_{\ell} +\mu _{\ell} r_{\ell +1}^{t}\right),
\end{equation}
where $\ell=2,..., K^*$. Further, the invariant distribution $p^*$ can be
used to obtain the boundary conditions, $r_{\ell}^{0}= \frac{%
\tilde\lambda_{\ell-1}}{\mu _{ \ell-1}},$ for $\ell=2,..., K^*$, where we recall that $%
\tilde\lambda_k$ is the effective arrival rate. \Cref{app:dyn-incentives} derives this system
of ODEs and establishes the existence of a unique solution.

We will argue that the regularity of the primitive process (in particular part
(ii) of \Cref{def: regular}) is sufficient for these likelihood ratios---the solution to the above
ODEs---to decline over time, meaning one's belief about his position becomes
progressively favorable under $(q^*, I^*)$. At first glance, this seems
obvious under FCFS: conditional on starting at any position $\ell$ at $t=0$,
an agent's queue position can \emph{only} improve as time passes. Since the
agent begins with no information, however, this is not the only event about
which the agent updates his beliefs. The agent is also updating his belief
about his initial position $\ell$. On this account, however, the time $t$ spent on the queue is ``bad'' news, as it suggests that he may have been too
optimistic about his position initially, causing him to revise his initial
queue position pessimistically as time passes.

\begin{figure}[t]
\caption{Belief about position $\ell=1$}
\label{fig:top-position}\centering
\includegraphics[height=3in]{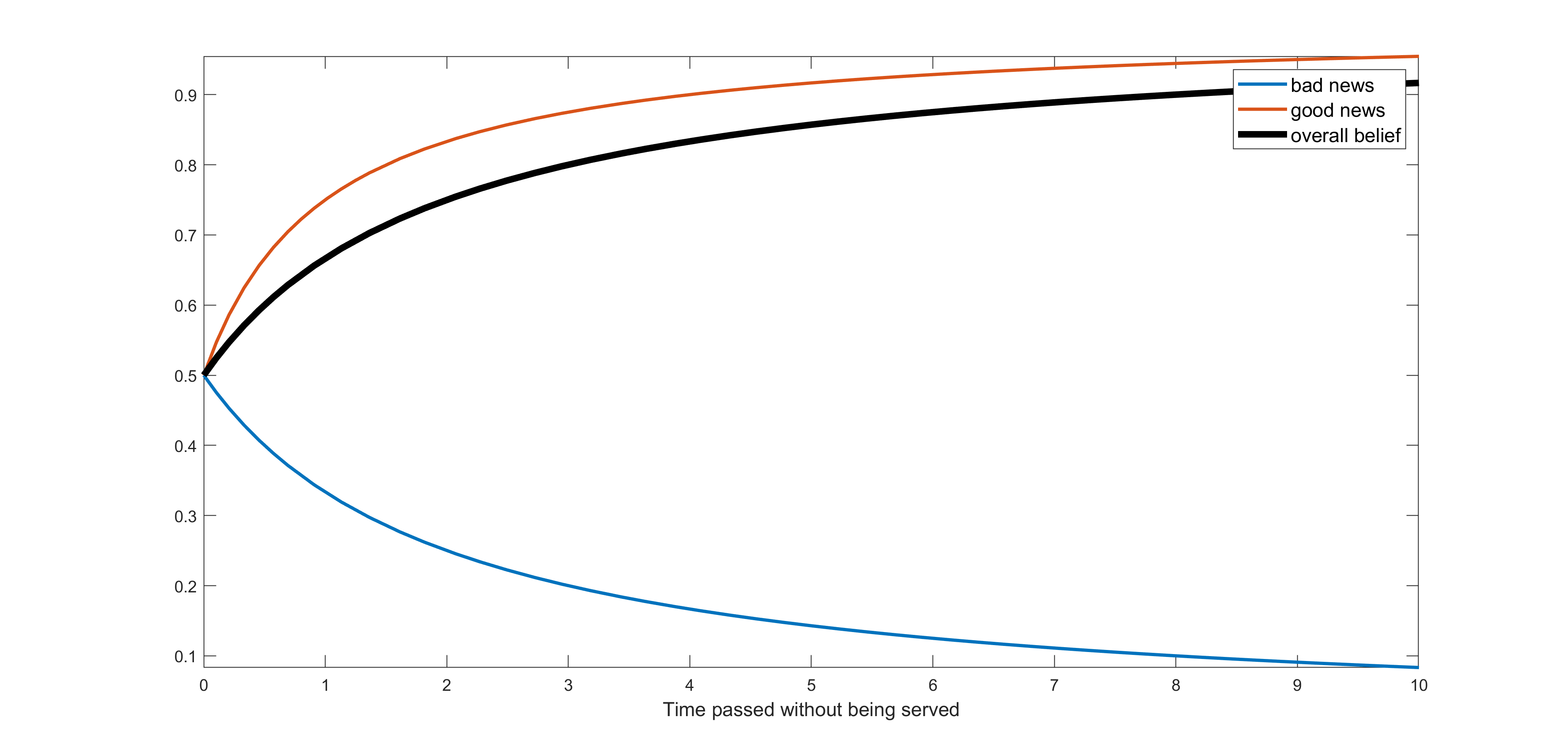} Note: $M/M/1$ with $K^*=2$%
; $\lambda=\mu =1$.
\end{figure}

\cref{fig:top-position} displays these two competing effects in an $M/M/1$
queue with $K^*=2$. Its top graph depicts the good news effect: an agent's
belief about being at the top position ($\ell=1$) is improving over time
\textit{when the belief about his initial queue position is held fixed at
the prior}. The bottom graph depicts the bad news effect: the belief about
his initial queue position being $\ell=1$ falls over time. The middle graph
displays the overall evolution of the belief---namely about $\ell=1$
conditional on not being served by $t$. Its increase means that the former
``position-improvement'' effect dominates the worsening posterior about the
initial position.

The regularity of the primitive process is sufficient for the good news
effect to dominate the bad news effect:

\begin{lemma}
\label{lem:dyn-incentives} Assume that the primitive process $(\lambda,\mu )
$ is regular. Then, for all $\ell\in \{2,..., K^*\}$, $r^{t}_{\ell}$ is
nonincreasing in $t$ for all $t \geq 0$.
\end{lemma}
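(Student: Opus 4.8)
The plan is to show that the drift in the ODE system \cref{eq:lr} is always nonpositive. Write $g_\ell^t:=\mu_{\ell-1}-\mu_\ell-\mu_{\ell-1}r_\ell^t+\mu_\ell r_{\ell+1}^t$, adopting the convention $r_{K^*+1}^t\equiv 0$ (which is exactly how \cref{eq:lr} reads at $\ell=K^*$, since under the cutoff policy no agent ever occupies a position above $K^*$), so that \cref{eq:lr} becomes $\dot r_\ell^t=r_\ell^t g_\ell^t$. Each coordinate then solves a scalar linear ODE $\dot r_\ell^t=g_\ell(t)\,r_\ell^t$ with continuous coefficient, so $r_\ell^t=r_\ell^0\exp(\int_0^t g_\ell(s)\,ds)\ge 0$ for all $t$ (using the existence and uniqueness of the solution established in \cref{app:dyn-incentives}). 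Hence it suffices to prove $g_\ell^t\le 0$ for every $\ell\in\{2,\dots,K^*\}$ and every $t\ge 0$; the lemma then follows at once from $\dot r_\ell^t=r_\ell^t g_\ell^t\le 0$.

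First I would verify the claim at $t=0$. Substituting the boundary conditions $r_\ell^0=\tilde\lambda_{\ell-1}/\mu_{\ell-1}$ and $r_{K^*+1}^0=0$ gives $g_\ell^0=(\mu_{\ell-1}-\mu_\ell)+(\tilde\lambda_\ell-\tilde\lambda_{\ell-1})$ for $2\le\ell\le K^*-1$ and $g_{K^*}^0=(\mu_{K^*-1}-\mu_{K^*})-\tilde\lambda_{K^*-1}$. Since $\mu$ is nondecreasing, the latter is $\le 0$; and for $2\le\ell\le K^*-1$, part (ii) of \cref{def: regular} gives $\lambda_k-\lambda_{k-1}\le\mu_k-\mu_{k-1}$ for $k\ge 2$, which combined with $\tilde\lambda_k=\lambda_k$ for $k\le K^*-2$ and $\tilde\lambda_{K^*-1}\le\lambda_{K^*-1}$ yields $\tilde\lambda_\ell-\tilde\lambda_{\ell-1}\le\mu_\ell-\mu_{\ell-1}$, i.e.\ $g_\ell^0\le 0$.

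The engine of the argument is that the $g_\ell$'s obey an upper-triangular linear system. Differentiating $g_\ell^t$ and substituting \cref{eq:lr},
\[
\dot g_\ell^t=-\mu_{\ell-1}r_\ell^t\,g_\ell^t+\mu_\ell r_{\ell+1}^t\,g_{\ell+1}^t,
\]
so $\dot g_\ell$ depends only on $g_\ell$ and $g_{\ell+1}$. I would then induct downward on $\ell$ from $K^*$ to $2$, the base case $\ell=K^*$ being the instance with $g_{K^*+1}\equiv 0$. Given $g_{\ell+1}^t\le 0$ for all $t$, the forcing term $\mu_\ell r_{\ell+1}^t g_{\ell+1}^t$ is $\le 0$ (as $\mu_\ell\ge 0$ and $r_{\ell+1}^t\ge 0$), so multiplying through by the integrating factor $\exp(\int_0^t\mu_{\ell-1}r_\ell^s\,ds)>0$ shows that $t\mapsto g_\ell^t\exp(\int_0^t\mu_{\ell-1}r_\ell^s\,ds)$ is nonincreasing; hence $g_\ell^t\le g_\ell^0\le 0$ for all $t$. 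This closes the induction, and the lemma follows.

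The step I expect to need the most care is the treatment of the top of the queue. One must confirm that \cref{eq:lr} at $\ell=K^*$ is genuinely the $\ell=K^*$ equation read with $r_{K^*+1}\equiv 0$ — so that the downward induction has a legitimate starting point — and that the boundary inequality $g_{K^*}^0\le 0$ holds; both rely on the cutoff structure $x^*_k=0$ for $k\ge K^*$. The remaining wrinkle is $K^*=+\infty$, where there is no top equation; there one passes to the finite-$K$ truncations, applies the finite-horizon result, and lets $K\to\infty$ using continuous dependence of the ODE solution on the truncation level, an argument deferred to the appendix. Everything else is routine once the triangular structure is identified.
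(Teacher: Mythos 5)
Your proof is correct and takes a genuinely different route from the paper's. Both arguments start the same way: define $g_\ell^t:=\mu_{\ell-1}-\mu_\ell-\mu_{\ell-1}r_\ell^t+\mu_\ell r_{\ell+1}^t$ (so $\dot r_\ell^t=r_\ell^t g_\ell^t$) and verify $g_\ell^0\le 0$ from regularity and the monotonicity of $x^*_k$. The divergence is in propagating the sign forward. The paper argues by contradiction via a ``first time of failure'': it picks $\ell$ minimizing $T_\ell:=\inf\{t:\dot r_\ell^t>0\}$, asserts at $t=T_\ell$ that $\dot r_\ell^t=0$, $\ddot r_\ell^t>0$, $\dot r_{\ell'}^t\le 0$ for all $\ell'$, then differentiates \cref{eq:lr} to get $\ddot r_\ell^t=r_\ell^t\mu_\ell\dot r_{\ell+1}^t\le 0$, a contradiction. (Strictly, one can only infer $\ddot r_\ell^{T_\ell}\ge 0$ at the infimum, so the paper's argument glosses over possible higher-order tangency.) You instead observe that the $g_\ell$'s satisfy the upper-triangular system $\dot g_\ell^t=-\mu_{\ell-1}r_\ell^t g_\ell^t+\mu_\ell r_{\ell+1}^t g_{\ell+1}^t$, and do a clean downward induction: the base case $\ell=K^*$ is the homogeneous equation (since $r_{K^*+1}\equiv 0$ under the cutoff policy), and once $g_{\ell+1}\le 0$ is known, the forcing term is nonpositive and the integrating factor seals the sign of $g_\ell$. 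This sidesteps second derivatives and minimal-time arguments entirely, and is arguably more robust. One small wording slip: the integrating-factor bound gives $g_\ell^t\exp(\int_0^t\mu_{\ell-1}r_\ell^s\,ds)\le g_\ell^0\le 0$, hence $g_\ell^t\le 0$ (the exponential is positive); but it does not give $g_\ell^t\le g_\ell^0$ as you state. Since only $g_\ell^t\le 0$ is needed, the argument is unaffected. Your deferral of the $K^*=\infty$ case to a truncation-and-limit argument matches the paper's treatment in the online appendix.
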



Intuitively, regularity ensures that the arrival rate does not rise faster
than the service rate as the queue length increases. This keeps the adverse
inference about initial position from worsening one's belief about the
residual waiting time.\footnote{ {Our proof method differs from the standard queuing analysis which focuses on the increasing or decreasing hazard rate of an agent's waiting time in the  $M/M/1$ and $M/M/c$ queue models (see \cite{gnedenko1989}).  Analyzing the evolution of hazard rates appears difficult in our general Markovian model. We believe that the current method that tracks the evolution of posterior beliefs are of independent analytical interest for queuing theory.}
} We can now state the following theorem.

\begin{theorem}
\label{thm:dyn-fcfs} Assume that the primitive process is regular. Then,
FCFS with no information $(q^*, I^*)$ implements the optimal outcome $%
(x^*,p^*)$ where $p^*$ solves $[P']$. Consequently, $(x^*, q^*,I^*)$ is an optimal solution of $[P]$.
\end{theorem}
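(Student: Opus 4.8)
The plan is to verify that the candidate policy $(q^*,I^*)$ together with the cutoff policy $x^*$ satisfies every constraint of $[P]$, and then to close the gap between $[P]$ and its relaxation $[P']$ by a sandwich argument. Recall that under a cutoff policy no agent is ever removed, so $\sigma_{k,\ell}=1$ for all $(k,\ell)$; that under the no-information rule $I^*$ the support of $I^t$ is the single degenerate belief $\tilde\gamma^t$ on the agent's position; and that by \Cref{lem:waiting-time-FCFS} the expected residual wait of an agent currently in position $\ell$ equals $\tau^*_\ell=\ell/\mu_\ell$, depending only on $\ell$ and, by memorylessness, not on the time already spent in the queue. Hence $(IC_t)$ is the scalar inequality $V-C\,\bar\tau^t\ge 0$ with $\bar\tau^t\triangleq\tsum_{\ell=1}^{K^*}\tilde\gamma^t_\ell\,\tau^*_\ell$, and it suffices to prove (i) $V-C\,\bar\tau^0\ge 0$ and (ii) $\bar\tau^t$ is nonincreasing in $t$.

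For (i) I would substitute the balance equation \cref{B} for the cutoff policy, namely $\tilde\lambda_k p^*_k=\mu_{k+1}p^*_{k+1}$ for $k=0,\dots,K^*-1$, into the boundary belief \cref{eq:belief0}; this gives $\tilde\gamma^0_\ell=\mu_\ell p^*_\ell/\tsum_{i\ge 1}\mu_i p^*_i$, so that the $\mu_\ell$ factors cancel against $\tau^*_\ell=\ell/\mu_\ell$ and $\bar\tau^0=(\tsum_k k\,p^*_k)/(\tsum_k \mu_k p^*_k)$. Thus $V-C\,\bar\tau^0\ge 0$ is equivalent to $\tsum_k p^*_k(\mu_k V-kC)\ge 0$, which is exactly \cref{IR} and therefore holds because $p^*$ solves $[P']$. (This is the concrete manifestation of the general fact, recorded in \Cref{fn:IR and IC_0}, that $(IC_0)$ follows from \cref{IR} under no information, whatever the queueing rule.)

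For (ii) I would invoke \Cref{lem:dyn-incentives}: regularity of the primitive process makes each likelihood ratio $r^t_\ell=\tilde\gamma^t_\ell/\tilde\gamma^t_{\ell-1}$ nonincreasing in $t$, for $\ell\in\{2,\dots,K^*\}$. The key step is then that for any $0\le t<t'$ and each $\ell$, $(\tilde\gamma^t_\ell/\tilde\gamma^{t'}_\ell)/(\tilde\gamma^t_{\ell-1}/\tilde\gamma^{t'}_{\ell-1})=r^t_\ell/r^{t'}_\ell\ge 1$, so $\ell\mapsto\tilde\gamma^t_\ell/\tilde\gamma^{t'}_\ell$ is nondecreasing; hence $\tilde\gamma^t$ dominates $\tilde\gamma^{t'}$ in the monotone likelihood ratio order, and a fortiori in first-order stochastic dominance. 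Since $\tau^*_\ell$ is nondecreasing in $\ell$ (\Cref{lem:waiting-time-FCFS}), taking expectations yields $\bar\tau^t\ge\bar\tau^{t'}$, which is (ii). Combining (i) and (ii), $V-C\,\bar\tau^t\ge V-C\,\bar\tau^0\ge 0$ for every $t\ge 0$, so \cref{IC} holds and $(q^*,I^*)$ implements $(x^*,p^*)$.

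To conclude, feasibility of $(p^*,x^*,y^*,z^*,q^*,I^*)$ for $[P]$ gives $\mathcal{W}\ge W(p^*)=\mathcal{W}^*$, while $\mathcal{W}^*\ge\mathcal{W}$ since $[P']$ relaxes $[P]$ and $\mathcal{W}^*<\infty$ by \Cref{thm:cutoff}; hence $\mathcal{W}=\mathcal{W}^*=W(p^*)$ and $(x^*,q^*,I^*)$ is an optimal solution of $[P]$. The real obstacle in this argument is \Cref{lem:dyn-incentives} itself — its proof in \Cref{app:dyn-incentives} does the substantive work of converting regularity into monotone likelihood ratios through the ODE system \cref{eq:lr} with the stated boundary conditions. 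Within the present theorem the only remaining delicate point is the case $K^*=+\infty$: there one must check that the belief trajectory $(\tilde\gamma^t)_{t\ge 0}$ is well defined, that the coordinates $\tilde\gamma^t_\ell$ stay strictly positive so that the ratios $r^t_\ell$ are meaningful, and that the implication from monotone likelihood ratio to stochastic dominance survives on an infinite support — all of which is handled in \Cref{online_app:infty}.
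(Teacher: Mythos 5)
Your proof is correct and follows essentially the same route as the paper: establish $(IC_0)$ from \cref{IR} via \cref{lem:ic0-fcfs}, then use \Cref{lem:dyn-incentives} to show the posterior mean of $\tau^*_\ell$ cannot rise, and close with the sandwich $\mathcal{W}\ge W(p^*)=\mathcal{W}^*\ge\mathcal{W}$. You are slightly more explicit than the paper on two points that it leaves implicit — the passage from decreasing likelihood ratios $r^t_\ell$ to MLR (hence FOSD) dominance of $\tilde\gamma^0$ over $\tilde\gamma^t$, and the care required when $K^*=\infty$ — but these are the same ideas spelled out, not a different argument.
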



 We close this section with two remarks.
First, the above result relies on the designer's ability to stop an agent
from entering a queue. While the designer does have such a power in many
settings, the power is unnecessary if $V \mu_{K^*}\le K^* C $, which holds
for instance if \cref{IR} is binding at the optimal outcome; the latter in turn holds when $1-\alpha$, the weight in the designer's objective on the service provider's profit, is large enough. In that case,
the designer can simply issue a ``recommendation'' not to enter when $k=K^*$%
, and the agent will follow that recommendation.\footnote{%
Given the length $K^*$ (which the agent infers from the recommendation not
to enter), he expects to wait for $\tau^*_{K^*}=K^*/\mu_{K^*}$ (recall %
\Cref{lem:waiting-time-FCFS}).}
	
Second, to the extent that regularity is extremely mild, one may view this theorem
as suggesting that the combination of FCFS and No Information is optimal in
a broad set of circumstances. Nevertheless, the dynamic incentives provided
by FCFS, or the role played by regularity conditions, should not be taken
for granted. Intuitively, with the failure of regularity, delay is more of a
signal about the initial queue length being long than about predecessors
having been served, and thus one's belief, and therefore one's incentive to stay in the queue,
may get worse over time. We provide an example in \Cref{Example_non-regular} of the online appendix
where regularity fails and as a consequence the optimal solution to $[P^{\prime }]$ is
not implementable under $(q^*, I^*)$.

\section{Necessity of FCFS for Optimality in a Rich Domain}

\label{sec:necessity}

We have shown that FCFS with no information is optimal in all regular
environments. This result raises the question of whether a different
queueing/information policy may be also optimal in some (or all) environments. While some other policies may be also optimal in some environments,\footnote{For instance, one can show that, when $\alpha =1$, FCFS is optimal under full
information, with the entry controlled optimally. See our generalization of \cite%
{naor1969regulation} in appendix \Cref{sec:Naor generalization}.  In the same environment,  \cite{hassin1985} and \cite{su-zenios2004}%
) have shown that versions of LCFS, possibly \emph{with preemption} (i.e.,
where a newly arriving agent replaces one under service), are optimal under
full information when $\alpha=1$.}
we show below none of them  can be optimal in {\it all}
regular environments. Specifically, we show that of all feasible queueing rules, FCFS is the
only queueing rule that is optimal for all (regular) queueing environments.
Or equivalently, for any queueing rule differing from FCFS, we exhibit a
(regular) environment in which this rule is suboptimal under any information
rule.

For this purpose, we focus on the  simplest
environment: the M/M/1 environment in which a uniquely optimal solution to $%
[P^{\prime }]$ involves (i) $K^*=2$, (ii) no rationing when $k=K^*-1=1$, and
(iii) a binding \cref{IR}. Specifically, we fix any service rate $\mu>0$. We
then consider a sufficiently small arrival rate $\lambda$ by letting it
approach zero. When we do this, we simultaneously adjust the values of $(V,
C,\alpha)$ to ensure that properties (i), (ii), and (iii) continue to hold.%
\footnote{\label{fn:y zeros copy(2)}
These requirements can be met by
choosing $V/C=\frac{2\lambda +\mu }{(\lambda +\mu )\mu }$ and $\alpha =0$.
In that case, there is a unique optimal solution $p$ to $[P^{\prime }]$ and
any outcome $(x,y,z)$ implementing $p$ satisfies (i), (ii) and (iii). Note
that assumption (iii) precludes $\alpha =1$ under which \cref{IR} is
non-binding at the optimal policy as long as the value of the objective may
be strictly positive.}

Since $K^{\ast }=2$,
there are only three relevant \textquotedblleft states,\textquotedblright\ $%
(k,\ell )=(1,1),(2,1),(2,2)$, based on the queue length $k$ and one's queue
position $\ell $. Hence, we can denote a queueing rule by $%
q=(q_{1,1},q_{2,1},q_{2,2})$. Recall that FCFS corresponds to $q^{\ast
}=(\mu ,\mu ,0)$. For any feasible work-conserving queueing rule, we must
have $q_{1,1}=\mu $ and $q_{2,1}+q_{2,2}=\mu $. Hence, a queueing rule $q \in \mathcal{Q}$
can differ from FCFS $q^{\ast }$ if and only if $q_{2,1}<\mu $, or
equivalently, $q_{2,2}>0$.
 Formally, we say that a queueing rule \textbf{differs from FCFS}
if $q_{2,2}$ is bounded away from $0$ for all possible values of  $\lambda$ (recall that we have fixed the value of $\mu$).\footnote{A standard queueing rule does not depend on the arrival rate of agents. An exception is Load-Independent Expected Wait (LIEW)  considered by \cite{leshno2019dynamic}, which adjusts priorities based on the arrival rates. Nevertheless, LIEW has $q_{2,2}$ bounded away from $0$, so it satisfies our definition of a queueing rule differing from FCFS.}
All
queueing rules studied in the literature such as SIRO, LCFS, and LIEW differ
from FCFS in this sense. 
We are now in a position to state the main result of this section:

\begin{theorem}
\label{thm:necessity} Fix any queuing rule $q$ that differs from FCFS. Then,
there exists a regular (in particular $M/M/1$)
 queueing environment  with values $(V,C, \alpha, \lambda, \mu)$ such
that the queueing rule $q$ fails $(IC_{t})$ for some $t>0$ under any
information policy. Hence, $q$ cannot implement the optimal cutoff policy
under any information policy.
\end{theorem}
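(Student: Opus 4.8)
Fix a queueing rule $q$ that differs from FCFS and let $\epsilon>0$ be a bound with $q_{2,2}\ge\epsilon$ for all the values of $\lambda$ under consideration; feasibility and work‑conservation force $q_{1,1}=\mu$ and $q_{2,1}=\mu-q_{2,2}\in[0,\mu-\epsilon]$. I would work in the $M/M/1$ environment with the calibration described before the statement, namely $\alpha=0$ and $V/C=\tfrac{2\lambda+\mu}{(\lambda+\mu)\mu}$, and let $\lambda$ be small. Then $M/M/1$ is (trivially) regular, the unique optimal solution $p^*$ of $[P']$ has cutoff $K^*=2$, no rationing at $k=1$, and binding \cref{IR}; it is implemented by $x^*=(1,1,0,0,\dots)$ with $y^*=z^*=0$. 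The only states are $(k,\ell)\in\{(1,1),(2,1),(2,2)\}$. It suffices to exhibit a single $t>0$ at which $(IC_t)$ is violated by the \emph{true} posterior over states; a short preliminary observation then upgrades this to a violation under every information rule.

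\emph{Step 1 (reduction to the true posterior).} Under a cutoff policy no agent is ever removed, so $\sigma_{k,\ell}=1$ and $(IC_t)$ reads $V-C\sum_{k,\ell}\gamma^t_{k,\ell}\,\tau_{k,\ell}\ge 0$, which is affine in the belief $\gamma^t$. If some information rule $I$ let $q$ implement $(x^*,p^*)$, then every agent obeys the recommendation to stay, so the tagged agent's state evolves as a fixed continuous‑time Markov chain on $\{(1,1),(2,1),(2,2)\}$ with absorbing state ``served'', started from the entry distribution ($\tfrac{\mu}{\mu+\lambda}$ on $(1,1)$ and $\tfrac{\lambda}{\mu+\lambda}$ on $(2,2)$, read off from $p^*$ via PASTA). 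Hence the law $\bar\gamma^t$ of the state at time $t$, conditional on having entered and not yet been served, does \emph{not} depend on $I$, and $\bar\gamma^t=\int\gamma^t\,I^t(d\gamma^t)$ by Bayes consistency. Since $(IC_t)$ is affine, if $V-C\sum_{k,\ell}\bar\gamma^t_{k,\ell}\tau_{k,\ell}<0$ for some $t$, then whatever $I$ is, some $\gamma^t\in\supp(I^t)$ also violates $(IC_t)$. Finally, writing $T$ for the tagged agent's sojourn time, the memoryless property gives $\sum_{k,\ell}\bar\gamma^t_{k,\ell}\tau_{k,\ell}=\E[T-t\mid T>t]$, so I only need $t>0$ with $\E[T-t\mid T>t]>V/C$.

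\emph{Step 2 (mean residual life of $T$).} The sub‑generator of the tagged chain on $\big((1,1),(2,1),(2,2)\big)$ is
\begin{equation*}
A=\begin{pmatrix}-(\lambda+\mu)&\lambda&0\\ q_{2,2}&-\mu&0\\ q_{2,1}&0&-\mu\end{pmatrix},
\end{equation*}
whose eigenvalues are $-\mu$ and the two roots of $\theta^2+(\lambda+2\mu)\theta+\mu(\lambda+\mu)-\lambda q_{2,2}=0$, i.e. $\theta_\pm=\tfrac12\big(-(\lambda+2\mu)\pm\sqrt{\lambda^2+4\lambda q_{2,2}}\big)$. Because $q_{2,2}>0$ we have $-\mu<\theta_+<0$, so $\theta_+$ is the dominant eigenvalue; set $\theta:=-\theta_+$. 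As $T$ is phase‑type, $\Pr[T>t]\sim c\,e^{-\theta t}$ as $t\to\infty$ with $c>0$ — positivity follows by a Perron–Frobenius argument on the $M$‑matrix $-A$, the entry distribution and $\mathbf 1$ pairing positively with the nonnegative right/left $\theta_+$‑eigenvectors — and therefore $\E[T-t\mid T>t]=\Pr[T>t]^{-1}\!\int_t^\infty\!\Pr[T>s]\,ds\;\longrightarrow\;1/\theta$.

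\emph{Step 3 (the decisive comparison).} It remains to check $1/\theta>V/C$, i.e. $\theta<C/V=\tfrac{\mu(\mu+\lambda)}{2\lambda+\mu}$, for $\lambda$ small. A one‑line computation gives $C/V=\mu-\tfrac{\lambda\mu}{2\lambda+\mu}>\mu-\lambda$. On the other side, $\theta=\tfrac12\big((\lambda+2\mu)-\sqrt{\lambda^2+4\lambda q_{2,2}}\big)<\mu-\lambda$ precisely when $3\lambda<\sqrt{\lambda^2+4\lambda q_{2,2}}$, i.e. when $2\lambda<q_{2,2}$; since $q_{2,2}\ge\epsilon$ this holds for every $\lambda<\epsilon/2$. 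For such $\lambda$ we get $\theta<\mu-\lambda<C/V$, so $\E[T-t\mid T>t]\to 1/\theta>V/C$ and $(IC_t)$ fails for all large $t$; by Step 1, $q$ cannot implement the optimal cutoff policy under any information rule. I expect Step 3 to be the substance of the argument: it is where the informal ``dispersion'' story becomes quantitative. A non‑FCFS rule ($q_{2,2}$ bounded away from $0$) makes the tail of $T$ decay at rate $\theta\approx\mu-\sqrt{\lambda q_{2,2}}$, only $O(\sqrt\lambda)$ below $\mu$, whereas the individual‑rationality calibration leaves $C/V$ only $O(\lambda)$ below $\mu$; the $\sqrt\lambda$‑versus‑$\lambda$ gap is exactly what defeats the rule. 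The remaining pieces — the phase‑type tail asymptotics, positivity of $c$, and checking regularity of $M/M/1$ — are routine.
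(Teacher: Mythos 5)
Your proof is correct, but it takes a genuinely different route from the paper's. The paper shows that $(IC_t)$ fails for \emph{small} $t>0$: it studies $U(t)=S(t)V-W(t)C$, uses $U(0)=0$ (binding \cref{IR}), and proves $\dot U(0)/\lambda$ converges to a strictly negative number as $\lambda\to 0$ by computing the limits of $(\tau_{k,\ell})$, $(\sigma_{k,\ell})$, and $(\dot\gamma^0_{k,\ell})$ and then evaluating $\dot W(0)/\lambda \to q_{2,2}/\mu^2 >0$ while $\dot S(0)/\lambda\to 0$. You instead show that $(IC_t)$ fails for \emph{large} $t$: you compute the sub-generator of the tagged agent's phase-type chain, identify the dominant eigenvalue $\theta_+$, and show the asymptotic mean residual life $1/\theta$ exceeds $V/C$ whenever $2\lambda<q_{2,2}$. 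Both reach the conclusion, and both hinge on the same calibration ($\alpha=0$, $V/C = \tfrac{2\lambda+\mu}{(\lambda+\mu)\mu}$, small $\lambda$); your version makes the ``dispersion'' intuition quantitative in a clean way, showing a $\sqrt{\lambda}$-vs-$\lambda$ gap between the tail decay rate and the break-even threshold, while the paper's version explicitly attributes the failure to the loss of priority $q_{2,1}<\mu$ right after entry. Your Step~1 (reduction to the true posterior via affineness and Bayes-consistency of $I^t$) is a slightly more explicit phrasing of the paper's observation that the no-information policy is worst-case.

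Two small points deserve care if you write this up formally. First, the entry state $(2,2)$ cannot be reached from $(1,1)$ or $(2,1)$, so $-A$ is a \emph{reducible} M-matrix; the Perron--Frobenius argument you gesture at should be carried out on the irreducible sub-block $\{(1,1),(2,1)\}$ (whose dominant eigenvalue is indeed $\theta_+$), with the $(2,2)$ contribution handled separately via $v_3 = q_{2,1}v_1/(\mu+\theta_+)$ — this still gives $c>0$ because the entry distribution puts positive mass on $(1,1)$. Second, your sub-generator tacitly assumes $x^*_2=0$ and $y^*=z^*=0$; that is exactly the paper's definition of the optimal cutoff policy, so this is fine for the literal theorem statement, but you should say so explicitly since the paper's own proof carries extra $z_{2,\ell}^*$ terms as a normalization when $x_2^*>0$.
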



The intuition for this result is most clear under LCFS. Under this rule, an
agent loses his service priority when another agent enters. So, if an agent
were initially indifferent to queueing,
he will definitely wish to abandon the queue once a new agent enters.
Consequently, $(IC_{t})$ fails at time $t$ when a new entry occurs if he had
full information. Even with no information, as time passes without getting
served, an agent will suspect that a new entry is increasingly likely and he
will lose his priority as a consequence. This feature destroys his dynamic
incentive.\footnote{A similar problem arises with LIEW, the queueing rule that equalizes the expected waiting times upon entry, to maximize the incentive to join the queue.  Note the latter goal is achieved under all queueing rules once the no-information policy is adopted. More problematic is the incentive to stay resulting from LIEW.  The equalization of
waiting time across queue lengths means that an agent who enters an empty
queue must be ``penalized'' in service priority later when a new agent
enters. This very feature undermines the dynamic incentive of an agent. The
root cause of the problem under these rules is: $q_{2,1}<q_{1,1}=\mu$%
---namely, the loss of priority an agent suffers when a new agent arrives.}
Although LCFS is extreme in this regard, any rule that assigns $%
q_{2,1}<\mu =q_{1,1}$, including SIRO, suffers from the same fundamental
issue.  As mentioned in the introduction, the issue is traced to the dispersed wait times arising from these rules, compared with FCFS. A dispersion of wait times creates unfavorable conditional beliefs for agents since the elapse of time on the queue (without being served) signals a longer residual wait time.

 This point is illustrated in \cref{fig:waitingtime}, which plots the expected waiting times
against the time-on-the-queue under five queueing disciplines: FCFS, SIRO, LIEW, LCFS, and LCFS-PR, where LCFS-PR is the LCFS with ``preemption,'' namely, a rule in which an old agent leaves when a new agent enters the queue.
As is clearly
seen, and consistent with \Cref{thm:necessity}, as time passes, an agent in
the queue expects to wait increasingly longer under all these disciplines,
except for FCFS under which his expected wait decreases.
\begin{figure}[h]
\caption{Expected waiting times under alternative values of $q$.}
\label{fig:waitingtime}\centering
\includegraphics[height=3in]{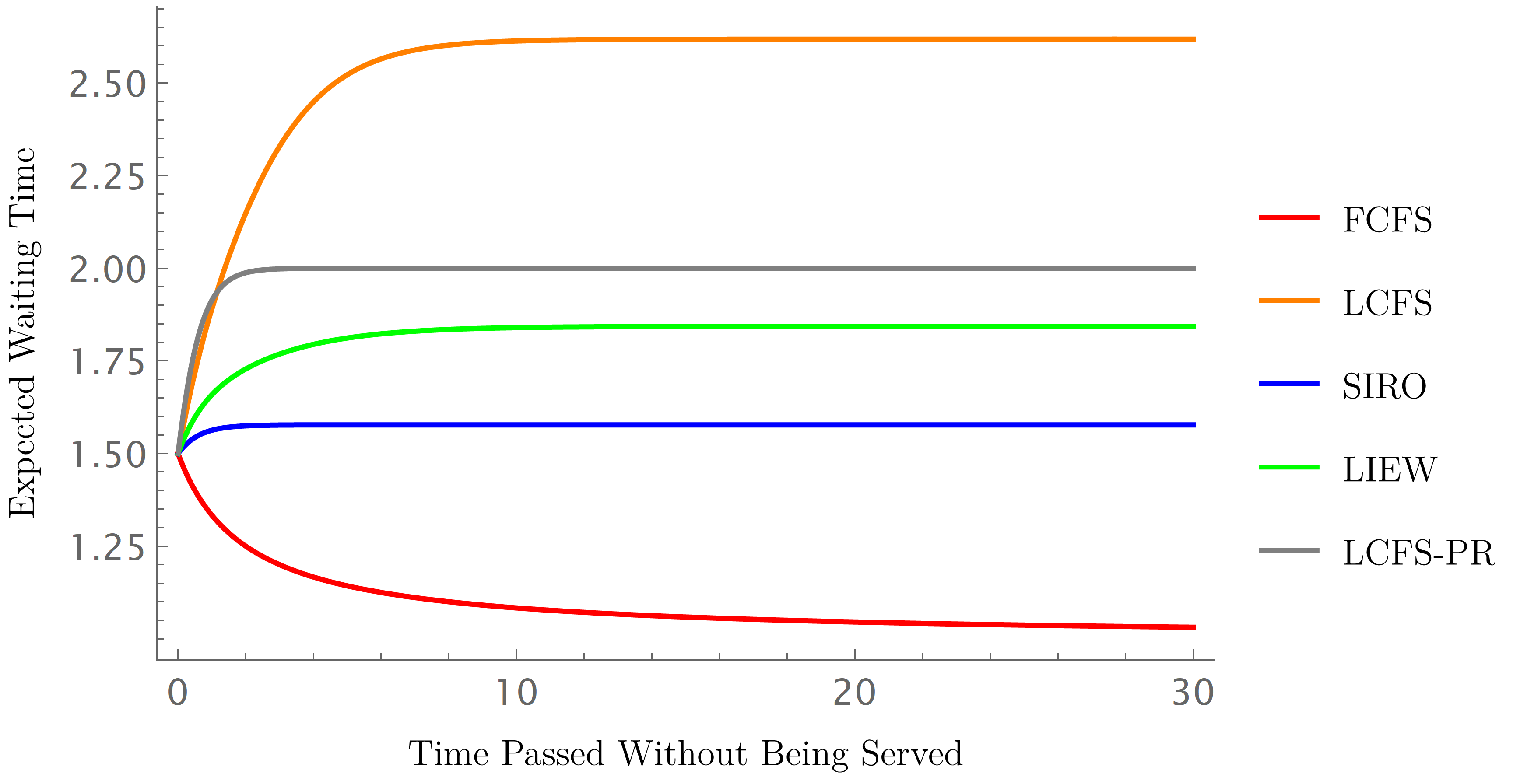}\newline
Note: $M/M/1$ with $K^*=2$; $\lambda=\mu =1$.
\end{figure}

\section{Concluding Remarks}

\label{sec:conclusion}

While we have focused on a canonical queueing model, the insights we obtain appear general and apply beyond our model. Here we
discuss how one may extend our analysis to other settings of potential
interest. \smallskip

\indent\textbf{Dynamic two-sided matching.} A topic closely related to
queueing is dynamic matching; see \cite{akbarpour2017thickness}, %
\citet{akbarpour2020unpaired}, \cite{baccara2020optimal}, \cite%
{leshno2019dynamic}, \cite{doval2018efficiency}, and \cite%
{ashlagi2019imbalance}, among others. The primary focus of this literature is
the optimal timing of matching and assignment, rather than queueing
incentives. Exceptions are \cite{leshno2019dynamic} and \cite%
{baccara2020optimal}, who study incentives for two different types of agents
for queueing to match with either two different types of objects (e.g.,
housing) or agents. In such a model, efficiency calls for accumulating
agents in a queue until the right type of object or agent arrives, to avoid
mismatching.
\cite{leshno2019dynamic} assumes overloaded demand so that the
planner wishes to incentivize the agents to queue as much as possible, and shows
that, given complete information, SIRO outperforms FCFS in
this regard and LIEW outperforms all other mechanisms.  This result rests crucially on his assumption of complete information. In fact, the main problem of his model is captured precisely by an M/M/1 version of our model with $\alpha=0$, where the designer wishes to maximize the incentive for queuing just as in his model. As has been shown in the current paper, with optimal information design,  the FCFS could do just as well as any other mechanism, including LIEW, in incentivizing agents to enter a
queue. Meanwhile, if the dynamic incentives are the problem, which the existing authors ignored, then FCFS does strictly better than other
queueing disciplines.\footnote{ Despite the ostensible difference in modeling,  \Cref{app: formal arg	discussion sec} in the online appendix shows that our analysis applies
without much modification to Leshno's model, and points out that the main
results from \cite{leshno2019dynamic} rest on his full information
assumption.  Strictly speaking, \cite%
{leshno2019dynamic} assumes the value of outright exit to be very low (e.g., in
comparison with the value of a mismatched object), so the dynamic incentives may not be a problem.  If the value of the outside option is significant, however, as we assume in \Cref{app: formal arg discussion sec} in the online appendix,  then the dynamic
incentives will matter just as they do in our model. Note also that the dynamic incentive issue does not arise in SIRO or FCFS under complete
information: any agent who joins the queue will have the incentive to stay
in the queue. But recall that neither discipline would implement the
optimum under complete information. Under \emph{no information} (which is
optimal), dynamic incentives will be an issue.}
\cite{baccara2020optimal}'s model is similar to that
of \cite{leshno2019dynamic}, except that there are agents on both sides.
Here again, our main insights in \Cref{thm: main} and \Cref{thm:necessity} apply.\footnote{Unlike \cite{leshno2019dynamic}, agents' incentives to enter a
queue may be excessive under FCFS or LCFS with full information. While this is an
issue in their decentralized matching, in our setting the designer can
easily solve the problem by preventing an agent from entering a queue, as is
often done in practice.}

\smallskip

\indent\textbf{Monopolist problem with endogenously set fee.}
If $\alpha=0$, one could interpret
the service provider/designer as a
monopolist who provides the service. We treated
 the fee $R$ as exogenous, representing the shadow value of addressing customer
needs. In many contexts, however, one may think of this profit as a monetary fee
collected and set by the service provider. In this case, the designer/monopolist  chooses this fee $R$ and the net surplus of customers for
service now equals $V-R$. Hence, we can rewrite problem $[P]$ assuming that $%
R$ is part of the decision variables and incorporating the new net surplus of customers
into the \cref{IC}
condition.
Our framework can be easily adapted to this environment. Indeed, one can
write $[P^{\prime }]$ assuming that the designer chooses both the invariant
distribution and the fee level.  Given the optimal choice of fee, the rest of the proof applies without any modification.
Namely, a cutoff policy is optimal. Clearly, \Cref{lem:dyn-incentives} must still hold, so \Cref{thm:dyn-fcfs} (and so \Cref{thm: main}) extends to
this context.  Incidentally, one can also characterize the
optimal fee in this context.
Intuitively, when choosing the fee, the monopolist should
consider both its impact on his profit and also on the incentives of agents
to join (and stay) in the queue. For instance, a higher fee
increases profit but may also discourage agents from joining the queue, which
increases the probability that the servers go idle and thus jeopardizes the opportunity to collect that fee. 
The optimal fee must balance this tradeoff.


\smallskip
\indent\textbf{Time preferences.} 
The current model follows the standard convention of the queueing literature
by assuming linear waiting cost. This convention is useful for analytical tractability and  comparability with existing
queueing models. It serves another purpose in our model: it isolates the effect of dynamic incentives generated by alternative
queueing rules. Given linear waiting costs, we find that the differences in waiting time
{\it distributions}  across alternative queueing rules matter for agents' dynamic incentives for queuing.  In particular, the fact that FCFS induces the least dispersed waiting times in comparison with other queueing rules helps to minimize the adverse updating from a ``missing'' an early service.  Introducing nonlinear time preferences will confound this effect by rendering the waiting-time
distribution under an queueing rule {\it directly} payoff-relevant. A reasonable conjecture is, though, that risk-averse time preferences will reinforce the optimality of FCFS whereas risk-loving time preferences (such as exponential
discounting) will counteract it. \smallskip

\indent\textbf{Heterogenous preferences.}
Following the standard queueing models, we have assumed that agents have
homogeneous preferences.  It will be interesting to
allow agents to differ in their waiting costs, value of service, or in their
service requirements. Such heterogeneities will introduce the need by the
designer to treat agents differently based on their types, for instance
prioritizing service toward those agents with high waiting costs, high value
of service and small service requirements.\footnote{%
See \cite{Anunrojwong2020} for a simple model of heterogenous waiting
costs---i.e., zero cost and positive costs.} This will again confound the
analysis by making allocation of service priority directly payoff-relevant,
above and beyond making it relevant from the perspective of dynamic
incentives---the central focus of the current study. In particular, if the
agents' characteristics are unobservable, one must deal with additional
incentive issues with screening agents based on this additional
informational asymmetry. Such an extension is therefore beyond the scope of
the current paper. Nevertheless, we expect that the main logic and
thrust of the current paper will extend to such a model. At least within each type of agents, allocating service according to FCFS contributes to their dynamic incentives for queueing, and
will be desirable.

We leave these and other worthy extensions of the current model for future
research.

\bibliographystyle{economet}
\bibliography{references}



\renewcommand{\theequation}{\Alph{section}.\arabic{equation}}

\renewcommand{\thetheorem}{\Alph{section}.\arabic{theorem}}

\renewcommand{\theproposition}{\Alph{section}.\arabic{proposition}}

\renewcommand{\thelemma}{\Alph{section}.\arabic{lemma}}
\renewcommand{\theclaim}{A.\arabic{claim}}

\renewcommand{\thecorollary}{\Alph{section}.\arabic{corollary}}

\renewcommand{\thedefinition}{\Alph{section}.\arabic{definition}}

\renewcommand{\theexample}{\Alph{section}.\arabic{example}}

\renewcommand{\thefootnote}{\Alph{section}.\arabic{footnote}}

\renewcommand{\thetable}{\Alph{section}.\arabic{table}}

\renewcommand{\thefigure}{\Alph{section}.\arabic{figure}}


\appendix

\section*{Appendix}

\section{Proof of \cref{thm:cutoff}}

\label{app:theorem1}


Rewrite problem $[P^{\prime }]$ as:
\begin{equation*}
\max_{p\in M}\sum_{k=0}^{\infty }p_{k}\left[ \mu _{k}((1-\alpha )R+\alpha
V)-\alpha Ck\right] \text{ s.t. } \sum_{k=0}^{\infty }p_{k}\left[ \mu
_{k}V-Ck\right] \geq 0, \leqno{[P']}
\end{equation*}%
%
%
%
where $M\triangleq \{p\in \Delta (\mathbb{Z}_{+}):p\ $satisfies \cref{B'}$\}$%
. (Recall our convention that, $\mu _{0}=0$).

Note that, assuming  $p_{k+1}>0$, \cref{B'} binds at $k$ if and only if \cref{B} is satisfied for $%
x_k=1,z_{k,\ell}=0$ for all $\ell=1,...,k$ and $y_{k+1,\ell}=0$ for all $\ell=1,...,k+1$. This means that an invariant distribution $p$ is generated by a cutoff policy $%
(x,y,z)$ with maximal length $K$ (possibly infinite) if and only if $\supp%
(p)=\{0,...,K\}$ and \cref{B'} binds for all $k=0,...,K-2$ and holds for $%
k=K-1$ (with weak inequality). Hence, in the sequel, if a distribution $p$ satisfies the latter
property, we will simply say that it exhibits a cutoff policy.  Our goal in this section is therefore to show that the above LP problem has an optimal solution
that exhibits that property.

Below we use a Langrangian characterization of the LP problem. Unlike finite
dimensional LP problems, this characterization is not automatically valid in
infinite dimensional LP problems.\footnote{{Countably infinite linear
programs (CILPs) are linear optimization problems with a countably infinite
number of variables and a countably infinite number of constraints. It is
well-known that many of the nice properties of finite dimensional linear
programming may fail to hold in these problems. Indeed, while in finite
dimensional LP problems, zero duality gap is ensured provided that the
primal problem is feasible, necessary conditions for zero duality gap for
CILPs are much more demanding and may often fail. See \cite%
{kipp/ryan/stern:16} and references therein.}} In order to overcome the
difficulty, we first study a finite dimensional truncation of $[P^{\prime }]$
where the state space contains finitely many states, say $K$, where $K$ can
potentially be \textquotedblleft large\textquotedblright . In this
environment, we will show that an optimal solution $p^{K}$ exhibits a cutoff
policy (\Cref{sec:finite dim analysis}). In a second step, we show that as $%
K $ gets large, a limit point of $\{p^{K}\}$ is an optimal solution of $%
[P^{\prime }]$ and exhibits a cutoff policy. The proof of this second step,
in essence, uses a continuity argument---and so uses fairly routine
arguments. Hence it is sketched in \Cref{sec:infinite dim
analysis} but the formal argument is relegated to the online appendix %
\cref{sup: proof final prop}.

\subsection{Finite dimensional analysis}

\label{sec:finite dim analysis}

In the sequel, we fix an integer $K\geq 0$. We consider the following
\textquotedblleft truncated\textquotedblright\ version of $[P^{\prime }]$,
say $[P_{K}^{\prime }]$%
\begin{equation*}
\max_{p\in M_{K}}\sum_{k=0}^{K}p_{k}\left[ \mu _{k}((1-\alpha )R+\alpha
V)-\alpha Ck\right] \text{ s.t. } \sum_{k=0}^{K}p_{k}\left[ \mu _{k}V-Ck%
\right] \geq 0, \leqno{[P'_K]}
\end{equation*}%
%
%
%
where $M_{K}\triangleq \{p\in \Delta (\{0,1,...,K\}):p\ $satisfies $\cref{B'}%
\}$.

Let us fix $\xi \geq 0$ and consider the problem $[\mathcal{L}_{\xi }]$%
\begin{equation*}
\max_{p\in M_{K}}\mathcal{L}(p,\xi )\leqno{[\mathcal{L}_{\xi }]}
\end{equation*}%
where
\begin{eqnarray*}
\mathcal{L}(p,\xi ) &\triangleq &\sum_{k=0}^{K}p_{k}\left[ \mu _{k}(\left(
1-\alpha \right) R+\alpha V)-\alpha Ck\right] +\xi \sum_{k=0}^{K}p_{k}\left[
\mu _{k}V-Ck\right] \\
&=&\sum_{k=0}^{K}p_{k}f(k;\xi ),
\end{eqnarray*}%
where $f(k;\xi )\triangleq \mu _{k}((1-\alpha )R+\left( \alpha +\xi \right)
V)-\left( \alpha +\xi \right) Ck$.

The Lagrangian dual of problem $[P_{K}^{\prime }]$ is taking the $\inf $
over $\xi \geq 0$ of the value of $[\mathcal{L}_{\xi }]$. Since $M_{K}$ is a
convex set, the problem constitutes a finite dimensional linear program, so
strong duality applies. Hence, $p^{\ast }$ is an optimal solution if and
only if there is (a Lagrange multiplier) $\xi ^{\ast }\geq 0$ such that $%
(p^{\ast },\xi ^{\ast })$ is a saddle point of the function $\mathcal{L}%
(\cdot ,\cdot )$, i.e.,
\begin{equation*}
\mathcal{L}(p,\xi ^{\ast })\leq \mathcal{L}(p^{\ast },\xi ^{\ast })\leq
\mathcal{L}(p^{\ast },\xi )
\end{equation*}%
for any $\xi \geq 0$ and $p\in M_{K}$. We fix a saddle point $(p^{\ast },\xi
^{\ast })$ of function $\mathcal{L}(\cdot ,\cdot )$ and show that it
exhibits a cutoff policy.

In this section, we will show a finite-dimensional version of %
\cref{thm:cutoff} stated below.

\begin{proposition}
\label{prop:cutoff finite dim} If $\mu $ is regular, then there is an
optimal solution for $[P_{K}^{\prime }]$ which exhibits a cutoff policy. In
addition, $p _{k}^{\ast }>0$ for each $k\leq \min \{k^{\ast },K\}$ where $%
k^{\ast }\triangleq\min \arg \max f(k;\xi ^{\ast })$.
\end{proposition}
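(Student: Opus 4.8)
Throughout, recall that since $(p^{*},\xi^{*})$ is a saddle point, $p^{*}$ maximizes the \emph{linear} functional $p\mapsto\mathcal{L}(p,\xi^{*})=\sum_{k=0}^{K}p_{k}f(k;\xi^{*})$ over the polytope $M_{K}$, and conversely any maximizer of $\mathcal{L}(\cdot,\xi^{*})$ over $M_{K}$ that satisfies \cref{IR} with complementary slackness is optimal for $[P'_{K}]$. My first step would be to record the elementary structural fact that every $p\in M_{K}$ has $p_{0}>0$ and $\supp(p)=\{0,\dots,m\}$ for some $m\le K$: by \cref{B'} and $\mu_{k+1}>0$, $p_{k}=0$ forces $p_{k+1}=0$, so $p_{0}=0$ would force $p\equiv 0$. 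Hence ``$p$ exhibits a cutoff policy'' means precisely that \cref{B'} binds at $0,1,\dots,m-2$, where $m=\max\supp(p)$.

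Second, I would establish single-peakedness of $f(\cdot;\xi^{*})$. Writing $A\triangleq(1-\alpha)R+(\alpha+\xi^{*})V>0$ and $B\triangleq(\alpha+\xi^{*})C\ge0$, we have $f(k;\xi^{*})-f(k-1;\xi^{*})=(\mu_{k}-\mu_{k-1})A-B$, which is nonincreasing in $k$ because $\mu_{k}-\mu_{k-1}$ is (regularity of $\mu$). So $f(\cdot;\xi^{*})$ is discretely concave; with $k^{*}\triangleq\min\arg\max_{k}f(k;\xi^{*})$ (possibly $+\infty$), it is strictly increasing on $\{0,\dots,k^{*}\}$ and nonincreasing on $\{k^{*},k^{*}+1,\dots\}$.

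The core of the argument is a mass-shifting lemma: at any maximizer $p$ of $\mathcal{L}(\cdot,\xi^{*})$ over $M_{K}$, \cref{B'} binds at every $k$ with $0\le k<\min\{k^{*},K\}$, and $\max\supp(p)\ge\min\{k^{*},K\}$. Both are by contradiction. If \cref{B'} is slack at such a $k$ (which forces $p_{k}>0$), then transferring a small mass $\epsilon>0$ from $p_{k}$ to $p_{k+1}$ stays in $M_{K}$ — the only constraints touched are \cref{B'} at indices $k-1,k,k+1$, the outer two being relaxed and the middle one remaining satisfied for small $\epsilon$ — and it changes the objective by $\epsilon\bigl(f(k+1;\xi^{*})-f(k;\xi^{*})\bigr)>0$ since $k+1\le k^{*}$, contradicting optimality; similarly, if $m=\max\supp(p)<\min\{k^{*},K\}$, shifting mass from $p_{m}$ to $p_{m+1}$ (feasible because $\lambda_{m}>0$) strictly increases the objective. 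Together with $p_{0}>0$ and $\lambda_{j},\mu_{j+1}>0$, the binding of \cref{B'} on $\{0,\dots,\min\{k^{*},K\}-1\}$ gives, via the recursion $p_{j+1}=(\lambda_{j}/\mu_{j+1})p_{j}$, that $p_{k}>0$ for all $k\le\min\{k^{*},K\}$, which is the positivity assertion.

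It then remains to promote this to the full cutoff structure. If $k^{*}\ge K$, the lemma already forces \cref{B'} to bind at $0,\dots,K-1$, so $\supp(p^{*})=\{0,\dots,K\}$ and $p^{*}$ is a cutoff. If $k^{*}<K$, set $m=\max\supp(p^{*})$, so $m\ge k^{*}$ by the lemma; when $m\le k^{*}+1$, \cref{B'} binds on $\{0,\dots,k^{*}-1\}\supseteq\{0,\dots,m-2\}$ and $p^{*}$ is a cutoff (with possible rationing at $m-1$). The one remaining case, $m\ge k^{*}+2$, is where I expect the real difficulty: $p^{*}$ then carries mass on states $k^{*}+1,\dots,m$ lying on the weakly-decreasing branch of $f(\cdot;\xi^{*})$, and the plan is to argue that $p^{*}$ may be replaced, without changing its value in $[P'_{K}]$, by an optimal solution whose ``tail'' mass has been slid down until \cref{B'} binds throughout $\{0,\dots,m'-2\}$ for its support $\{0,\dots,m'\}$ — concretely, by passing to an extreme point of the nonempty polytope of optimal solutions and checking, via the mass-shifting lemma together with a count of active constraints, that such an extreme point has at most one slack \cref{B'} constraint and that it occurs at the top of the support. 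Handling this tail carefully — in particular its interplay with a binding \cref{IR} constraint and with any flat stretch of $f(\cdot;\xi^{*})$, which is exactly what generates the rationing degree of freedom — is the main obstacle; the single-peakedness and mass-shifting steps preceding it are routine.
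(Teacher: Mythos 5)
The first half of your argument matches the paper's. The single-peakedness of $f(\cdot\,;\xi^{*})$ (your discrete-concavity observation) is the paper's \cref{lem: single-peakedness}, proved the same way, and your mass-shifting step on the strictly increasing part of $f$ is exactly the paper's \cref{claim:increasing part}: at a maximizer of $\mathcal{L}(\cdot\,,\xi^{*})$ over $M_{K}$, $\cref{B'}$ binds whenever $f$ is still strictly climbing, which also yields $p_{k}^{\ast}>0$ for $k\le\min\{k^{\ast},K\}$ via the recursion $p_{k+1}=(\lambda_{k}/\mu_{k+1})p_{k}$. So the positivity claim and the ``increasing branch'' part of the cutoff structure are in order.

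The gap is precisely where you flag the ``main obstacle,'' and your sketch does not close it. Your count of active constraints at a vertex of the optimal face does show that at most one $\cref{B'}$ among $k=0,\dots,m-1$ can be slack (with $m=\max\supp(\hat p)$, the $K-m$ nonnegativity constraints plus $m-1$ of the remaining inequalities are needed for a basis). But the claim that the slack one ``occurs at the top of the support'' is not established by anything you have. Your mass-shifting lemma only rules out a slack $\cref{B'}$ at $k<k^{\ast}$; a slack $\cref{B'}$ at some $k_{0}\in\{k^{\ast},\dots,m-2\}$, compensated by an active $\cref{IR}$, is not excluded. Moreover, there is an unresolved choice: if you take a vertex of the face of $M_{K}$ where the Lagrangian is maximized, you must still verify that vertex satisfies $\cref{IR}$, which need not hold; if instead you take a vertex of the optimal face of $\{p\in M_{K}:\text{\cref{IR}}\}$, then $\cref{IR}$ can enter the basis and your ``at most one slack'' bound does not immediately give cutoff structure. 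Either way, the last step of your plan is asserted rather than proved.

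The paper closes the tail case by contradiction on the Lagrangian, not by vertex counting, and this is the idea you are missing. Assume $p^{\ast}$ (the saddle-point partner of $\xi^{*}$) fails to be a cutoff, and let $k_{0}>k^{\ast}$ be the first slack $\cref{B'}$ with $p_{k_{0}+1}^{\ast}>0$. Setting $k^{\ast\ast}:=\max\arg\max f(\cdot\,;\xi^{*})$, there are two constructions depending on whether $p^{\ast}$ carries mass above $k^{\ast\ast}$. If it does, define $\hat p$ to agree with $p^{\ast}$ below $k_{0}$, then bind $\cref{B'}$ from $k_{0}$ upward until the unit mass is exhausted at some $\hat K\le K$, and set $\hat p_{k}=0$ beyond $\hat K$; then $p^{\ast}$ stochastically dominates $\hat p$ with a strict gap, and since $f(\cdot\,;\xi^{*})$ is weakly decreasing past $k^{\ast}$ (strictly past $k^{\ast\ast}$), the paper's stochastic-dominance lemma (\cref{lem: charact stoch dom}, which you don't have) gives $\mathcal{L}(\hat p,\xi^{*})>\mathcal{L}(p^{\ast},\xi^{*})$. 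If $p^{\ast}$ has no mass above $k^{\ast\ast}$, define $\hat p$ by scaling down $p_{k}^{\ast}$ for $k<k_{0}$ by a factor $1/Z_{1}$ ($Z_{1}>1$, close to $1$) and depositing the freed mass at $k_{0}$; this $\hat p$ stochastically dominates $p^{\ast}$ with a strict gap, and since $f(\cdot\,;\xi^{*})$ is weakly increasing on $\supp(p^{\ast})\subseteq\{0,\dots,k^{\ast\ast}\}$, the same lemma again gives $\mathcal{L}(\hat p,\xi^{*})>\mathcal{L}(p^{\ast},\xi^{*})$. Both $\hat p$'s lie in $M_{K}$ and are used \emph{only} to violate the saddle inequality $\mathcal{L}(\cdot\,,\xi^{*})\le\mathcal{L}(p^{\ast},\xi^{*})$; neither needs to satisfy $\cref{IR}$. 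That decoupling is what makes the tail argument clean, and it is absent from your proposal.
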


In order to prove this proposition, we need to first establish several
lemmas. To begin, we say a function $f:\mathbb{Z}_{+}\rightarrow \mathbb{R}$
is \textit{single-peaked} if $f(k-1)<f(k)$ for all $k\leq \min \arg
\max_{k\in \mathbb{Z}_{+}}f(k)$ while $f(k)>f(k+1)$ for all $k\geq \max \arg
\max_{k\in \mathbb{Z}_{+}}f(k)$. Our convention is that if $\arg \max_{k\in
\mathbb{Z}_{+}}f(k)$\ is empty, then $\min \arg \max_{k\in \mathbb{Z}%
_{+}}f(k)$ is set to $+\infty $. We now show that the regularity of $\mu $
implies that $f(\cdot ;\xi )$ is single-peaked.

\begin{lemma}
\label{lem: single-peakedness} If $\mu $ is regular, then for any $\xi \geq
0 $, function $f(\cdot ;\xi )$ is single-peaked.
\end{lemma}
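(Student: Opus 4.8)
The plan is to show that regularity makes $f(\cdot;\xi)$ discretely concave---its sequence of first differences is nonincreasing---and that any discretely concave function on $\mathbb{Z}_{+}$ is single-peaked in the sense of the definition above. First I would compute, for every integer $k\ge 1$,
\[
f(k;\xi)-f(k-1;\xi)=A\,(\mu _k-\mu _{k-1})-B,\qquad A\triangleq(1-\alpha)R+(\alpha+\xi)V,\ \ B\triangleq(\alpha+\xi)C .
\]
Here $A>0$ in all cases (if $\alpha<1$ then $(1-\alpha)R>0$; if $\alpha=1$ then $A=(1+\xi)V>0$), and $B\ge 0$. Writing $\Delta_k\triangleq\mu _k-\mu _{k-1}\ge 0$, regularity of $\mu $ says exactly that $k\mapsto\Delta_k$ is nonincreasing, so, since $A>0$, the first-difference sequence $g(k)\triangleq f(k;\xi)-f(k-1;\xi)=A\Delta_k-B$ is nonincreasing in $k$.

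Next I would deduce single-peakedness from the monotonicity of $g$. Because $g$ is nonincreasing, the set $\{k\ge 1:g(k)>0\}$ is an initial segment of $\mathbb{N}$, $\{k\ge 1:g(k)=0\}$ is an interval, and $\{k\ge 1:g(k)<0\}$ is a terminal segment; equivalently, $f(\cdot;\xi)$ is strictly increasing on an initial block, then constant, then strictly decreasing. To match the stated definition, let $k_{\ast}\triangleq\min\arg\max_{k}f(k;\xi)$ (with the convention $k_{\ast}=+\infty$ when the $\arg\max$ is empty). For any $k\le k_{\ast}$ I claim $g(k)>0$, hence $f(k-1;\xi)<f(k;\xi)$: if instead $g(k)\le 0$, then $g(j)\le 0$ for all $j\ge k$, so $f(\cdot;\xi)$ is nonincreasing on $\{k-1,k,k+1,\dots\}$ and therefore $f(k-1;\xi)\ge\sup_{j}f(j;\xi)\ge f(k_{\ast};\xi)$, forcing $k-1\in\arg\max_{k}f(k;\xi)$ while $k-1<k_{\ast}$, contradicting minimality. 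The mirror-image argument at the top, using $\max\arg\max_{k}f(k;\xi)$ in place of $\min$, shows $g(k+1)<0$ and hence $f(k;\xi)>f(k+1;\xi)$ for every $k\ge\max\arg\max_{k}f(k;\xi)$. Together these are precisely the two requirements in the definition of single-peakedness.

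The only point needing care is the degenerate case in which $\arg\max_{k}f(k;\xi)$ is empty (or nonempty but unbounded). This can occur only if $B=0$: when $B>0$, $f(k;\xi)=A\mu _k-Bk\to-\infty$ as $k\to\infty$ since $\mu $ is bounded, so the supremum is attained and $\arg\max$ is a finite nonempty set. If $B=0$, then $f(k;\xi)=A\mu _k$ is nondecreasing; if its $\arg\max$ is empty then $\mu _k$ never reaches $\sup_{j}\mu _j$, and since $\Delta_k\ge 0$ is nonincreasing this forces $\Delta_k>0$ for all $k$ (were $\Delta_{k_{0}}=0$, all later $\Delta_k$ would vanish and $\mu $ would be eventually constant, hence attain its sup), so $g(k)>0$ for every $k$ and $f(\cdot;\xi)$ is strictly increasing everywhere---exactly what the convention $k_{\ast}=+\infty$ requires; and if $\arg\max$ is unbounded, $f(\cdot;\xi)$ is eventually constant and the second condition is vacuous. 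I expect this small case analysis to be the only mildly fiddly part; the substance of the lemma is the one-line identity above, which shows regularity of $\mu $ is equivalent to the first difference of $f(\cdot;\xi)$ being nonincreasing.
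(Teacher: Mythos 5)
Your proof is correct and follows essentially the same approach as the paper's: the core observation, in both cases, is that regularity of $\mu$ makes the first-difference sequence $k\mapsto f(k;\xi)-f(k-1;\xi)=A(\mu_k-\mu_{k-1})-B$ nonincreasing (the paper states this as the implication ``if $f(k;\xi)\geq f(k+1;\xi)$ then $f(k';\xi)\geq f(k'+1;\xi)$ for $k'\geq k$''). You are more explicit than the paper about translating monotone first differences into the definition of single-peakedness --- in particular you spell out the degenerate cases where $\arg\max$ is empty or unbounded, which the paper simply calls ``easily checked'' --- but that is extra care rather than a different argument.
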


\begin{proof}
Fix any $\xi \geq 0$. It is easily checked that $f(\cdot ;\xi )$ is
single-peaked if and only if $f(k;\xi )\geq (>)f(k+1;\xi )$ then $%
f(k^{\prime };\xi )\geq (>)f(k^{\prime }+1;\xi )$ for any $k^{\prime }\geq k$%
. Assume that $f(k;\xi )\geq f(k+1;\xi )$, i.e.,%
\begin{equation*}
\mu _{k}(\left( 1-\alpha \right) R+(\alpha +\xi )V)-(\alpha +\xi )Ck\geq \mu
_{k+1}(\left( 1-\alpha \right) R+(\alpha +\xi )V)-(\alpha +\xi )C(k+1).
\end{equation*}%
Simple algebra shows that this is equivalent to%
\begin{equation*}
\mu _{k+1}-\mu _{k}\leq \frac{(\alpha +\xi )C}{\left( 1-\alpha \right)
R+(\alpha +\xi )V}\text{.}
\end{equation*}%
Since $\mu $ is regular, $\mu _{k+1}-\mu _{k}$ is nonincreasing and so, for $%
k^{\prime }\geq k$, we must have%
\begin{equation*}
\mu _{k^{\prime }+1}-\mu _{k^{\prime }}\leq \mu _{k+1}-\mu _{k}\leq \frac{%
(\alpha +\xi )C}{\left( 1-\alpha \right) R+(\alpha +\xi )V}\text{.}
\end{equation*}%
Hence, $f(k^{\prime };\xi )\geq f(k^{\prime }+1;\xi )$. The same argument
holds to show that $f(k;\xi )>f(k+1;\xi )$ implies $f(k^{\prime };\xi
)>f(k^{\prime }+1;\xi )$ for any $k^{\prime }\geq k$. \end{proof}

We will also use the following lemma.

\begin{lemma}
\label{claim:increasing part} Suppose%
\begin{equation*}
f(\ell ;\xi ^{\ast })<f(\ell +1;\xi ^{\ast })
\end{equation*}%
for some $\ell \leq K-1$. Then, $\lambda _{\ell }p_{\ell }^{\ast }=\mu
_{\ell +1}p_{\ell +1}^{\ast }$.
\end{lemma}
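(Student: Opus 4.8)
The plan is to use the saddle-point property of $(p^{\ast},\xi^{\ast})$, which in particular says that $p^{\ast}$ maximizes $p\mapsto \mathcal{L}(p,\xi^{\ast})=\sum_{k=0}^{K}p_{k}f(k;\xi^{\ast})$ over the polytope $M_{K}$. I would argue by contradiction: assume \cref{B'} is slack at $\ell$, i.e.\ $\lambda_{\ell}p_{\ell}^{\ast}>\mu_{\ell+1}p_{\ell+1}^{\ast}$, and exhibit a feasible perturbation of $p^{\ast}$ that strictly raises $\mathcal{L}(\cdot,\xi^{\ast})$, contradicting optimality.

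First I would dispose of the trivial case $p_{\ell}^{\ast}=0$: then \cref{B'} at $\ell$ gives $0\ge \mu_{\ell+1}p_{\ell+1}^{\ast}$, and since $\mu_{\ell+1}>0$ this forces $p_{\ell+1}^{\ast}=0$ as well, so $\lambda_{\ell}p_{\ell}^{\ast}=\mu_{\ell+1}p_{\ell+1}^{\ast}=0$ and there is nothing to prove. So assume $p_{\ell}^{\ast}>0$. Then, for $\epsilon>0$ small, I would define $p^{\epsilon}$ by transferring mass $\epsilon$ from state $\ell$ to state $\ell+1$ (legitimate since $\ell+1\le K$): $p^{\epsilon}_{\ell}=p_{\ell}^{\ast}-\epsilon$, $p^{\epsilon}_{\ell+1}=p_{\ell+1}^{\ast}+\epsilon$, and $p^{\epsilon}_{k}=p_{k}^{\ast}$ otherwise. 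This keeps $p^{\epsilon}$ a probability vector, and I would check it stays in $M_{K}$: among the constraints \cref{B'} only those indexed $k\in\{\ell-1,\ell,\ell+1\}$ are touched; at $k=\ell-1$ the right-hand side $\mu_{\ell}p^{\epsilon}_{\ell}$ only shrinks, at $k=\ell+1$ the left-hand side $\lambda_{\ell+1}p^{\epsilon}_{\ell+1}$ only grows (and if $\ell+1=K$ the constraint at $k=K$ is vacuous), and at $k=\ell$ the inequality $\lambda_{\ell}(p_{\ell}^{\ast}-\epsilon)\ge\mu_{\ell+1}(p_{\ell+1}^{\ast}+\epsilon)$ survives for small $\epsilon$ precisely because, by the contradiction hypothesis, it is strict at $\epsilon=0$.

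Finally, by linearity of $\mathcal{L}$ in $p$,
\[
\mathcal{L}(p^{\epsilon},\xi^{\ast})-\mathcal{L}(p^{\ast},\xi^{\ast})=\epsilon\bigl(f(\ell+1;\xi^{\ast})-f(\ell;\xi^{\ast})\bigr)>0
\]
by the hypothesis $f(\ell;\xi^{\ast})<f(\ell+1;\xi^{\ast})$, contradicting the saddle-point inequality $\mathcal{L}(p^{\epsilon},\xi^{\ast})\le\mathcal{L}(p^{\ast},\xi^{\ast})$. Hence \cref{B'} must bind at $\ell$, which is exactly $\lambda_{\ell}p_{\ell}^{\ast}=\mu_{\ell+1}p_{\ell+1}^{\ast}$. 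I do not anticipate a genuine obstacle here: this is a one-step local perturbation argument, and the only point needing care is the routine bookkeeping that nudging mass from $\ell$ to $\ell+1$ respects the neighbouring \cref{B'} inequalities, together with the boundary case $p_{\ell}^{\ast}=0$.
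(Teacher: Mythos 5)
Your proof is correct and takes essentially the same approach as the paper's: both argue by contradiction from the saddle-point property, perturb $p^{\ast}$ by moving mass $\epsilon$ from state $\ell$ to $\ell+1$, check the neighbouring \cref{B'} constraints, and conclude from $\mathcal{L}(p^{\epsilon},\xi^{\ast})-\mathcal{L}(p^{\ast},\xi^{\ast})=\epsilon(f(\ell+1;\xi^{\ast})-f(\ell;\xi^{\ast}))>0$. The only cosmetic difference is that the paper selects $\varepsilon$ (via the Intermediate Value Theorem) so the balance inequality at $\ell$ binds exactly, whereas you take any sufficiently small $\epsilon$; both are valid, and your explicit disposal of the $p_{\ell}^{\ast}=0$ case is also handled implicitly in the paper since the contradiction hypothesis $\lambda_{\ell}p_{\ell}^{\ast}>\mu_{\ell+1}p_{\ell+1}^{\ast}\ge 0$ already forces $p_{\ell}^{\ast}>0$.
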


\begin{proof}
Fix $\ell $ satisfying the properties of the lemma. Since $p^{\ast }\ $is an
optimal solution of $[P_{K}^{\prime }]$---and so satisfies $\cref{B'}$---we know that $\mu _{\ell +1}p_{\ell
+1}^{\ast }\leq \lambda _{\ell }p_{\ell }^{\ast }$. Toward a contradiction,
assume that $\mu _{\ell +1}p_{\ell +1}^{\ast }<\lambda _{\ell }p_{\ell
}^{\ast }$. Now, simply consider $\hat{p}$ defined as%
\begin{equation*}
\hat{p}_{k}=\left\{
\begin{array}{c}
p_{k}^{\ast }+\varepsilon \text{ if }k=\ell +1 \\
p_{k}^{\ast }-\varepsilon \text{ if }k=\ell  \\
p_{k}^{\ast }\text{ otherwise}%
\end{array}%
\right.
\end{equation*}%
and note that we can choose $\varepsilon >0$ so that $\mu _{\ell +1}\hat{p}%
_{\ell +1}=\lambda _{\ell }\hat{p}_{\ell }$ while ensuring $\hat{p}_{\ell },%
\hat{p}_{\ell +1}\in (0,1)$.\footnote{%
Indeed, at $\varepsilon =0$, we have $\mu _{\ell +1}\hat{p}_{\ell
+1}<\lambda _{\ell }\hat{p}_{\ell }$. In addition, for $\varepsilon =p_{\ell
}>0$ we have $\hat{p}_{\ell +1}=p_{\ell +1}+\varepsilon =p_{\ell +1}+p_{\ell
}\leq 1$ and $\mu _{\ell +1}\hat{p}_{\ell +1}>\lambda _{\ell }\hat{p}_{\ell
}=0$. Hence, by the Intermediate Value Theorem, there must exist $%
\varepsilon \in (0,p_{\ell })$ so that $\mu _{\ell +1}\hat{p}_{\ell
+1}=\lambda _{\ell }\hat{p}_{\ell }$ and $\hat{p}_{\ell },\hat{p}_{\ell +1}$
are in $(0,1)$.} Clearly, $\sum_{k=0}^{K}\hat{p}_{k}=1.$ Now, let us show
that $\mu _{k+1}\hat{p}_{k+1}\leq \lambda _{k}\hat{p}_{k},\forall k=0,...K-1$%
. Since these inequalities holds at $p^{\ast }$ (because $p^{\ast }$ is an
optimal solution of $[P_{K}^{\prime }]$ and so satisfies $\cref{B'}$), by construction of $\hat{p}$, we
only need to check this constraint for $k=\ell +1$ and $k=\ell -1$. For $%
k=\ell +1$, we have%
\begin{equation*}
\mu _{\ell +2}\hat{p}_{\ell +2}=\mu _{\ell +2}p_{\ell +2}^{\ast }\leq
\lambda _{\ell +1}p_{\ell +1}^{\ast }\leq \lambda _{\ell +1}\hat{p}_{\ell
+1}.
\end{equation*}%
Similarly, for $k=\ell -1$,%
\begin{equation*}
\mu _{\ell }\hat{p}_{\ell }\leq \mu _{\ell }p_{\ell }^{\ast }\leq \lambda
_{\ell -1}p_{\ell -1}^{\ast }=\lambda _{\ell -1}\hat{p}_{\ell -1}.
\end{equation*}%
Now, we show that the value of the objective of $[\mathcal{L}_{\xi ^{\ast }}]
$ strictly increases when we replace solution $p^{\ast }$ by $\hat{p}$. We
have
\begin{eqnarray*}
\sum_{k=0}^{K}\hat{p}_{k}f(k;\xi ^{\ast })-\sum_{k=0}^{K}p_{k}^{\ast
}f(k;\xi ^{\ast }) &=&\hat{p}_{\ell }f(\ell ;\xi ^{\ast })-p_{\ell }^{\ast
}f(\ell ;\xi ^{\ast })+\hat{p}_{\ell +1}f(\ell +1;\xi ^{\ast })-p_{\ell
+1}^{\ast }f(\ell +1;\xi ^{\ast }) \\
&=&-\varepsilon f(\ell ;\xi ^{\ast })+\varepsilon f(\ell +1;\xi ^{\ast
})=\varepsilon \left( f(\ell +1;\xi ^{\ast })-f(\ell ;\xi ^{\ast })\right) >0
\end{eqnarray*}%
where the inequality comes from the assumption in the lemma. To conclude, we
must have that $\mathcal{L}(\hat{p},\xi ^{\ast })>\mathcal{L}(p^{\ast },\xi
^{\ast })$ which contradicts the fact that $(p^{\ast },\xi ^{\ast })$ is a
saddle point of the function $\mathcal{L}(\cdot ,\cdot )$. \end{proof}

Finally, in the proof of \cref{prop:cutoff finite dim}, we will need the
following simple lemma which proof is relegated to \cref{sup: stoch dom} of
the online appendix.

\begin{lemma}
\label{lem: charact stoch dom} Assume that $p ^{\prime }$ stochastically
dominates $p $. Let $\varphi $ be a nondecreasing function. If there is $%
\kappa $ such that
\begin{equation*}
\sum_{k=\kappa }^{K}p _{k}^{\prime }>\sum_{k=\kappa }^{K}p _{k}
\end{equation*}%
and $\varphi (\kappa )>\varphi (\kappa -1)$ then
\begin{equation*}
\sum_{k=0}^{K}p _{k}^{\prime }\varphi (k)>\sum_{k=0}^{K}p _{k}\varphi (k)%
\text{.}
\end{equation*}
\end{lemma}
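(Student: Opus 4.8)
The plan is to reduce the claim to a summation-by-parts (Abel) identity that expresses the difference $\sum_{k=0}^{K}p'_k\varphi(k)-\sum_{k=0}^{K}p_k\varphi(k)$ in terms of the tail sums $G(j)\triangleq\sum_{k=j}^{K}p_k$ and $G'(j)\triangleq\sum_{k=j}^{K}p'_k$. Stochastic dominance of $p'$ over $p$ is precisely the statement that $G'(j)\ge G(j)$ for every $j$, so once the difference is written in this form the conclusion will follow from a pure sign analysis.

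First I would record that the index $\kappa$ necessarily lies in $\{1,\dots,K\}$: the strict inequality $\sum_{k=\kappa}^{K}p'_k>\sum_{k=\kappa}^{K}p_k$ is impossible at $\kappa=0$ (both sides equal $1$, since $p,p'\in\Delta(\{0,\dots,K\})$) and impossible for $\kappa>K$ (both sides vanish). Next, writing $\Delta\varphi(j)\triangleq\varphi(j)-\varphi(j-1)$ and using the telescoping identity $\varphi(k)=\varphi(0)+\sum_{j=1}^{k}\Delta\varphi(j)$, I would interchange the order of summation to obtain, for any $q\in\Delta(\{0,\dots,K\})$,
\[
\sum_{k=0}^{K}q_k\varphi(k)=\varphi(0)+\sum_{j=1}^{K}\Delta\varphi(j)\sum_{k=j}^{K}q_k .
\]
Applying this with $q=p'$ and with $q=p$ and subtracting, the constant terms cancel and I get
\[
\sum_{k=0}^{K}p'_k\varphi(k)-\sum_{k=0}^{K}p_k\varphi(k)=\sum_{j=1}^{K}\Delta\varphi(j)\,\bigl(G'(j)-G(j)\bigr).
\]

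Finally I would conclude by the sign analysis: each summand is nonnegative because $\Delta\varphi(j)\ge0$ (monotonicity of $\varphi$) and $G'(j)-G(j)\ge0$ (stochastic dominance), while the summand at $j=\kappa$ is \emph{strictly} positive because $\Delta\varphi(\kappa)=\varphi(\kappa)-\varphi(\kappa-1)>0$ by hypothesis and $G'(\kappa)-G(\kappa)=\sum_{k=\kappa}^{K}p'_k-\sum_{k=\kappa}^{K}p_k>0$ also by hypothesis; hence the whole sum is strictly positive, which is exactly the desired inequality. There is no substantive obstacle in this argument; the only point requiring a moment's care is the first step, namely checking that $\kappa$ indeed falls in the summation range $\{1,\dots,K\}$ so that the strictly positive term actually appears in the sum.
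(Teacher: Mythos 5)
Your proof is correct and takes essentially the same approach as the paper's: both rewrite the difference of expectations via summation by parts so that it becomes a sum of the increments $\varphi(j)-\varphi(j-1)$ weighted by differences of tail sums, and then conclude from nonnegativity of every term together with strict positivity of the term at $j=\kappa$. The only cosmetic difference is that you telescope $\varphi$ and interchange the order of summation rather than invoking Abel's formula directly, and you explicitly verify that $\kappa$ must lie in $\{1,\dots,K\}$, a check the paper leaves implicit.
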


\begin{proof}
See \cref{sup: stoch dom} in the online appendix.
\end{proof}

\begin{proof}[Proof of \cref{prop:cutoff finite dim}]
Before proceeding, we make the following straightforward observations (1) $%
p_{0}^{\ast }>0$ (or else $p_{k}^{\ast }=0$ for all $k$ because, by
construction of $M_{K}$, $p$ satisfies $\cref{B'}$; this contradicts the
assumption that $p$ is a probability measure); (2) for all $\xi $, $f(0;\xi
)=0$. Using these two facts, we claim that \cref{prop:cutoff finite
dim} holds whenever $f(k;\xi ^{\ast })=f(k^{\prime };\xi ^{\ast })$ for all $%
k,k^{\prime }$ in the support of $p^{\ast }$. Indeed, since $p_{0}^{\ast }>0$%
, $f(k;\xi ^{\ast })=0$ for all states $k$ in the support of $p^{\ast }$. In
that case, $\sup_{p}\mathcal{L}(p,\xi ^{\ast })=0$. Thus, the value of the
problem $[P_{K}^{\prime }]$ is $0$. Clearly, the distribution $p$
corresponding to the Dirac measure on state $0$ yields the same value and is
a cutoff policy.$\ $Hence, in this very special case, \cref{thm:cutoff}\
holds true. Thus, in the sequel, we assume that there is a pair of states $k$
and $k^{\prime }$\ in the support of $p^{\ast }$ satisfying $f(k;\xi ^{\ast
})\neq f(k^{\prime };\xi ^{\ast })$.

Let $k^{\ast }$ be $\min \arg \max_{k}f(k;\xi ^{\ast })$ and $k^{\ast \ast }$
be $\max \arg \max_{k}f(k;\xi ^{\ast })$. Recall that $k^{\ast }$ can be
equal to $+\infty $. By \cref{lem: single-peakedness}, we know that $f(k;\xi
^{\ast })$ is strictly increasing up to $k^{\ast }$. Hence, %
\cref{claim:increasing part} implies that $\mu _{k}p_{k}^{\ast }=\lambda
_{k-1}p_{k-1}^{\ast }$ for each $k\leq \min \{k^{\ast },K\}$. Note that
(since $p_{0}^{\ast }>0$) this also implies that $p_{k}^{\ast }>0$ for each $%
k\leq \min \{k^{\ast },K\}$, as stated in \cref{prop:cutoff finite dim}. If $%
K\leq k^{\ast }$, we are done. So assume from now on that $K>k^{\ast }$;
note that this implies that $k^{\ast }<+\infty $.
By means of contradiction, let us assume that $p^{\ast }$ does not exhibit a
cutoff policy. This means that there is $k_{0}>k^{\ast }$ such that $\mu
_{k_{0}}p_{k_{0}}^{\ast }<\lambda _{k_{0}-1}p_{k_{0}-1}^{\ast }$ and $%
p_{k_{0}+1}^{\ast }>0$ (hence, $p_{k_{0}}^{\ast }>0$).\footnote{Indeed, given the above, by definition, $p^*$ exhibits a cutoff policy
if and only if $\mu_{k_{0}}p_{k_{0}}^{\ast }=\lambda _{k_{0}-1}p_{k_{0}-1}^{\ast }$ for all $k_0=k^*+1,\cdots K-1$, i.e., \cref{B'} binds for all $k=0,...,K-2$.} Without loss, assume
that for any $k<k_{0},$ we have $\mu _{k}p_{k}^{\ast }=\lambda
_{k-1}p_{k-1}^{\ast }$. We consider two cases.

\underline{Case 1 $:p_{k}^{\ast }>0$ for some $k>k^{\ast \ast }$.} Toward a
contradiction, we construct a $\hat{p}$ that would achieve a strictly higher
value than $p^{\ast }$ in $[\mathcal{L}_{\xi ^{\ast }}]$. Let $\hat{p}%
_{k}=p_{k}^{\ast }$ for $k\leq k_{0}-1$. For each $k\geq k_{0}$, build $\hat{%
p}$ inductively so that $\mu _{k_{0}}\hat{p}_{k_{0}}=\lambda _{k_{0}-1}\hat{p%
}_{k_{0}-1}$, $\mu _{k_{0}+1}\hat{p}_{k_{0}+1}=\lambda _{k_{0}}\hat{p}%
_{k_{0}}$... Since the total mass of $\hat{p}$ must be $1$, this may be
possible only up to a point $\hat{K}$ where, by construction, we will have $%
\mu _{\hat{K}}\hat{p}_{\hat{K}}\leq \lambda _{\hat{K}-1}\hat{p}_{\hat{K}-1}$%
. Finally, we set $\hat{p}_{k}=0$ for all $k>\hat{K}$. In order to show that
$\hat{p}\ $lies in $\Delta (\{0,1,...,K\})$, we need to show that $\hat{K}%
\leq K$. By a simple induction argument, $\hat{p}_{k}\geq p_{k}^{\ast }$ for
all $k\leq \hat{K}-1$ and so we must have that $\hat{K}\leq K$. To recap,
there is $\hat{K}\geq k_{0}$ (potentially equal to $K$) such that $\mu _{k}%
\hat{p}_{k}=\lambda _{k-1}\hat{p}_{k-1}$ for $k=0,...,\hat{K}-1$, and $\hat{p%
}_{k}=0$ for $k>\hat{K}$. One can show inductively that $\hat{p}%
_{k}>p_{k}^{\ast }$ for all $k=k_{0},...,\hat{K}-1$ while, by construction, $%
\hat{p}_{k}=p_{k}^{\ast }$ for all $k\leq k_{0}-1$. We claim that
distribution $p^{\ast }$ stochastically dominates distribution $\hat{p}$. To
see this, fix any $\kappa >\hat{K}$. Clearly, $\sum_{k=\kappa }^{K}\hat{p}%
_{k}=0\leq \sum_{k=\kappa }^{K}p_{k}^{\ast }$. Now, fix $\kappa \leq \hat{K}$%
.
\begin{equation}
\sum_{k=\kappa }^{K}\hat{p}_{k}=1-\sum_{k=0}^{\kappa -1}\hat{p}_{k}\leq
1-\sum_{k=0}^{\kappa -1}p_{k}^{\ast }=\sum_{k=\kappa }^{K}p_{k}^{\ast }
\label{eq:stoch dom}
\end{equation}%
where the inequality uses the fact that $\hat{p}_{k}\geq p_{k}^{\ast }$ for
all $k=0,...,\kappa -1$. Importantly, the above inequality is strict for all
$\kappa \in \{k_{0}+1,...,\hat{K}\}$ since $\hat{p}_{k}>p_{k}^{\ast }$ for
all $k=k_{0},...,\hat{K}-1$.\footnote{%
Recall that, by construction, $k_{0}+1\leq \hat{K}$.} It is also strict for
any $\kappa \geq \hat{K}+1$ as long as $p_{\kappa }^{\ast }>0$ since in that
case the LHS is simply $0$ while the RHS\ is strictly positive. In
particular, given our assumption that $p_{k}^{\ast }>0$ for some $k>k^{\ast
\ast }$, it must be that $p_{k^{\ast \ast }+1}^{\ast }>0$. Consequently,%
\begin{equation}
\sum_{k=\kappa }^{K}\hat{p}_{k}<\sum_{k=\kappa }^{K}p_{k}^{\ast }
\label{strict:stoch dom}
\end{equation}%
for $\kappa =\max \{k_{0}+1,k^{\ast \ast }+1\}$.

Now, we show that the value of the objective in $[\mathcal{L}_{\xi ^{\ast }}]
$ strictly increases when we replace solution $p^{\ast }$ by $\hat{p}$. We
have to show that%
\begin{equation*}
\sum_{k=0}^{K}\hat{p}_{k}f(k;\xi ^{\ast })>\sum_{k=0}^{K}p_{k}^{\ast
}f(k;\xi ^{\ast })\text{.}
\end{equation*}%
Since $\hat{p}_{k}=p_{k}^{\ast }$ for all $k\leq k_{0}-1$, this is
equivalent to showing%
\begin{equation}
\sum_{k=k_{0}}^{K}\hat{p}_{k}f(k;\xi ^{\ast })>\sum_{k=k_{0}}^{K}p_{k}^{\ast
}f(k;\xi ^{\ast })  \label{increasing obj}
\end{equation}

{Now, define a function $\varphi :\mathbb{Z}_{+}\rightarrow \mathbb{R}$ as
follows}%
\begin{equation*}
{\varphi (k)=\left\{
\begin{array}{c}
{f(k_{0};\xi ^{\ast })}\text{ if }k\leq k_{0}-1 \\
{f(k;\xi ^{\ast })}\text{ if }k\geq k_{0}\text{.}%
\end{array}%
\right. }
\end{equation*}%
Since $k_{0}>k^{\ast }$, by \cref{lem: single-peakedness}, this function is
weakly decreasing and it is strictly decreasing from $k$ to $k+1$ for any $%
k\geq \max \{k_{0},k^{\ast \ast }\}$. Thus, $\varphi (\kappa -1)>\varphi
(\kappa )$ for $\kappa =\max \{k_{0}+1,k^{\ast \ast }+1\}$. Now, we know
that $p^{\ast }$ stochastically dominates $\hat{p}$, that inequality %
\cref{strict:stoch dom} holds at $\kappa =\max \{k_{0}+1,k^{\ast \ast }+1\}$%
. and that $\varphi (\kappa -1)>\varphi (\kappa )$. Applying
\cref{lem:
charact stoch dom},%
\begin{equation*}
\sum_{k=0}^{K}\left( \hat{p}_{k}-p_{k}^{\ast }\right) \varphi (k)>0\text{.}
\end{equation*}%
Since $\hat{p}_{k}=p_{k}^{\ast }$ for all $k\leq k_{0}-1$, this is
equivalent to Equation \cref{increasing obj}. To conclude, $\mathcal{L}(\hat{%
p},\xi ^{\ast })>\mathcal{L}(p^{\ast },\xi ^{\ast })$ which contradicts the
fact that $(p^{\ast },\xi ^{\ast })$ is a saddle point of $\mathcal{L}(\cdot
,\cdot )$.

\bigskip

\underline{Case 2 $:p_{k}^{\ast }=0$ for all $k>k^{\ast \ast }.$} Recall our
assumption that there is a pair of states $k$ and $k^{\prime }$\ in the
support of $p^{\ast }$ satisfying $f(k;\xi ^{\ast })\neq f(k^{\prime };\xi
^{\ast })$. Hence, because $f(\cdot ;\xi ^{\ast })$ is single-peaked, $f$
must be weakly increasing on the support of $p^{\ast }$ and strictly
increasing from $k$ to $k+1$ for all $k<k^{\ast }$. In particular, this
holds at $k=0$, and so we have $f(0;\xi ^{\ast })<f(1;\xi ^{\ast })$ and $%
p_{0}^{\ast }>0$.\ Recall that $k_{0}$ is the smallest $k$ in $\{k^{\ast
}+1,...,k^{\ast \ast }-1\}$ such that $\mu _{k}p_{k}^{\ast }<\lambda
_{k-1}p_{k-1}^{\ast }$ and $p_{k+1}^{\ast }>0$.\ We now construct a measure $%
\hat{p}$ as follows

\begin{equation*}
\hat{p}_{k}=\left\{
\begin{array}{c}
p_{k}^{\ast }/Z_{1}\text{ if }k\leq k_{0}-1 \\
p_{k}^{\ast }+Z_{2}\text{ if }k=k_{0} \\
p_{k}^{\ast }\text{ if }k\geq k_{0}+1\text{,}%
\end{array}%
\right.
\end{equation*}
where $Z_{1}>1$ and $Z_{2}\triangleq \sum_{k=0}^{k_{0}-1}(p_{k}^{\ast }-\hat{%
p}_{k})$ so that $\hat{p}$ sums up to $1$. We pick $Z_{1}$ small enough so
that $\hat{p}_{k_{0}}$ remains between $0$ and $1$ for each $k$. We show
that, for $Z_{1}>1$ small enough, for each $k\leq K,$ $\mu _{k}\hat{p}%
_{k}\leq \lambda _{k-1}\hat{p}_{k-1}$. To see this, first fix $k\leq k_{0}-1$
and note that
\begin{equation*}
\mu _{k}\hat{p}_{k}=\mu _{k}p_{k}^{\ast }/Z_{1}\leq \lambda
_{k-1}p_{k-1}^{\ast }/Z_{1}=\lambda _{k-1}\hat{p}_{k-1}
\end{equation*}%
where the inequality follows from the fact that $p^{\ast }$ is a feasible
solution of $[P_{K}^{\prime }]$. Next,%
\begin{equation*}
\mu _{k_{0}}\hat{p}_{k_{0}}=\mu _{k_{0}}\left( p_{k_{0}}^{\ast
}+Z_{2}\right) \leq \lambda _{k_{0}-1}p_{k_{0}-1}^{\ast }/Z_{1}=\lambda
_{k_{0}-1}\hat{p}_{k_{0}-1}
\end{equation*}%
where the inequality holds if $Z_{1}$ is small enough since, by assumption, $%
\mu _{k_{0}}p_{k_{0}}^{\ast }<\lambda _{k_{0}-1}p_{k_{0}-1}^{\ast }$ (and $%
Z_{2}$ vanishes as $Z_{1}$ goes to $1$).\footnote{%
Indeed, by construction, for each $k\leq k_{0}-1$, $\hat{p}_{k}\rightarrow
p_{k}^{\ast }$ $\ $as $Z_{1}\rightarrow 1$. Since $Z_{2}=%
\sum_{k=0}^{k_{0}-1}(p_{k}^{\ast }-\hat{p}_{k})$, $Z_{2}$ converges to $0$
as $Z_{1}\rightarrow 1$. } Now, for $k=k_{0}+1$, we have
\begin{equation*}
\mu _{k_{0}+1}\hat{p}_{k_{0}+1}=\mu _{k_{0}+1}p_{k_{0}+1}^{\ast }\leq
\lambda _{k_{0}}p_{k_{0}}^{\ast }\leq \lambda _{k_{0}}(p_{k_{0}}^{\ast
}+Z_{2})=\lambda _{k_{0}}\hat{p}_{k_{0}}\text{.}
\end{equation*}%
Finally, by construction, for any $k>k_{0}+1$, $\mu _{k}\hat{p}_{k}\leq
\lambda _{k-1}\hat{p}_{k-1}$ must hold since $p^{\ast }$ and $\hat{p}$
coincide.

Now, we show that the value of the objective in $[\mathcal{L}_{\xi ^{\ast }}]
$ strictly increases when we replace solution $p^{\ast }$ by $\hat{p}$. To
see this, observe first that $\hat{p}$ must stochastically dominate $p^{\ast
}$. Indeed, fix any $\kappa >k_{0}$. Clearly, since $\hat{p}_{k}=p_{k}^{\ast
}$ for all $k\geq k_{0}+1$, $\sum_{k=\kappa }^{K}\hat{p}_{k}=\sum_{k=\kappa
}^{K}p_{k}^{\ast }$. Now, fix $\kappa \leq k_{0}$.
\begin{equation}
\sum_{k=\kappa }^{K}\hat{p}_{k}=1-\sum_{k=0}^{\kappa -1}\hat{p}%
_{k}>1-\sum_{k=0}^{\kappa -1}p_{k}^{\ast }=\sum_{k=\kappa }^{K}p_{k}^{\ast }
\label{strict stoch dom 2}
\end{equation}%
where the inequality uses the fact that $\hat{p}_{k}=p_{k}^{\ast
}/Z_{1}<p_{k}^{\ast }$ for all $k=0,...,\kappa -1$ (since $Z_{1}>1$ and $%
p_{k}^{\ast }>0$ for such $k$). Now, we show that the value of the objective
in $[\mathcal{L}_{\xi ^{\ast }}]$ strictly increases when we replace
solution $p^{\ast }$ by $\hat{p}$, i.e.,
\begin{equation*}
\sum_{k=0}^{K}\hat{p}_{k}f(k;\xi ^{\ast })>\sum_{k=0}^{K}p_{k}^{\ast
}f(k;\xi ^{\ast })\text{.}
\end{equation*}%
We know that $\hat{p}$ stochastically dominates $p^{\ast }$, that inequality %
\cref{strict stoch dom 2} holds at $\kappa =1$ and that $f(0;\xi ^{\ast
})<f(1;\xi ^{\ast })$. In addition, $f(\cdot ;\xi ^{\ast })$ is
nondecreasing on the support of $p^{\ast }$ and $\hat{p}$. Hence, this
follows from \cref{lem: charact stoch dom}. \end{proof}

\subsection{Infinite dimensional analysis}

\label{sec:infinite dim analysis}

Let us consider the sequence $\{ p ^{K}\}_{K}$ where for each $K$, $p ^{K} $
is an optimal solution of problem $[P_{K}^{\prime }]$. If $\mu$ is regular,
we assume each $p ^{K} $ exhibits a cutoff policy which is well-defined by
\cref{prop:cutoff finite
dim}. For each $K$, we see $p ^{K}$ as a point in $\mathbb{R}^{\mathbb{Z}%
_{+}}$ with value $0$ on states weakly greater than $K+1$. We will be
interested in the limit points of sequence $\{ p ^{K}\}_{K}$. Together with
the result showing that $[P^{\prime }]$ has an optimal solution, the
following statement implies \cref{thm:cutoff}.

\begin{proposition}
\label{prop: final} Assume $\mu $ is regular. Sequence $\{ p ^{K}\}_{K}$ has
a subsequence which converges to a distribution $p ^{\ast }$ which is an
optimal solution to $[P^{\prime }]$ and exhibits a cutoff policy. Further,
it satisfies $p _{k}^{\ast }>0$ for each $k\leq \min \arg \max_{k}\mu
_{k}V-Ck$.
\end{proposition}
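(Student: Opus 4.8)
The plan is to realize $p^{*}$ as a subsequential limit of the truncated optima $\{p^{K}\}$ (which, by \cref{prop:cutoff finite dim}, are cutoff distributions) and to verify, one property at a time, that feasibility, optimality, the cutoff structure and the support condition all survive passage to the limit. Write $\bar{\mu}:=\sup_{k}\mu_{k}<\infty$ and $g(k):=\mu_{k}((1-\alpha)R+\alpha V)-\alpha Ck$, so that the objective of $[P^{\prime}]$ is $W(p)=\sum_{k}p_{k}g(k)$; note $g=f(\cdot;0)$ in the notation of \cref{lem: single-peakedness}. The starting point is an \emph{a priori} bound: every $p^{K}$ obeys \cref{IR}, so $C\sum_{k}kp^{K}_{k}\le V\sum_{k}p^{K}_{k}\mu_{k}\le V\bar{\mu}$, i.e.\ the first moments are uniformly bounded, and hence $\{p^{K}\}$ is tight by Markov's inequality. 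Extract a subsequence with $K_{n}\uparrow\infty$ along which $p^{K_{n}}\to p^{*}$ coordinatewise (diagonal extraction); tightness forces $\sum_{k}p^{*}_{k}=1$. Feasibility of $p^{*}$ for $[P^{\prime}]$ then follows: \cref{B'} passes to the limit coordinatewise, and \cref{IR} passes because $\sum_{k}p^{K_{n}}_{k}\mu_{k}V\to\sum_{k}p^{*}_{k}\mu_{k}V$ (bounded summand, tight sequence) while $\sum_{k}kp^{*}_{k}\le\liminf_{n}\sum_{k}kp^{K_{n}}_{k}$ by Fatou.

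Next, optimality. First $\mathcal{W}^{*}<\infty$, since $W(p)\le\bar{\mu}((1-\alpha)R+\alpha V)$ for every distribution $p$. Then the value of $[P'_{K}]$ is nondecreasing in $K$ and converges to $\mathcal{W}^{*}$: monotonicity is clear by embedding, and given $\varepsilon>0$ and a feasible $p$ for $[P^{\prime}]$ with $W(p)>\mathcal{W}^{*}-\varepsilon$, its normalized truncation $p^{[K]}$ to $\{0,\dots,K\}$ is $[P'_{K}]$-feasible for $K$ large — the only nonroutine point being \cref{IR}, which can only improve under truncation because $\mu_{k}V-kC<0$ for $k>K$ — and $W(p^{[K]})\to W(p)$. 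Finally, an upper-semicontinuity estimate along the subsequence: $\sum_{k}p^{K_{n}}_{k}\mu_{k}((1-\alpha)R+\alpha V)$ converges (bounded summand), while $\limsup_{n}\big(-\alpha C\sum_{k}kp^{K_{n}}_{k}\big)\le-\alpha C\sum_{k}kp^{*}_{k}$ by Fatou, so $\limsup_{n}W(p^{K_{n}})\le W(p^{*})$. Since $W(p^{K_{n}})$ equals the value of $[P'_{K_{n}}]$ and hence tends to $\mathcal{W}^{*}$, we get $\mathcal{W}^{*}\le W(p^{*})$; combined with feasibility, $W(p^{*})=\mathcal{W}^{*}$, so $p^{*}$ solves $[P^{\prime}]$.

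It remains to pass the cutoff structure and the support property to the limit. From \cref{B'}, $p^{*}_{0}>0$ and $p^{*}$ vanishes from its first zero coordinate onward, so $\supp(p^{*})=\{0,\dots,K^{*}\}$ (or all of $\mathbb{Z}_{+}$), where $K^{*}:=\max\supp(p^{*})$. For $k\le K^{*}-2$ one has $p^{*}_{k+1},p^{*}_{k+2}>0$; along the subsequence, either $\max\supp(p^{K_{n}})\ge k+2$ infinitely often, in which case \cref{B'} binds at $k$ for $p^{K_{n}}$ (a cutoff distribution binds \cref{B'} at every index $\le\max\supp-2$) and hence for $p^{*}$, or else $\max\supp(p^{K_{n}})=k+1$ eventually, forcing $p^{*}_{k+2}=0$, a contradiction. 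Thus \cref{B'} binds for all $k\le K^{*}-2$, i.e.\ $p^{*}$ exhibits a cutoff policy. For the support claim, set $k^{**}:=\min\argmax_{k}(\mu_{k}V-kC)$ (finite, since $\mu_{k}V-kC\to-\infty$) and suppose $K^{*}<k^{**}$. Perturb $p^{*}$ to $\hat p$ by scaling it on $\{0,\dots,K^{*}\}$ by $1-\delta$ and placing mass $\delta$ on $K^{*}+1$: \cref{B'} survives for $\delta$ small (as $\lambda_{K^{*}}p^{*}_{K^{*}}>0$), \cref{IR} holds since $\mu_{K^{*}+1}V-(K^{*}+1)C>0$ by single-peakedness of $\mu_{k}V-kC$, and since $g$ is single-peaked with $\min\argmax g\ge k^{**}>K^{*}$ (both facts are short consequences of regularity, exactly as in \cref{lem: single-peakedness}), we get $g(K^{*}+1)>g(k)$ for all $k\le K^{*}$, hence $g(K^{*}+1)>\sum_{k\le K^{*}}p^{*}_{k}g(k)=W(p^{*})$ and $W(\hat p)=(1-\delta)W(p^{*})+\delta g(K^{*}+1)>W(p^{*})$, contradicting optimality. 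Therefore $K^{*}\ge k^{**}$ and $p^{*}_{k}>0$ for every $k\le\min\argmax_{k}(\mu_{k}V-kC)$. The main obstacles are the two limiting arguments in which the objective is unbounded below — the step deducing tightness from \cref{IR} and the two Fatou estimates, which work precisely because $\mu_{k}V-kC$ is eventually negative so Fatou cuts in the favorable direction — together with the case split on the truncation levels $\max\supp(p^{K_{n}})$ needed to keep the cutoff property in the limit.
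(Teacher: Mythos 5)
Your proposal is correct and shares the paper's core machinery (tightness derived from \cref{IR}, subsequence extraction, feasibility passage via Fatou, and an upper-semicontinuity estimate for optimality), but it departs at two points. First, you avoid a separate existence theorem for $[P^\prime]$: where the paper establishes existence (\Cref{thm: existence infinite}) and then, in \Cref{prop:final step}, compares $p^K$ to truncations of an exact optimum $\bar p$, you work with near-optimal feasible $p$'s and show directly that the truncated values converge to $\mathcal{W}^{*}$; the key observations (truncation preserves \cref{IR} once $\mu_kV-kC<0$ past $K$, and Fatou cuts in the favorable direction) are the same, so this is a genuine streamlining rather than a different idea. Second, and more substantively, for the support property $p^{*}_k>0$ for $k\le\min\arg\max_k(\mu_kV-kC)$ you perturb $p^{*}$ directly in the infinite-dimensional problem by shifting mass $\delta$ onto $K^{*}+1$, feeding off the single-peakedness comparison $\min\arg\max g\ge\min\arg\max(\mu_kV-kC)$ for both feasibility and strict improvement; the paper's \Cref{lem:compactness final} instead tracks the Lagrange multipliers $\xi_K^{*}$ from the finite truncations and derives a contradiction with \Cref{claim:increasing part}. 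Your route is self-contained and avoids the finite-dimensional saddle points, at the cost of a short single-peakedness argument you must supply; both proofs implicitly rely on $\lambda_{K^{*}}>0$ to carry out the perturbation (or, in the paper's case, to obtain the strict inequality $p^{*}_{k_0}\mu_{k_0}<p^{*}_{k_0-1}\lambda_{k_0-1}$). A minor note: the case split on $\max\supp(p^{K_n})$ in your cutoff step is vacuous — once $p^{*}_{k+2}>0$, pointwise convergence forces $p^{K_n}_{k+2}>0$ and hence $\max\supp(p^{K_n})\ge k+2$ eventually, which is the paper's more direct route.
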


This result is shown in the online appendix \cref{sup: proof final prop}
through the following steps. First, we show that the infinite-dimensional
problem $[P^{\prime }]$ admits an optimal solution (%
\cref{thm: existence
infinite}). Then, we show that the set of feasible distributions of $%
[P^{\prime }]$ exhibiting a cutoff-policy is sequentially compact, which in
turn implies that (when $\mu$ is regular) \ $\{p^{K}\}_{K}$ has a
subsequence converging to a point which exhibits a cutoff policy (%
\cref{lem:compactness
final}). Finally, we argue that any limit point of $\{p^{K}\}_{K}$ must be
an optimal solution of $[P^{\prime }]$ (\cref{prop:final step}).

\section{Proofs from \Cref{sec:FCFS}: FCFS with No Information}

\subsection{Proof of \cref{lem:waiting-time-FCFS}}

\label{app:waiting-time-FCFS} The expected waiting time satisfies the following recursion. The agent in the first position has expected waiting
time
\begin{align*}
\tau^*_{1} & =( q^*_{1}dt)dt + [1-q^*_{1}dt](\tau^*_{1}+dt) +o(dt),
\end{align*}
since he waits for $dt$ period with probability $q^*_{1}dt$ and for $%
\tau^*_{1}+dt$ periods with the remaining probability. Letting $dt\to 0$, we
get
\begin{equation*}
\tau^*_1= 1/q_1^*=1/\mu _1.
\end{equation*}

More generally, the agent in queue position $\ell$ waits for
\begin{align*}
\tau^*_{\ell} & =( q^*_{\ell}dt)dt + \left[1-\sum_{j=1}^{\ell} q^*_j dt%
\right](\tau^*_{\ell}+dt) +\left(\sum_{j=1}^{\ell-1}
q^*_jdt\right)(\tau^*_{\ell-1}+dt) +o(dt),
\end{align*}
since he is served in $dt$ period with probability $q^*_{\ell}dt$, in $%
\tau^*_{\ell}+dt$ periods with probability $1-\sum_{j=1}^{\ell} q^*_j dt$
(when nobody before him is served), and in $\tau^*_{\ell-1}+dt$ periods with
probability $\sum_{j=1}^{\ell-1} q^*_jdt$ (when somebody before him is
served).\footnote{Again, the probability that multiple agents are served during $[t,
t+dt)$ has a lower order of magnitude denoted by $o(dt)$.}

The recursion equations yield a unique solution:
\begin{equation*}
\tau^*_{\ell}= \frac{\ell}{\sum_{j=1}^{\ell} q^*_j}= \frac{\ell}{ \mu _{
\ell}},
\end{equation*}
where the last equality follows from feasibility.

Part (ii) of regularity implies that $q^*_{\ell}$ is nonincreasing in $\ell$%
. Therefore, for each $\ell$
\begin{equation*}
\tau^*_{\ell+1}-\tau^*_{\ell}=\frac{\sum_{j=1}^{\ell} q^*_j-\ell
q^*_{\ell+1} }{(\sum_{j=1}^{\ell} q^*_j)(\sum_{j=1}^{\ell+1} q^*_j)} \ge 0.
\end{equation*}
Hence, it follows that $\tau^*_{\ell}$ is nonincreasing in $\ell$. Further,
if $2\mu_1>\mu_2$, then $q_1^*>q^*_2\ge q^*_{\ell}$ for all $\ell\ge 2$.
Then, the above inequality becomes strict for all $\ell$, which proves the
last statement. \qed

\subsection{Proof of \cref{lem:dyn-incentives}}

\label{app:dyn-incentives}

We let $\bar{K}$ be the largest state in the support of $p^{\ast }$ (which
can potentially be infinite). We first study the dynamics for the case with $%
\bar{K}<\infty $. For $\bar{K}=\infty $, we show that the dynamics can be
approximated by the dynamics for $\bar{K}<\infty $ when $\bar{K}$ goes to
infinity.\ While it requires some care, the argument for $\bar{K}=\infty $
essentially relies on the case with $\bar{K}<\infty $. Hence, we defer the
proof to online appendix \Cref{online_app:infty}, which also derives the recursion equation for belief evolution more rigorously. In the sequel, we assume
that $\bar{K}<\infty $.


Using \cref{Eq: cond beliefs}, we write for each such $\ell \geq 2$,
\begin{equation*}
r_{\ell }^{t+dt}=\frac{\tilde{\gamma}_{\ell }^{t+dt}}{\tilde{\gamma}_{\ell
-1}^{t+dt}}=\frac{(1-\mu _{\ell }dt)\tilde{\gamma}_{\ell }^{t}+\mu _{\ell }dt%
\tilde{\gamma}_{\ell +1}^{t}}{(1-\mu _{\ell -1}dt)\tilde{\gamma}_{\ell
-1}^{t}+\mu _{\ell -1}dt\tilde{\gamma}_{\ell }^{t}}+o(dt)=\frac{1-\mu _{\ell
}dt+\mu _{\ell }dtr_{\ell +1}^{t}}{(1-\mu _{\ell -1}dt)\frac{1}{r_{\ell }^{t}%
}+\mu _{\ell -1}dt}+o(dt).
\end{equation*}%
Rearranging, we get
\begin{equation*}
\frac{r_{\ell }^{t+dt}-r_{\ell }^{t}}{dt}=\frac{\mu _{\ell -1}-\mu _{\ell
}-\mu _{\ell -1}r_{\ell }^{t}+\mu _{\ell }r_{\ell +1}^{t}}{(1-\mu _{\ell
-1}dt)\frac{1}{r_{\ell }^{t}}+\mu _{\ell -1}dt}+o(dt)/dt.
\end{equation*}%
Letting $dt\rightarrow 0$, we obtain%
\begin{equation}
\dot{r}_{\ell }^{t}=r_{\ell }^{t}\left( \mu _{\ell -1}-\mu _{\ell }-\mu
_{\ell -1}r_{\ell }^{t}+\mu _{\ell }r_{\ell +1}^{t}\right) .  \label{eq:r-t}
\end{equation}%
\cref{eq:r-t} forms a system of ordinary differential equations. The
boundary condition is defined as follows. Recall that the effective arrival
rate be $\tilde{\lambda}_{k}\triangleq \lambda _{k}x_{k}^{\ast }$ for each $%
k $. For $\ell \leq \bar{K}$,
\begin{equation}
r_{\ell }^{0}=\frac{\tilde{\gamma}_{\ell }^{0}}{\tilde{\gamma}_{\ell -1}^{0}}%
=\frac{p_{\ell }^{\ast }\mu _{\ell }}{p_{\ell -1}^{\ast }\mu _{\ell -1}}=%
\frac{\tilde{\lambda}_{\ell -1}}{\mu _{\ell -1}},  \label{Eq r0}
\end{equation}%
where the second equality uses the fact that $\tilde{\gamma}_{\ell
}^{0}=p_{\ell }^{\ast }\mu _{\ell }\left\backslash \sum_{i=1}^{\infty
}p_{i}^{\ast }\mu _{i}\right. $ for each $\ell $, while the third one uses %
\cref{B} whereby $\frac{p_{\ell }^{\ast }}{p_{\ell -1}^{\ast }}=\frac{\tilde{%
\lambda}_{\ell -1}}{\mu _{\ell }}$.\footnote{One can obtain the expression for
$\tilde{\gamma}_{\ell}^{0}$ as follows. The optimality of the cutoff policy means $%
x^*_k=1$ for all $k=0,..., K^*-2$, $x^*_k=0$ for all $k> K^*-1$, and $%
y^*_{k,\ell}=z^*_{k,\ell}=0$ for all $(k,\ell)$. Substituting these into %
\cref{B}, one obtains the expression by rewriting \cref{eq:belief0}.} It is routine to see that the system of
ODEs \cref{eq:r-t} together with the boundary condition \cref{Eq r0} admits
a unique solution $(r_{\ell }^{t})_{\ell }$ for all $t\geq 0$.\footnote{%
This follows from the observation that the RHS of \cref{eq:r-t} is locally
Lipschitzian in $r$ (a fact implied by the continuous differentiability of
RHS in $r_{\ell }^{t}$'s). See Hale p. 18, Theorem 3.1, for instance.}

We first claim that $\dot{r}_{\ell }^{0}\leq {0}$ for all $\ell =2,...,\bar{K%
}$. It follows from \cref{eq:r-t} that, for $\ell =2,...,\bar{K}$, $\dot{r}%
_{\ell }^{0}\leq 0$ if and only if
\begin{equation}
\mu _{\ell -1}-\mu _{\ell }\leq \mu _{\ell -1}r_{\ell }^{0}-\mu _{\ell
}r_{\ell +1}^{0}.  \label{Eq r1}
\end{equation}

Consider any $\ell =2,...,\bar{K}$. Substituting \cref{Eq r0} into
\cref{Eq
r1}, the condition simplifies to:
\begin{equation*}
\mu _{\ell -1}-\mu _{\ell }\leq \tilde{\lambda}_{\ell -1}-\tilde{\lambda}%
_{\ell },
\end{equation*}%
which holds by regularity of $(\lambda ,\mu )$ and the fact that $%
x_{k}^{\ast }$ is nonincreasing in $k$.

Having established that $\dot{r}_{\ell }^{0}\leq 0$ for each $\ell =2,...,%
\bar{K}$, we next prove that $\dot{r}_{\ell }^{t}\leq 0$ for all $t>0$. To
this end, suppose this is not the case. Then, there exists
\begin{equation*}
\ell \in \arg \min_{\ell ^{\prime }=2,...,\bar{K}}T_{\ell ^{\prime }},
\end{equation*}%
where
\begin{equation*}
T_{\ell ^{\prime }}\triangleq \inf \{t^{\prime }:\dot{r}_{\ell ^{\prime
}}^{t^{\prime }}>0\}
\end{equation*}%
if the infimum is well defined, or else $T_{\ell ^{\prime }}\triangleq
\infty $. Let $t=T_{\ell }<\infty ,$ by the hypothesis. Then, we must have
\begin{equation*}
\ddot{r}_{\ell }^{t}>0;\dot{r}_{\ell ^{\prime }}^{t}\leq 0,\forall \ell
^{\prime }\neq \ell ;\mbox{ and }\dot{r}_{\ell }^{t}=0.
\end{equation*}%
Differentiating \cref{eq:r-t} on both sides, we obtain
\begin{equation*}
0<\ddot{r}_{\ell }^{t}=\dot{r}_{\ell }^{t}\left( \mu _{\ell -1}-\mu _{\ell
}-\mu _{\ell -1}r_{\ell }^{t}+\mu _{\ell }r_{\ell +1}^{t}\right) -r_{\ell
}^{t}(\mu _{\ell -1}\dot{r}_{\ell }^{t}-\mu _{\ell }\dot{r}_{\ell
+1}^{t})=r_{\ell }^{t}\mu _{\ell }\dot{r}_{\ell +1}^{t}\leq 0,
\end{equation*}%
a contradiction. We thus conclude that $\dot{r}_{\ell }^{t}\leq 0$, for all $%
\ell =2,...,\bar{K}$, for all $t\geq 0$.


\subsection{Proof of \cref{thm:dyn-fcfs}}

This theorem is a consequence of \cref{lem:dyn-incentives}.  Indeed, it suffices to prove that, under FCFS with no information,  $(IC_{t})$ holds for all $t\ge 0$.  Note first that, as we already stated (see \cref{lem:ic0-fcfs} in the online appendix), $(IC_{0})$ holds.  Next consider
	$(IC_{t})$ for any $t> 0$.
	\cref{lem:dyn-incentives} proves that $r^t_{\ell}\le r^{0}_{\ell}$ for each $\ell$.  Since $\tau^*_{\ell}$ is nondecreasing in $\ell$ (\cref{lem:waiting-time-FCFS}), this  means that	$$ \sum_{\ell=1}^{K^*} \tilde{\gamma}_{\ell}^{t} \cdot \tau^*_{\ell}\le \sum_{\ell=1}^{K^*} \tilde{\gamma}_{\ell}^{0} \cdot \tau^*_{\ell},$$
	so we have
	$$ V- C\sum_{\ell=1}^{K^*} \tilde{\gamma}_{\ell}^{t} \cdot \tau^*_{\ell}\ge V-C\sum_{\ell=1}^{K^*} \tilde{\gamma}_{\ell}^{0} \cdot \tau^*_{\ell}\ge 0, $$
	where the last inequality follows from  $(IC_{0})$ being satisfied.   Hence,  $(IC_{t})$ holds for any $t>0$.   



\section{Proof of \cref{thm:necessity}}

\label{app:critical-pi}

Fix a queuing rule $q$ which differs from FCFS. We consider the information
policy that provides no information (beyond the recommendations) for all $%
t\geq 0$. This is without loss since, if a queueing rule $q$ fails $(IC_{t})$%
, for some $t\geq 0$, under no information, it would fail $(IC_{t})$ under
\emph{any} information policy.

Recall that we have fixed the service rate $\mu $. While arrival rate $%
\lambda $ is yet to be fixed, for each $\lambda $, we can choose parameters $%
V,C$ and $\alpha $ to ensure that the optimal outcome $(x^{\ast
},y^{\ast },z^{\ast },p^{\ast })$ (i) involves a maximal length $K^{\ast }=2$
(i.e., $x_{2}^{\ast }=0$ or $z_{2,1}^{\ast }+z_{2,2}^{\ast }=1$), (ii) no
rationing at $k=1$ (i.e., $x_{1}^{\ast }=1$ and $z_{1,1}^{\ast }=0$), and
(iii) \cref{IR} is binding at $p^{\ast }$.\footnote{%
If $V/C=\frac{2\lambda +\mu }{(\lambda +\mu )\mu }$ and $\alpha =0$, one can
easily show that there is a unique optimal solution $p$ to $[P^{\prime }]$
and any outcome $(x,y,z)$ implementing $p$ satisfies (i), (ii) and (iii).}
Importantly, assumption (ii) implies that $y_{k,\ell }^{\ast }$ are all
zeros.\footnote{%
Indeed, in that case, $x_{0}^{\ast }=x_{1}^{\ast }=1\ $and $\sum_{\ell
=1}^{0}z_{0,\ell }^{\ast }=\sum_{\ell =1}^{1}z_{1,\ell }^{\ast }=0$.
Further, $(x^{*},y^{*},z^{*},p^{*})$ satisfies \cref{B}, i.e., for each $k$%
\begin{equation*}
p_{k}^{*}\lambda _{k}x_{k}^{*}(1-\sum_{\ell =1}^{k}z_{k,\ell
}^{*})=p_{k+1}^{*}(\sum_{\ell =1}^{k+1}y_{k+1,\ell }^{*}+\mu _{k+1})\text{.}
\end{equation*}%
From the above equation, it is easily checked that if $x_{k}^{*}=1$ and $%
\sum_{\ell =1}^{k}z_{k,\ell }^{*}=0$, given that $p_{k}^{*}\lambda _{k}\leq
p_{k+1}^{*}\mu _{k+1}$ since $p^{*}$ satisfies \cref{B'}, we must have that $%
y_{k+1,\ell }^{*}=0$ for each $\ell $. Thus, we must have that $y_{1,\ell
}^{\ast }=y_{2,\ell }^{\ast }=0$ for each $\ell $.} In the sequel, we fix
such an outcome $(x^{\ast },y^{\ast },z^{\ast },p^{\ast })$. Note that $%
x_{2}^{\ast }>0$ implies that $z_{2,1}^{\ast }+z_{2,2}^{\ast }=1$ and since
the values of $z_{2,1}^{\ast }$ and $z_{2,2}^{\ast }$ are irrelevant when $%
x_{2}^{\ast }=0$, without loss, we will assume that $z_{2,1}^{\ast
}+z_{2,2}^{\ast }=1$. While the variables we study below do depend on $\mu $
and $\lambda $, for simplicity, we omit the dependence in notations.

We then study an agent's expected utility with elapse of time $t\geq 0$ on
the queue:
\begin{equation}
U(t)\triangleq S(t)V-W(t)C.  \label{Eq_U(t)}
\end{equation}%
$W(t)$ stands for the residual waiting time, conditional on having spent
time $t\geq 0$ on the queue, i.e.,
\begin{equation*}
W(t)\triangleq \gamma _{1,1}^{t}\tau _{1,1}+\gamma _{2,1}^{t}\tau
_{2,1}+\gamma _{2,2}^{t}\tau _{2,2}
\end{equation*}%
where $\gamma ^{t}=(\gamma_{1,1}^{t},\gamma _{2,1}^{t},\gamma _{2,2}^{t})$
is the belief an agent has about alternative states $(k,\ell)$ and $\tau
=(\tau_{1,1},\tau _{2,1},\tau_{2,2})$ are his expected waiting times at
alternative states, both under the queueing rule $q$. Similarly, $S(t)$ is
the probability of eventually getting served and writes as:
\begin{equation*}
S(t)\triangleq \gamma _{1,1}^{t}\sigma _{1,1}+\gamma _{2,1}^{t}\sigma
_{2,1}+\gamma _{2,2}^{t}\sigma _{2,2}
\end{equation*}%
where $\sigma=(\sigma_{1,1}, \sigma _{2,1},\sigma_{2,2})$ are the
probabilities of an agent getting eventually served at alternative states $%
(k,\ell )$, again under the queueing rule $q$. (Throughout, we suppress the
dependence on $q$ for notational ease.)

Since $U(0)=0$ (as implied by a binding \cref{IR}), it suffices to show that
$U(t)$ decreases strictly in the neighborhood of $t=0$ which will then prove
that $q$ fails $(IC_{t})$ for some small $t>0$. We establish this for a
sufficiently small value $\lambda >0$.\footnote{%
Recall we adjust the values of $C,V$ and $\alpha $ so as to ensure that %
\cref{IR} is binding at the optimal cutoff policy that solves $[P^{\prime }]$%
.} Specifically, we focus on $\dot{U}(0)$---the change in utility
\textquotedblleft right after joining the queue\textquotedblright ---as $%
\lambda \rightarrow 0$. As it turns out, $\dot{U}(0)\rightarrow 0$ as $%
\lambda \rightarrow 0$.
Hence, one must consider how \textquotedblleft slowly\textquotedblright\ $%
\dot{U}(0)$ converges to 0, or more precisely, the limit behavior of $\dot{U}%
(0)/\lambda $ as $\lambda \rightarrow 0$.

Hence, we will show that $\dot{U}(0)/\lambda \ $converges to a strictly
negative number as $\lambda \rightarrow 0$. For our purpose, it is enough to
show that, as $\lambda $ vanishes, $S^{\prime }(0)/\lambda \ $converges to $%
0 $ while $W^{\prime }(0)/\lambda \ $converges to a strictly positive
number. To this end, it is necessary to characterize the limit behaviors of $%
(\tau_{k,\ell}), (\sigma_{k,\ell})$ and $(\dot{\gamma}_{k,\ell}^{0})$. We do
this first. \smallskip

\noindent \textbf{Limit behavior of $(\tau _{k,\ell })$.}
The expected waiting time $\tau _{1,1}$ must satisfy:%
\begin{equation*}
\tau _{1,1}=\left( \mu dt\right) dt+\lambda dt\left( dt+\tau _{2,1}\right)
+\left( 1-\mu dt-\lambda dt\right) (dt+\tau _{1,1})+o(dt),
\end{equation*}%
since, for a small time increment $dt$, the sole agent in the queue waits
for time $dt$ if he is served during $[t,t+dt)$ (which occurs with
probability $\mu dt$), for $dt+\tau _{2,1}$ if another agent arrives during $%
[t,t+dt)$ (which occurs with probability $\lambda dt$), and for $dt+\tau
_{1,1}$ if neither event arises (which occurs with probability $1-\mu
dt-\lambda dt$). By a similar reasoning, we have:%
\begin{equation*}
\tau _{2,1}=\left( q_{2,1}dt+\lambda x_{2}^{\ast }z_{2,1}^{\ast }dt\right)
dt+q_{2,2}dt(dt+\tau _{1,1})+\left( 1-\mu dt-\lambda x_{2}^{\ast
}z_{2,1}^{\ast }dt\right) (dt+\tau _{2,1})+o(dt)
\end{equation*}%
and%
\begin{equation*}
\tau _{2,2}=\left( q_{2,2}dt+\lambda x_{2}^{\ast }z_{2,2}^{\ast }dt\right)
dt+q_{2,1}dt(dt+\tau _{1,1})+\lambda x_{2}^{\ast }z_{2,1}^{\ast }dt\left(
dt+\tau _{2,1}\right) +\left( 1-\mu dt-\lambda x_{2}^{\ast }dt\right)
(dt+\tau _{2,2})+o(dt).
\end{equation*}%
Letting $dt\rightarrow 0$ and simplifying, we obtain:%
\begin{equation*}
\left( \mu +\lambda \right) \tau _{1,1}=\lambda \tau _{2,1}+1,\text{ }\left(
\mu +\lambda x_{2}^{\ast }z_{2,1}^{\ast }\right) \tau _{2,1}=q_{2,2}\tau
_{1,1}+1\text{ and }\left( \mu +\lambda x_{2}^{\ast }\right) \tau
_{2,2}=\lambda x_{2}^{\ast }z_{2,1}^{\ast }\tau _{2,1}+q_{2,1}\tau _{1,1}+1%
\text{.}
\end{equation*}%
Thus, we have that, as $\lambda \rightarrow 0$,%
\begin{equation}
\tau _{1,1}\rightarrow \frac{1}{\mu },\tau _{2,1}\rightarrow \frac{q_{2,2}}{%
\mu }\frac{1}{\mu }+\frac{1}{\mu }\text{ and }\tau _{2,2}\rightarrow \frac{%
q_{2,1}}{\mu }\frac{1}{\mu }+\frac{1}{\mu }  \label{Eq:lim_tau}
\end{equation}%
where we abuse notations and simply note $q_{2,2}$ for the limit as $\lambda
$ vanishes of $q_{2,2}\ $(and similarly for $q_{2,1}$). We assume
here that this limit is well-defined and take a subsequence of our vanishing
sequence of $\lambda $ if necessary. \smallskip

\noindent\textbf{Limit behavior of $(\sigma_{k,\ell})$.}
We have%
\begin{equation*}
\sigma _{1,1}=\mu dt+\lambda dt\sigma _{2,1}+(1-\mu dt-\lambda dt)\sigma
_{1,1}+o(dt)
\end{equation*}%
since, for a small time increment $dt$, the sole agent in the queue is
served with probability $\mu dt$; the agent is eventually served with
probability $\sigma _{2,1}$ if another agent arrives (which occurs with
probability $\lambda dt$), and the agent is served with probability $\sigma
_{1,1}$ if neither event arises (which occurs with probability $1-\mu
dt-\lambda dt$). Similar reasoning yields the following expressions for $%
\sigma _{2,1}$ and $\sigma _{2,2}$%
\begin{equation*}
\sigma _{2,1}=q_{2,1}dt+(1-\mu dt-\lambda x_{2}^{\ast }dt)\sigma
_{2,1}+q_{2,2}dt\sigma _{1,1}+\lambda x_{2}^{\ast }dtz_{2,2}^{\ast }\sigma
_{2,1}+o(dt),
\end{equation*}%
and%
\begin{equation*}
\sigma _{2,2}=q_{2,2}dt+(1-\mu dt-\lambda x_{2}^{\ast }dt)\sigma
_{2,2}+q_{2,1}dt\sigma _{1,1}+\lambda x_{2}^{\ast }dtz_{2,1}^{\ast }\sigma
_{2,1}+o(dt).
\end{equation*}%
We obtain
\begin{align*}
(\mu +\lambda )\sigma _{1,1}& =\mu +\lambda \sigma _{2,1} \\
(\mu +\lambda x_{2}^{\ast }(1-z_{2,2}^{\ast }))\sigma _{2,1}&
=q_{2,1}+q_{2,2}\sigma _{1,1} \\
(\mu +\lambda x_{2}^{\ast })\sigma _{2,2}& =q_{2,2}+q_{2,1}\sigma
_{1,1}+\lambda x_{2}^{\ast }z_{2,1}^{\ast }\sigma _{2,1}.
\end{align*}%
Hence, we obtain that
\begin{equation}  \label{Eq:lim_sigma}
\sigma _{1,1},\sigma _{2,1},\sigma _{2,2}\rightarrow 1 \text{ as } \lambda
\rightarrow 0.
\end{equation}
\smallskip

\noindent\textbf{Limit behavior of $(\dot{\gamma}_{k,\ell}^{0})$.}
We study the dynamics of beliefs. An agents' beliefs evolve during $[t,t+dt)$
according to Bayes rule. For instance, for state $(k,\ell )=(1,1)$, we obtain%
\begin{equation*}
\gamma _{1,1}^{t+dt}=\frac{\gamma _{1,1}^{t}\left[ 1-\mu dt-\lambda dt\right]
+\gamma _{2,2}^{t}\left[ q_{2,1}dt\right] +\gamma _{2,1}^{t}\left[ q_{2,2}dt%
\right] }{\gamma _{1,1}^{t}\left[ 1-\mu dt\right] +\gamma _{2,1}^{t}\left[
1-q_{2,1}dt-\lambda x_{2}^{\ast }z_{2,1}^{\ast }dt\right] +\gamma _{2,2}^{t}%
\left[ 1-q_{2,2}dt-\lambda x_{2}^{\ast }z_{2,2}^{\ast }dt\right] }+o(dt)
\end{equation*}%
where the numerator is the probability that the agent's state is $(k,\ell
)=(1,1)$ after staying in the queue for length $t+dt$ of time. This event
occurs if either (i) the agent is already in state $(1,1)$ in the queue at
time $t$, the agent is not served and no agent arrives in the queue during
time increment $dt$; or (ii) his state is $(2,2)$ or $(2,1)$ at $t$ and the
other agent in the queue is served by $t+dt$. The denominator in turn gives
the probability that the agent has not been served or removed from the queue
by time $t+dt$. Hence, given that an agent has not been served or removed
from the queue by $t$, the above expression gives the conditional belief
that his state is $(1,1)$ at time $t+dt$.

Similar reasoning yields the following expressions for the evolution of
beliefs for state $(2,1)$ and $(2,2)$%
\begin{equation*}
\gamma _{2,1}^{t+dt}=\frac{\gamma _{2,1}^{t}\left[ \lambda x_{2}^{\ast
}z_{2,2}^{\ast }dt+1-\mu dt-\lambda x_{2}^{\ast }dt\right] +\gamma _{2,2}^{t}%
\left[ \lambda x_{2}^{\ast }z_{2,1}^{\ast }dt\right] +\gamma _{1,1}^{t}\left[
\lambda dt\right] }{\gamma _{1,1}^{t}\left[ 1-\mu dt\right] +\gamma
_{2,1}^{t}\left[ 1-q_{2,1}dt-\lambda x_{2}^{\ast }z_{2,1}^{\ast }dt\right]
+\gamma _{2,2}^{t}\left[ 1-q_{2,2}dt-\lambda x_{2}^{\ast }z_{2,2}^{\ast }dt%
\right] }+o(dt)
\end{equation*}%
and%
\begin{equation*}
\gamma _{2,2}^{t+dt}=\frac{\gamma _{2,2}^{t}\left[ 1-\mu dt-\lambda
x_{2}^{\ast }dt\right] }{\gamma _{1,1}^{t}\left[ 1-\mu dt\right] +\gamma
_{2,1}^{t}\left[ 1-q_{2,1}dt-\lambda x_{2}^{\ast }z_{2,1}^{\ast }dt\right]
+\gamma _{2,2}^{t}\left[ 1-q_{2,2}dt-\lambda x_{2}^{\ast }z_{2,2}^{\ast }dt%
\right] }+o(dt).
\end{equation*}

From these, we can derive ODEs that describe belief evolutions:
\begin{eqnarray*}
\dot{\gamma}_{1,1}^{t} &=&-\gamma _{1,1}^{t}\left[ \mu +\lambda \right]
+\gamma _{2,2}^{t}\left[ q_{2,1}\right] +\gamma _{2,1}^{t}\left[ q_{2,2}%
\right] +\left( \gamma _{1,1}^{t}\right) ^{2}\left[ \mu \right]  \\
&&+\gamma _{1,1}^{t}\gamma _{2,1}^{t}\left[ q_{2,1}+\lambda x_{2}^{\ast
}z_{2,1}^{\ast }\right] +\gamma _{1,1}^{t}\gamma _{2,2}^{t}\left[
q_{2,2}+\lambda x_{2}^{\ast }z_{2,2}^{\ast }\right] ,
\end{eqnarray*}%
\begin{eqnarray*}
\dot{\gamma}_{2,1}^{t} &=&-\gamma _{2,1}^{t}\left[ \mu +\lambda x_{2}^{\ast
}(1-z_{2,2}^{\ast })\right] +\gamma _{2,2}^{t}\left[ \lambda x_{2}^{\ast
}z_{2,1}^{\ast }\right] +\gamma _{1,1}^{t}\left[ \lambda \right]  \\
&&+\gamma _{2,1}^{t}\gamma _{1,1}^{t}\left[ \mu \right] +\left( \gamma
_{2,1}^{t}\right) ^{2}\left[ q_{2,1}+\lambda x_{2}^{\ast }z_{2,1}^{\ast }%
\right] +\gamma _{2,1}^{t}\gamma _{2,2}^{t}\left[ q_{2,2}+\lambda
x_{2}^{\ast }z_{2,2}^{\ast }\right]
\end{eqnarray*}%
and%
\begin{eqnarray*}
\dot{\gamma}_{2,2}^{t} &=&-\gamma _{2,2}^{t}\left[ \mu +\lambda x_{2}^{\ast }%
\right] +\gamma _{2,2}^{t}\gamma _{1,1}^{t}\left[ \mu \right]  \\
&&+\gamma _{2,2}^{t}\gamma _{2,1}^{t}\left[ q_{2,1}+\lambda x_{2}^{\ast
}z_{2,1}^{\ast }\right] +\left( \gamma _{2,2}^{t}\right) ^{2}\left[
q_{2,2}+\lambda x_{2}^{\ast }z_{2,2}^{\ast }\right]
\end{eqnarray*}%
with a boundary condition at $t=0$ satisfying $\gamma _{2,1}^{0}=0\ $and
\begin{equation*}
\gamma _{1,1}^{0}=\frac{\lambda p_{0}}{\lambda p_{0}+\lambda p_{1}+\lambda
x_{2}^{\ast }p_{2}\left( z_{2,1}^{\ast }+z_{2,2}^{\ast }\right) }=\frac{1}{1+%
\frac{\lambda }{\mu }+x_{2}^{\ast }\left( \frac{\lambda }{\mu }\right) ^{2}},
\end{equation*}%
and
\begin{equation*}
\gamma _{2,2}^{0}=\frac{\lambda p_{1}+\lambda x_{2}^{\ast }p_{2}\left(
z_{2,1}^{\ast }+z_{2,2}^{\ast }\right) }{\lambda p_{0}+\lambda p_{1}+\lambda
x_{2}^{\ast }p_{2}\left( z_{2,1}^{\ast }+z_{2,2}^{\ast }\right) }=\frac{%
\frac{\lambda }{\mu }+x_{2}^{\ast }\left( \frac{\lambda }{\mu }\right) ^{2}}{%
1+\frac{\lambda }{\mu }+x_{2}^{\ast }\left( \frac{\lambda }{\mu }\right) ^{2}%
},
\end{equation*}%
where we used the fact that $p_{1}\mu =\lambda p_{0}\text{ and }p_{2}\mu
=\lambda p_{1}=\lambda \frac{\lambda }{\mu }p_{0}$ at the invariant
distribution together with $z_{2,1}^{\ast }+z_{2,2}^{\ast }=1$ since state $%
k\geq 3$ have mass $0$ at the invariant distribution. (Recall that we
assumed, wlog, that $z_{2,1}^{\ast }+z_{2,2}^{\ast }=1$).

Observe that%
\begin{equation*}
\frac{\gamma _{1,1}^{0}}{\lambda }-\frac{1}{\lambda }\rightarrow -\frac{1}{%
\mu }\text{, }\frac{\gamma _{2,2}^{0}}{\lambda }\rightarrow \frac{1}{\mu }%
\text{ and }\frac{\gamma _{2,1}^{0}}{\lambda }=0
\end{equation*}%
In addition,
\begin{equation}  \label{Eq:lim_gamma}
\frac{\dot{\gamma}_{1,1}^{0}}{\lambda }\rightarrow -1<0\text{, }\frac{\dot{%
\gamma}_{2,1}^{0}}{\lambda }\rightarrow 1>0\text{ and }\frac{\dot{\gamma}%
_{2,2}^{0}}{\lambda }\rightarrow 0\text{.}
\end{equation}

\begin{proof}[Completion of the proof of \Cref{thm:necessity}]
As we already mentioned, for our purpose, it is enough to show that as $%
\lambda $ vanishes, $S^{\prime }(0)/\lambda \ $converges to $0$ while $%
W^{\prime }(0)/\lambda \ $converges to a strictly positive number. We have
that
\begin{equation*}
\frac{W^{\prime }(0)}{\lambda }=\frac{\dot{\gamma}_{1,1}^{t}}{\lambda }\tau
_{1,1}+\frac{\dot{\gamma}_{2,1}^{0}}{\lambda }\tau _{2,1}+\frac{\dot{\gamma}%
_{2,2}^{0}}{\lambda }\tau _{2,2}\rightarrow -\frac{1}{\mu }+\left( \frac{%
q_{2,2}}{\mu }\frac{1}{\mu }+\frac{1}{\mu }\right) =\left( \frac{q_{2,2}}{%
\mu }\right) \frac{1}{\mu }>0
\end{equation*}%
where the limit result comes from \Cref{Eq:lim_tau} and \Cref{Eq:lim_gamma}
while the strict inequality holds given our assumption that $q$ differs from
FCFS and so $q_{2,2}>0$. Further, we have
\begin{equation*}
\frac{S^{\prime }(0)}{\lambda }=\frac{\dot{\gamma}_{1,1}^{t}}{\lambda }%
\sigma _{1,1}+\frac{\dot{\gamma}_{2,1}^{0}}{\lambda }\sigma _{2,1}+\frac{%
\dot{\gamma}_{2,2}^{0}}{\lambda }\sigma _{2,2}\rightarrow 0
\end{equation*}%
where the limit result comes from \Cref{Eq:lim_sigma} and \Cref{Eq:lim_gamma}%
. Thus, as claimed, $\dot{U}(0)/\lambda \ $converges to a strictly negative
number as $\lambda \rightarrow 0$. \end{proof}


\newpage
\section*{Online Appendix}

 \section{General Queueing and Information Rules} \label{app-sec: general}

Our results do not rest on the set of queueing and information rules assumed in the main text but rather hold in much more general environments.  Here we specify this more general class of queueing and information rules and argue how our results continue to hold under them.  The key argument is established in  \Cref{lem:general} below. Incidentally, this lemma   also implies  that $[P']$ is a relaxed program of $[P]$, which was argued informally in the text.

 \paragraph{$\bullet$ Queueing rule:}

 Here we consider a fully general queueing rule that allocates  service priority (including possible sharing of service) for each agent in the queue as a (random) function of every possible event observable by the designer.   The only payoff-relevant aspect of such a queueing rule is the {\it eventual service probability} and the {\it expected residual waiting time} it induces for each agent after each possible time $t\ge 0$ she has spent in the queue. (Note that given the linear waiting costs, residual waiting time matters only through its expectation.)  Hence, we formally proceed as follows.  Fix $(\l,\mu, x,y,z)$. This induces a Markov chain on the length of the queue with an arbitrary initial state, say with length $k=0$. We then  specify a queueing rule by a calendar-time indexed stochastic process $\theta=(\theta^t_s)_{s,t\ge 0}$, where  $\theta^t_s=(\sigma^t_s,\tau^t_s)\in \Delta([0,1]\times \mathbb{R}_+)$ consists of  the {\it distribution} of an agent's {\it eventual service probability} $\sigma_s^t$  and the distribution of the {\it expected residual waiting time} $\tau_S^t$, both conditional on the agent having entered the queue at calendar time $s\ge 0$ and  spent time $t\ge 0$ in the queue.

The process $\theta$ must be such that, for each $s\ge 0$, $(\theta_s^t)_t$ must form a filtration, and $\theta^{[\cdot]}_s$ must be a filtration with respect to $s$. Further, $\theta$ must be compatible with $(\l,\m, x,y,z)$ and the Markov chain on the queue length it induces. For our purpose, we do not need to specify or characterize the compatibility precisely.  Since we focus on the invariant distribution $p=(p_k)$ on the Markov chain, we will require the process $\theta$ to satisfy only the following two conditions:
 \begin{align*}
 & \sum_{k} p_k \l_k x_k   \bar \sigma^0_s = \sum_{k} \mu_k p_k, \tag{WC} \label{cap}\\
 & \sum_{k} p_k \l_k x_k  \bar \tau^0_s =\sum_{k}k p_k,  \tag{Little} \label{little}
 \end{align*}
where  $$  \bar \sigma^0_s:=\int \sigma  \theta^0_{s}(d\sigma d w)$$
is an agent's ex-ante expectation of the eventual service probability and
$$\bar \tau^0_s:=\int \tau  \theta_s^0(d\sigma d \tau)$$
is her ex-ante expectation of the waiting time, both at the time of her entry when she has entered the queue at the calendar time $s\ge 0$.

Condition \Cref{cap} is work-conservation/non-wastefulness requiring that the expected service rate enjoyed by an arriving agent (LHS) must equal the maximal total service rate available (RHS).  Condition \Cref{little} follows from Little's law  that the expected rate of arrival multiplied by the expected waiting equals the expected size of the queue. Hence, one should think of this as a (minimal) feasibility condition.
 These conditions in turn imply that  both $  \bar \sigma^0_s$ and $  \bar \tau^0_s$ do not depend on $s$.

 Let $\Theta$ denote the set of all $\theta$'s that satisfy  \Cref{cap} and \Cref{little}.  The queueing rules accommodated by $\Theta$ are fully general as long as they satisfy work conservation.  For example, any (Markovian) queueing rule in $\Q$ assumed in the main text trivially satisfies them.  First, \cref{little} holds independently of the queueing rule. Second, \Cref{cap} holds  since the service rate equals the maximal service rate at each state.  For instance, any hybrid rule that mixes multiple standard disciplines, a rule that permutes priorities following some event observable by the designer as envisioned by \cite{leshno2019dynamic}, and even a non-stationary and non-anonymous rule that changes service priorities based on the realized history are included in $\Theta$ as long as they are work-conserving. The reason is that any such rule, as long as it admits an invariant distribution, induces a well-defined process $\theta$ obeying \Cref{cap}  and  \Cref{little}.  Hence, they are accommodated by  $\Theta$.  Interestingly, the {\it feasibility} condition we impose in $\Q$ is not required for the process $\theta$ to satisfy \Cref{cap}  and  \Cref{little}. Since  FCFS belongs to $\Q$ and hence satisfies feasibility,  our \Cref{thm:dyn-fcfs} means that feasibility associated with FCFS imposes no cost for FCFS to be optimal among all queueing rules allowed in $\Theta$.

 \paragraph{$\bullet$ Information rule:}

We  can accordingly generalize an information rule as a feasible signal about a queueing rule $\theta\in \Theta$.  An information rule is simply specified as a (calendar-time indexed) distribution of signals on $(\sigma^t_s, \tau^t_s)$ received/observed by each agent.  Specifically, an \textsf{information rule} is given by  $\gamma=(\gamma_{s}^t)_{t,s}$, where the distribution $\gamma_s^t\in \Delta([0,1]\times \mathbb{R}_+)$ of the signals received by an agent who entered the queue at calendar time $s$ and spent time $t$ is a mean-preserving contraction of $\theta_{s}^t$.  Let $\Gamma$ denote the set of all such $\gamma$'s. That is, we impose no restriction on the signals received by the agent.  For our purpose, it is not necessary to characterize the conditions for $\gamma$ to be a posterior belief distribution of $\theta$; it will suffice to use only that $\gamma_s^0$ shares the same ex-ante mean as $\theta_s^0$.

 \begin{remark} In the case of the Markovian queueing rule assumed in the main text, the memorylessness of the rule $(\lambda, \mu, x, y, z, q)$ means that a pair $(k,\ell)$ of the queue length and position is a sufficient statistic for the eventual service probability as well as the expected residual waiting time.  Hence, the information can be succinctly characterized by a posterior belief distribution $(\gamma_{k, \ell})$ on $(k,\ell)$, which justifies our specification in the main text.
 \end{remark}

 Obviously, $\gamma$ includes \textbf{full information}, in which case $ \gamma$ coincides with $\theta$, and \textbf{no information}, in which case the belief distribution $\gamma_{s}^t$ is degenerate on the mean of $\theta_{s}^t$.  We allow any possible information in between the two, including the one that varies with time, as allowed for by the indexation by the calendar time.

 \paragraph{$\bullet$ Incentives.} The incentive constraint can be stated for the general class as follows:
\begin{equation*}
 V \cdot \sigma -C \cdot \tau  \geq 0,\forall (\sigma,\tau)\in \supp(\gamma_s^t),\forall
t\geq 0, s\ge 0. \tag{$\widetilde{IC}$}  \label{IC'}
\end{equation*}
The new incentive constraint \cref{IC'} states that an agent, when recommended to join or
stay in the queue, must find the prospect of being
served to be high enough to justify the expected remaining waiting cost, given each possible belief $(\gamma_{s}^t)$.  Recall that we do not require the constraint to be satisfied for an agent who is being removed from the queue. Any such agent will face $\tau_s^t=0$, however, so the constraint will hold trivially. Note also that the dependence of the distribution $\tilde \gamma^t_{s}$ on the calendar time $s$ may not vanish if either the queueing rule or the informational rule is nonstationary.
Clearly, the constraint \Cref{IC'} simplifies to \Cref{IC} in the case of the Markovian queueing and information rules in the text, upon noting that ``state'' variable $(k,\ell)$ is the sufficient statistic for the stochastic process of  eventual service probability and expected residual waiting time.

\paragraph{$\bullet$ Generalization Lemma.}  Consider now the new program $[\widetilde P]$ which is the same as $[P]$ except that the queueing and information rules are now chosen from the larger sets $\Theta$ and $\Gamma$, and \Cref{IC} is now replaced by \Cref{IC'}.  Note that \Cref{B} remains a valid balance equation for our queueing rule since it satisfies \Cref{cap}.  Clearly, the value of $[\widetilde P]$ is no less than that of $[P]$.

The crucial observation we now make is that $[P']$ is still a relaxed program of $[\widetilde P]$; this  will imply that all subsequent results, \Cref{thm:cutoff} and \Cref{thm:dyn-fcfs}, remain valid under the general class of queueing and information rules we now allow for.  The following lemma will establish this.

\begin{lemma} \label{lem:general} The constraint \Cref{IC'} implies \Cref{IR}, and thus the value of $[P']$ is no less than the value of $[\widetilde P]$.
\end{lemma}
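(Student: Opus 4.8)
The plan is to extract \cref{IR} from the date-zero incentive constraint by aggregating over signals and then substituting the work-conservation and Little's-law identities \cref{cap} and \cref{little}. First I would fix an arbitrary feasible tuple $(x,y,z,p,\theta,\gamma)$ for $[\widetilde P]$, so that \cref{B} and \cref{IC'} hold; it suffices to show that $p$ satisfies \cref{IR}, since then $p$ is feasible for $[P']$---recall \cref{B} implies \cref{B'} because the $y_{k,\ell}$'s are nonnegative and $x_k,z_{k,\ell}\in[0,1]$---and $[P']$ and $[\widetilde P]$ share the objective $W(p)$, whence the value of $[P']$ is no smaller than the value of $[\widetilde P]$.

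Next I would apply \cref{IC'} at $t=0$ for an arbitrary fixed $s\ge 0$: for every $(\sigma,\tau)\in\supp(\gamma_s^0)$ one has $V\sigma-C\tau\ge 0$. Integrating this pointwise inequality against the distribution $\gamma_s^0$ gives $V\int\sigma\,\gamma_s^0(d\sigma\,d\tau)-C\int\tau\,\gamma_s^0(d\sigma\,d\tau)\ge 0$. Since $\gamma_s^0$ is a mean-preserving contraction of $\theta_s^0$---the only feature of $\Gamma$ that is actually used---the two integrals equal $\bar\sigma_s^0$ and $\bar\tau_s^0$ respectively, so $V\bar\sigma_s^0-C\bar\tau_s^0\ge 0$.

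Then I would invoke \cref{cap} and \cref{little}. If $\sum_k p_k\lambda_k x_k=0$, then \cref{cap} forces $\sum_k\mu_k p_k=0$, hence $p_k=0$ for all $k\ge 1$ (as $\mu_k>0$ there), and \cref{IR} reads $0\ge 0$, which holds. Otherwise the common entry rate $\Lambda\triangleq\sum_k p_k\lambda_k x_k$ is strictly positive, and \cref{cap}, \cref{little} give $\bar\sigma_s^0=(\sum_k\mu_k p_k)/\Lambda$ and $\bar\tau_s^0=(\sum_k k p_k)/\Lambda$. Substituting into $V\bar\sigma_s^0-C\bar\tau_s^0\ge 0$ and multiplying through by $\Lambda>0$ yields $\sum_k p_k(\mu_k V-kC)\ge 0$, which is precisely \cref{IR}. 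Taking $\Theta\supseteq\mathcal Q$ and $\Gamma\supseteq\mathcal I$, the very same argument delivers the informally-stated claim that $[P']$ relaxes $[P]$.

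I do not anticipate a genuine obstacle; the only two points that need a word of care are the degenerate case $\Lambda=0$ handled above, and being explicit that it is the mean-preservation of $\gamma_s^0$ relative to $\theta_s^0$ that licenses replacing the signal-averaged quantities by $\bar\sigma_s^0$ and $\bar\tau_s^0$---the very quantities pinned down by \cref{cap} and \cref{little}.
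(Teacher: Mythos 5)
Your proof is correct and follows essentially the same route as the paper's: integrate $(\widetilde{IC})$ at $t=0$ against $\gamma_s^0$, use mean-preservation to pass from $\gamma_s^0$ to $\theta_s^0$, and substitute the work-conservation and Little's-law identities to recover \cref{IR}. The only (cosmetic) difference is in the degenerate case: you dispose of $\Lambda=0$ directly via \cref{cap}, while the paper first assumes $p_0<1$ and later rules out $\Lambda=0$ by tracing it back to $p_0=1$ through the balance equation; both are valid and yours is arguably the more self-contained.
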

\begin{proof}  Fix any feasible solution to $[\widetilde P]$. If $p_0=1$ at the feasible solution to $[\widetilde P]$, then \Cref{IR} holds trivially.  Hence, assume $p_0<1$.

Fix any calendar time $s$.
By integrating  both sides of \Cref{IC'} for $t=0$ over all $(\sigma,\tau)\in \supp(\gamma^0_s)$ according to the distribution $\gamma^0_s$, we obtain that
\begin{align*}
0\le &   V \int_{\sigma,\tau}\sigma \gamma^0_s(d s dw)-C \int_{\sigma,\tau} \tau \gamma^{0}_s(d  \sigma,d\tau) \\
= & V \int_{\sigma,\tau}\sigma \theta^0_s(d s dw)-C \int_{\sigma,\tau} \tau \theta^{0}_s(d  \sigma,d\tau) \\
      =& V \bar \sigma^0_s  - C \bar \tau^0_s \\
         =& V \frac{\sum_{k} \mu_k p_k}{\sum_{k} p_k \l_k x_k}   - C \frac{\sum_{k} k p_k}{\sum_{k} p_k \l_k x_k}    \\
         =& \frac{1}{\sum_{k} p_k \l_k x_k}\sum_{k} p_k\left(    \mu_k  V -  k C\right).
\end{align*}
The   inequality follows from \Cref{IC'}  for $t=0$; the first equality follows from the fact that $\gamma^0_s$ is a mean-preserving contraction of $\tau^0_s$, so they have the same means for $\sigma$ and $\tau$; the second equality uses the definitions of $\bar \sigma^0_s$  and $\bar \tau^0_s$; the fourth  equality follows from \Cref{cap} and \Cref{little}; and the last equality is a simple rearrangement of terms.  Finally, we conclude that $\sum_{k} p_k \l_k x_k>0$.  Suppose  not.  Then,   $x_k=0$ for all $k$ with $p_k\lambda_k>0$. But this means that $p_0=1$, contrary to our assumption.  Since $\sum_{k} p_k \l_k x_k>0$, $$\sum_{k} \left(    \mu_k  V -  k C\right)\ge 0,$$ so  \cref{IR} is satisfied.  Note also that \cref{IR} holds regardless of $s$, which means that if the incentive constraint is required only for the  limit as  $s\to \infty$, it will still imply \cref{IR}.

We  already established in the text that \Cref{B} implies \Cref{B'}. Hence, we have shown that $[P']$ is a relaxation of $[\widetilde P]$, from which the second statement follows.
\end{proof}

Since our $\Q$ and $\I$ are nested by $\Theta$ and $\Gamma$, respectively, the following corollary holds.

\begin{corollary} The constraint \Cref{IC} implies \Cref{IR}, and thus the value of $[P']$ is no less than the value of $[P]$.
\end{corollary}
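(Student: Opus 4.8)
The plan is to obtain this as an immediate corollary of \Cref{lem:general}, by exhibiting the Markovian classes $\Q$ and $\I$ as subclasses of $\Theta$ and $\Gamma$. First I would verify that any work-conserving $q\in\Q$, together with the fixed primitives and entry/exit policy $(\lambda,\mu,x,y,z)$, induces a calendar-time process $\theta=(\theta^t_s)$ of the kind described in the definition of $\Theta$: by the memorylessness of $(\lambda,\mu,x,y,z,q)$, the pair $(k,\ell)$ of queue length and position is a sufficient statistic for the eventual service probability and the expected residual waiting time, so the required distributions $(\sigma^t_s,\tau^t_s)$ are well defined and correctly filtered. The two defining constraints of $\Theta$ then hold: \cref{little} is Little's law for the queue at its invariant distribution $p$, and \cref{cap} holds because work-conservation of $q$ forces the total service rate in state $k$ to be exactly $\mu_k$, whence an entrant's ex-ante expected service rate equals $\sum_k \mu_k p_k / \sum_k p_k\lambda_k x_k$. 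Thus $\theta\in\Theta$.

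Next I would note that, under this identification, an information rule $I\in\I$ --- a distribution of posterior beliefs over $(k,\ell)$ --- pushes forward to an element $\gamma\in\Gamma$, namely the induced distribution over the pair $(\sigma_{k,\ell},\tau_{k,\ell})$; since posteriors over $(k,\ell)$ average back to the prior, $\gamma^0_s$ has the same ex-ante mean as $\theta^0_s$, which is all that \Cref{lem:general}'s proof uses. Under the same identification \cref{IC} is exactly \cref{IC'} restricted to this subclass. Consequently every feasible tuple $(p,x,y,z,q,I)$ of $[P]$ maps to a feasible tuple of $[\widetilde P]$ with the identical objective $W(p)$, so the value of $[\widetilde P]$ is at least $\mathcal{W}$.

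Finally I would invoke \Cref{lem:general}: it shows \cref{IC'} implies \cref{IR} and that the value of $[P']$ is no less than the value of $[\widetilde P]$. Applying the first conclusion to the embedded solution yields that \cref{IC} implies \cref{IR}; chaining value of $[P']\ge$ value of $[\widetilde P]\ge\mathcal{W}$ gives the remaining claim. There is essentially no obstacle, since the substantive work was already done in \Cref{lem:general}; the only point needing care is checking that \cref{cap} and \cref{little} genuinely hold for every $q\in\Q$ --- in particular that \cref{cap} needs only work-conservation and not the stronger feasibility condition imposed on $\Q$ --- which is a one-line verification.
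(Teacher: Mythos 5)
Your proof is correct and follows essentially the same route as the paper: the paper also deduces the corollary immediately from \Cref{lem:general} by noting that $\Q$ and $\I$ are nested by $\Theta$ and $\Gamma$, with the supporting verification (that work-conservation gives \cref{cap}, Little's law gives \cref{little}, memorylessness makes $(k,\ell)$ a sufficient statistic so beliefs over $(k,\ell)$ push forward to elements of $\Gamma$) already laid out in the surrounding text of \Cref{app-sec: general}. Your observation that only work-conservation, not the stronger feasibility condition on $\Q$, is needed for \cref{cap} is also made explicitly in the paper.
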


\section{Axiomatic Foundation for Regular Service Rates}\label{online_sec:axiomatic}

As defined in the text, for each $k\in \mathbb{N}$, $\mu_k$ represents the
maximal service rate that any set of $k$ agents may receive. By definition, $%
\mu_k$ is nondecreasing in $k$ since if $\mu_k>\mu_m$ for $k<m$, we can
simply redefine $\mu_m \triangleq \mu_k$. In this sense, we view $\mu=(\mu_k)_{k\in
\mathbb{N}}$ as ``effective'' maximal service rates.\footnote{%
Indeed, we can characterize $\mu$ as arising from more primitive service
constraints. Say there are upper bound constraints $(c_k)_k$ for each group
of $k$ agents. We do not impose any condition on $(c_k)_k$, except that
there exists $B>0$ such that $c_k\le B$ for all $k$ and it is
nondecreasing. The effective service rate $\mu_n$ for $n$ agents can be
defined as the value:
\begin{equation*}
\sup_{q\in \mathbb{R}_+^n} \sum_{i\in [n]} q_i \leqno[C_n]
\end{equation*}
subject to
\begin{equation*}
\sum_{i\in S}q_i\le c_k, \forall k\in \mathbb{N}, \forall S\subset [n] \,
s.t.\,|S|= k.
\end{equation*}%
}

Below we provide a more primitive definition of FCFS based on the concept
that the priority must be assigned greedily to maximize the service rates
for earlier arriving agents. Under regularity of $\mu$, this definition will then produce the formula we
presented in the main paper as the definition of FCFS.

\paragraph{FCFS:}

Specifically, for each $k\in \mathbb{N}$, we define the service rates $%
(q_{k,1}^{\ast },...,q_{k,k}^{\ast })\in \mathbb{R}_{+}^{k}$ that agents in
the queue of $k$ length receive under FCFS.

To begin, let $Q_{0}\triangleq \mathbb{R}_{+}^{k}$, and consider a sequence
of the following problems:

In step $j\in \lbrack k]\triangleq \{1,..., k\}$, we choose
\begin{equation*}
Q_{j}=\arg \max_{q\in Q_{j-1}}\sum_{i\in \lbrack j]}q_{i}\leqno[C_j^*]
\end{equation*}%
subject to
\begin{equation*}
\sum_{i\in S}q_{i}\leq \mu _{m},\forall m \in [k],\forall S\subset
\lbrack k]\,s.t.\,|S|=m.
\end{equation*}

In words, the first agent's service rate is maximized subject to the
constraint that he can never receive more than $\mu_1$ the maximal service
rate any single agent can ever receive. Taking that as constraint, we next
maximize first and second agents' service rate now only
subject to $\mu_2$ the maximal total service rate that any two agents can
ever receive, and so on.

The FCFS service rates $(q_{k,1}^{\ast },...,q_{k,k}^{\ast })$ are then
defined to be an optimal solution for step $k$---i.e., an element of $Q_{k}$%
. While it is in principle possible that $Q_k$ has multiple elements, it is easily seen that
$Q_{k}$ is a singleton. We let  $\mu _{k}^{\ast }$ denote the
maximized value of $[C_{k}^{\ast }]$. Next, observe that, for any $i\leq
k,k^{\prime }$, we have $q_{k,i}^{\ast }=q_{k^{\prime },i}^{\ast }$. Hence,
we henceforth write $q_{i}^{\ast }\ $for $q_{k,i}^{\ast }$.

We now derive the optimal solution $(q_j^*)_{j\in [k]}$ explicitly. The
resulting formula will resemble the one we defined for the service rates
under FCFS.

\begin{lemma}
\label{lem:fcfs-formula} Fix $k$. The optimal value of $[C_{j}^{\ast }]$ is $%
\mu _{j}^{\ast }$, where $\mu _{1}^{\ast }=\mu _{1}$, and for $j=2,...,k$,
\begin{equation*}
\mu _{j}^{\ast }=\mu _{j-1}^{\ast }+\min \{\mu _{j}-\mu _{j-1}^{\ast },\mu
_{j-1}^{\ast }-\mu _{j-2}^{\ast }\}.
\end{equation*}%
Agent $j\in \lbrack k]$ receives service rates $q_{j}^{\ast }=\mu _{j}^{\ast
}-\mu _{j-1}^{\ast }$, which is nonincreasing in $j$, for $j=1,...k$, where $%
\mu _{0}^{\ast }\triangleq 0$.
\end{lemma}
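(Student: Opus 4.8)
The plan is to prove the lemma by induction on $j\in[k]$, carrying three statements simultaneously: (i) the optimal value of $[C_j^*]$ equals $\mu_j^*$, where $\mu_j^*$ is given by the recursion in the statement and $q_m^*\triangleq\mu_m^*-\mu_{m-1}^*$; (ii) $Q_j=\{q\in\mathbb{R}_+^k:q\text{ is feasible and }q_m=q_m^*\text{ for all }m\le j\}$, and this set is nonempty; (iii) $q_1^*\ge q_2^*\ge\cdots\ge q_j^*\ge 0$, and moreover $q_j^*=\min_{1\le m\le j}(\mu_m-\mu_{m-1}^*)$. The base case $j=1$ is immediate: $[C_1^*]$ maximizes $q_1$ subject to $q_1\le\mu_1$ (the constraint for $S=\{1\}$) among the other constraints, and the point $(\mu_1,0,\dots,0)$ is feasible, so $\mu_1^*=\mu_1=q_1^*$ with $\mu_0^*\triangleq 0$, and $Q_1$ has the stated form.

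For the inductive step, assume (i)--(iii) hold at $j-1\ge 1$. Since on $Q_{j-1}$ the coordinates $q_1,\dots,q_{j-1}$ are frozen at $q_1^*,\dots,q_{j-1}^*$, the problem $[C_j^*]$ reduces to maximizing $q_j$ over $Q_{j-1}$. The core of the argument is the following: for each $m\in[j]$, the feasibility constraint attached to the set $S=\{1,\dots,m-1\}\cup\{j\}$ (of size $m$) reads $\mu_{m-1}^*+q_j\le\mu_m$, i.e.\ $q_j\le\mu_m-\mu_{m-1}^*$, and among all $m$-element subsets of $[k]$ containing $j$ this one imposes the tightest bound, because replacing $\{1,\dots,m-1\}$ by any other $(m-1)$-subset of $[j-1]$ can only decrease the subtracted mass $\sum_{i\in S'}q_i^*$, by the monotonicity in (iii); furthermore, setting $q_{j+1}=\dots=q_k=0$ and using that $\mu$ is nondecreasing, every feasibility constraint that involves an index $>j$, or a subset of $[j-1]$ other than an initial segment, is implied by one of the bounds $q_j\le\mu_m-\mu_{m-1}^*$, $m\in[j]$. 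Each of these bounds is nonnegative, since $\mu_{m-1}^*=\sum_{i<m}q_i^*\le\mu_{m-1}\le\mu_m$ by feasibility of $q^*\in Q_{m-1}$ and monotonicity of $\mu$. Hence $\max_{q\in Q_{j-1}}q_j=\min_{1\le m\le j}(\mu_m-\mu_{m-1}^*)$, attained at $(q_1^*,\dots,q_{j-1}^*,\min_{1\le m\le j}(\mu_m-\mu_{m-1}^*),0,\dots,0)\in Q_{j-1}$. By the identity in (iii) at stage $j-1$, $q_{j-1}^*=\mu_{j-1}^*-\mu_{j-2}^*=\min_{1\le m\le j-1}(\mu_m-\mu_{m-1}^*)$, so this maximum equals $\min\{\mu_j-\mu_{j-1}^*,\,q_{j-1}^*\}$, which is exactly $\mu_j^*-\mu_{j-1}^*$ by the recursion. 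Writing $q_j^*$ for this value, statement (i) holds at $j$; (ii) follows since $Q_j=\{q\in Q_{j-1}:q_j=q_j^*\}$; and (iii) holds at $j$ because $q_j^*=\min\{\mu_j-\mu_{j-1}^*,q_{j-1}^*\}\le q_{j-1}^*$. Taking $j=k$ gives all assertions of the lemma, and shows in passing that $Q_k=\{(q_1^*,\dots,q_k^*)\}$ is a singleton.

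The step I expect to be the main obstacle is making the combinatorial claim rigorous: that at stage $j$ only the constraints coming from the sets $\{1,\dots,m-1,j\}$ can bind, and that all other constraints become redundant once the coordinates $q_{j+1},\dots,q_k$ are set to zero. This rests essentially on the inductively maintained ordering $q_1^*\ge\cdots\ge q_{j-1}^*\ge 0$ (so that a ``greedy'' initial segment maximizes the subtracted mass) together with the monotonicity of $\mu$; the rest is bookkeeping about the nested sequence $Q_k\subseteq\cdots\subseteq Q_0$ and the attainment of the optima over closed feasible sets.
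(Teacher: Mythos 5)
Your proposal is correct, and at its core it follows the same inductive strategy as the paper: freeze the first $j-1$ coordinates at $q_1^*,\dots,q_{j-1}^*$, maximize $q_j$ over the remaining feasible set, construct the attaining point by appending zeros, and carry along the nonincreasing ordering $q_1^*\ge q_2^*\ge\cdots$ as the invariant that makes everything work. The difference is in how the upper bound $q_j\le \min\{\mu_j-\mu_{j-1}^*,\,q_{j-1}^*\}$ is extracted. The paper establishes this via a short exchange argument: it shows (as ``observation (i)'') that any $q\in Q_{j-1}$ must satisfy $q_\ell\le q_{j-1}^*$ for every $\ell\ge j-1$, because otherwise swapping $q_{j-1}$ and $q_\ell$ would strictly improve $[C_{j-1}^*]$, a contradiction; the bound $q_j\le \mu_j-\mu_{j-1}^*$ then comes separately from the full-set constraint. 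You instead enumerate the constraints directly, arguing that among size-$m$ subsets containing $j$ the initial segment $\{1,\dots,m-1,j\}$ is tightest (a consequence of the ordering), which gives $\max q_j=\min_{m\le j}(\mu_m-\mu_{m-1}^*)$, and you carry the auxiliary invariant $q_{j-1}^*=\min_{m\le j-1}(\mu_m-\mu_{m-1}^*)$ to collapse this to the two-term minimum. Your route is more explicit about which constraints bind and it makes the singleton structure of $Q_k$ transparent, at the cost of more constraint bookkeeping; the paper's swap argument is shorter but leaves that bookkeeping implicit. Both approaches rely on the same key facts and reach the same formula, so this is a stylistic rather than structural difference. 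One point of caution, which you flag yourself: your phrase that constraints ``on a subset of $[j-1]$ other than an initial segment'' are implied by the bounds on $q_j$ is slightly misstated --- constraints on subsets of $[j-1]$ not containing $j$ do not involve $q_j$ at all; they hold because $\hat q$ restricted to $[j-1]$ coincides with a point of $Q_{j-1}$, which is a separate (and easy) feasibility check the paper performs explicitly.
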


\begin{proof}
The  proof is inductive. First, it is trivial to note that $\mu
_{1}^{\ast }=\mu _{1}$ is indeed the value of $[C_{1}^{\ast }]$ and $%
q_{1}^{\ast }=\mu _{1}^{\ast }=\mu _{1}^{\ast }-\mu _{0}^{\ast }$. Suppose
next that $\mu _{i}^{\ast }$ is the value of $[C_{i}^{\ast }]$ for all $%
i=1,...,j-1$, and these steps pin down $q_{i}^{\ast }:=\mu _{i}^{\ast }-\mu
_{i-1}^{\ast }$. We make several observations: (i) Since $\mu _{i}^{\ast }$
is the value of $[C_{i}^{\ast }]$ for $i=j-2,j-1$, any $q\in Q_{j-1}$ has $%
q_{\ell }\leq \mu _{j-1}^{\ast }-\mu _{j-2}^{\ast }=q_{j-1}^{\ast }$ for all
$\ell \geq j-1$. (Suppose to the contrary that $q_{\ell }>\mu _{j-1}^{\ast
}-\mu _{j-2}^{\ast }$ for some $q\in Q_{j-1}$, then swapping $q_{j-1}$ and $%
q_{\ell }$ between $j-1$ and $\ell $ is feasible and strictly improves the
value of $[C_{j-1}^{\ast }]$, a contradiction.) (ii) By construction, we
have $\mu _{i}^{\ast }\leq \mu _{i}$ for all $i=1,...,j-1$. (iii) By
construction, we have $q_{i}^{\ast }\leq q_{i^{\prime }}^{\ast }$ for $%
i^{\prime }\leq i\leq j-1$ (which follows from the fact that $\mu _{j}^{\ast
}-\mu _{j-1}^{\ast }$ is nonincreasing in $j$.).

Consider problem $[C_{j}^{\ast }]$. We will argue that its value is given by
the formula $\mu _{j}^{\ast }=\mu _{j-1}^{\ast }+\min \{\mu _{j}-\mu
_{j-1}^{\ast },\mu _{j-1}^{\ast }-\mu _{j-2}^{\ast }\}$, and it pins down $%
q_{j}^{\ast }=\mu _{j}^{\ast }-\mu _{j-1}^{\ast }$. To this end, note first
that the value $\mu _{j}^{\ast }$ of $[C_{j}^{\ast }]$ cannot exceed:
\begin{equation*}
\mu _{j-1}^{\ast }+\min \{\mu _{j}-\mu _{j-1}^{\ast },\mu _{j-1}^{\ast }-\mu
_{j-2}^{\ast }\}\text{.}
\end{equation*}%
To see this, simply observe that the above term can take two values, either $%
\mu _{j}$ or $\mu _{j-1}^{\ast }+\mu _{j-1}^{\ast }-\mu _{j-2}^{\ast }$.
Since, by definition of $[C_{j}^{\ast }]$, $\mu _{j}^{\ast }\leq \mu _{j}$
the result holds in the former case. Since by (i) above $q_{\ell }\leq \mu
_{j-1}^{\ast }-\mu _{j-2}^{\ast }$ for all $\ell \geq j-1$ for any $q\in
Q_{j-1}$, and since, by definition, the value of $[C_{j}^{\ast }]$ equals $%
\mu _{j-1}^{\ast }+q_{j}^{\ast }$, the result also holds in the latter case.

We next prove that the value is actually attained. Construct $\hat{q}$ such
that $\hat{q}_{i}=q_{i}^{\ast }$ for all $i\leq j-1$, $\hat{q}_{j}=\mu
_{j}^{\ast }-\mu _{j-1}^{\ast }$ and $\hat{q}_{i}=0$ for all $i\geq j+1$.
Note that, since $\mu _{i}^{\ast }-\mu _{i-1}^{\ast }$ is nonincreasing in $i
$, $\hat{q}_{j}\leq \hat{q}_{i}$ for all $i\leq j$. Take any $S\subset
\lbrack k]$ such that $|S|=\ell <j$. Then,
\begin{equation*}
\sum_{i\in S}\hat{q}_{i}\leq \sum_{i\in \lbrack \ell ]}\hat{q}_{i}=\mu
_{\ell }^{\ast }\leq \mu _{\ell },
\end{equation*}%
where the first inequality follows from (iii) and the second follows from
(ii). Next, take any $S\subset \lbrack k]$ such that $|S|=k$. Then,
\begin{equation*}
\sum_{i\in S}\hat{q}_{i}\leq \sum_{i\in \lbrack j]}\hat{q}_{i}=\mu
_{j}^{\ast }\leq \mu _{j},
\end{equation*}%
where the first follows from (iii) and the fact that $q_{j}^{\ast }\leq
q_{i}^{\ast }$ for all $i\leq j$, and the second follows from our prior
observation that the value $\mu _{j}^{\ast }$ of $[C_{j}^{\ast }]$ must be
smaller than $\mu _{j}$. Lastly, it is trivial that $\sum_{i\in S}\hat{q}%
_{i}\leq \mu _{\ell },$ for any $S$ with $|S|=\ell $, where $\ell >j$. We
thus conclude $\hat{q}\in Q_{j}$ and $\mu _{j}^{\ast }$ is the value of $%
[C_{j}^{\ast }]$.
\end{proof}

It is easy to verify that the optimal solution $(q_j^*)_{j\in [k]}$ is unique. More
importantly, one can see  that the solution coincides with the service rate
we define for FCFS in the main text,  provided that FCFS is work conserving.
To see this note from \Cref{lem:fcfs-formula} that  $\sum_{j\in \lbrack
k]}q_{j}^{\ast }=\mu _{k}^*$. Hence, if FCFS is work conserving, we must
have $\mu_k^*=\mu_k$ for each $k$ (since $\mu_k^* \leq \mu_k$ for each $k$). In that case, we get $q_j^*=
\mu_{j}-\mu_{j-1}$, precisely as we defined in the text.

\paragraph{Axiomatic Characterization:}

We now prove that regularity of $\mu $ is a necessary and sufficient
condition for FCFS to be work-conserving, i.e., $\sum_{i\in \lbrack
k]}q_{k,i}^{\ast }=\mu _{k}$ for all $k$.

\begin{theorem}
\label{thm:axiomatization} FCFS is work-conserving if and only if $\mu $ is
regular.
\end{theorem}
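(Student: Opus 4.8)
The plan is to read the theorem off the recursion of \cref{lem:fcfs-formula}, which has already carried out all the combinatorial work; both implications then become one-line inductions. Recall from \cref{lem:fcfs-formula} that under FCFS agent $j$ receives $q^*_j = \mu^*_j - \mu^*_{j-1}$, where $\mu^*_0 \triangleq 0$, $\mu^*_1 = \mu_1$, and for every $j \ge 2$,
\[
\mu^*_j = \mu^*_{j-1} + \min\bigl\{\mu_j - \mu^*_{j-1},\ \mu^*_{j-1} - \mu^*_{j-2}\bigr\}.
\]
Telescoping gives $\sum_{j=1}^{k} q^*_{k,j} = \mu^*_k$, so FCFS is work-conserving if and only if $\mu^*_k = \mu_k$ for all $k$. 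I would also record at the outset the harmless observation (already noted in the construction) that $\mu^*_j$ and $q^*_j$ do not depend on the ambient length $k$, so the displayed recursion holds for all $j \ge 2$ simultaneously; the whole proof then reduces to deciding when the sequence $(\mu^*_k)$ coincides with $(\mu_k)$.

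For the ``if'' direction, suppose $\mu$ is regular, i.e.\ (\cref{def: regular}(i)) the increments $\mu_k - \mu_{k-1}$ are nonincreasing in $k$, so in particular $\mu_j - \mu_{j-1} \le \mu_{j-1} - \mu_{j-2}$ for every $j \ge 2$ (using $\mu_0 = 0$). I would prove $\mu^*_k = \mu_k$ by induction on $k$. The cases $k = 0$ and $k = 1$ hold since $\mu^*_0 = 0 = \mu_0$ and $\mu^*_1 = \mu_1$. For the inductive step, assuming $\mu^*_{j-1} = \mu_{j-1}$ and $\mu^*_{j-2} = \mu_{j-2}$, the recursion becomes $\mu^*_j = \mu_{j-1} + \min\{\mu_j - \mu_{j-1},\ \mu_{j-1} - \mu_{j-2}\}$, and regularity makes $\mu_j - \mu_{j-1}$ the minimizer, giving $\mu^*_j = \mu_j$. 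Hence $\mu^*_k = \mu_k$ for all $k$, i.e.\ FCFS is work-conserving.

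For the ``only if'' direction, suppose FCFS is work-conserving, so $\mu^*_k = \mu_k$ for every $k$. Substituting $\mu^*_j = \mu_j$, $\mu^*_{j-1} = \mu_{j-1}$ and $\mu^*_{j-2} = \mu_{j-2}$ into the recursion yields, for all $j \ge 2$,
\[
\mu_j - \mu_{j-1} = \min\bigl\{\mu_j - \mu_{j-1},\ \mu_{j-1} - \mu_{j-2}\bigr\},
\]
which forces $\mu_j - \mu_{j-1} \le \mu_{j-1} - \mu_{j-2}$. Since this holds for every $j \ge 2$, the increments of $\mu$ are nonincreasing, i.e.\ $\mu$ is regular.

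There is no substantive obstacle in this argument: once \cref{lem:fcfs-formula} is in hand the equivalence is immediate, and the only points that require a little care are keeping the convention $\mu_0 = \mu^*_0 = 0$ consistent in the base cases (which is precisely what makes the $j=2$ instance of regularity, namely $\mu_2 - \mu_1 \le \mu_1$, come out correctly) and noting explicitly that the recursion is identical for every ambient $k$, so that it may legitimately be invoked for all indices at once.
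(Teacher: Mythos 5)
Your proof is correct and follows essentially the same route as the paper's: both directions reduce immediately to deciding when $\mu_k^*=\mu_k$, and both use the recursion from \cref{lem:fcfs-formula}. For the implication ``$\mu$ regular $\Rightarrow$ FCFS work-conserving'' your induction is exactly the paper's. For the converse, the paper invokes the explicit conclusion of \cref{lem:fcfs-formula} that $q_k^*$ is nonincreasing and reads off $\mu_k-\mu_{k-1}=q_k^*$; you instead substitute $\mu^*=\mu$ into the recursion and observe that $\mu_j-\mu_{j-1}=\min\{\mu_j-\mu_{j-1},\,\mu_{j-1}-\mu_{j-2}\}$ forces $\mu_j-\mu_{j-1}\le \mu_{j-1}-\mu_{j-2}$. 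These are the same observation unpacked in two slightly different ways, so there is no substantive difference. (A cosmetic note: the paper's proof happens to label the two directions with ``only if'' and ``if'' swapped relative to the standard convention; your labeling is the conventional one.)
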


\begin{proof}  By \Cref{lem:fcfs-formula},   for all $k\in \mathbb{N}$, $\sum_{i\in \lbrack k]}q_{i}^{\ast }=\mu
_{k}^{\ast }$,  and by feasibility $\mu _{k}^{\ast }\leq \mu _{k}$.  Hence, FCFS is work-conserving
if and only if $\mu _{k}^{\ast }=\mu _{k}$ for all $k\in \mathbb{N}$. Thus, it suffices to prove that
$\mu $ is regular if and only if $\mu
_{k}^{\ast }=\mu _{k}$ for all $k$.

To prove the ``only if'' direction, suppose $\mu $ is regular. We argue inductively that $\mu _{k}^{\ast }=\mu
_{k}$ for all $k$.\ First, by definition, $\mu _{1}^{\ast }=\mu _{1}$.
Suppose $\mu _{i}^{\ast }=\mu _{i}$ for all $i\in \lbrack k-1]$. Then,
\begin{eqnarray*}
\mu _{k}^{\ast } &=&\mu _{k-1}^{\ast }+\min \{\mu _{k}-\mu _{k-1}^{\ast
},\mu _{k-1}^{\ast }-\mu _{k-2}^{\ast }\} \\
&=&\mu _{k-1}+\min \{\mu _{k}-\mu _{k-1},\mu _{k-1}-\mu _{k-2}\}=\mu _{k},
\end{eqnarray*}%
where the first equality is by definition of $\mu _{k}^{\ast }$, the second
follows from the induction hypothesis, and the last follows from the
regularity.

The converse, the ``if'' direction, follows from the fact that $\mu _{k}-\mu _{k-1}=\mu
_{k}^{\ast }-\mu _{k-1}^{\ast }=q_{k}^{\ast }$ and $q_{k}^{\ast }$ is
nonincreasing in  $k$ by \Cref{lem:fcfs-formula}.  \end{proof}

We have focused only on FCFS, but the LCFS can be defined analogously, and a
similar result is obtained.

\section{Proof of \Cref{lem: charact stoch dom}}
\label{sup: stoch dom}

We have
\begin{eqnarray*}
\sum_{k=0}^{K}p _{k}^{\prime }\varphi (k) &=&\varphi
(K)-\sum_{L=0}^{K-1}\left( \sum_{k=0}^{L}p _{k}^{\prime }\right) \left(
\varphi (L+1)-\varphi (L)\right) \\
&=&\varphi (K)-\sum_{L=0}^{K-1}\left( 1-\sum_{k=L+1}^{K}p _{k}^{\prime
}\right) \left( \varphi (L+1)-\varphi (L)\right) \\
&>&\varphi (K)-\sum_{L=0}^{K-1}\left( 1-\sum_{k=L+1}^{K}p _{k}\right) \left(
\varphi (L+1)-\varphi (L)\right) \\
&=&\varphi (K)-\sum_{L=0}^{K-1}\left( \sum_{k=0}^{L}p _{k}\right) \left(
\varphi (L+1)-\varphi (L)\right) \\
&=&\sum_{k=0}^{K}p _{k}\varphi (k),
\end{eqnarray*}%
where the first and the last equalities hold by Abel's formula for summation
by parts while the strict inequality uses the fact that (1) $p ^{\prime }$
stochastically dominates $p $; (2) $\varphi $ is a nondecreasing function
and (3) there is $\kappa \geq 1$ such that
\begin{equation*}
\sum_{k=\kappa }^{K}p _{k}^{\prime }>\sum_{k=\kappa }^{K}p _{k}\text{ and }%
\varphi (\kappa )>\varphi (\kappa -1)\text{.}
\end{equation*}

\section{Remaining Proof of   \Cref{thm:cutoff}}

\label{sup: proof final prop}

In this section we prove \Cref{prop: final} which completes the proof of %
\Cref{thm:cutoff}.

\subsubsection{Existence of a solution in the infinite-dimensional problem}

Our problem $[P^{\prime }]$ can be written as%
\begin{equation*}
\max_{p \in M^{\prime }}\sum_{k=0}^{\infty }p _{k}\left[ \mu _{k}((1-\alpha
)R+\alpha V)-\alpha Ck\right] \leqno{[P']}
\end{equation*}%
where $M^{\prime }\triangleq\{ p \in \Delta (\mathbb{Z}_{+}):\sum_{k=0}^{%
\infty }p _{k}\left[ \mu _{k}V-Ck\right] \geq 0,\lambda _{k}p _{k}\geq \mu
_{k+1}p _{k+1},\, \forall k\}$. We prove the following result.

\begin{proposition}
\label{thm: existence infinite}The set of optimal solutions of $[P^{\prime
}] $ is nonempty.
\end{proposition}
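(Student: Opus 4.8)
The plan is to apply the direct method of the calculus of variations: show that the feasible set $M'$ is sequentially compact for the topology of coordinatewise (equivalently, weak) convergence of probability measures on $\mathbb{Z}_+$, that the objective is upper semicontinuous for this topology, and that the value $\mathcal{W}^*:=\sup[P']$ is finite; then any maximizing sequence has a convergent subsequence whose limit attains $\mathcal{W}^*$. Finiteness is immediate from the standing boundedness assumption: with $\bar\mu:=\sup_k\mu_k<\infty$ and objective coefficient $a_k:=\mu_k((1-\alpha)R+\alpha V)-\alpha Ck$, we have $a_k\le\bar\mu((1-\alpha)R+\alpha V)$ for every $k$, so $W(p)\le\bar\mu((1-\alpha)R+\alpha V)$ on all of $\Delta(\mathbb{Z}_+)$; and $\mathcal{W}^*\ge 0$ because the Dirac measure $\delta_0$ lies in $M'$ (it satisfies \cref{B'} and \cref{IR} trivially, since $\mu_0=0$) and has $W(\delta_0)=a_0=0$.

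First I would establish tightness of $M'$, which is precisely where \cref{IR} does the work: for any $p\in M'$, \cref{IR} gives $C\sum_k k\,p_k\le V\sum_k\mu_k p_k\le V\bar\mu$, hence $\mathbb{E}_p[k]\le V\bar\mu/C$ uniformly over $M'$; Markov's inequality then yields $\sum_{k\ge N}p_k\le V\bar\mu/(CN)$ uniformly, so $M'$ is a tight family. Since $\mathbb{Z}_+$ is countable, a diagonal extraction (or Prokhorov's theorem) shows that any sequence $\{p^n\}\subseteq M'$ has a subsequence converging coordinatewise to some $p^*$, and tightness rules out escaping mass, so $\sum_k p^*_k=1$, i.e.\ $p^*$ is again a genuine probability measure.

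The main (though routine) technical point is closedness of $M'$ and upper semicontinuity of $W$. The constraints \cref{B'} each involve only two consecutive coordinates $(p_k,p_{k+1})$, so they pass to coordinatewise limits. Both \cref{IR} and $W$ have the form $p\mapsto\sum_k c_k p_k$ with $c_k$ bounded above by a constant $\bar c$ (depending only on $\bar\mu$) but possibly with $c_k\to-\infty$; writing $c_k=\bar c-b_k$ with $b_k\ge 0$ gives $\sum_k c_k p_k=\bar c-\sum_k b_k p_k$, and Fatou's lemma for the counting measure yields $\liminf_n\sum_k b_k p^n_k\ge\sum_k b_k p^*_k$, hence $\limsup_n\sum_k c_k p^n_k\le\sum_k c_k p^*_k$ — upper semicontinuity along tight coordinatewise-convergent sequences. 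Applied to \cref{IR}, this gives $\sum_k(\mu_k V-Ck)p^*_k\ge\limsup_n\sum_k(\mu_k V-Ck)p^n_k\ge 0$, so $p^*\in M'$; applied to $W$ with $\{p^n\}$ a maximizing sequence, it gives $W(p^*)\ge\limsup_n W(p^n)=\mathcal{W}^*$, and since $p^*$ is feasible, $W(p^*)\le\mathcal{W}^*$, so $p^*$ is an optimal solution. The only real obstacle is handling the coefficients $-Ck$ in \cref{IR} and $-\alpha Ck$ in $W$, which are unbounded below and so force the Fatou/semicontinuity argument in place of a naive continuity claim; note that none of this uses regularity of $\mu$, consistent with \cref{thm:cutoff} asserting existence unconditionally.
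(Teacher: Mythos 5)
Your proposal is correct and follows essentially the same route as the paper's proof in the online appendix: tightness of $M'$ from \cref{IR} (you use Markov's inequality where the paper uses a direct contradiction, and Fatou's lemma where the paper invokes Portmanteau, but these are the same estimates), closedness of $M'$ and upper semicontinuity of $W$ under coordinatewise/weak convergence, and then extraction of a maximizer from a maximizing sequence via sequential compactness.
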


We start by showing that the objective of the optimization problem is upper
semi-continuous (\cref{upper semi-cont obj}). We endow $\mathbb{Z}_{+}$ with
the discrete topology and $\Delta (\mathbb{Z}_{+})$ with the weak topology.
Since $\mathbb{Z}_{+}$ endowed with the discrete topology is a (separable)
metric space, $\Delta (\mathbb{Z}_{+})$ is metrizable by Prokhorov's
Theorem. We next show that set $M^{\prime }$ is compact (%
\cref{prop:compactness}). This enough for our purpose. Indeed, by the
Extreme Value Theorem for upper semi-continuous functions, optimization
problem $[P^{\prime }]$ has an optimal solution.

\begin{proposition}
\label{upper semi-cont obj} The function%
\begin{equation*}
\sum_{k=0}^{\infty }p_{k}\left[ \mu _{k}((1-\alpha )R+\alpha V)-\alpha Ck%
\right]
\end{equation*}%
is upper semi-continuous in $p\in \Delta (\mathbb{Z}_{+})$.
\end{proposition}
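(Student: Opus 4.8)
The plan is to exploit two features of the problem: the coefficients $g(k)\triangleq \mu_k((1-\alpha)R+\alpha V)-\alpha Ck$ are bounded above — since $\{\mu_k\}$ is uniformly bounded (Section~2) and $\alpha Ck\ge 0$ — and every function on $\mathbb{Z}_+$ equipped with the discrete topology is continuous, hence in particular lower semi-continuous. Setting $M\triangleq (\sup_k\mu_k)((1-\alpha)R+\alpha V)<\infty$, so that $g(k)\le M$ for all $k$, I would introduce $h(k)\triangleq M-g(k)=\alpha Ck+((1-\alpha)R+\alpha V)(\sup_j\mu_j-\mu_k)\ge 0$. Thus $h\colon\mathbb{Z}_+\to\mathbb{R}_+$ is nonnegative and (being defined on a discrete space) lower semi-continuous.

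Since $\Delta(\mathbb{Z}_+)$ with the weak topology is metrizable, as already noted, it suffices to verify sequential upper semi-continuity. I would take an arbitrary sequence $p^n\to p$ weakly and apply the Portmanteau theorem — in the form that $\liminf_n\int f\,dp^n\ge\int f\,dp$ for every nonnegative lower semi-continuous $f$ — to $f=h$, obtaining $\liminf_n\sum_k p^n_k h(k)\ge\sum_k p_k h(k)$ in $[0,\infty]$. Because $\sum_k p^n_k=1$ one can rewrite $\sum_k p^n_k g(k)=M-\sum_k p^n_k h(k)$, and similarly for $p$, so $\limsup_n\sum_k p^n_k g(k)=M-\liminf_n\sum_k p^n_k h(k)\le M-\sum_k p_k h(k)=\sum_k p_k g(k)$, which is exactly the desired upper semi-continuity at $p$. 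The boundary case $\sum_k p_k h(k)=\infty$ (i.e.\ $\sum_k p_k g(k)=-\infty$) is covered by the same chain of inequalities, which then forces $\limsup_n\sum_k p^n_k g(k)=-\infty$.

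An equivalent route, which I would mention as an alternative, is to write the objective as a decreasing pointwise infimum of genuinely weakly continuous functions: for each $N$ let $G_N(p)\triangleq\sum_{k\le N}p_k g(k)+M\bigl(1-\sum_{k\le N}p_k\bigr)$. Each $G_N$ is weakly continuous because $p\mapsto p_k$ is continuous (the singleton $\{k\}$ is clopen, so $\mathbf{1}_{\{k\}}$ is bounded and continuous); one has $G_N(p)\ge\sum_k p_k g(k)$ pointwise since $g\le M$, and $G_N\downarrow\sum_k p_k g(k)$ as $N\to\infty$. An infimum of continuous functions is upper semi-continuous, which again gives the claim.

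I do not anticipate a genuine obstacle. The only point requiring care is conceptual: the weak topology does not render unbounded linear functionals continuous, which is precisely why the argument must be routed through the nonnegative lower semi-continuous function $h$ (equivalently, through the increasing family $g\wedge M$ of truncations) rather than through the bounded-continuous-function characterization of weak convergence. The uniform boundedness of $\{\mu_k\}$ is exactly what makes $M<\infty$, hence $h\ge 0$, and is therefore the hypothesis being used.
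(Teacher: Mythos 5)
Your proof is correct and follows the same route as the paper's: both rest on the Portmanteau theorem applied to the coefficient function $g(k)=\mu_k((1-\alpha)R+\alpha V)-\alpha Ck$, which is continuous on the discrete space $\mathbb{Z}_+$ and bounded above thanks to the uniform bound on $\mu_k$. Where the paper simply invokes the upper-bounded continuous case of Portmanteau, you reduce explicitly to the nonnegative lower-semicontinuous version via $h=M-g$ (and offer the equivalent truncation route through the $G_N$), which makes precise exactly why the standard bounded-continuous-function characterization of weak convergence is not needed here.
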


\begin{proof}
Consider a sequence $\{ p  ^{n}\}$ in $\Delta  (\mathbb{Z}_{+})$ converging
to $p  ^{\ast }$. Since the function $k\mapsto \mu  _{k}((1-\alpha
)R+\alpha V)-\alpha Ck$ is continuous (in the discrete topology) and upper
bounded,\footnote{%
 {Recall our assumption that $\mu _k$ is uniformly bounded.}} by Portmanteau's Theorem, $%
\lim \sup \sum_{k=0}^{\infty }p  _{k}^{n}\left[ \mu  _{k}((1-\alpha
)R+\alpha V)-\alpha Ck\right] \leq \sum_{k=0}^{\infty }p  _{k}^{\ast }\left[
\mu  _{k}((1-\alpha )R+\alpha V)-\alpha Ck\right] $ and so we get the
upper semi-continuity of our function. \end{proof}

\begin{proposition}
\label{prop:compactness} Set $M^{\prime }$ is compact.
\end{proposition}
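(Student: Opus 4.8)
The plan is to show that $M^{\prime }$ is both tight and closed for the weak topology on $\Delta (\mathbb{Z}_{+})$, and then to conclude via Prokhorov's Theorem: since $\mathbb{Z}_{+}$ with the discrete topology is Polish, a tight subset of $\Delta (\mathbb{Z}_{+})$ is relatively compact, and a relatively compact set that is also closed is compact. Nonemptiness is immediate (the Dirac mass at $0$ lies in $M^{\prime }$, using $\mu _{0}=0$), so the set is genuinely compact and not vacuously so.

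For tightness, I would extract a uniform bound on the mean of $p$ from the individual rationality constraint $\sum_{k}p_{k}(\mu _{k}V-Ck)\ge 0$. Letting $B$ be a uniform upper bound on $(\mu _{k})_{k}$ (which exists by assumption), this constraint rearranges to $C\sum_{k}kp_{k}\le V\sum_{k}\mu _{k}p_{k}\le VB$, so every $p\in M^{\prime }$ satisfies $\sum_{k}kp_{k}\le VB/C=:m$. Markov's inequality then gives $\sum_{k\ge N}p_{k}\le m/N$ for every $N\ge 1$, uniformly over $p\in M^{\prime }$; choosing $N$ large makes the finite set $\{0,\dots ,N-1\}$ carry $p$-mass at least $1-\varepsilon$ for all $p\in M^{\prime }$, which is exactly tightness.

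For closedness, suppose $p^{n}\in M^{\prime }$ with $p^{n}\to p^{\ast }$ weakly. Because every singleton of $\mathbb{Z}_{+}$ is clopen, weak convergence forces $p^{n}_{k}\to p^{\ast }_{k}$ for each $k$, and passing to the limit in each linear inequality $\lambda _{k}p^{n}_{k}\ge \mu _{k+1}p^{n}_{k+1}$ gives $\lambda _{k}p^{\ast }_{k}\ge \mu _{k+1}p^{\ast }_{k+1}$. The only delicate point is the $\cref{IR}$-type inequality, since $g(k):=\mu _{k}V-Ck$ is not bounded below. I would write $g=g^{+}-g^{-}$ with $g^{+}:=\max \{g,0\}$ and $g^{-}:=\max \{-g,0\}\ge 0$; since $\mu _{k}\le B$, $g(k)<0$ for $k>BV/C$, so $g^{+}$ has finite support and hence is bounded continuous, giving $\sum_{k}p^{n}_{k}g^{+}(k)\to \sum_{k}p^{\ast }_{k}g^{+}(k)$, while the Portmanteau theorem applied to the nonnegative (lower semicontinuous) function $g^{-}$ gives $\liminf_{n}\sum_{k}p^{n}_{k}g^{-}(k)\ge \sum_{k}p^{\ast }_{k}g^{-}(k)$. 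Combining, and using $\limsup_{n}(a_{n}-b_{n})\le \lim_{n}a_{n}-\liminf_{n}b_{n}$ when $(a_{n})$ converges, yields
\[
\sum_{k}p^{\ast }_{k}g(k)\;\ge\;\limsup_{n}\sum_{k}p^{n}_{k}g(k)\;\ge\;0 ,
\]
so $p^{\ast }\in M^{\prime }$. This last passage to the limit through the unbounded constraint (handled by the $g^{\pm }$ decomposition plus Portmanteau) is the only nonroutine step; tightness and the closure of the birth–death inequalities are straightforward. With $M^{\prime }$ tight and closed, Prokhorov's Theorem completes the argument, and together with the upper semicontinuity of the objective (\cref{upper semi-cont obj}) this also yields \cref{thm: existence infinite}.
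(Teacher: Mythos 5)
Your proof is correct and follows the same overall strategy as the paper: establish tightness and closedness of $M'$ and invoke Prokhorov's Theorem. The only (cosmetic) differences are that the paper proves tightness by contradiction rather than via Markov's inequality, and it passes to the limit in the \cref{IR} constraint by applying Portmanteau directly to the upper-bounded, (discretely) continuous function $k\mapsto\mu_k V-Ck$, which shortcuts the $g=g^+-g^-$ decomposition you used to the same effect.
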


\begin{proof}
The proof is based on the two lemmas proved below.

\begin{lemma}
The set $M^{\prime }$ is tight.
\end{lemma}

\begin{proof}
We need to show that for any $\varepsilon >0$, there is $n$ large enough so
that any probability measure $p  \in M^{\prime }\ $has $\sum_{k=n+1}^{%
\infty }p  _{k}<\varepsilon $. Suppose to the contrary that
there is $\varepsilon >0$ and a sequence $\{ p  ^{n}\}_{n}$ in $M^{\prime }$
(which satisfies $\sum_{k=0}^{\infty }p  _{k}^{n}\left[ \mu  _{k}V-Ck%
\right] \geq 0$) such that $\sum_{k=n+1}^{\infty }p  _{k}^{n}>\varepsilon $
for all $n$. This implies
\begin{eqnarray*}
\sum_{k=0}^{\infty }p  _{k}^{n}(\mu  _{k}V-Ck) &=&V\sum_{k=0}^{\infty
}p  _{k}^{n}\mu  _{k}-C\sum_{k=0}^{\infty }p  _{k}^{n}k \\
&\leq & \sup_k \mu_k V-C\sum_{k=n+1}^{\infty }p  _{k}^{n}k \\
&\leq & \sup_k \mu_k V-C(n+1)\sum_{k=n+1}^{\infty }p  _{k}^{n} \\
&\leq & \sup_k \mu_k V-C(n+1)\varepsilon \text{.}
\end{eqnarray*}%
Note that for $n$ large enough, using our assumption that $\sup_k \mu_k<+\infty$, the above term must be strictly negative.
This contradicts the fact that $\sum_{k=0}^{\infty }p  _{k}^{n}(\mu
_{k}V-Ck)\geq 0$ for all $n$. \end{proof}

\begin{lemma}
The set $M^{\prime }$ is closed.
\end{lemma}

\begin{proof}
To show that $M^{\prime }$ is closed, we need to show that it contains all its limit points.
Recall that since $\Delta  (\mathbb{Z}_{+})\ $is a metric space, $p  \in
\Delta  (\mathbb{Z}_{+})$ is a limit point of $M^{\prime }$ if and only if there is a
sequence of points in $M^{\prime }\backslash \{ p  \}$ converging to $p  $.
Take any sequence $\{ p  ^{n}\}_{n}$ in $M^{\prime }$ converging to $p
^{\ast }$. We need to show that (1) $\sum_{k=0}^{\infty }p  _{k}^{\ast
}(\mu  _{k}V-Ck)\geq 0$ and (2) for all $k,$ $\lambda _{k}p  _{k}^{\ast
}\geq \mu  _{k+1}p  _{k+1}^{\ast }$.

\underline{(1) $\sum_{k=0}^{\infty }p  _{k}^{\ast }(\mu  _{k}V-Ck)\geq 0$.%
} Proceed by contradiction and assume that $\sum_{k=0}^{\infty }p
_{k}^{\ast }(\mu  _{k}V-Ck)<0$. By Portmanteau's Theorem, since the
function $k\mapsto \mu  _{k}V-Ck$ is bounded above (and trivially
continuous in the discrete topology), we must have that $\lim \sup
\sum_{k=0}^{\infty }p  _{k}^{n}(\mu  _{k}V-Ck)\leq \sum_{k=0}^{\infty
}p  _{k}^{\ast }(\mu  _{k}V-Ck)$. Hence, since, by assumption, $%
\sum_{k=0}^{\infty }p  _{k}^{\ast }(\mu  _{k}V-Ck)<0$, it must be that
for $n$ large enough, $\sum_{k=0}^{\infty }p  _{k}^{n}(\mu  _{k}V-Ck)<0$,
a contradiction with the fact that $p  ^{n}\in M^{\prime }$.

\underline{(2) For all $k,$ $\lambda _{k}p  _{k}^{\ast }\geq \mu
_{k+1}p  _{k+1}^{\ast }$.}\ By contradiction, assume that for some $k$, $%
\lambda _{k}p  _{k}^{\ast }<\mu  _{k+1}p  _{k+1}^{\ast }$. Since $p
_{k}^{n}$ and $p  _{k+1}^{n}$ converge pointwise to $p  _{k}^{\ast }$ and $%
p  _{k+1}^{\ast }$, for $n$ large enough we have $\lambda _{k}p
_{k}^{n}<\mu _{k+1}p  _{k+1}^{n}$ which contradicts the fact that $p
^{n} $ is in $M^{\prime }$. \end{proof}

Since $M^{\prime }$ is closed and tight, by Prokhorov Theorem, $M^{\prime }$
must be sequentially compact. Since $\Delta  (\mathbb{Z}_{+})\ $is a metric
space, this implies that $M^{\prime }$ is compact, as claimed. \end{proof}

\subsubsection{Completion of the proof of \cref{prop:
final}}

Let $M^{\prime \prime }$ be the set of $p$'s in $M^{\prime }$ which exhibits
a cutoff policy. That is any $p\in M^{\prime \prime }$ satisfies for some $%
\hat{K}$, $\lambda _{k}p_{k}=\mu _{k+1}p_{k+1},\forall k=0,..\hat{K}-1$ and $%
p_{k}=0$ for all $k\geq \hat{K}+1$. We define the sequence $\{p^{K}\}_{K}$
where, for each $K$, $p^{K}$ is an optimal solution of $[P_{K}^{\prime }]$.
If $\mu $ is regular, we assume that $p^{\ast }$ exhibits a cutoff policy
which is well-defined by \cref{prop:cutoff finite
dim}. In addition, for each $K$, we see $p^{K}$ as a point in $\mathbb{R}^{%
\mathbb{Z}_{+}}$ with $p_{k}^{K}=0$ for all $k\geq K+1$. Clearly $%
\{p^{K}\}_{K}$ is a sequence in $M^{\prime \prime }$. In the next
proposition we show that $M^{\prime \prime }$ is (sequentially) compact.
This will show that $\{p^{K}\}_{K}$ must have a subsequence converging to a
point that exhibits a cutoff policy. In the sequel, we assume that $\mu $ is
regular.

\begin{proposition}
\label{lem:compactness final} $\{p^{K}\}_{K}$ must have a subsequence
converging to a feasible point $p^{\ast }$ of $[P^{\prime }]$ that exhibits
a cutoff policy. In addition, $p_{k}^{\ast }>0$ for each $k\leq \min \arg
\max_{k}\mu _{k}V-Ck$.
\end{proposition}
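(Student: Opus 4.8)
The plan is to deduce \cref{lem:compactness final} from \cref{prop:compactness}, which says the feasible set $M'$ of $[P']$ is compact (hence, being metrizable, sequentially compact), by showing that its subset $M''$ of distributions exhibiting a cutoff policy is \emph{closed}. Each $p^{K}$ was chosen (via \cref{prop:cutoff finite dim}, using regularity of $\mu$) to exhibit a cutoff policy; extended by zeros beyond $K$ it is a probability distribution satisfying \cref{B'} for every $k\geq0$ and satisfying \cref{IR}, so $p^{K}\in M''$. I would then extract from $\{p^{K}\}$ a subsequence $\{p^{K_n}\}$ converging to some $p^{\ast}\in M'$ (on the discrete space $\mathbb{Z}_{+}$ this is pointwise convergence of the masses), show $p^{\ast}\in M''$ — giving the feasibility and cutoff-structure assertions — and finally obtain the positivity claim by passing an explicit product formula for $p^{K_n}$ to the limit.

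For closedness of $M''$: let $p^{n}\to p^{\ast}$ with each $p^{n}\in M''$ of cutoff length $\hat K_{n}\in\mathbb{Z}_{+}\cup\{\infty\}$. Since $p^{\ast}\in M'$ it obeys \cref{B'}, so whenever $p^{\ast}_{k}>0$ we get $\lambda_{k-1}p^{\ast}_{k-1}\geq\mu_{k}p^{\ast}_{k}>0$, hence $p^{\ast}_{k-1}>0$; descending, $\supp(p^{\ast})=\{0,\dots,\hat K^{\ast}\}$ for $\hat K^{\ast}$ the supremum of the support (finite or $\infty$), and $p^{\ast}_{0}>0$. Next, for each $k\leq\hat K^{\ast}-2$ we have $p^{\ast}_{k+2}>0$, so $p^{n}_{k+2}>0$ for large $n$, forcing $\hat K_{n}\geq k+2$ and hence \cref{B'} to bind at $k$ for $p^{n}$; letting $n\to\infty$ gives $\lambda_{k}p^{\ast}_{k}=\mu_{k+1}p^{\ast}_{k+1}$. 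Thus \cref{B'} binds for $p^{\ast}$ on $\{0,\dots,\hat K^{\ast}-2\}$, holds weakly at $\hat K^{\ast}-1$, and $p^{\ast}_{k}=0$ for $k>\hat K^{\ast}$, i.e. $p^{\ast}$ exhibits a cutoff policy and $p^{\ast}\in M''$.

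For positivity, put $\bar k:=\min\argmax_{k}(\mu_{k}V-Ck)$, finite since $\mu$ is bounded; if $\bar k=0$ the claim reduces to $p^{\ast}_{0}>0$, already shown, so assume $\bar k\geq1$. Regularity of $\mu$ makes the increments $(\mu_{j}-\mu_{j-1})V-C$ of $g(k):=\mu_{k}V-Ck$ nonincreasing in $j$, so, $\bar k$ being a maximizer of $g$, they are $\geq g(\bar k)-g(\bar k-1)\geq0$ for $j\leq\bar k$, whence $\mu_{j}-\mu_{j-1}\geq C/V>0$ for $j\leq\bar k$. For the Lagrangian coefficient $f(k;\xi^{\ast})=(1-\alpha)R\mu_{k}+(\alpha+\xi^{\ast})g(k)$ of $[P'_{K}]$, with multiplier $\xi^{\ast}\geq0$, the increments on $\{1,\dots,\bar k\}$ are then nonnegative and, when $\alpha<1$, strictly positive; when $\alpha=1$, $f(\cdot;\xi^{\ast})$ is a positive multiple of $g$. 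Either way $k^{\ast}_{K}:=\min\argmax_{k}f(k;\xi^{\ast})\geq\bar k$, so for $K_{n}\geq\bar k$ \cref{prop:cutoff finite dim} gives $p^{K_n}_{k}>0$ for $k\leq\bar k$, and its proof (via \cref{claim:increasing part} and single-peakedness of $f(\cdot;\xi^{\ast})$) gives $\lambda_{k-1}p^{K_n}_{k-1}=\mu_{k}p^{K_n}_{k}$ for $k=1,\dots,\bar k$; in particular $\lambda_{k-1}>0$ there, so $p^{K_n}_{k}=p^{K_n}_{0}\prod_{j=1}^{k}(\lambda_{j-1}/\mu_{j})$ for $k\leq\bar k$. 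Letting $n\to\infty$ (using $p^{K_n}_{0}\to p^{\ast}_{0}$ and $p^{K_n}_{k}\to p^{\ast}_{k}$) gives $p^{\ast}_{k}=p^{\ast}_{0}\prod_{j=1}^{k}(\lambda_{j-1}/\mu_{j})$ for $k\leq\bar k$, and since $p^{\ast}_{0}>0$ with each factor positive, $p^{\ast}_{k}>0$ for all $k\leq\bar k$.

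The step I expect to be the main obstacle is the closedness of $M''$, since the cutoff lengths $\hat K_{n}$ may be unbounded and one must preclude \cref{B'} from slackening at an interior state of the limit; this is resolved by observing that the weak inequality \cref{B'} passes to the limit (by compactness of $M'$) and already pins the limit support down to an initial segment of $\mathbb{Z}_{+}$, while binding of \cref{B'} at an interior state of $p^{\ast}$ is transferred from the $p^{n}$ because such a state lies at least two below $\hat K_{n}$ for $n$ large. The remaining ingredients — the subsequence extraction and carrying the finite-dimensional conclusions of \cref{prop:cutoff finite dim} through the pointwise limit — are routine.
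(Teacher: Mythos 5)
Your proof is correct and follows essentially the same route as the paper: you establish closedness of the set $M''$ of feasible distributions exhibiting a cutoff policy (so that sequential compactness of $M'$ from \cref{prop:compactness} gives the convergent subsequence and the limit's cutoff structure), and you obtain positivity of $p^{\ast}_{k}$ for $k\leq\min\argmax_{k}(\mu_{k}V-Ck)$ from the key observation that $\min\argmax_{k}(\mu_{k}V-Ck)\leq\min\argmax_{k}f(k;\xi^{\ast}_{K})$ combined with \cref{prop:cutoff finite dim} and \cref{claim:increasing part}. The only cosmetic difference is in the positivity step, where the paper argues by contradiction (assuming some $p^{\ast}_{k_0}=0$ and deriving a slack $\cref{B'}$ constraint that violates \cref{claim:increasing part} for large $K$), whereas you argue directly by passing the explicit product formula $p^{K_n}_{k}=p^{K_n}_{0}\prod_{j\le k}(\lambda_{j-1}/\mu_{j})$ to the limit; both deliver the same conclusion with the same underlying ingredients.
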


\begin{proof}
For the first part of the statement, it suffices to show that $M^{\prime
\prime }$ is (sequentially) compact. Since $M^{\prime \prime }$ is a subset
of $M^{\prime }$ which is compact (\cref{prop:compactness}), we only need to
show that $M^{\prime \prime }$ is closed. Consider a sequence $\{p^{n}\}$ in
$M^{\prime \prime }$ converging to $p^{\ast }$. We show that $p^{\ast }\in
M^{\prime \prime }$. Since $M^{\prime }$ is (sequentially) compact, we
already know that $p^{\ast }\in M^{\prime }$. Letting $\hat{K}$ be the
largest state in the support of $p^{\ast }$ (which is potentially $+\infty $
if the support is unbounded), we proceed by contradiction and assume that
there exists $k_{0}<\hat{K}$ such that $\lambda
_{k_{0}-1}p_{k_{0}-1}^{_{\ast }}>\mu _{k_{0}}p_{k_{0}}^{_{\ast }}$. Now,
simply pick $n$ large enough so that (1) $p_{k}^{n}>0$ for all $%
k=0,...,k_{0}+1$ and (2) $\lambda _{k_{0}-1}p_{k_{0}-1}^{n}>\mu
_{k_{0}}p_{k_{0}}^{n}$. This contradicts the assumption that $p^{n}$ is in $%
M^{\prime \prime }$. We thus conclude that $p^{\ast }\in M^{\prime \prime }$.

We now show the second part of the statement. We just proved that $%
\{p^{K}\}_{K}$ must have a subsequence converging to a feasible point $%
p^{\ast }$ of $[P^{\prime }]$. We show that $p^{\ast }$ satisfies $%
p_{k}^{\ast }>0$ for each $k\leq \min \arg \max_{k}\mu _{k}V-Ck$. First, we
simply observe that for any $\xi \geq 0$, $\min \arg \max_{k}\mu
_{k}V-Ck\leq \min \arg \max f(k;\xi )$.\footnote{%
Straightforward algebra show that $\mu _{k+1}V-C(k+1)>\mu _{k}V-Ck$ if
and only if $\mu _{k+1}-\mu _{k}>C/V$.\ Similarly,\ given $\xi \geq 0$, we
have that $f(k+1;\xi )>f(k;\xi )$ if and only if $\mu _{k+1}-\mu
_{k}>C/[(1-\alpha )/(\alpha +\xi )+V]$. Hence, whenever $\mu _{k}V-Ck$ is
strictly increasing from $k$ to $k+1$, so is $f(k;\xi )$. Since by %
\cref{lem: single-peakedness}, these functions are single-peaked, we must
have $\min \arg \max_{k}\mu _{k}V-Ck\leq \min \arg \max f(k;\xi )$.} Now, we
proceed by contradiction and assume that there is $k_{0}\leq \min \arg
\max_{k}\mu _{k}V-Ck$ such that $p_{k_{0}}^{\ast }=0$. Let us assume that $%
k_{0}$ is the smallest state satisfying this property, so, in particular, $%
p_{k_{0}-1}^{\ast }>0$. This implies that $p_{k_{0}}^{\ast }\mu
_{k_{0}}<p_{k_{0}-1}^{\ast }\lambda _{k_{0}-1}$. Since $\{p^{K}\}_{K}$
converges to $p^{\ast }$, for $K$ large enough, $p_{k_{0}}^{K}\mu
_{k_{0}}<p_{k_{0}-1}^{K}\lambda _{k_{0}-1}$. Since $k_{0}\leq \min \arg \max
f(k;\xi _{K}^{\ast })$, using single-peakedness of $f(\cdot ;\xi _{K}^{\ast
})$, we must have $f(k_{0}-1;\xi _{K}^{\ast })<f(k_{0};\xi _{K}^{\ast })$
(where we use the notation $(p^{K},\xi _{K}^{\ast })$ for the saddle point
of the Lagrangian in $[P_{K}^{\prime }]$). This contradicts %
\cref{claim:increasing part}. \end{proof}

Finally, we complete the proof of \cref{prop: final}$\ $via the following
proposition.

\begin{proposition}
\label{prop:final step} Take any subsequence of $\{p^{K}\}_{K}$ converging
to a limit $p^{\ast }$. Then, $p^{\ast }$ must be an optimal solution of $%
[P^{\prime }]$.
\end{proposition}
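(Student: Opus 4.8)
The plan is to sandwich the objective value $W(p^{\ast})$ between $\mathcal{W}^{\ast}$, the value of $[P^{\prime}]$, from both sides. Since \Cref{lem:compactness final} already establishes that the subsequential limit $p^{\ast}$ is feasible for $[P^{\prime}]$, the inequality $W(p^{\ast})\le\mathcal{W}^{\ast}$ is immediate, and the only real content is the reverse bound $W(p^{\ast})\ge\mathcal{W}^{\ast}$.

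To that end I would write $f_{0}(k)\triangleq\mu_{k}((1-\alpha)R+\alpha V)-\alpha Ck$, so that $W(p)=\sum_{k\ge 0}p_{k}f_{0}(k)$ and, on distributions supported in $\{0,\dots,K\}$, $W$ coincides with the objective $W_{K}$ of the truncated program $[P^{\prime}_{K}]$. First I would record two elementary facts. (a) Each optimal solution $p^{K}$ of $[P^{\prime}_{K}]$, regarded as an element of $\Delta(\mathbb{Z}_{+})$, is feasible for $[P^{\prime}]$ — it satisfies \cref{B'} for $k\le K-1$ by construction and trivially for $k\ge K$, and it inherits \cref{IR} — so the value of $[P^{\prime}_{K}]$, which is $W(p^{K})$, is at most $\mathcal{W}^{\ast}$. (b) Let $\overline{p}$ be an optimal solution of $[P^{\prime}]$, which exists by \Cref{thm: existence infinite}, and let $\overline{p}^{K}$ denote its renormalized truncation $(\overline{p}_{0},\dots,\overline{p}_{K})/\sum_{i\le K}\overline{p}_{i}$. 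Then $\overline{p}^{K}$ is feasible for $[P^{\prime}_{K}]$ for all large $K$: the probability constraint and \cref{B'} survive the common positive rescaling, and for \cref{IR} one uses that $\mu_{k}$ is uniformly bounded, so that $\mu_{k}V-Ck<0$ for all $k\ge k_{0}$ for some $k_{0}$; hence for $K\ge k_{0}$ the discarded tail $\sum_{k>K}\overline{p}_{k}(\mu_{k}V-Ck)$ is nonpositive and $\sum_{k\le K}\overline{p}_{k}(\mu_{k}V-Ck)\ge\sum_{k\ge 0}\overline{p}_{k}(\mu_{k}V-Ck)\ge 0$.

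With these facts in hand the argument closes quickly. From \cref{IR} for $\overline{p}$ together with boundedness of $\mu_{k}$ one gets $\sum_{k}\overline{p}_{k}k\le (V/C)\sup_{k}\mu_{k}<\infty$, hence $\sum_{k}\overline{p}_{k}|f_{0}(k)|<\infty$, so the series defining $W(\overline{p})$ converges absolutely; together with $\sum_{i\le K}\overline{p}_{i}\to 1$ this gives $W_{K}(\overline{p}^{K})\to W(\overline{p})=\mathcal{W}^{\ast}$. By optimality of $p^{K}$ for $[P^{\prime}_{K}]$ and feasibility of $\overline{p}^{K}$, $W(p^{K})=W_{K}(p^{K})\ge W_{K}(\overline{p}^{K})$ for all large $K$, so along the convergent subsequence $\{p^{K_{n}}\}$ we obtain $\liminf_{n}W(p^{K_{n}})\ge\mathcal{W}^{\ast}$. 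Finally, since $f_{0}$ is bounded above, \Cref{upper semi-cont obj} makes $W$ upper semi-continuous on $\Delta(\mathbb{Z}_{+})$, so $W(p^{\ast})\ge\limsup_{n}W(p^{K_{n}})\ge\liminf_{n}W(p^{K_{n}})\ge\mathcal{W}^{\ast}$. Combined with $W(p^{\ast})\le\mathcal{W}^{\ast}$, this yields $W(p^{\ast})=\mathcal{W}^{\ast}$, i.e.\ $p^{\ast}$ solves $[P^{\prime}]$; together with \Cref{lem:compactness final} this completes \Cref{prop: final}, and hence \Cref{thm:cutoff}.

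The main obstacle — essentially the only step that is not pure bookkeeping — is that the objective coefficients $f_{0}(k)$ are \emph{not} bounded below, so both the finiteness of $W(\overline{p})$ and the convergence $W_{K}(\overline{p}^{K})\to W(\overline{p})$ must be obtained through the \cref{IR}-derived summability $\sum_{k}\overline{p}_{k}k<\infty$, and the feasibility of the truncation $\overline{p}^{K}$ with respect to \cref{IR} must be argued via the eventual negativity of $\mu_{k}V-Ck$ rather than by a crude truncation estimate. Everything else is a standard compactness-plus-upper-semicontinuity argument.
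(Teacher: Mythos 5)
Your proof is correct and follows essentially the same route as the paper: compare the renormalized truncation $\overline{p}^{K}$ of an optimal $\overline{p}$ against $p^{K}$, let $K\to\infty$ along the subsequence, and close with the upper semi-continuity from \Cref{upper semi-cont obj}. The paper phrases this as a contradiction while you sandwich $W(p^{\ast})$ directly, but that is only a cosmetic difference. Your (a) is logically harmless but not actually needed once feasibility of $p^{\ast}$ is granted by \Cref{lem:compactness final}.

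Where you genuinely add value is in the two bookkeeping points the paper leaves implicit. First, the paper asserts $W_{K}(\overline{p}^{K})\to W(\overline{p})$ after simply dividing by $\sum_{i\le K}\overline{p}_{i}$, tacitly assuming that the partial sums $\sum_{k\le K}\overline{p}_{k}f_{0}(k)$ converge to $W(\overline{p})$ even though $f_{0}$ is unbounded below; your derivation of $\sum_{k}\overline{p}_{k}k\le (V/C)\sup_{k}\mu_{k}<\infty$ from \cref{IR} is exactly the summability input that makes this step rigorous. Second, the paper invokes ``$\overline{p}^{K}$ is feasible in $[P_{K}^{\prime}]$'' without verifying \cref{IR} for the truncation; your observation that $\mu_{k}V-Ck<0$ for all $k$ beyond some $k_{0}$ (again using boundedness of $\mu_{k}$), so that the discarded tail is nonpositive and truncation can only help \cref{IR}, is the missing justification. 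So: same skeleton, but you have patched two small gaps the paper glosses over.
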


\begin{proof}
In the sequel, we let $p^{\ast }$ be the limit of an arbitrary converging
subsequence $\{p^{K}\}_{K}$. We proceed by contradiction and assume that $%
p^{\ast }$ is not a solution to the infinite dimensional problem. By %
\cref{thm: existence infinite}, we know that there is a solution to this
problem. Let us call it $\bar{p}$. By assumption,
\begin{equation}
\sum_{k=0}^{\infty }\bar{p}_{k}\left[ \mu _{k}\left( \left( 1-\alpha \right)
R+\alpha V\right) -\alpha Ck\right] >\sum_{k=0}^{\infty }p_{k}^{\ast }\left[
\mu _{k}\left( \left( 1-\alpha \right) R+\alpha V\right) -\alpha Ck\right]
\text{.}  \label{ineq contrad}
\end{equation}%
Now, let us note by $\bar{p}^{K}$ the distribution $\bar{p}$ conditional on $%
\{0,...,K\}$, i.e., $\bar{p}_{k}^{K}=0$ for all $k\geq K+1$ while $\bar{p}%
_{k}^{K}=\bar{p}_{k}\left/ \sum_{k=0}^{K}\bar{p}_{k}\right. $ for all $k\leq
K$. We claim that
\begin{equation*}
\lim \sum_{k=0}^{\infty }\bar{p}_{k}^{K}\left[ \mu _{k}\left( \left(
1-\alpha \right) R+\alpha V\right) -\alpha Ck\right] =\sum_{k=0}^{\infty }%
\bar{p}_{k}\left[ \mu _{k}\left( \left( 1-\alpha \right) R+\alpha V\right)
-\alpha Ck\right] .
\end{equation*}%
Indeed, by construction, for each $K$,%
\begin{equation*}
\sum_{k=0}^{\infty }\bar{p}_{k}^{K}\left[ \mu _{k}\left( \left( 1-\alpha
\right) R+\alpha V\right) -\alpha Ck\right] =\sum_{k=0}^{K}\bar{p}_{k}\left[
\mu _{k}\left( \left( 1-\alpha \right) R+\alpha V\right) -\alpha Ck\right]
\left/ \sum_{k=0}^{K}\bar{p}_{k}\right. .
\end{equation*}%
Taking limits on both sides as $K\rightarrow \infty $ (and using the fact
that $\lim_{K\rightarrow \infty }\sum_{k=0}^{K}\bar{p}_{k}=1$), we obtain%
\begin{equation*}
\lim_{K\rightarrow \infty }\sum_{k=0}^{\infty }\bar{p}_{k}^{K}\left[ \mu
_{k}\left( \left( 1-\alpha \right) R+\alpha V\right) -\alpha Ck\right]
=\sum_{k=0}^{\infty }\bar{p}_{k}\left[ \mu _{k}\left( \left( 1-\alpha
\right) R+\alpha V\right) -\alpha Ck\right] ,
\end{equation*}%
as claimed.

Now, using Equation \cref{ineq contrad}, for $K$ large enough, we must have
\begin{equation}
\sum_{k=0}^{\infty }\bar{p}_{k}^{K}\left[ \mu _{k}\left( \left( 1-\alpha
\right) R+\alpha V\right) -\alpha Ck\right] >\sum_{k=0}^{\infty }p_{k}^{\ast
}\left[ \mu _{k}\left( \left( 1-\alpha \right) R+\alpha V\right) -\alpha Ck%
\right] +\varepsilon   \label{Ineq 1}
\end{equation}%
for some $\varepsilon >0$. Now, since $\{p^{K}\}_{K}$ converges weakly to $%
p^{\ast }$, by \cref{upper semi-cont obj},%
\begin{equation*}
\lim_{K\rightarrow \infty }\sup \sum_{k=0}^{\infty }p_{k}^{K}\left[ \mu
_{k}\left( \left( 1-\alpha \right) R+\alpha V\right) -\alpha Ck\right] \leq
\sum_{k=0}^{\infty }p_{k}^{\ast }\left[ \mu _{k}\left( \left( 1-\alpha
\right) R+\alpha V\right) -\alpha Ck\right] .
\end{equation*}%
Hence, we must have that for $K$ large enough,
\begin{equation}
\sum_{k=0}^{\infty }p_{k}^{\ast }\left[ \mu _{k}\left( \left( 1-\alpha
\right) R+\alpha V\right) -\alpha Ck\right] +\varepsilon >\sum_{k=0}^{\infty
}p_{k}^{K}\left[ \mu _{k}\left( \left( 1-\alpha \right) R+\alpha V\right)
-\alpha Ck\right] \text{.}  \label{Ineq 2}
\end{equation}%
Using Equation \cref{Ineq 1} and \cref{Ineq 2}, we conclude that for $K$
large enough,
\begin{equation*}
\sum_{k=0}^{\infty }\bar{p}_{k}^{K}\left[ \mu _{k}\left( \left( 1-\alpha
\right) R+\alpha V\right) -\alpha Ck\right] >\sum_{k=0}^{\infty }p_{k}^{K}%
\left[ \mu _{k}\left( \left( 1-\alpha \right) R+\alpha V\right) -\alpha Ck%
\right] \text{.}
\end{equation*}%
This contradicts the fact that $p^{K}$ is an optimal solution of $%
[P_{K}^{\prime }]$ since $\bar{p}^{K}$ is feasible in this problem.
\end{proof}

\section{Proof that FCFS with no information satisfies $(IC_0)$.}

\label{online_app:IC_0}

Recall that policy $(q^*, I^*)$ stands for FCFS queueing rule and the no
information (beyond recommendations) rule.

\begin{lemma}
\label{lem:ic0-fcfs} The queueing/information policy $(q^*,I^*)$ satisfies $%
(IC_0)$.
\end{lemma}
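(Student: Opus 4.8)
The plan is to show that, under the policy $(q^*,I^*)$, the time-zero incentive constraint $(IC_0)$ is \emph{equivalent} to the feasibility constraint \cref{IR} of the relaxed program $[P']$, and then to invoke the fact that $p^*$ satisfies \cref{IR} by construction. Concretely, $(IC_0)$ reads $V - C\sum_{\ell=1}^{K^*}\tilde\gamma_\ell^0\,\tau^*_\ell\ge 0$, so it suffices to evaluate the expected waiting time $\sum_\ell \tilde\gamma_\ell^0\,\tau^*_\ell$ that a just-recommended entrant anticipates, and compare it with $V/C$.

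First I would simplify the entry belief. Under the optimal cutoff policy we have $y^*=z^*=0$ and $x^*_k=1$ for $k\le K^*-2$, so the balance equation \cref{B} reads $\tilde\lambda_{\ell-1}p^*_{\ell-1}=\mu_\ell p^*_\ell$ for $\ell=1,\dots,K^*$, where $\tilde\lambda$ is the effective arrival rate. Substituting this into \cref{eq:belief0} turns the numerator $p^*_{\ell-1}\tilde\lambda_{\ell-1}$ into $\mu_\ell p^*_\ell$ and, after reindexing, the denominator $\sum_{i=0}^{K^*-1}p^*_i\tilde\lambda_i$ into $\sum_{j=1}^{K^*}\mu_j p^*_j$; hence $\tilde\gamma_\ell^0=\mu_\ell p^*_\ell\big/\sum_{j=1}^{K^*}\mu_j p^*_j$. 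Next, using $\tau^*_\ell=\ell/\mu_\ell$ from \cref{lem:waiting-time-FCFS}, the service rates cancel and
\[
\sum_{\ell=1}^{K^*}\tilde\gamma_\ell^0\,\tau^*_\ell=\frac{\sum_{\ell=1}^{K^*}\mu_\ell p^*_\ell\cdot(\ell/\mu_\ell)}{\sum_{j=1}^{K^*}\mu_j p^*_j}=\frac{\sum_{\ell=1}^{K^*}\ell\,p^*_\ell}{\sum_{j=1}^{K^*}\mu_j p^*_j}.
\]
Plugging this into $(IC_0)$ and multiplying through by the positive quantity $\sum_{j=1}^{K^*}\mu_j p^*_j$ (positive because $K^*>1$ forces $p^*_1>0$ and $\mu_1>0$), the constraint becomes $\sum_{k=1}^{K^*}p^*_k(\mu_k V-kC)\ge 0$, i.e.\ exactly \cref{IR}.

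Finally, since $p^*$ is an optimal (hence feasible) solution of $[P']$, it satisfies \cref{IR}, and the chain of equivalences above yields $(IC_0)$. I do not expect any genuine obstacle here: the argument is essentially an accounting identity (the same cancellation of $\mu_\ell$ between the FCFS belief weights and the FCFS waiting times underlies the Little's law computation sketched earlier in the text), and the only point needing a line of justification is the strict positivity of the normalizing denominator, which is immediate from $K^*>1$.
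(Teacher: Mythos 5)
Your proposal matches the paper's own proof of this lemma essentially step for step: substitute the balance equation into \cref{eq:belief0} to get $\tilde\gamma_\ell^0 = p^*_\ell\mu_\ell/\sum_i p^*_i\mu_i$, use $\tau^*_\ell=\ell/\mu_\ell$ from \cref{lem:waiting-time-FCFS} to cancel the $\mu_\ell$'s, and observe that $(IC_0)$ then reduces exactly to \cref{IR}, which $p^*$ satisfies by feasibility in $[P']$. The one small addition you make---spelling out why the normalizing denominator is strictly positive---is a harmless elaboration of a fact the paper treats as immediate.
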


\begin{proof}
Recall the optimality of the cutoff policy means $%
x^*_k=1$ for all $k=0,..., K^*-2$ and $x^*_k=0$ for all $k> K^*-1$, and $%
y^*_{k,\ell}=z^*_{k,\ell}=0$ for all $(k,\ell)$. Substitute these into %
\cref{B}. Use the resulting equations to rewrite \cref{eq:belief0}:
\begin{equation*}
\tilde \gamma_{\ell}^0= \frac{p^* _{\ell}\mu _{\ell}}{ \sum_{i=1}^{K^*}
p_i^* \mu _i}, \forall \ell=1,..., K^*.
\end{equation*}
An agent's expected payoff when joining the queue after being recommended to
do so is:
\begin{align*}
V - C \sum_{k=1}^{K^*} \tilde{\gamma}_{k}^{0} \cdot \tau^*_{k} = &V - C
\frac{\sum_{k=1}^{K^*} p^* _{k}\mu _{k} \cdot \tau^*_{k}}{ \sum_{i=1}^{K^*}
p_i^* \mu _i} \\
= & V - C \frac{\sum_{k =1}^{K^*} p^* _{k} k }{ \sum_{i=1}^{K^*} p_i^* \mu _i%
} \\
= & \left(\frac{1}{ \sum_{i=1}^{K^*} p_i^* \mu _i}\right) \sum_{k=1}^{K^*}
p_k^* \left( \mu _k V- kC \right),
\end{align*}
where the first equality is from the preceding observation and the second
equality follows from \cref{lem:waiting-time-FCFS}. Since $\sum_{i=1}^{K^*}
p_i^* \mu _i>0$, $(IC_{0})$ holds if and only if \cref{IR} holds.
\end{proof}

\section{The analyis of the belief ODEs and the Proof of \cref{lem:dyn-incentives} when $\bar{K}=\infty $.}
\label{online_app:infty}

We first derive the infinite system of ODEs in terms of agents' belief of
occupying queue position $\ell =1,...,\infty $ at time $t$.  We first derive \cref{Eq: cond beliefs}.    Define the following events:
\begin{align*}
    A^t & :=\{ \mbox{not served by time } t\}\\
    A^t_{\ell} & :=\{ \mbox{not served by time } t \mbox{ and has position } \ell \mbox{ at time } t\}\\
        B^{dt}_{-k} & :=\{ \mbox{no agent with position } j\le k \mbox{ is served during } [t, t+dt)\}\\
\end{align*}

Observe first
$$\Pr\{ A^{t+dt}_{\ell}\}   =\Pr\{ A^{t+dt}_{\ell}|A^{t+dt}\} \Pr\{ A^{t+dt}\}   =\gamma_{\ell}^{t+dt} \Pr\{ A^{t+dt}\}.$$  Next,
\begin{align*}
    \Pr\{ A^{t+dt}_{\ell}\}
       & =  \Pr\{ A^{t+dt}_{\ell}|A^{t}\} \Pr\{ A^{t}\} \\
    & = \Pr\{B^{dt}_{-\ell}\cap A^{t}_{\ell} |A^{t}_{\ell}\} \Pr\{A^{t}_{\ell}|A^{t}\} \Pr\{ A^{t}\} \\
      & \quad + \Pr\{\neg B^{dt}_{-\ell}\cap A^{t}_{\ell+1} |A^{t}_{\ell+1}\} \Pr\{A^{t}_{\ell+1}|A^{t}\} \Pr\{ A^{t}\}+o(dt) \\
         & =\left( \left(1-\sum_{i=1}^{\ell}q_idt\right)\gamma_{\ell}^t  +  \left(\sum_{i=1}^{\ell}q_i\right) dt\gamma_{\ell+1}^t\right)\Pr\{ A^{t}\}+o(dt).
\end{align*}

Equating the two and rearranging, we get
\begin{align*}
   \gamma_{\ell}^{t+dt} & =\left( \left(1-\sum_{i=1}^{\ell}q_idt\right)\gamma_{\ell}^t  +  \left(\sum_{i=1}^{\ell}q_i\right) dt\gamma_{\ell+1}^t\right)\frac{\Pr\{ A^{t}\}}{\Pr\{ A^{t+dt}\}}+o(dt) \\
   & =\left( \left(1-\sum_{i=1}^{\ell}q_idt\right)\gamma_{\ell}^t  +  \left(\sum_{i=1}^{\ell}q_i\right) dt\gamma_{\ell+1}^t\right)\frac{1}{\Pr\{ A^{t+dt}| A^{t}\}}+o(dt)\\
      & =\left( \left(1-\sum_{i=1}^{\ell}q_idt\right)\gamma_{\ell}^t  +  \left(\sum_{i=1}^{\ell}q_i\right) dt\gamma_{\ell+1}^t\right)\frac{1}{\left(1-\sum_{i=1}^{K}\gamma_{i}^t q_idt\right)}+o(dt), \\
\end{align*}
where the second equality holds since
$\Pr\{ A^{t+dt}\}=\Pr\{ A^{t+dt}| A^{t}\} \Pr\{ A^{t}\}.$
We thus obtain \cref{Eq: cond beliefs}.

It follows from %
\cref{Eq: cond beliefs}, together with ${q}_{i}^{\ast }=\mu _{i}-\mu _{i-1}$%
, that
\begin{equation*}
\tilde{\gamma}_{\ell }^{t+dt}=\frac{(1-\mu _{\ell }dt)\tilde{\gamma}_{\ell
}^{t}+\mu _{\ell }dt\tilde{\gamma}_{\ell +1}^{t}}{\sum_{i=1}^{\bar{K}}\tilde{%
\gamma}_{i}^{t}(1-{q}_{i}^{\ast }dt)}+o(dt).
\end{equation*}%
\begin{eqnarray*}
\frac{\tilde{\gamma}_{k}^{t+dt}-\tilde{\gamma}_{k}^{t}}{dt} &=&\frac{(1-\mu
_{k}dt)\tilde{\gamma}_{k}^{t}+\mu _{k}dt\tilde{\gamma}_{k+1}^{t}}{%
dt\sum_{i=1}^{\infty }\tilde{\gamma}_{i}^{t}(1-{q}_{i}^{\ast }dt)}-\frac{%
\tilde{\gamma}_{k}^{t}}{dt}+\frac{o(dt)}{dt} \\
&=&\frac{(1-\mu _{k}dt)\tilde{\gamma}_{k}^{t}+\mu _{k}dt\tilde{\gamma}%
_{k+1}^{t}}{dt\left[ 1-dt\sum_{i=1}^{\infty }\tilde{\gamma}_{i}^{t}{q}%
_{i}^{\ast }\right] }-\frac{\tilde{\gamma}_{k}^{t}}{dt}+\frac{o(dt)}{dt} \\
&=&\frac{(1-\mu _{k}dt)\tilde{\gamma}_{k}^{t}+\mu _{k}dt\tilde{\gamma}%
_{k+1}^{t}}{dt\left[ 1-dt\sum_{i=1}^{\infty }\tilde{\gamma}_{i}^{t}(\mu
_{i}-\mu _{i-1})\right] }-\frac{\tilde{\gamma}_{k}^{t}}{dt}+\frac{o(dt)}{dt}
\\
&=&\frac{(1-\mu _{k}dt)\tilde{\gamma}_{k}^{t}+\mu _{k}dt\tilde{\gamma}%
_{k+1}^{t}-\tilde{\gamma}_{k}^{t}\left[ 1-dt\sum_{i=1}^{\infty }\tilde{\gamma%
}_{i}^{t}(\mu _{i}-\mu _{i-1})\right] }{dt\left[ 1-dt\sum_{i=1}^{\infty }%
\tilde{\gamma}_{i}^{t}(\mu _{i}-\mu _{i-1})\right] }+\frac{o(dt)}{dt} \\
&=&\frac{-\mu _{k}\tilde{\gamma}_{k}^{t}+\mu _{k}\tilde{\gamma}_{k+1}^{t}+%
\tilde{\gamma}_{k}^{t}\left[ \sum_{i=1}^{\infty }\tilde{\gamma}_{i}^{t}(\mu
_{i}-\mu _{i-1})\right] }{\left[ 1-dt\sum_{i=1}^{\infty }\tilde{\gamma}%
_{i}^{t}(\mu _{i}-\mu _{i-1})\right] }+\frac{o(dt)}{dt}.
\end{eqnarray*}%
Letting $dt\rightarrow 0$, we obtain: for all $k\in \mathbb{N}$,
\begin{equation}
\dot{\tilde{\gamma}}_{k}^{t}=-\mu _{k}\tilde{\gamma}_{k}^{t}+\mu _{k}\tilde{%
\gamma}_{k+1}^{t}+\tilde{\gamma}_{k}^{t}\left[ \sum_{i=1}^{\infty }\tilde{%
\gamma}_{i}^{t}(\mu _{i}-\mu _{i-1})\right] \triangleq f_{k}(\tilde{\gamma}%
^{t}),  \label{Eq: ODEs}
\end{equation}%
and let $f\triangleq (f_{k})_{k\in \mathbb{N}}$. The following proposition
states that, given an initial condition, this system of ODEs has a unique
solution.

\begin{proposition}
\label{prop:existence_uniqueness} For any initial condition in $\Delta (%
\mathbb{N})$, there is a unique solution to the system of ODEs given by %
\cref{Eq: ODEs}.
\end{proposition}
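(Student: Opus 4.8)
The plan is to prove existence and uniqueness of a solution to the infinite system of ODEs \cref{Eq: ODEs} on $\Delta(\mathbb{N})$ by combining a truncation argument with a Lipschitz/contraction estimate, following the standard Picard--Lindel\"of philosophy adapted to the fact that the state space is the infinite simplex. First I would observe that the right-hand side $f=(f_k)$ maps the simplex $\Delta(\mathbb{N})$ (viewed as a subset of $\ell^1$) into $\ell^1$, using that $\mu$ is bounded uniformly, say by $\bar\mu$: indeed $|f_k(\tilde\gamma)|\le \mu_k(\tilde\gamma_k+\tilde\gamma_{k+1})+\tilde\gamma_k\bar\mu$, so $\sum_k|f_k(\tilde\gamma)|\le 3\bar\mu$. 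This also shows the vector field is bounded on the simplex. Moreover one checks that $f$ is tangent to the simplex in the appropriate sense: $\sum_k f_k(\tilde\gamma)=0$ (the term $\sum_k(-\mu_k\tilde\gamma_k+\mu_k\tilde\gamma_{k+1})$ telescopes to $-\sum_k(\mu_k-\mu_{k-1})\tilde\gamma_k$, which exactly cancels the last term since $\mu_0=0$), and $f_k(\tilde\gamma)\ge 0$ whenever $\tilde\gamma_k=0$; these two facts guarantee that a solution started in $\Delta(\mathbb{N})$ stays in $\Delta(\mathbb{N})$.

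Next I would establish a local Lipschitz bound for $f$ on $\Delta(\mathbb{N})$ with respect to the $\ell^1$-norm. Writing $f_k(\tilde\gamma)-f_k(\tilde\gamma')$ and using that $|\sum_i\tilde\gamma_i(\mu_i-\mu_{i-1})|\le\bar\mu$ together with $\sum_i|\tilde\gamma_i-\tilde\gamma_i'|(\mu_i-\mu_{i-1})\le\bar\mu\|\tilde\gamma-\tilde\gamma'\|_1$, one obtains $\|f(\tilde\gamma)-f(\tilde\gamma')\|_1\le L\|\tilde\gamma-\tilde\gamma'\|_1$ for some constant $L$ depending only on $\bar\mu$. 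With global boundedness and a global Lipschitz constant on the (bounded, convex, closed) set $\Delta(\mathbb{N})\subset\ell^1$, the classical Banach fixed-point argument applied to the integral operator $(\Phi\tilde\gamma)(t)=\tilde\gamma^0+\int_0^t f(\tilde\gamma(s))\,ds$ on $C([0,T];\Delta(\mathbb{N}))$ yields a unique solution on $[0,T]$ for $T<1/L$; since the Lipschitz constant is global, this extends to all $t\ge 0$ by iterating. Invariance of $\Delta(\mathbb{N})$ under $\Phi$ follows from the tangency properties noted above (one may also run $\Phi$ on the larger closed ball and then invoke the sign/telescoping conditions to conclude the trajectory remains in the simplex).

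As an alternative, and to connect with the finite-$\bar K$ analysis already carried out, I would note that one can instead define $\tilde\gamma^{t}$ as the limit of the truncated solutions $\tilde\gamma^{t,\bar K}$ constructed in \Cref{app:dyn-incentives}: for each $\bar K<\infty$ the finite system has a unique solution by local Lipschitz continuity of its (polynomial) right-hand side, and the family $\{\tilde\gamma^{t,\bar K}\}_{\bar K}$ is tight and equi-Lipschitz in $t$ (again by the uniform $\bar\mu$ bound), so by Arzel\`a--Ascoli it has a subsequential limit solving \cref{Eq: ODEs}; uniqueness of the infinite system then upgrades this to full convergence. The main obstacle is the infinite-dimensionality: one must be careful that the quadratic term $\tilde\gamma_k^t\sum_i\tilde\gamma_i^t(\mu_i-\mu_{i-1})$ does not spoil the Lipschitz estimate — this is where uniform boundedness of $\mu$ is essential — and that the solution genuinely remains a probability distribution rather than merely a summable sequence, which is handled by the telescoping identity $\sum_k f_k=0$ and the boundary nonnegativity $f_k\ge 0$ on $\{\tilde\gamma_k=0\}$. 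Once these points are secured, existence and uniqueness follow from the standard theory exactly as cited (e.g.\ Hale, Theorem 3.1).
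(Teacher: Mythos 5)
Your proof is correct and follows essentially the same route as the paper: view $f$ as a vector field on $\Delta(\mathbb{N})\subset\ell^1$, establish a global Lipschitz bound using the uniform bound $\bar\mu$ on $\mu$, and invoke Picard--Lindel\"of on the Banach space $\ell^1$. The one place where you go slightly beyond the paper is in explicitly verifying forward-invariance of the simplex via the telescoping identity $\sum_k f_k=0$ and the boundary condition $f_k\ge 0$ on $\{\tilde\gamma_k=0\}$, a step the paper asserts but does not spell out.
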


\begin{proof}
Let $\mathbf{X}$ be the set of sequences in $\ell ^{1}$-space endowed with $%
\ell ^{1}$-norm. As is well-known, this is a Banach space. Clearly, $\Delta (%
\mathbb{N})\subseteq \mathbf{X}$. Further, we can see that $f$ maps from $%
\mathbf{X}$ to $\mathbf{X}$. Indeed, for any $\tilde{\gamma}^{t}\in \mathbf{X%
}:$
\begin{eqnarray*}
||f(\tilde{\gamma}^{t})|| &=&\sum_{k=1}^{\infty }\left\vert f_{k}(\tilde{%
\gamma}_{k}^{t})\right\vert \\
&=&\sum_{k=1}^{\infty }\left\vert -\mu _{k}\tilde{\gamma}_{k}^{t}+\mu _{k}%
\tilde{\gamma}_{k+1}^{t}+\tilde{\gamma}_{k}^{t}\left[ \sum_{i=1}^{\infty }%
\tilde{\gamma}_{i}^{t}(\mu _{i}-\mu _{i-1})\right] \right\vert \\
&\leq &\sum_{k=1}^{\infty }\left\vert -\mu _{k}\tilde{\gamma}%
_{k}^{t}\right\vert +\sum_{k=1}^{\infty }\left\vert \mu _{k}\tilde{\gamma}%
_{k+1}^{t}\right\vert +\sum_{k=1}^{\infty }\left\vert \tilde{\gamma}_{k}^{t}%
\left[ \sum_{i=1}^{\infty }\tilde{\gamma}_{i}^{t}(\mu _{i}-\mu _{i-1})\right]
\right\vert \\
&\leq &\bar{\mu}\sum_{k=1}^{\infty }\left\vert \tilde{\gamma}%
_{k}^{t}\right\vert +\bar{\mu}\sum_{k=1}^{\infty }\left\vert \tilde{\gamma}%
_{k+1}^{t}\right\vert +\bar{\mu}\left( \sum_{k=1}^{\infty }\left\vert \tilde{%
\gamma}_{k}^{t}\right\vert \right) \left( \sum_{i=1}^{\infty }\left\vert
\tilde{\gamma}_{i}^{t}\right\vert \right) <\infty
\end{eqnarray*}%
where we recall that $\bar{\mu}\triangleq \sup_{k}\mu _{k}<\infty $ and use
the fact that $\tilde{\gamma}^{t}\in \mathbf{X}$. Hence, we have $f(\tilde{%
\gamma}^{t})\in \mathbf{X}$.

\begin{lemma}\label{lemma:Lipschitz}
Consider the restriction of $f$ defined as follows $f:U\rightarrow \mathbf{X}
$ where $U\triangleq \{\{x_{k}\}_{k\geq 1}\in \mathbf{X}:\sum_{k=1}^{\infty
}\left\vert x_{k}\right\vert <1+\varepsilon \}\subset \mathbf{X}$, for some $%
\varepsilon >0$, is an open set containing $\Delta (\mathbb{N})$. Mapping $f$
(restricted to $U$) is Lipschitz continuous.
\end{lemma}

\begin{proof}
Indeed, for any $\tilde{\gamma}$ and $\tilde{\gamma}^{\prime
}$ in $U$,
\begin{eqnarray*}
\left\Vert f(\tilde{\gamma}^{\prime })-f(\tilde{\gamma})\right\Vert
&=&\sum_{k=1}^{\infty }\left\vert f_{k}(\tilde{\gamma}^{\prime })-f_{k}(%
\tilde{\gamma})\right\vert \\
&\leq &\sum_{k=1}^{\infty }\left\vert -\mu _{k}\tilde{\gamma}_{k}^{\prime
}+\mu _{k}\tilde{\gamma}_{k}\right\vert +\sum_{k=1}^{\infty }\left\vert \mu
_{k}\tilde{\gamma}_{k+1}^{\prime }-\mu _{k}\tilde{\gamma}_{k+1}\right\vert \\
&&+\sum_{k=1}^{\infty }\left\vert \tilde{\gamma}_{k}^{\prime }\left[
\sum_{i=1}^{\infty }\tilde{\gamma}_{i}^{\prime }(\mu _{i}-\mu _{i-1})\right]
-\tilde{\gamma}_{k}\left[ \sum_{i=1}^{\infty }\tilde{\gamma}_{i}(\mu
_{i}-\mu _{i-1})\right] \right\vert \\
&\leq &\sum_{k=1}^{\infty }\mu _{k}\left\vert \tilde{\gamma}_{k}^{\prime }-%
\tilde{\gamma}_{k}\right\vert +\sum_{k=1}^{\infty }\mu _{k}\left\vert \tilde{%
\gamma}_{k+1}^{\prime }-\tilde{\gamma}_{k+1}\right\vert \\
&&+\max \{\sum_{i=1}^{\infty }\left\vert \tilde{\gamma}_{i}^{\prime
}\right\vert (\mu _{i}-\mu _{i-1}),\sum_{i=1}^{\infty }\left\vert \tilde{%
\gamma}_{i}\right\vert (\mu _{i}-\mu _{i-1})\}\sum_{k=1}^{\infty }\left\vert
\tilde{\gamma}_{k}^{\prime }-\tilde{\gamma}_{k}\right\vert \\
&\leq &\bar{\mu}\left\Vert \tilde{\gamma}^{\prime }-\tilde{\gamma}%
\right\Vert +\bar{\mu}\left\Vert \tilde{\gamma}^{\prime }-\tilde{\gamma}%
\right\Vert +(1+\varepsilon )\bar{\mu}\left\Vert \tilde{\gamma}^{\prime }-%
\tilde{\gamma}\right\Vert .
\end{eqnarray*}
Thus, $f$ restricted to $U$ is Lipschitz continuous with Lipschitz constant
equal to $\bar{\mu}(3+\varepsilon )$.
\end{proof}
In order to complete the proof of \Cref{prop:existence_uniqueness}, let us consider the system
of ODEs given by \cref{Eq: ODEs} where the vector field $f$ is the mapping
from $\mathbf{X}$ to $\mathbf{X}$. Since $f$ is bounded and, by \Cref{lemma:Lipschitz}, Lipschitz
continuous on $\Delta (\mathbb{N})$ and $\Delta (\mathbb{N})$ is positively
invariant, existence and uniqueness of a solution for our system of ODEs
with initial condition in $\Delta (\mathbb{N})$ follows from Picard-Lindel%
\"{o}f Theorem on Banach spaces.\footnote{%
Recall that a subset $S$ of $\mathbf{X}$ is positively invariant if no
solution starting inside $S$ can leave $S$ in the future.}
\end{proof}

In the sequel, we consider solutions to the system of ODEs when the entry
rule $x^{\ast }$ is \textquotedblleft truncated\textquotedblright to $x^K$, i.e.,
where $x_{k}^{K}=x_{k}^{\ast }=1$ for all $k\leq K$ and $x_{k}^{K}=0$
otherwise. We show that solutions to the system of ODEs under the truncated
cutoff policy $(x^{K},y^{\ast },z^{\ast })$ approximate solutions to the
system under the original cutoff policy $(x^{\ast },y^{\ast },z^{\ast })$.
More specifically, we let $\tilde{\gamma}^{K}(t)=(\tilde{\gamma}%
_{k}^{K}(t))_{k\in \mathbb{N}}$ denote a solution to the system given by %
\cref{Eq: ODEs}
\begin{equation*}
\dot{\tilde{\gamma}}^{t}=f(\tilde{\gamma}^{t}),
\end{equation*}%
when $\tilde{\gamma}^{0}=\tilde{\gamma}^{K}(0)\triangleq (\tilde{\gamma}%
_{k}^{K}(0))_{k\in \mathbb{N}}$ where $\tilde{\gamma}_{k}^{K}(0)$ is an
agent's belief of entering the queue with position $k$ at $t=0$ under the
truncated cutoff policy.\footnote{%
Note that $\tilde{\gamma}_{k}^{K}(0)=0$ for all $k>K$.} Meanwhile, $\tilde{%
\gamma}^{\infty }(t)=(\tilde{\gamma}_{k}^{\infty }(t))_{k\in \mathbb{N}}$
denotes a solution to this system of ODEs when $\tilde{\gamma}^{0}=\tilde{%
\gamma}^{\infty }(0)\triangleq (\tilde{\gamma}_{k}^{\infty }(0))_{k\in
\mathbb{N}}$ where $\tilde{\gamma}_{k}^{\infty }(0)$ is an agent's belief of
entering the queue with position $k$ at $t=0$ under the original cutoff
policy. We show that solution $\tilde{\gamma}^{K}$ converges to solution $%
\tilde{\gamma}^{\infty }$ when $K\ $goes to infinity.



\begin{lemma}
\label{lemma:convergence_solutions} The solution $\tilde{\gamma}^{K}$
converges pointwise to the solution $\tilde{\gamma}^{\infty }$, i.e., for
each $t>0$,
\begin{equation*}
\lim_{K\rightarrow \infty }|| \tilde{\gamma}^{K}(t)-\tilde{\gamma}^{\infty
}(t)|| =0.
\end{equation*}
\end{lemma}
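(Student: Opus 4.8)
The plan is to obtain the conclusion from the standard continuous-dependence-on-initial-data estimate for ordinary differential equations on a Banach space, together with convergence of the entry beliefs at $t=0$. Recall from the proof of \Cref{prop:existence_uniqueness} that $\Delta(\mathbb{N})$ is positively invariant under the flow of the vector field $f$, and from \Cref{lemma:Lipschitz} that $f$ is Lipschitz with constant $L\triangleq\bar{\mu}(3+\varepsilon)$ on the open set $U\triangleq\{x\in\mathbf{X}:\sum_{k}|x_{k}|<1+\varepsilon\}\supset\Delta(\mathbb{N})$. Since $\tilde{\gamma}^{K}(0)$ and $\tilde{\gamma}^{\infty}(0)$ both lie in $\Delta(\mathbb{N})$, the corresponding solutions $\tilde{\gamma}^{K}$ and $\tilde{\gamma}^{\infty}$ exist globally and remain in $\Delta(\mathbb{N})\subset U$ for all $t\ge 0$, so the Lipschitz bound on $f$ applies along both trajectories.

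First I would show that the initial data converge in $\ell^{1}$, i.e.\ $\|\tilde{\gamma}^{K}(0)-\tilde{\gamma}^{\infty}(0)\|\to 0$ as $K\to\infty$. Let $\tilde{\lambda}^{K}_{k}\triangleq x^{K}_{k}\lambda_{k}$ be the effective arrival rate under the truncated entry rule, so $\tilde{\lambda}^{K}_{k}=\lambda_{k}$ for $k\le K$ and $\tilde{\lambda}^{K}_{k}=0$ for $k>K$, and let $p^{K}$ be the invariant distribution it induces; by \cref{eq:belief0}, $\tilde{\gamma}^{K}_{k}(0)=p^{K}_{k-1}\tilde{\lambda}^{K}_{k-1}\big/\sum_{i}p^{K}_{i}\tilde{\lambda}^{K}_{i}$, and analogously for $\tilde{\gamma}^{\infty}(0)$ with $\tilde{\lambda}^{\infty}=\lambda$ and $p^{\infty}=p^{*}$. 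Part (ii) of regularity telescopes to $\lambda_{k}\le\lambda_{1}+\bar{\mu}$ for all $k$, so the effective rates are uniformly bounded. The balance equation for a cutoff policy with infinite maximal length gives the product formula $p^{K}_{k}=p^{K}_{0}\prod_{\ell=1}^{k}\lambda_{\ell-1}/\mu_{\ell}$ for $k\le K+1$ and $p^{K}_{k}=0$ beyond; since $\sum_{k\ge 0}\prod_{\ell=1}^{k}\lambda_{\ell-1}/\mu_{\ell}=1/p^{*}_{0}<\infty$ (this finiteness being equivalent to the existence of $p^{*}$), the normalizing partial sums converge and $p^{K}\to p^{*}$ pointwise. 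Because all of these are probability vectors, Scheff\'{e}'s lemma upgrades this to $\|p^{K}-p^{*}\|\to 0$; combined with the uniform bound on $\tilde{\lambda}^{K}$ and the summability of $p^{*}$, it follows that $\sum_{i}p^{K}_{i}\tilde{\lambda}^{K}_{i}\to\sum_{i}p^{*}_{i}\lambda_{i}\ge p^{*}_{0}\lambda_{0}>0$. Hence $\tilde{\gamma}^{K}_{k}(0)\to\tilde{\gamma}^{\infty}_{k}(0)$ pointwise, and a second application of Scheff\'{e}'s lemma (again comparing probability vectors) delivers $\|\tilde{\gamma}^{K}(0)-\tilde{\gamma}^{\infty}(0)\|\to 0$.

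Next I would run a Gr\"{o}nwall estimate. Writing the two solutions in integral form, $\tilde{\gamma}^{\bullet}(t)=\tilde{\gamma}^{\bullet}(0)+\int_{0}^{t}f(\tilde{\gamma}^{\bullet}(s))\,ds$, and using the Lipschitz bound on $f$ along both trajectories, the error $e_{K}(t)\triangleq\|\tilde{\gamma}^{K}(t)-\tilde{\gamma}^{\infty}(t)\|$ satisfies $e_{K}(t)\le e_{K}(0)+L\int_{0}^{t}e_{K}(s)\,ds$, so Gr\"{o}nwall's inequality gives $e_{K}(t)\le e_{K}(0)\,e^{Lt}$ for all $t\ge 0$. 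Fixing $t$ and letting $K\to\infty$, the previous step gives $e_{K}(0)\to 0$, hence $e_{K}(t)\to 0$, which is the assertion.

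The only step demanding genuine care is the $\ell^{1}$ convergence of the entry beliefs; the rest is the Banach-space Picard--Lindel\"{o}f machinery already established for \Cref{prop:existence_uniqueness} plus Gr\"{o}nwall's inequality. Within that step, the two points to pin down are the pointwise convergence $p^{K}\to p^{*}$ (immediate from the explicit birth--death product formula and the finiteness of $\sum_{k}\prod_{\ell=1}^{k}\lambda_{\ell-1}/\mu_{\ell}$) and the convergence, bounded away from zero, of the normalizing constants $\sum_{i}p^{K}_{i}\tilde{\lambda}^{K}_{i}$; given these, Scheff\'{e}'s lemma performs the pointwise-to-$\ell^{1}$ upgrade twice and closes the argument.
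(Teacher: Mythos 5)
Your proof is correct and shares the overall architecture of the paper's argument: establish $\ell^{1}$ convergence of the initial data, then close with the Gr\"{o}nwall estimate using the Lipschitz constant from \Cref{lemma:Lipschitz}. The Gr\"{o}nwall step is identical. Where you differ is in how you prove $\|\tilde{\gamma}^{K}(0)-\tilde{\gamma}^{\infty}(0)\|\to 0$. The paper works directly with the product form of the entry beliefs, $\tilde{\gamma}^{K}_{\ell}(0)=\tilde{\gamma}^{K}_{1}(0)\prod_{i=2}^{\ell}\tilde{\lambda}_{i-1}/\mu_{i-1}$, so that the differences at each $\ell\leq K$ are \emph{proportional} to $\tilde{\gamma}^{K}_{1}(0)-\tilde{\gamma}^{\infty}_{1}(0)$; this lets the $\ell^{1}$ norm factor as $\left|\tilde{\gamma}^{K}_{1}(0)-\tilde{\gamma}^{\infty}_{1}(0)\right|\sum_{k\leq K}\prod_{i=2}^{k}\tilde{\lambda}_{i-1}/\mu_{i-1}+\sum_{k>K}\tilde{\gamma}^{\infty}_{k}(0)$, and each term is shown to vanish using only the finiteness of the normalizing series and the tail of $\tilde{\gamma}^{\infty}(0)$. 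You instead detour through the invariant distributions: the birth--death product formula gives pointwise $p^{K}\to p^{*}$, Scheff\'{e}'s lemma upgrades this to $\ell^{1}$ convergence, the uniform bound $\lambda_{k}\leq\lambda_{1}+\bar{\mu}$ (a consequence of part (ii) of regularity plus bounded $\mu$) controls the normalizing constants $\sum_{i}p^{K}_{i}\tilde{\lambda}^{K}_{i}$, and a second application of Scheff\'{e} closes the step. Both routes are valid; yours outsources the pointwise-to-$\ell^{1}$ upgrade to a standard lemma at the cost of invoking the uniform bound on the arrival rates, whereas the paper's direct manipulation is more self-contained and exploits the proportionality of the truncated and untruncated belief vectors, which your argument does not need to notice.
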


\begin{proof}
The following two steps prove the lemma.

\begin{step}
\label{step1'} $\left\Vert \tilde{\gamma} ^{K}(0)-\tilde{\gamma} ^{\infty
}(0)\right\Vert \rightarrow 0$ as $K\to \infty$.
\end{step}

\begin{proof}
We know that for all $\ell =2,...,K:\tilde{\gamma}_{\ell
}^{K}(0)=\prod_{i=2}^{\ell }r_{i}^{0}\tilde{\gamma}_{1}^{K}(0)=\prod_{i=2}^{%
\ell }\frac{\tilde{\lambda} _{i-1}}{\mu _{i-1}}\tilde{\gamma}_{1}^{K}(0)$, where we
used \cref{Eq r0}, while $\tilde{\gamma}_{\ell }^{K}(0)=0$ for $\ell \geq
K+1 $. In addition, we know that $\tilde{\gamma}_{\ell }^{\infty
}(0)=\prod_{i=2}^{\ell }\frac{\tilde{\lambda}_{i-1}}{\mu _{i-1}}\tilde{\gamma%
}_{1}^{\infty }(0)$ and $\sum_{k=1}^{\infty }\prod_{i=2}^{k}\frac{\tilde{%
\lambda}_{i-1}}{\mu _{i-1}}\tilde{\gamma}_{1}^{\infty }(0)=1$ where our
convention is that $\prod_{i=2}^{1}\frac{\tilde{\lambda}_{i-1}}{\mu _{i-1}}%
\triangleq 1$. Thus,
\begin{equation*}
\tilde{\gamma}_{1}^{\infty }(0)=\frac{1}{\sum_{k=1}^{\infty }\prod_{i=2}^{k}%
\frac{\tilde{\lambda}_{i-1}}{\mu _{i-1}}}.
\end{equation*}%
Note that this implies that $\sum_{k=1}^{\infty }\prod_{i=2}^{k}\frac{\tilde{%
\lambda}_{i-1}}{\mu _{i-1}}<\infty $. Similar computation yields
\begin{equation*}
\tilde{\gamma}_{1}^{K}(0)=\frac{1}{\sum_{k=1}^{K}\prod_{i=2}^{k}\frac{\tilde{%
\lambda}_{i-1}}{\mu _{i-1}}}.
\end{equation*}%
Note that $\left\vert \tilde{\gamma}_{1}^{K}(0)-\tilde{\gamma}_{1}^{\infty
}(0)\right\vert \rightarrow 0$ as $K$ increases. We have
\begin{eqnarray*}
\left\Vert \tilde{\gamma}^{K}(0)-\tilde{\gamma}^{\infty }(0)\right\Vert
&=&\dsum\limits_{k=1}^{\infty }\left\vert \tilde{\gamma}_{k}^{K}(0)-\tilde{%
\gamma}_{k}^{\infty }(0)\right\vert \\
&=&\dsum\limits_{k=1}^{K}\left\vert \tilde{\gamma}_{k}^{K}(0)-\tilde{\gamma}%
_{k}^{\infty }(0)\right\vert +\dsum\limits_{k=K+1}^{\infty }\left\vert
\tilde{\gamma}_{k}^{\infty }(0)\right\vert \\
&=&\dsum\limits_{k=1}^{K}\left\vert \dprod\limits_{i=2}^{k}\frac{\tilde{%
\lambda}_{i-1}}{\mu _{i-1}}\tilde{\gamma}_{1}^{K}(0)-\dprod\limits_{i=2}^{k}%
\frac{\tilde{\lambda}_{i-1}}{\mu _{i-1}}\tilde{\gamma}_{1}^{\infty
}(0)\right\vert +\dsum\limits_{k=K+1}^{\infty }\left\vert \tilde{\gamma}%
_{k}^{\infty }(0)\right\vert \\
&=&\left\vert \tilde{\gamma}_{1}^{K}(0)-\tilde{\gamma}_{1}^{\infty
}(0)\right\vert \dsum\limits_{k=1}^{K}\dprod\limits_{i=2}^{k}\frac{\tilde{%
\lambda}_{i-1}}{\mu _{i-1}}+\dsum\limits_{k=K+1}^{\infty }\left\vert \tilde{%
\gamma}_{k}^{\infty }(0)\right\vert .
\end{eqnarray*}%
Since $\left\vert \tilde{\gamma}_{1}^{K}(0)-\tilde{\gamma}_{1}^{\infty
}(0)\right\vert \rightarrow 0$ as $K\rightarrow \infty $, $%
\sum_{k=1}^{\infty }\prod_{i=2}^{k}\frac{\tilde{\lambda}_{i-1}}{\mu _{i-1}}%
<\infty $, and $\sum_{k=K+1}^{\infty }\left\vert \tilde{\gamma}_{k}^{\infty
}(0)\right\vert $ goes to $0$ as $K\rightarrow \infty $, the result follows.
\end{proof}

\begin{step}
\label{step2'} For each $t> 0$,
\begin{equation*}
\lim_{K\rightarrow \infty }\sum_{k=1}^{\infty }\left\vert \tilde{\gamma}
_{k}^{K}(t)-\tilde{\gamma} _{k}^{\infty }(t)\right\vert =0.
\end{equation*}
\end{step}

\begin{proof} By Gr\"onwall's inequality,
\begin{equation*}
\left\Vert \tilde{\gamma} ^{K}(t)-\tilde{\gamma} ^{\infty }(t)\right\Vert \leq
e^{Ct}\left\Vert \tilde{\gamma} ^{K}(0)-\tilde{\gamma} ^{\infty }(0)\right\Vert \text{,}
\end{equation*}
where, by \Cref{lemma:Lipschitz}, $C\triangleq \bar\mu(3+\varepsilon)$ is the Lipschitz constant
for the Lipschitz continuous function
$f$ restricted to open set $U=\{\{x_{k}\}_{k\geq 1}\in
\mathbf{X}:\sum_{k=1}^{\infty }\left\vert x_{k}\right\vert <1+\varepsilon \}$. The result then follows from \Cref{step1'}.
\end{proof}
\end{proof}
We now complete the proof of \cref{lem:dyn-incentives} when $\bar{K}%
=\infty $ with the following lemma.

\begin{lemma}
\label{step3'} $\dot{r}_{\ell }^{\infty }(t)\leq 0$ for all $\ell \geq 2$
and $t$, where $r_{\ell }^{\infty}(t)=\tilde{\gamma} _{\ell }^{\infty}(t)/%
\tilde{\gamma} _{\ell -1}^{\infty}(t) $ for all $\ell \geq 2$.
\end{lemma}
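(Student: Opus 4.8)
The plan is to reduce the infinite-support case to the finite-support case already established in \Cref{app:dyn-incentives}, using the approximation result \cref{lemma:convergence_solutions}. Running the $\arg\min$-over-$\ell$ contradiction argument of the finite proof directly on the infinite system is awkward, since the first ``bad'' time $\inf_{\ell\ge 2}T_\ell$ need not be attained, so the approximation route is cleaner.

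\textbf{Step 1 (the truncated solutions).} Fix $K$ and consider the truncated entry rule $x^{K}$. The system \cref{Eq: ODEs} preserves support: if $\tilde\gamma_k^t=\tilde\gamma_{k+1}^t=0$ then $\dot{\tilde\gamma}_k^t=0$, so, starting from $\tilde\gamma^{K}(0)$ whose support is $\{1,\dots,K\}$, the solution $\tilde\gamma^{K}(t)$ stays supported in $\{1,\dots,K\}$ for all $t$. Restricted to these coordinates, $\tilde\gamma^{K}$ solves exactly the finite system, with boundary condition $r_\ell^{K}(0)=\tilde\gamma_\ell^{K}(0)/\tilde\gamma_{\ell-1}^{K}(0)=\tilde\lambda_{\ell-1}^{K}/\mu_{\ell-1}$, obtained from \cref{B} for the truncated policy exactly as in \cref{Eq r0}. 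Since $x_k^{K}$ is nonincreasing in $k$ and $(\lambda,\mu)$ is regular, the effective process $(\tilde\lambda^{K},\mu)$ is regular, and the argument proving \cref{lem:dyn-incentives} in \Cref{app:dyn-incentives} (which uses only regularity of the effective process and the boundary conditions, not optimality of $x^{K}$) applies verbatim: $r_\ell^{K}(t)$ is nonincreasing in $t$ for every $\ell=2,\dots,K$.

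\textbf{Step 2 (uniform positivity of denominators).} I would next show $\tilde\gamma_\ell^{\infty}(t)>0$ for all $\ell\ge 1$ and $t\ge 0$. Since $\mu_\ell\tilde\gamma_{\ell+1}^t\ge 0$ and $\sum_i\tilde\gamma_i^t(\mu_i-\mu_{i-1})\ge 0$ (as $\mu$ is nondecreasing), each coordinate of \cref{Eq: ODEs} satisfies $\dot{\tilde\gamma}_\ell^t\ge-\mu_\ell\tilde\gamma_\ell^t\ge-\bar\mu\,\tilde\gamma_\ell^t$ with $\bar\mu\triangleq\sup_k\mu_k<\infty$, so Gr\"onwall gives $\tilde\gamma_\ell^{\infty}(t)\ge e^{-\bar\mu t}\tilde\gamma_\ell^{\infty}(0)>0$; here $\tilde\gamma_\ell^{\infty}(0)>0$ follows from its product formula, since in the $\bar K=\infty$ case $\tilde\lambda_k=\lambda_k>0$ for all $k$ (otherwise $p^{\ast}$ would have bounded support).

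\textbf{Step 3 (passing to the limit).} For each fixed $\ell\ge 2$ and $t\ge 0$, \cref{lemma:convergence_solutions} gives $\tilde\gamma_\ell^{K}(t)\to\tilde\gamma_\ell^{\infty}(t)$ and $\tilde\gamma_{\ell-1}^{K}(t)\to\tilde\gamma_{\ell-1}^{\infty}(t)>0$, hence $r_\ell^{K}(t)\to r_\ell^{\infty}(t)$ as $K\to\infty$. A pointwise limit of nonincreasing functions is nonincreasing, so $r_\ell^{\infty}(\cdot)$ is nonincreasing; and since $\tilde\gamma_\ell^{\infty},\tilde\gamma_{\ell-1}^{\infty}$ are $C^1$ by \cref{prop:existence_uniqueness} and the denominator is positive, $r_\ell^{\infty}$ is differentiable, so $\dot r_\ell^{\infty}(t)\le 0$ for all $t\ge 0$, as claimed. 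The main obstacle is Step 1: verifying that the finite-case argument genuinely transfers to the truncated policy (regularity of the truncated effective process, and alignment of the boundary conditions with \cref{Eq r0}), together with the uniform-in-$K$ positivity from Step 2 that legitimizes passing to the limit in the ratios.
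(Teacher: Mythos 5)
Your proposal is correct and uses essentially the same strategy as the paper: reduce to the finite-$\bar K$ case on the truncated policies, invoke \cref{lemma:convergence_solutions} for pointwise convergence of $\tilde\gamma^{K}(t)$, and pass the monotonicity of $r^{K}_{\ell}$ to the limit. The only (minor) differences are that you conclude via ``a pointwise limit of nonincreasing functions is nonincreasing'' whereas the paper argues by contradiction through convergence of the ODE right-hand side, and your Step~2 makes explicit the positivity of $\tilde\gamma^{\infty}_{\ell-1}(t)$ for $t>0$, which the paper takes for granted from positivity at $t=0$.
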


\begin{proof}
Recall from Equation \cref{eq:r-t} in the main text that the system of ODEs is given by%
\begin{equation*}
\dot{r}_{\ell }^{\infty }(t)=r_{\ell }^{\infty }(t)\left( \mu _{\ell -1}-\mu
_{\ell }-\mu _{\ell -1}r_{\ell }^{\infty }(t)+\mu _{\ell }r_{\ell
+1}^{\infty }(t)\right)
\end{equation*}%
for all $\ell \geq 2$. Suppose to the contrary that $\dot{r}_{\ell }^{\infty
}(t)>0$ for some $\ell $ and $t$. We already proved in \Cref{app:dyn-incentives}
(in the main text) where $\bar{K}<\infty $ that
\begin{equation*}
\dot{r}_{\ell }^{K}(t)=r_{\ell }^{K}(t)\left( \mu _{\ell -1}-\mu _{\ell
}-\mu _{\ell -1}r_{\ell }^{K}(t)+\mu _{\ell }r_{\ell +1}^{K}(t)\right) \leq 0
\end{equation*}%
for all $K<\infty $, $\ell $ and $t.$ To show a contradiction, it is enough
to prove that
\begin{equation*}
r_{\ell }^{K}(t)\left( \mu _{\ell -1}-\mu _{\ell }-\mu _{\ell -1}r_{\ell
}^{K}(t)+\mu _{\ell }r_{\ell +1}^{K}(t)\right) \rightarrow r_{\ell }^{\infty
}(t)\left( \mu _{\ell -1}-\mu _{\ell }-\mu _{\ell -1}r_{\ell }^{\infty
}(t)+\mu _{\ell }r_{\ell +1}^{\infty }(t)\right)
\end{equation*}%
as $K\rightarrow \infty $. To this end, it suffices to show that $r_{\ell
}^{K}(t)$ and $r_{\ell +1}^{K}(t)$ converge respectively to $r_{\ell
}^{\infty }(t)$ and $r_{\ell +1}^{\infty }(t)$. It follows from \Cref{lemma:convergence_solutions}
that for each $k:$%
\begin{equation*}
\lim_{K\rightarrow \infty }\tilde{\gamma}_{k}^{K}(t)=\tilde{\gamma}%
_{k}^{\infty }(t).
\end{equation*}%
By assumption $\tilde{\gamma}_{k}^{\infty }(0)>0$ for all $k$, so
\begin{equation*}
\lim_{K\rightarrow \infty }r_{\ell }^{K}(t)=\lim_{K\rightarrow \infty }\frac{%
\tilde{\gamma}_{\ell }^{K}(t)}{\tilde{\gamma}_{\ell -1}^{K}(t)}=\frac{\tilde{%
\gamma}_{\ell }^{\infty }(t)}{\tilde{\gamma}_{\ell -1}^{\infty }(t)}=r_{\ell
}^{\infty }(t)
\end{equation*}%
Similarly,
\begin{equation*}
\lim_{K\rightarrow \infty }r_{\ell +1}^{K}(t)=r_{\ell +1}^{\infty }(t),
\end{equation*}%
which completes the argument. \end{proof}


\section{The necessity of regularity for \Cref{thm:dyn-fcfs}}\label{Example_non-regular}

In the sequel, we provide an example where our primitive process \textit{is
not} regular and where \Cref{thm:dyn-fcfs} does not hold, i.e., FCFS with no
information $(q^{\ast },I^{\ast }$) does not implement the optimal outcome $%
(x^{\ast },p^{\ast })$ where $p^{\ast }$ solves $[P^{\prime }]$. More
specifically, we provide an example where the \cref{IC} constraint is violated
at $t>0$ under FCFS with no information when trying to implement the optimal
outcome.

We assume that $\mu _{k}=\mu =1$ for all $k\geq 1$ so that the service
process is regular. The arrival process $(\lambda _{k})_{k}$ is yet to be
specified but we assume that $V$, $C$ and $\alpha $ will be chosen in such a
way that $K^{\ast }=4$, \Cref{IR} is binding and no rationing occurs at $k=3$.

We already know (see Equation \Cref{Eq: cond beliefs}) that for each $\ell =1,...,4:$%
\begin{equation*}
\tilde{\gamma}_{\ell }^{t+dt}=\frac{(1-\mu _{\ell }dt)\tilde{\gamma}_{\ell
}^{t}+\mu _{\ell }dt\tilde{\gamma}_{\ell +1}^{t}}{\sum_{i=1}^{K^{\ast }}%
\tilde{\gamma}_{i}^{t}(1-{q}_{i}^{\ast }dt)}+o(dt)\text{.}
\end{equation*}%
Thus, forgetting about the lower order terms, the evolution of beliefs is
given by%
\begin{eqnarray*}
\dot{\gamma}_{\ell }^{t} &=&\frac{\tilde{\gamma}_{\ell }^{t+dt}-\tilde{\gamma%
}_{\ell }^{t}}{dt} \\
&=&\frac{1}{dt}\frac{(1-\mu dt)\tilde{\gamma}_{\ell }^{t}+\mu dt\tilde{\gamma%
}_{\ell +1}^{t}-\tilde{\gamma}_{\ell }^{t}\left[ \tilde{\gamma}%
_{1}^{t}(1-\mu dt)+\sum_{i=2}^{K^{\ast }}\tilde{\gamma}_{i}^{t}\right] }{%
\tilde{\gamma}_{1}^{t}(1-\mu dt)+\sum_{i=2}^{K^{\ast }}\tilde{\gamma}_{i}^{t}%
} \\
&=&\frac{1}{dt}\frac{\tilde{\gamma}_{\ell }^{t}-(\mu dt)\tilde{\gamma}_{\ell
}^{t}+\mu dt\tilde{\gamma}_{\ell +1}^{t}-\tilde{\gamma}_{\ell }^{t}\left[ 1-%
\tilde{\gamma}_{1}^{t}(\mu dt)\right] }{\tilde{\gamma}_{1}^{t}(1-\mu
dt)+\sum_{i=2}^{K^{\ast }}\tilde{\gamma}_{i}^{t}} \\
&=&\frac{1}{dt}\frac{-(\mu dt)\tilde{\gamma}_{\ell }^{t}+\mu dt\tilde{\gamma}%
_{\ell +1}^{t}+(\tilde{\gamma}_{\ell }^{t}\tilde{\gamma}_{1}^{t})(\mu dt)}{%
1-(\mu dt)\tilde{\gamma}_{1}^{t}} \\
&=&\frac{-(\mu )\tilde{\gamma}_{\ell }^{t}+\mu \tilde{\gamma}_{\ell +1}^{t}+(%
\tilde{\gamma}_{\ell }^{t}\tilde{\gamma}_{1}^{t})(\mu )}{1-(\mu dt)\tilde{%
\gamma}_{1}^{t}} \\
&\rightarrow &-\tilde{\gamma}_{\ell }^{t}+\tilde{\gamma}_{\ell +1}^{t}+%
\tilde{\gamma}_{\ell }^{t}\tilde{\gamma}_{1}^{t}\text{ as }dt\rightarrow 0
\end{eqnarray*}%
where the last equality uses our assumption that $\mu =1$.

We now focus on how the expected waiting time
\begin{equation*}
\sum_{\ell =1}^{K^{\ast }}\tilde{\gamma}_{\ell }^{t}\tau _{\ell }^{\ast }
\end{equation*}%
evolves at $t=0$. Given that we are using FCFS, we know that $\tau _{\ell
}^{\ast }=\frac{\ell }{\mu }=\ell $ (recall \Cref{lem:waiting-time-FCFS}), the differential
w.r.t. time at $t=0$ of the expected waiting time must be
\begin{eqnarray*}
\dot{\gamma}_{1}^{0}+2\dot{\gamma}_{2}^{0}+3\dot{\gamma}_{3}^{0}+4\dot{\gamma%
}_{4}^{0} &=&-\tilde{\gamma}_{1}^{0}+\tilde{\gamma}_{2}^{0}+\tilde{\gamma}%
_{1}^{0}\tilde{\gamma}_{1}^{0} \\
&&-2\tilde{\gamma}_{2}^{0}+2\tilde{\gamma}_{3}^{0}+2\tilde{\gamma}_{2}^{0}%
\tilde{\gamma}_{1}^{0} \\
&&-3\tilde{\gamma}_{3}^{0}+3\tilde{\gamma}_{4}^{0}+3\tilde{\gamma}_{3}^{0}%
\tilde{\gamma}_{1}^{0} \\
&&-4\tilde{\gamma}_{4}^{0}+4\tilde{\gamma}_{4}^{0}\tilde{\gamma}_{1}^{0} \\
&=&-1+\tilde{\gamma}_{1}^{0}\left[ \tilde{\gamma}_{1}^{0}+2\tilde{\gamma}%
_{2}^{0}+3\tilde{\gamma}_{3}^{0}+4\tilde{\gamma}_{4}^{0}\right] \text{.}
\end{eqnarray*}%
Hence, we want to find $(\lambda _{k})_{k=0,...,3}$ such that
\begin{equation}
\tilde{\gamma}_{1}^{0}\left[ \tilde{\gamma}_{1}^{0}+2\tilde{\gamma}_{2}^{0}+3%
\tilde{\gamma}_{3}^{0}+4\tilde{\gamma}_{4}^{0}\right] >1.
\label{increasing WT}
\end{equation}%
Recall that the invariant distribution $p^{\ast }$ is given by
\begin{eqnarray*}
p_{0}^{\ast } &=&\frac{1}{1+\lambda _{0}+\lambda _{0}\lambda _{1}+\lambda
_{0}\lambda _{1}\lambda _{2}+\lambda _{0}\lambda _{1}\lambda _{2}\lambda _{3}%
} \\
p_{1}^{\ast } &=&\lambda _{0}p_{0}^{\ast } \\
p_{2}^{\ast } &=&\lambda _{1}p_{1}^{\ast } \\
p_{3}^{\ast } &=&\lambda _{2}p_{2}^{\ast } \\
p_{4}^{\ast } &=&\lambda _{3}p_{3}^{\ast }.
\end{eqnarray*}%
Further, the beliefs at $t=0$ are
\begin{equation*}
\tilde{\gamma}_{\ell }^{0}=\frac{p_{\ell }^{\ast }}{1-p_{0}^{\ast }}
\end{equation*}%
for each $\ell =1,...,4$.

Now, consider $\lambda _{0}=1$ and a sequence of $(\lambda _{1},\lambda
_{2},\lambda _{3})$ such that $\lambda _{1}=\lambda _{2}\rightarrow 0$ and $%
\lambda _{1}\lambda _{2}\lambda _{3}\rightarrow \frac{1}{2}$. Note that this
implies that $\lambda _{3}\rightarrow \infty $ and so $\lambda _{3}-\lambda
_{2}$ becomes strictly greater than $0=\mu _{3}-\mu _{2}$, and so the
primitive process becomes non-regular.\ It is clear that the invariant
distribution converges to a distribution putting $\frac{2}{5}$ on $0$, $%
\frac{2}{5}$ on $1$ and the remaining probability of $\frac{1}{5}$ on $4$.
Hence, the beliefs on the queue length at date $0$ converge to%
\begin{equation*}
\tilde{\gamma}_{1}^{0}=\frac{2}{3},\tilde{\gamma}_{2}^{0}=\tilde{\gamma}%
_{3}^{0}=0\text{ and }\tilde{\gamma}_{4}^{0}=\frac{1}{3}\text{.}
\end{equation*}%
Hence, Equation \Cref{increasing WT} must be satisfied since%
\begin{equation*}
\tilde{\gamma}_{1}^{0}\left[ \tilde{\gamma}_{1}^{0}+2\tilde{\gamma}_{2}^{0}+3%
\tilde{\gamma}_{3}^{0}+4\tilde{\gamma}_{4}^{0}\right] =\frac{2}{3}\left[
\frac{2}{3}+\frac{4}{3}\right] =\frac{4}{3}>1.
\end{equation*}%
We obtain that $\dot{\gamma}_{1}^{0}+2\dot{\gamma}_{2}^{0}+3\dot{\gamma}%
_{3}^{0}+4\dot{\gamma}_{4}^{0}$ must be strictly positive in the limit.
Thus, we found $(\lambda _{k})_{k=0,...,3}$\ under which, at date $t=0$, the
expected residual waiting time increases. Since it is binding at $t=0$,
\Cref{IR} becomes violated at $t>0$ small.

\section{Generalization of \protect\cite{naor1969regulation}}

\label{sec:Naor generalization}

In this section, we generalize \cite{naor1969regulation}'s classic result
(obtained for the $M/M/1$ queue) to our more general Markov process:\textit{%
agents would have excess incentives to queue under FCFS with full
information.} Since the designer can simply stop excessive queueing, this
means that FCFS with a full information rule, denoted $I^{FI}$, can be used
to achieve the optimal cutoff policy.

\begin{proposition}
\label{prop: FI} Suppose $\alpha =1$ and $\mu $ is regular. Then, FCFS with
full information, $I^{FI}$, implements the optimal cutoff outcome $(x^{\ast
},y^{\ast },p^{\ast })$. 
\end{proposition}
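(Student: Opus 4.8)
The plan is to reduce $(IC_t)$ under FCFS with full information to the pointwise inequalities $\mu_jV-Cj\ge 0$ for $j$ in the support of the optimal distribution, and then to show these hold because the socially optimal cutoff never extends into the region where $\mu_jV-Cj$ is negative (the Naor-type ``no over-queueing'' phenomenon). Write $h(j)\triangleq \mu_jV-Cj$. First I would record the structure coming from earlier results: with $\alpha=1$ the objective of $[P']$ equals $\sum_k p_kh(k)$ and the constraint \cref{IR} is exactly $\sum_k p_kh(k)\ge 0$; by \cref{thm:cutoff} there is an optimal solution $p^{*}$ of $[P']$ implemented by a cutoff policy $(x^{*},y^{*},z^{*})$ with maximal queue length $K^{*}$ and (by definition of a cutoff policy) $y^{*}=z^{*}=0$, and I may assume $K^{*}\ge 1$ since $K^{*}=0$ makes \cref{IC} vacuous. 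Applying \cref{lem: single-peakedness} with $\alpha=1$ and $\xi=0$ (so that $f(\cdot;\xi)=(1+\xi)h(\cdot)$), $h$ is single-peaked; since $h(0)=\mu_0V=0$ and $\mu_k$ is uniformly bounded with $C>0$, the set $\{k:h(k)\ge 0\}$ is a finite initial segment $\{0,1,\dots,\bar K\}$.

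Next I would prove $K^{*}\le\bar K$. Suppose not; since $\supp(p^{*})=\{0,\dots,K^{*}\}$, the smallest $k_0$ with $h(k_0)<0$ satisfies $k_0=\bar K+1\le K^{*}$, so $p^{*}_{k_0}>0$ and $h(k)<0$ for every $k\ge k_0$. Let $\hat p$ be $p^{*}$ restricted to $\{0,\dots,k_0-1\}$ and renormalized by $Z\triangleq\sum_{k<k_0}p^{*}_k\in(0,1)$. Then $\hat p$ still satisfies \cref{B'} (the inequalities for $k\le k_0-2$ are those for $p^{*}$ scaled by $1/Z$, and the one at $k_0-1$ holds because $\hat p_{k_0}=0$), and $\sum_{k<k_0}p^{*}_kh(k)=W(p^{*})-\sum_{k\ge k_0}p^{*}_kh(k)>0$ since $W(p^{*})\ge 0$ and $\sum_{k\ge k_0}p^{*}_kh(k)\le p^{*}_{k_0}h(k_0)<0$, so \cref{IR} holds for $\hat p$. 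But then $W(\hat p)=\tfrac1Z\sum_{k<k_0}p^{*}_kh(k)>\sum_{k<k_0}p^{*}_kh(k)>W(p^{*})$, contradicting optimality of $p^{*}$. Hence $K^{*}\le\bar K$, i.e.\ $h(j)\ge 0$ for all $j\in\{1,\dots,K^{*}\}$.

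Finally I would verify \cref{IC} at $(x^{*},y^{*},z^{*},p^{*})$ under $q^{*}=\text{FCFS}$ and $I^{*}=I^{FI}$. Since $y^{*}=z^{*}=0$, no agent is ever removed, so $\sigma_{k,\ell}=1$ everywhere. An agent recommended to join a queue of length $k$ — which requires $k\le K^{*}-1$ — takes position $k+1\le K^{*}$; under FCFS later arrivals fall behind him and those ahead depart, so at every time $t\ge 0$ his observed (information being full) position $\ell$ satisfies $\ell\le k+1$, and by \cref{lem:waiting-time-FCFS} his expected residual wait is $\tau^{*}_{\ell}=\ell/\mu_{\ell}$, which is nondecreasing in $\ell$. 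Hence for all $t\ge 0$ the left-hand side of $(IC_t)$ is at least $V-C\tau^{*}_{k+1}=h(k+1)/\mu_{k+1}\ge 0$ by the previous step. Agents at states $k\ge K^{*}$ (including those rationed out at $K^{*}-1$) are not asked to join and impose no constraint, the designer simply preventing their entry. Thus \cref{IC} holds, so FCFS with full information implements $(x^{*},y^{*},z^{*},p^{*})$; and since $p^{*}$ solves the relaxation $[P']$ and $\mathcal W^{*}\ge\mathcal W$, this policy also solves $[P]$.

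The main obstacle is the second step — ruling out that the optimal cutoff $K^{*}$ reaches into the region where $h$ is negative; everything else is driven by the monotonicity $\tau^{*}_{\ell}=\ell/\mu_{\ell}$ nondecreasing in $\ell$, which makes $(IC_t)$ tightest at the instant of joining and reduces it to a condition already available from the support characterization. Two minor points to handle with care are the possibly-rationed state $K^{*}-1$ and the degenerate case $K^{*}=0$.
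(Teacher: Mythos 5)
Your proof is correct and follows essentially the same route as the paper's: reduce the full-information $(IC_t)$ to the pointwise conditions $\mu_jV-Cj\ge0$ on the support, then rule out $K^*$ exceeding the largest such $j$ by a renormalization argument that strictly raises the objective while preserving \cref{B'} and \cref{IR}. You fill in some details the paper leaves implicit (the $K^*=0$ case, the monotonicity argument that makes $(IC_0)$ the binding constraint under FCFS, the rationed state $K^*-1$), but the key construction and contradiction are identical to the paper's.
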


\begin{proof}
Consider FCFS with full information. We need to show that $(IC_{t})$ holds
for all $t\geq 0$. With the full information rule, we only need to show that
$(IC_{0})$ holds. By \cref{lem:waiting-time-FCFS}, condition $(IC_{0})$ can
be written as:
\begin{equation}
V-C\frac{k}{\mu _{k}}\geq 0\Longleftrightarrow \mu _{k}V-Ck\geq 0
\label{FI IC}
\end{equation}%
for all $k\leq K^{\ast }$. In the sequel, we let $K^{FI}$ be the largest
integer satisfying \cref{FI IC}. We know that, by regularity of $\mu $, $%
k\mapsto \mu _{k}V-Ck$ is single-peaked (by \cref{lem: single-peakedness}
for $\alpha =1$ and $\xi =0$). Hence, $K^{FI}$ is well-defined (i.e.,
finite) given our assumption that $\mu _{k}$ is uniformly bounded. In
addition, Equation \cref{FI IC} holds at state $k$ if and only if $k\leq
K^{FI}$. Hence, it is enough for our purpose to show that $K^{\ast }\leq
K^{FI}$.

Proceed by contradiction and assume that the optimal cutoff policy $p^{\ast
} $, which we recall solves $[P^{\prime }]$, puts strictly positive weight
on $k>K^{FI}$. Note that, using again the fact that $k\mapsto \mu _{k}V-Ck$
is single-peaked, for any such $k$, $\mu _{k}V-Ck<0$. Now, build $p^{\prime
} $ such that $p_{k}^{\prime }=0$ for all $k>K^{FI}$ and $p_{k}^{\prime
}=Zp_{k}^{\ast }$ for all $k\leq K^{FI}$ where $Z>1$ is set so that the sum
of $p_{k}^{\prime }$ is equal to $1$. Given that $p^{\ast }$ satisfies $%
\cref{B'}$ and given that, by construction, $p_{k}^{\prime }/p_{k-1}^{\prime
}=p_{k}^{\ast }/p_{k-1}^{\ast }$ for all $k\leq K^{FI}$, we must have that $%
p^{\prime }$ also satisfies \cref{B'}. Compared to $p^{\ast }$, distribution
$p^{\prime }$ removes all weight on negative values and, for each positive
value, increases its weight. This must strictly increase the value of the
objective. It remains to show that $p^{\prime }$ satisfies \cref{IR}. The
value of the objective must be positive under $p^{\ast }$ (recall that the
dirac mass on $0$ brings a value of the objective of $0$), and so the value
of the objective must be positive under $p^{\prime }$ as well. Given that $%
\alpha =1$, this implies that \cref{IR} is satisfied.
\end{proof}

\section{General analysis: beyond stationarity assumption} \label{app:beyond-stationary}


In the text, we analyzed queueing mechanisms under the assumption that the
Markov chain has attained the stationary distribution. While this provides a
clean analysis, one may wonder if the results remain valid and robust in the
long run when the Markov chain begins with length $k=0$ in the beginning.

For instance, it is not obvious that such a Markov chain induced by some
queueing policy will admit a unique invariant distribution such that the
time average of states will converge to the distribution. More importantly,
one may wonder if the optimal mechanism we identify in the main text will be optimal
in the long-run limit average sense. The analysis here will answer these
questions in the affirmative.



\subsection{Long-run average formulation}

Fix any entry/exit rule $(x,y,z)$. Recall that such a rule induces a birth
and death process. Let $A(t)$ and $D(t)$ be the number of agents who arrived
at the queue and departed the queue by time $t$ and $p_{(x,y,z)}^{T}\in
\Delta (\mathbb{Z}_{+})$---or simply, $p^{T}$---be the the expected average
frequency of states until date $T$. More formally, for each $k\in \mathbb{Z}%
_{+},$ we define
\begin{equation*}
p_{k}^{T}:=\frac{\mathbb{E}\left[ \int_{0}^{T}1_{\{A(t)-D(t)=k\}}dt\right] }{%
T}\text{.}
\end{equation*}%
Let $S(t)$ be the number of agents who have been served by time $t$.

When we
do not assume the existence of a stationary distribution, the problem can be
written as:
\begin{equation*}
\max_{(x,y,z,q,I)}\lim \inf_{T\rightarrow \infty }(1-\alpha )\frac{\mathbb{E}%
\left[ S(T)\right] R}{T}+\alpha \frac{\mathbb{E}\left[ S(T)\right] V-C%
\mathbb{E}\left[ \int_{0}^{T}(A(t)-D(t))dt\right] }{T}\leqno{[\bar{P}]}
\end{equation*}%
subject to
\begin{equation*}
\lim_{T\rightarrow \infty }\inf \sum_{k,\ell }\gamma _{k,\ell }^{t,T}(V\cdot
\sigma _{k,\ell }-C\cdot \tau _{k,\ell })\geq 0,\forall \gamma ^{t,T}\in %
\supp(I^{t}),\forall t\geq 0; \tag{$\overline{IC}$} \label{icstar}
\end{equation*}%
where $\gamma _{k,\ell }^{t,T}$ corresponds to the agents' beliefs on $%
(k,\ell )$ given that he joined the queue at date $T$ (hence, his
unconditional beliefs are given by $p^{T}$) and that he has spent an amount
of time of $t$ in the queue.
Our interpretation here is that $\gamma^{t,T}$ is the belief held by an entrant who only knows her arrival time is uniform within $[0,T]$.

The program $[\bar{P}]$ maximizes the
expectation of the long-run time average of the weighted sum of the agents'
and the service provider's payoffs as $T\rightarrow \infty $. Further, since
$p^{T}$ need not converge, we take the worst-case evaluation where the
objective is evaluated taking the $\lim \inf $. A similar requirement is
imposed for the incentive constraint. Finally, we assume that $(x,y,z,q,I)$
may depend on $t$ (the time spent in the queue---as in the main text) but do not
depend on the calendar time $T$.\footnote{Note that $I$ is a function of calendar time $T$ since it determines for each time $t$ spent in the queue, a collection of posterior beliefs which depend on prior belief $p^T$. We omit this dependence in our notations.}

Recall our relaxed program $[P^{\prime }]$ (assuming stationarity) and its optimal solution $p^{\ast }$.  Recall also the cutoff policy $x^{\ast }$ that implements $p^{\ast }$.\footnote{%
Recall our convention to ignore $y$ and $z$ in our notations since these are
all equal to zero.}  We now claim that our main result  \Cref{thm:dyn-fcfs} remains valid even when the Markov chain begins with $k=0$.  Specifically, our main result is as follows:

\begin{theorem}
\label{thm:main_no_inv} Assume that the primitive process is regular and $%
\alpha >0$.\footnote{We assume here that $\alpha >0$ which implies that the optimal queue length $K^*$ is finite. As will become clear \Cref{thm:main_no_inv}  holds under the (more general) assumption that $K^*$ is finite.} Denoting FCFS with no information $(q^{\ast },I^{\ast })$ we
have that $(x^{\ast },q^{\ast },I^{\ast })$ is an optimal solution of $[\bar{%
P}]$.
\end{theorem}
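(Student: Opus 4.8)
The plan is to show that the value of the long-run average program $[\bar P]$ is bounded above by $\mathcal{W}^{\ast}$, the value of the stationary relaxed program $[P']$, and that the policy $(x^{\ast},q^{\ast},I^{\ast})$ attains this bound. Since we already know from \Cref{thm:cutoff} and \Cref{thm:dyn-fcfs} that $(x^{\ast},q^{\ast},I^{\ast})$ is feasible for the stationary problem and achieves $\mathcal{W}^{\ast}$, it suffices to establish two things: (a) the long-run average payoff generated by $(x^{\ast},q^{\ast},I^{\ast})$ when the chain starts empty converges to the stationary value $W(p^{\ast})=\mathcal{W}^{\ast}$, and the constraint \cref{icstar} holds along the way; and (b) no feasible policy for $[\bar P]$ can do strictly better than $\mathcal{W}^{\ast}$.

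For part (b), I would adapt the argument behind \Cref{lem:general}. Fix any feasible $(x,y,z,q,I)$ for $[\bar P]$. The key observation is that the constraints \cref{icstar} at $t=0$, aggregated over the belief support of $I^{0}$ with priors $p^{T}$, together with work-conservation and Little's law applied to the finite-horizon quantities, force a finite-horizon analogue of \cref{IR}: namely $\liminf_{T}\sum_{k}p^{T}_{k}(\mu_{k}V - kC)\ge 0$ (up to vanishing boundary terms of order $1/T$ coming from agents in the queue at time $0$ or $T$). Likewise, the balance/flow relations between $A(t), D(t), S(t)$ give, in the $\liminf$, the inequality $\lambda_{k}p^{T}_{k} - \mu_{k+1}p^{T}_{k+1}\ge o(1)$, i.e. a finite-horizon version of \cref{B'}. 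Passing to a subsequence along which $p^{T}$ converges (tightness follows as in the proof of \Cref{thm:cutoff}, using $\sup_{k}\mu_{k}<\infty$ and the individual-rationality bound to control the tail), the limit distribution is feasible for $[P']$, so the objective of $[\bar P]$ along that subsequence is at most $\mathcal{W}^{\ast}$. Taking $\liminf$ over all $T$ then yields the upper bound.

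For part (a) I would show that the empty-start chain under the cutoff policy $x^{\ast}$ is an irreducible positive-recurrent birth-and-death process on $\{0,\dots,K^{\ast}\}$ (finite since $\alpha>0\Rightarrow K^{\ast}<\infty$), hence by the ergodic theorem $p^{T}\to p^{\ast}$ and the time-average objective converges to $W(p^{\ast})=\mathcal{W}^{\ast}$. The incentive constraint requires slightly more care: an entrant arriving at calendar time $T$ (known only to be uniform on $[0,T]$) has unconditional belief $p^{T}$, which converges to $p^{\ast}$; so her position belief $\tilde\gamma^{0}$ converges to the stationary one in \cref{eq:belief0}, and her belief $\tilde\gamma^{t}$ after spending time $t$ in the queue converges (uniformly in $t$ on compacts, by continuous dependence of the ODE \cref{eq:r-t} on initial conditions) to the stationary trajectory. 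Since \Cref{lem:dyn-incentives} gives $r^{t}_{\ell}\le r^{0}_{\ell}$ and hence $(IC_{t})$ strictly in the stationary regime whenever $2\mu_{1}>\mu_{2}$, and with weak inequality always, a limiting/continuity argument gives $\liminf_{T}\bigl(V - C\sum_{\ell}\tilde\gamma^{t,T}_{\ell}\tau^{\ast}_{\ell}\bigr)\ge 0$ for every $t\ge 0$, which is exactly \cref{icstar}. The main obstacle is the non-stationary edge effect: one must handle the $o(1/T)$ correction terms carefully (agents present at time $0$, partially-served agents at time $T$, and the fact that $p^{T}$ is only asymptotically stationary), and argue that these do not spoil either the upper-bound inequality or the incentive inequality in the limit. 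I expect this bookkeeping, rather than any new conceptual ingredient, to be the bulk of the work.
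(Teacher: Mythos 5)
Your proposal follows the same high-level strategy as the paper---show $[P']$ upper-bounds $[\bar P]$ and that the candidate policy achieves this bound---but the route you take to the upper bound is genuinely different, and worth comparing. The paper (Prop.~B.1) first proves that for \emph{any} feasible policy the time-average $p^T$ converges, along the whole sequence, to the unique invariant distribution $\bar p$ of the induced birth--death chain (tightness of $\{p^T\}$ from the binding ex-ante payoff inequality, then stochastic monotonicity of $\tilde p^T$ to transfer tightness, then Krylov--Bogolioubov for existence and the ergodic theorem for uniqueness and convergence). Once full convergence is in hand, $\bar p$ inherits $(B')$ for free from $(B)$, and the objective and $(IR)$ pass to the limit via Portmanteau plus the stochastic-dominance ordering $p^T \le_{\mathrm{SD}} \bar p$, which yields the exact limit $\lim\sum_k kp^T_k=\sum_k k\bar p_k$ rather than just a $\liminf$. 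You instead propose to derive finite-horizon analogues of $(IR)$ and $(B')$ for $p^T$ directly---the balance inequality $\lambda_k p^T_k - \mu_{k+1}p^T_{k+1}\ge -1/T$ does indeed follow from counting up- and down-crossings of each level---and then pass to a \emph{subsequence} along which $p^T$ converges. This is more elementary (no Krylov--Bogolioubov, no need for uniqueness of the invariant measure at this stage), and the sign of the waiting-cost term makes the one-sided Fatou/Portmanteau inequality $\liminf\sum kp^{T_n}_k\ge\sum k\hat p_k$ sufficient for the upper bound, so the weaker conclusion is harmless. Part~(a) of your plan matches the paper almost exactly: finiteness of $K^*$ when $\alpha>0$, ergodicity of the finite cutoff chain, convergence of $\gamma^{0,T}$ to $\gamma^0$ and then of $\gamma^{t,T}$ to $\gamma^t$ by Gr\"onwall (your ``continuous dependence on initial conditions''), and then $(\overline{IC})$ at each $t$ from \Cref{lem:dyn-incentives}. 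One small caveat: you remark that the stationary $(IC_t)$ holds ``strictly'' when $2\mu_1>\mu_2$; this is not needed, since $(\overline{IC})$ asks only that the $\liminf$ be $\ge 0$, and the limit of the finite-$T$ quantity \emph{is} the stationary value, which is $\ge 0$ regardless of whether it is strict.
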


We prove
\Cref{thm:main_no_inv} in the remaining subsections.  Specifically, \Cref{app:convergence} proves that a solution to $[\bar{P}]$ admits a long-run time average $(p^T)$ that weakly converges to a unique invariant measure $\bar p$.   \Cref{app:relaxation} then shows that $[P^{\prime }]$ is a relaxation of $[\bar{P}]$.  Finally,  \Cref{app:optimality} proves that the FCFS with no information $(q^{\ast },I^{\ast })$ together with the cutoff policy $x^*$ satisfies the constraint of $[\bar{P}]$, thus concluding the proof of \Cref{thm:main_no_inv}.

\subsection{The convergence of $(p^T)$ satisfying \Cref{icstar}.} \label{app:convergence}

The fundamental proposition for us is the following.

\begin{proposition}
\label{prop:fund} Assume  $\sup_{k}\lambda _{k}<\infty $. Fix
any solution $(x,y,z,q,I)$ to $[\bar{P}]$. We must have that $p^{T}$
converges (in weak convergence of measures) to some $\bar{p}$. Further, $%
\bar{p}$ is the unique invariant distribution of the birth-death process
induced by $(x,y,z)$.
\end{proposition}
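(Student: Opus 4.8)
The plan is to combine a tightness estimate — which is where the incentive constraint does its essential work — with a routine passage to the limit in the birth--death balance equations. Write $\tilde\lambda_k := \lambda_k x_k(1-\sum_\ell z_{k,\ell})$ and $\tilde\mu_k := \mu_k + \sum_\ell y_{k,\ell}$ for the policy-adjusted birth and death rates, so that $(x,y,z)$ induces a birth--death chain with these rates started at $L(0):=A(0)-D(0)=0$, and $p^T_k = \tfrac1T\,\mathbb E\big[\int_0^T \mathbf 1_{\{L(t)=k\}}\,dt\big]$; set $\bar\mu:=\sup_k\mu_k<\infty$ and $\bar\lambda:=\sup_k\lambda_k<\infty$. \emph{Step 1 (tightness).} First I would show $\sup_{T>0}\sum_k k\,p^T_k<\infty$. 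Aggregating $(\overline{IC})$ at $t=0$ over $\supp(I^0)$ exactly as in footnote~\ref{fn:IR and IC_0} / \Cref{lem:general}, and using work-conservation of $q$ together with the PASTA property, one gets $\liminf_{T\to\infty}\big(V\bar\sigma^0_T - C\bar\tau^0_T\big)\ge 0$, where $\bar\sigma^0_T\le1$ and $\bar\tau^0_T$ are the ex-ante service probability and expected sojourn time of an entrant facing prior $p^T$; since $\bar\sigma^0_T\le1$ this forces $\limsup_T\bar\tau^0_T\le V/C$. A finite-horizon form of Little's law, whose boundary terms are $o(1)$ as $T\to\infty$ because $\bar\lambda,\bar\mu<\infty$, identifies $\sum_k k\,p^T_k = a_T\bar\tau^0_T+o(1)$ with average entry rate $a_T:=\tfrac1T\mathbb E[A(T)]\le\bar\lambda$, so $\limsup_T\sum_k k\,p^T_k\le \bar\lambda V/C$; for $T$ in any bounded range $\sum_k k\,p^T_k\le\mathbb E[A(T)]\le\bar\lambda T<\infty$. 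Hence $M:=\sup_{T>0}\sum_k k\,p^T_k<\infty$, and $\sum_{k>n}p^T_k\le M/n$ for all $T$, i.e.\ $\{p^T\}_{T>0}$ is tight.

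\emph{Step 2 (compactness) and Step 3 (any limit is invariant).} By Prokhorov's theorem, every subsequence of $\{p^T\}$ has a further subsequence converging weakly — equivalently, pointwise on $\mathbb Z_+$ — to a probability measure $\bar p$ (tightness rules out escape of mass to infinity). Fixing such a limit $\bar p$, for each level $k$ the counts $U_k(T),D_k(T)$ of $k\!\to\!k{+}1$ and $k{+}1\!\to\!k$ transitions on $[0,T]$ satisfy $U_k(T)-D_k(T)=\mathbf 1_{\{L(T)\ge k+1\}}$ (as $L(0)=0$), hence $\mathbb E[U_k(T)]-\mathbb E[D_k(T)]\in[0,1]$; moreover $\mathbb E[U_k(T)]=\tilde\lambda_k\,Tp^T_k$ and $\mathbb E[D_k(T)]=\tilde\mu_{k+1}\,Tp^T_{k+1}$. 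Dividing by $T$ and letting $T\to\infty$ along the subsequence gives $\tilde\lambda_k\,\bar p_k=\tilde\mu_{k+1}\,\bar p_{k+1}$ for every $k$, i.e.\ \eqref{B}, so $\bar p$ is an invariant distribution of the birth--death process induced by $(x,y,z)$.

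\emph{Step 4 (uniqueness and conclusion).} Since $\tilde\mu_k=\mu_k+\sum_\ell y_{k,\ell}\ge\mu_k>0$ for all $k\ge1$, the balance relations recursively force $\bar p_k=\bar p_0\prod_{j=0}^{k-1}(\tilde\lambda_j/\tilde\mu_{j+1})$, and normalization then fixes $\bar p_0$; thus the invariant probability distribution of this birth--death process is unique. (If some $\tilde\lambda_j=0$, the product vanishes beyond $j$ and $\bar p$ is supported on $\{0,\dots,j\}$, consistent with $p^T_k=0$ for $k>j$ and all $T$.) Therefore every subsequential weak limit of $\{p^T\}$ equals this unique $\bar p$, and the standard subsequence argument yields $p^T\Rightarrow\bar p$, which proves the proposition.

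The hard part is Step~1. A priori the chain induced by $(x,y,z)$ could be null recurrent or transient, in which case $p^T$ would converge to the zero measure rather than to a probability measure and the statement would be false; ruling this out is exactly where $(\overline{IC})$ is indispensable, since persistently long queues would leave an entrant with expected waiting cost exceeding $V$. The technical care lies in carrying the stationary accounting of footnote~\ref{fn:IR and IC_0} through to the finite horizon — controlling the $o(1)$ boundary terms in Little's law — which is precisely where the hypotheses $\sup_k\lambda_k<\infty$ and $\sup_k\mu_k<\infty$ are used; Steps 2--4 are routine.
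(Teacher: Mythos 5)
Your proof is correct, and it takes a genuinely different route from the paper's. Both arguments extract tightness of $\{p^T\}$ from the aggregated $(\overline{IC})$ constraint at $t=0$, but after that they diverge. The paper proceeds by (i) using stochastic monotonicity of a birth--death chain started at $0$ to transfer tightness from the time-averages $p^T$ to the marginals $\tilde p^T$, (ii) invoking Krylov--Bogolioubov to obtain an invariant distribution, and (iii) applying the ergodic theorem for continuous-time chains to get convergence. You instead show directly that \emph{every} subsequential weak limit of $\{p^T\}$ satisfies the balance equations, via the level-crossing identity $U_k(T)-D_k(T)=\mathbf 1_{\{L(T)\ge k+1\}}$, $\mathbb E[U_k(T)]=\tilde\lambda_k T p^T_k$, $\mathbb E[D_k(T)]=\tilde\mu_{k+1} T p^T_{k+1}$, and then use the explicit recursion $\bar p_k=\bar p_0\prod_{j<k}\tilde\lambda_j/\tilde\mu_{j+1}$ for birth--death chains to pin down the limit uniquely. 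This bypasses stochastic monotonicity, Krylov--Bogolioubov, and the ergodic theorem entirely; it is more elementary and self-contained, though it leans on the one-dimensional birth--death structure, whereas the paper's route would generalize more readily to other Markov dynamics.

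Two small remarks. First, your phrasing of Step~1 invokes a finite-horizon Little identity with ``$o(1)$ boundary terms,'' but in fact no asymptotic smallness is required: the boundary correction (residual sojourn of agents still present at $T$, which could be large if the chain were unstable) is \emph{subtracted}, so the one-sided inequality $\sum_k k\,p^T_k \le a_T\,\bar\tau^0_T$ holds exactly, and together with $\limsup_T\bar\tau^0_T\le V/C$ this already gives the uniform bound you need. Second, passing from the pointwise $(\overline{IC})$ constraint over $\supp(I^0)$ to the $\liminf$ of the aggregate $V\bar\sigma^0_T-C\bar\tau^0_T$ tacitly commutes $\liminf_T$ with the integral over signals, which is not automatic in general; but the paper's own proof has exactly the same lacuna, so this is not a gap relative to the target.
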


\begin{proof}
Fix any solution $(x,y,z,q,I)$ to $[\bar{P}]$. If $x_0=0$ then the result holds trivially. So in the sequel, we assume that $x_0>0$.

\vskip 1cm

\noindent  {\textbf{Step 1.} {\it The birth-death process induced by $%
(x,y,z)$ has an invariant distribution.}}

\bigskip

In order to show this, we first prove the following lemma.

\begin{lemma}
\label{lem:Exp(S)} We have
\begin{equation*}
\frac{\mathbb{E}\left[ S(T)\right] }{T}=\sum_{k=1}^{\infty }p_{k}^{T}\mu _{k}%
\text{.}
\end{equation*}
\end{lemma}

\begin{proof} Let $k^{t}$ be the length of queue at time $t$ and let $%
Z(t):=\mathbb{E}[S(t)]$. Then,
\begin{align*}
Z({t+dt})& =\mathbb{E}\left[ S^{t}+1_{\{\mbox{an agent is served during }%
\lbrack t,t+dt)\}}+o(dt)\right] \\
& =Z(t)+\sum_{k}\Pr \{k^{t}=k\}\mathbb{E}\left[ 1_{\{%
\mbox{an agent is
served during }\lbrack t,t+dt)\}}\bigg|k^{t}=k\right] +o(dt) \\
& =Z(t)+\sum_{k}\Pr \{k^{t}=k\}\mu _{k}dt+o(dt), \\
&
\end{align*}%
where the last equality follows from the independence of $\{k^{t}=k\}$ and $%
\{\mbox{an agent is served during }\lbrack t,t+dt)\}$. This means that
\begin{equation*}
Z^{\prime }(t)=\sum_{k}\Pr \{k^{t}=k\}\mu _{k}.
\end{equation*}%
Hence, by the fundamental theorem of calculus,
\begin{align*}
\frac{Z(T)}{T}& =\frac{Z(0)}{T}+\frac{\int_{0}^{T}\sum_{k}\Pr \{k^{t}=k\}\mu
_{k}dt}{T} \\
& =\sum_{k}\mu _{k}\frac{\int_{0}^{T}\Pr \{k^{t}=k\}dt}{T} \\
& =\sum_{k}\mu _{k}\frac{\mathbb{E}\left[ \int_{0}^{T}1_{\{k^{t}=k\}}dt%
\right] }{T} \\
& =\sum_{k}\mu _{k}p_{k}^{T}.
\end{align*}%
The second equality holds from $Z(0)=0$.
\end{proof}

Using this lemma, we prove the following additional result.

\begin{lemma}
There is $\xi >0$ and $\bar{T}<\infty $ such that
\begin{equation*}
\sum_{k=1}^{\infty }p_{k}^{T}(\mu _{k}V-kC)\geq -\xi
\end{equation*}%
for any $T\geq \bar{T}$.
\end{lemma}

\begin{proof}   Let
$$U^T:=\int\sum_{k}\gamma _{k,k}^{0,T}(V\cdot \sigma
_{k,k}-C\cdot \tau _{k,k}) I^0(d\gamma)$$ be the ex-ante expected average payoff for an agent who enters the queue between time $0$ and $T$, obtained by aggregating \Cref{icstar} for $t=0,$ across all $\gamma_{k,k}^{0,T}\in \supp(I^0)$.

Since each agent who enters the queue enjoys $U^T$ on average, the total ex-ante expected surplus accruing to all agents entering the queue by time $T$ is:
\begin{align*}
 & \,  \sum_{n=1}^{\infty} \Pr\{A(T)=n\} n U^T
 =
\E[ A(T)] U^T \\
\le & \,  \mathbb{E}\left[ S(T)\right] V-C\mathbb{E}\left[
 \int_{0}^{T}(A(t)-D(t))dt\right]+ \mathbb{E}[(A(T)-D(T))V \\
 =  & \, \E[ A(T)] \frac{T}{\E[ A(T)] }  \frac{\mathbb{E}\left[ S(T)\right] V-C\mathbb{E}\left[
 \int_{0}^{T}(A(t)-D(t))dt\right]+ \mathbb{E}[(A(T)-D(T))V].}{T}\\
  \le & \, \E[ A(T)]  \frac{1}{\sum_k p_k^T  \lambda_k x_k} \left(\sum_{k}p_{k}^{T}(V\mu _{k}-kC)+\sup_{k}\lambda _{k}V\right).
\end{align*}
The first inequality is explained as follows. The first two terms in the second line account for the expected total surplus realized up to time $T$, excluding the surplus that will eventually accrue to those who arrived by $T$ but are not fully served by $T$.  The last term adds the gross surplus for these agents assuming that they are all eventually served but excludes the waiting costs they will incur after $T$.  Hence, the second line gives the upper bound of the expected total surplus for agents who enter the queue by time $T$.    The second inequality comes from the \Cref{lem:Exp(S)} as well as the
observation that, by definition of a Poisson process $\mathbb{E}[A(T)]/T$
is equal to $\sum_{k}p_{k}^{T}\lambda _{k} x_k$.\footnote{%
To prove this, one can use a similar argument as in \Cref{lem:Exp(S)}.}
Invoking \Cref{icstar} for $t=0$, we have
\begin{equation*}
0\le    \lim \inf_{T\rightarrow \infty} U^T.
\end{equation*}
Hence, from the above inequalities we have
\begin{equation*}
0\le \lim \inf_{T\rightarrow \infty }\sum_{k}p_{k}^{T}(V\mu _{k}-kC)+\sup_{k}\lambda _{k}V,
\end{equation*}
so
\begin{equation*}
\lim \inf_{T\rightarrow \infty }\sum_{k}p_{k}^{T}(V\mu _{k}-kC)\geq
-\sup_{k}\lambda _{k}V.
\end{equation*}%
Since we assumed that $\sup_{k}\lambda _{k}<\infty $, this yields the
desired result. \end{proof}

Let us now fix such $\xi $ and $\bar{T}$ throughout. We can now show that
the collection $\{p^{T}\}_{T\geq \bar{T}}$ is tight which implies that $%
\{p^{T}\}_{T}$ is tight given that $\bar{T}<\infty $. We need to show that
for any $\varepsilon >0$, there is $K$ large enough so that any probability
measure $p\in \{p^{T}\}_{T\geq \bar{T}}\ $has $\sum_{k=K+1}^{\infty
}p_{k}<\varepsilon $. Proceed by contradiction and assume that there is $%
\varepsilon >0$ and a sequence $\{p^{K}\}$ in $\{p^{T}\}_{T\geq \bar{T}}$
(which, by definition, satisfies $\sum_{k=0}^{\infty }p_{k}^{K}\left[ \mu
_{k}V-Ck\right] \geq -\xi $) such that $\sum_{k=K+1}^{\infty
}p_{k}^{K}>\varepsilon $ for all $K$. This implies
\begin{eqnarray*}
\sum_{k=0}^{\infty }p_{k}^{K}\left[ \mu _{k}V-Ck\right] &=&V\sum_{k=0}^{%
\infty }p_{k}^{K}\mu _{k}-C\sum_{k=0}^{\infty }p_{k}^{K}k \\
&\leq &V\sup_{k}\mu _{k}-C\sum_{k=K+1}^{\infty }p_{k}^{K}k \\
&\leq &V\sup_{k}\mu _{k}-C(K+1)\sum_{k=K+1}^{\infty }p_{k}^{K} \\
&\leq &V\sup_{k}\mu _{k}-C(K+1)\varepsilon \text{.}
\end{eqnarray*}%
Note that for $K$ large enough, the above term must be strictly below $-\xi $
(recall that, by assumption, $\sup_{k}\mu _{k}<\infty $).\ This contradicts
the fact that $\sum_{k=0}^{\infty }p_{k}^{K}\left[ \mu _{k}V-Ck\right] \geq
-\xi $ for all $K$. We conclude that collection $\{p^{T}\}_{T\geq \bar{T}}$
is tight.

Finally, let $\tilde{p}^{T}$ be defined by
\begin{equation*}
\tilde{p}_{k}^{T}:=\Pr \left[ k^{T}=k\left\vert k^{0}=0\right. \right]
\end{equation*}%
where $k^{T}$ is queue length at time $T$. We show that tightness of $%
\{p^{T}\}_{T\geq \bar{T}}$ implies the tightess of $\{\tilde{p}^{T}\}_{T\geq
\bar{T}}$.

\begin{lemma}
If $\{p^{T}\}_{T}$ is tight then $\{\tilde{p}^{T}\}_{T}$ is tight.
\end{lemma}

\begin{proof}
Let us start with the simple following relation between $p^{T}$ and $\{%
\tilde{p}^{t}\}_{t}:$%
\begin{eqnarray*}
p_{k}^{T} &=&\frac{1}{T}\mathbb{E}\left[ \int_{0}^{T}1\{k_{t}=k\}dt\right] \\
&=&\frac{1}{T}\int_{0}^{T}\Pr \{k_{t}=k\}dt \\
&=&\frac{1}{T}\int_{0}^{T}\tilde{p}_{k}^{t}dt\text{.}
\end{eqnarray*}%
Now, by way of contradiction, assume that $\{p^{T}\}_{T}$ is tight but $\{%
\tilde{p}^{t}\}_{t}$ is not tight. The latter means that there is $%
\varepsilon >0$ such that for any integer $K$, there is $\tilde{p}%
^{t_{K}}\in \{\tilde{p}^{t}\}_{t}$ satisfying
\begin{equation*}
\sum_{k=K+1}^{\infty }\tilde{p}_{k}^{t_{K}}>\varepsilon \text{.}
\end{equation*}%
Since $\tilde{p}^{t}$ increases in $t$ in the stochastic dominance order\footnote{\label{fn:SD} Indeed, $\tilde{p}^{t}$ is increasing in the stochastic dominance order given that
the process is birth-death and starts at state $0$ (see \cite%
{KeilsonKester77} and \cite{VanDoorne80}).}, we obtain that for any $t\geq t_{K}:$
\begin{equation*}
\sum_{k=K+1}^{\infty }\tilde{p}_{k}^{t}>\varepsilon \text{.}
\end{equation*}%
Now, we show that $\{p^{T}\}_{T}$ cannot be tight, a contradiction. To see
this, let $\varepsilon ^{\prime }:=\varepsilon /2$. Consider any positive
integer $K$. We will find $T$ large enough under which $\sum_{k=K+1}^{\infty
}p_{k}^{T}>\varepsilon ^{\prime }$. Take $T\geq t_{K}$, we have%
\begin{eqnarray*}
\sum_{k=K+1}^{\infty }p_{k}^{T} &=&\sum_{k=K+1}^{\infty }\frac{1}{T}%
\int_{0}^{T}\tilde{p}_{k}^{t}dt \\
&=&\frac{1}{T}\int_{0}^{T}\sum_{k=K+1}^{\infty }\tilde{p}_{k}^{t}dt \\
&\geq &\frac{1}{T}\int_{t_{K}}^{T}\sum_{k=K+1}^{\infty }\tilde{p}_{k}^{t}dt
\\
&>&\frac{1}{T}\int_{t_{K}}^{T}\varepsilon dt=\varepsilon \frac{T-t_{K}}{T}%
\text{.}
\end{eqnarray*}%
Note that the last term is greater than $\varepsilon ^{\prime }=\varepsilon
/2$ when $T$ is large enough. This concludes the proof. \end{proof}

We conclude from the above lemma that $\{\tilde{p}^{T}\}_{T}$ is tight. Now,
to complete the proof of Step 1, we make use of the following result, due to Krylov-Bogolioubov's Theorem for Markov chains with countable
state spaces.

\begin{theorem}[Krylov-Bogolioubov]
Consider a time-homogenous Markov chain $\{X_{t}\}_{t}$ on $\mathbb{Z}_{+}$
and let $\tilde{p}^{T}$ be defined by
\begin{equation*}
\tilde{p}^{T}(k,A):=\Pr \left\{ X_{T}\in A\left\vert X_{0}=k\right. \right\}
\end{equation*}%
for all sets $A\subset \mathbb{Z}_{+}$. If for some $k$, $\{\tilde{p}%
^{T}(k,\cdot )\}_{T>0}$ is tight then the Markov chain has at least one
invariant distribution.\footnote{%
The general version of the theorem requires that $(p^{T})$ satisfies the
Feller property, i.e., for any $T\geq 0$ and any bounded and continuous
function $g:\mathbb{Z}_{+}\rightarrow \mathbb{R}$, the function $%
\sum_{k^{\prime }=1}^{\infty }\tilde{p}^{T}(k,k^{\prime })g(k^{\prime })$
must be continuous in $k$. Recall that $\mathbb{Z}%
_{+}$ is endowed with the discrete topology and so this requirement is
trivially satisfied. Further, the theorem requires that $\mathbb{Z}_{+}$ is
a Polish space which holds true again under the discrete topology.}
\end{theorem}

Since we just showed $\{\tilde{p}^{T}(0,\cdot )\}_{T>0}$ is tight, we can
apply the above theorem.  We thus conclude that  the birth-death process induced by $(x,y,z)$ has an invariant distribution.

\vskip 0.3cm

\noindent {\textbf{Step 2.} {\it $p^{T}$ converges to $\bar{p}$, the
unique invariant distribution of the birth-death process.}}

\vskip 0.2cm
Let $\bar{p}$ be
the invariant distribution of the birth-death process induced by $(x,y,z)$
which exists by the result of Step 1.
First, notice that for our birth-death process, this invariant distribution must be unique (there is at most one solution to the balance condition (B)). In addition, our birth-death process is irreducible and positive recurrent.\footnote{For birth-death processes, positive recurrence is implied by the existence of an invariant distribution.}  Hence, the Ergodic Theorem for continuous-time Markov processes applies and our result follows.
\end{proof}

This implies that, without loss of generality, we can add a constraint to
problem $[\bar{P}]$ which guarantees that $p^{T}$ converges to a $\bar{p}$
satisfying the balanced condition $(B)$. In the sequel, we will assume that
this constraint is added to our problem.

\subsection{$[P^{\prime }]$ is a relaxation of $[\bar{P}]$} \label{app:relaxation}

With \Cref{prop:fund} in hand, assuming $\sup \lambda _{k}<\infty $, we will
show that the $[P^{\prime }]$ (as defined in the paper) is a relaxation of $[%
\bar{P}]$. Recall that, by \cref{prop:fund}, $[\bar{P}]$ writes as
\begin{equation*}
\max_{(x,y,z,q,I)}\liminf_{T\rightarrow \infty }(1-\alpha )\frac{\mathbb{E}%
\left[ S(T)\right] R}{T}+\alpha \frac{\mathbb{E}\left[ S(T)\right] V-C%
\mathbb{E}\left[ \int_{0}^{T}(A(t)-D(t))dt\right] }{T}\leqno{[\bar{P}]}
\end{equation*}%
subject to
\begin{equation*}
\liminf_{T\rightarrow \infty }  \sum_{k,\ell }\gamma _{k,\ell }^{t,T}(V\cdot
\sigma _{k,\ell }-C\cdot \tau _{k,\ell })\geq 0,\forall \gamma ^{t,T}\in %
\supp(I^{t}),\forall t\geq 0;
\end{equation*}%
and
\begin{equation*}
\lim p^{T}=\bar{p}
\end{equation*}%
and%
\begin{equation*}
\lambda _{k}x_{k}(1-\sum_{\ell }z_{k,\ell })\bar{p}_{k}=(\mu
_{k+1}+\sum_{\ell }y_{k+1,\ell })\bar{p}_{k+1},\,\forall k\in \supp(p)\text{
}(B)\text{.}
\end{equation*}%

Now, fix any solution $(x,y,z,q,I)$ of program $[\bar{P}]$%
. We show that the invariant distribution $\bar{p}$ of the stochastic
process induced by $(x,y,z)$ satisfies \cref{IR}. Indeed, the \cref{IC}
constraint at $t=0$ under the no information policy must be satisfied by $%
(x,y,z,q,I)$, i.e.,

\begin{equation*}
\liminf_{T\rightarrow \infty }\sum_{k}\gamma _{k,k}^{0,T}(V\cdot \sigma
_{k,k}-C\cdot \tau _{k,k})\geq 0
\end{equation*}%
where
\begin{equation*}
\gamma _{k,k}^{0,T}=\frac{p_{k-1}^{T}\lambda _{k-1}x_{k-1}}{\sum_{j\geq
1}p_{j-1}^{T}\lambda _{j-1}x_{j-1}}\text{.}
\end{equation*}%

Since $p^{T}$ converges to $\bar{p}$, the unique invariant distribution of
our birth-death process, we have that, for all $k$,
\begin{equation*}
\lim_{T\to \infty} \gamma _{k}^{0,T}=\gamma _{k}^{0},
\end{equation*}%
which implies that $\gamma ^{0,T}$ converges to $\gamma ^{0}$ (as defined in
the paper) in weak convergence of measures.\footnote{%
Note that the denominator of $\gamma _{k}^{0,T}$ converges by definition of
weak convergence of measures  together with our assumption that $%
\inf_{k}\lambda _{k}<\infty $ which implies that the mapping $j\mapsto
\lambda _{j}x_{j-1}$  is bounded (recall $\mathbb{Z}$ is endowed with the
discrete topology so continuity holds trivially.} By
Portmanteau's Theorem, together with the fact that function $k\mapsto V\cdot \sigma
_{k,k}-C\cdot \tau _{k,k}$ is upper bounded, we have that
\begin{eqnarray}
0&\le& \liminf_{T\rightarrow \infty }  \sum_{k}\gamma _{k}^{0,T}(V\cdot \sigma
_{k,k}-C\cdot \tau _{k,k})\\
&\leq &\limsup_{T\rightarrow \infty }
\sum_{k}\gamma _{k}^{0,T}(V\cdot \sigma _{k,k}-C\cdot \tau _{k,k})
\label{IR_weakening} \\
&\leq &\sum_{k}\gamma _{k}^{0}(V\cdot \sigma _{k,k}-C\cdot \tau _{k,k})
\notag \\
&=&\frac{1}{\sum_{i}\bar p_{i} \mu _{i}}\sum_{k=1}^{\infty }\bar{p}%
_{k}(\mu _{k}V-kC)  \notag
\end{eqnarray}%
where the equality is proved in \Cref{lem:general} of \Cref{app-sec: general}. Thus, \cref{IR} is satisfied by
$\bar{p}$, as claimed. In addition, since $(x,y,z,q,I)$ satisfies $(B)$, $%
\bar{p}$\ must satisfy $(B^{\prime })$.

Finally, the value of the objective of $[\bar{P}]$ at $(x,y,z,q,I)$ must be
equal to that of $[P^{\prime }]$ under $\bar{p}$. To see this, observe first
that, by \Cref{lem:Exp(S)},
\begin{eqnarray*}
&&\liminf_{T\rightarrow \infty }(1-\alpha )\frac{\mathbb{E}\left[ S(T)\right]
R}{T}+\alpha \frac{\mathbb{E}\left[ S(T)\right] V-C\mathbb{E}\left[
\int_{0}^{T}(A(t)-D(t))dt\right] }{T} \\
&=&\lim_{T\rightarrow \infty }\inf (1-\alpha )R\sum_{k=1}^{\infty
}p_{k}^{T}\mu _{k}+\alpha \sum_{k=1}^{\infty }p_{k}^{T}(\mu _{k}V-kC).
\end{eqnarray*}

Now, let us prove that
\begin{equation}
\lim_{T\rightarrow \infty }\sum_{k=1}^{\infty }p_{k}^{T}k=\sum_{k=1}^{\infty
}\bar{p}_{k}k.  \label{equality}
\end{equation}
Since function $k\mapsto k$ is lower bounded, by Portmanteau's Theorem we
must have that
\begin{equation}
\lim_{T\rightarrow \infty }\inf \sum_{k=1}^{\infty }p_{k}^{T}k\geq
\sum_{k=1}^{\infty }\bar{p}_{k}k.  \label{lower}
\end{equation}%
Now, since $p^{T}$ increases in the stochastic dominance order when $T$
increases, we must have that for each $T$, $p^{T}$ is stochastically
dominated by $\bar{p}$.\footnote{\label{fn:SD2} Recall (see \Cref{fn:SD}) that $\tilde{p}^{t}$ is increasing in the stochastic dominance order as $t$ increases. This implies that $p^{T}$  increases in the stochastic dominance order as well as $T$ increases.} Thus, we obtain that
\begin{equation*}
\sum_{k=1}^{\infty }p_{k}^{T}k\leq \sum_{k=1}^{\infty }\bar{p}_{k}k
\end{equation*}%
for each $T$. Thus,
\begin{equation}
\lim_{T\rightarrow \infty }\sup \sum_{k=1}^{\infty }p_{k}^{T}k\leq
\lim_{T\rightarrow \infty }\sum_{k=1}^{\infty }\bar{p}_{k}k.  \label{upper}
\end{equation}%
Now, Equations \Cref{lower} and \Cref{upper} imply Equation \Cref{equality}. With this in our hand, and again using Portmanteau's Theorem, we have the
following
\begin{equation*}
\lim_{T\rightarrow \infty }\inf (1-\alpha )R\sum_{k=1}^{\infty }p_{k}^{T}\mu
_{k}+\alpha \sum_{k=1}^{\infty }p_{k}^{T}(\mu _{k}V-kC)=(1-\alpha
)R\sum_{k=1}^{\infty }\bar{p}_{k}\mu _{k}+\alpha \sum_{k=1}^{\infty }\bar{p}%
_{k}(\mu _{k}V-kC)\text{.}
\end{equation*}%
\bigskip We conclude that $[P^{\prime }]$ is indeed a weakening of $[\bar{P}]$.

\subsection{Proof of Theorem \protect\ref{thm:main_no_inv}} \label{app:optimality}

Let us fix the cutoff policy $(x^{\ast },y^{\ast },z^{\ast })$---or simply $%
x^{\ast }$---which induces $p^{\ast }$ the optimal solution of $[P^{\prime
}] $ (which is well-defined by \Cref{thm:cutoff} in the paper). Now, consider $x^{\ast }$ together with FCFS and the no information
policy. We show
that this is feasible in $[\bar{P}]$. First, under $(x^{\ast },y^{\ast
},z^{\ast })$, $p^{T}$ must converge to $p^{\ast }$ again by the Ergodic Theorem for continuous-time Markov processes and $(B)$ must be satisfied by construction of $(x^{\ast },y^{\ast },z^{\ast
})$. We prove the incentive constraint in the remaining part of this
section. In order to do so, we will make use of the previous results. Recall we assume $\sup_{k}\lambda _{k}<\infty $. This condition turns out to be
satisfied under regularity of the primitive process as well as the
assumption we make in the paper that $\sup_{k}\mu _{k}<\infty $.\footnote{%
Indeed, Let $c:=\lambda _{0}-\mu _{0}=\lambda _{0}$. Define $\hat{\lambda}%
_{k}:=\lambda _{k}-c$ for all $k\geq 0$. Since $\hat{\lambda}_{k}-\hat{%
\lambda}_{k-1}=\lambda _{k}-\lambda _{k-1}\leq \mu _{k}-\mu _{k-1}$ and $%
\hat{\lambda}_{0}=\mu _{0}$, one can easily show by induction that $\hat{%
\lambda}_{k}\leq \mu _{k}$. Thus, $\hat{\lambda}_{k}\leq \sup_{k^{\prime
}}\mu _{k^{\prime }}$ for all $k$. Thus, $\lambda _{k}\leq \sup_{k^{\prime
}}\mu _{k^{\prime }}+c$ for all $k$. We obtain $\sup_{k}\lambda _{k}<\infty $%
.}

\subsubsection{ \Cref{icstar} at $t=0$.}

We want to show that $(x^{\ast },y^{\ast },z^{\ast })$ which induces $%
p^{\ast }$, the optimal solution of $[P^{\prime }]$ together with FCFS and
the no-information policy satisfies \Cref{icstar} at $t=0$. If $x^{\ast }_0=0$ then this holds trivially. So in the sequel, we assume that $x^{\ast }_0>0$. Let $(\gamma_{k,\ell}^{0,T})$ be the belief that the policy induces for an agent who enters the queue during $[0,T]$. Letting
$$U^T:= \sum_{k}\gamma_{k,k}^{0,T}(V\cdot \sigma
_{k,k}-C\cdot \tau _{k,k}),$$
we need to prove that
$$U^{\infty}:=\liminf_{T\to \infty} U^T\ge 0.$$

Consider the Markov chain induced by $(x^{\ast },y^{\ast },z^{\ast })$. Again, the associated time average ${p}^T=({p}^T_k)_{k\in \mathbb{Z}_+}$, where ${p}^T_k=\int_0^T1_{\{A(t)-D(t)=k\}}dt/T$, converges weakly to $p^*$ by the Ergodic Theorem for continuous-time Markov chains.

Using an argument as before, we have:
\begin{align*}
  U^T
= & \,  \frac{\mathbb{E}\left[ S(T)\right] V-C\mathbb{E}\left[
 \int_{0}^{T}(A(t)-D(t))dt\right] +\mathbb{E}[(A(T)-D(T))]U^T}{\E[ A(T)]}\\
  =  & \,  \frac{T}{\E[ A(T)] }  \frac{\mathbb{E}\left[ S(T)\right] V-C\mathbb{E}\left[
 \int_{0}^{T}(A(t)-D(t))dt\right]+ \mathbb{E}[(A(T)-D(T))] U^T}{T}\\
   = & \,   \frac{\sum_{k=1}^{\infty}p_{k}^{T}(V\mu _{k}-kC)}{\sum_{k=0}^{\infty} p_k^T  \lambda_k x^*_k}   +\frac{ U^T \sum_{k=1}^{\infty}k p_k^T}{T \sum_{k=0}^{\infty} p_k^T  \lambda_k x^*_k},
\end{align*}
which yields
\begin{align*}
  U^T
= &  \, \frac{\sum_{k=1}^{\infty}p_{k}^{T}(V\mu _{k}-kC)}{\sum_{k=0}^{\infty} p_k^T  \lambda_k x^*_k}  \left /\left [1- \frac{ \sum_{k=1}^{\infty}k p_k^T}{T \sum_{k=0}^{\infty} p_k^T  \lambda_k x^*_k}\right]. \right.
\end{align*}
We first argue that the denominator converges to one as $T\to\infty$.  This is because
\begin{align*}
\frac{ \sum_{k=1}^{\infty}k p_k^T}{T \sum_{k=0}^{\infty} p_k^T  \lambda_k x^*_k}\le & \, \frac{ \sum_{k=1}^{\infty}k p_k^*}{T \sum_{k=0}^{\infty} p_k^T  \lambda_k x^*_k}
\le  \frac{ \sum_{k=1}^{\infty}p_k^* \mu_k V}{C T \sum_{k=0}^{\infty} p_k^T  \lambda_k x^*_k} \to 0 \, \mbox{ as } T\to \infty,
\end{align*}
where the first inequality is from the fact that $p^*$ stochastically dominates $p^T$ (see Footnotes \ref{fn:SD} and \ref{fn:SD2}) and the second inequality is from \Cref{IR}, and the convergence follows from $\lim_{T\to \infty}\sum_{k=0}^{\infty} p_k^T  \lambda_k x^*_k =\sum_{k=0}^{\infty} p_k^*  \lambda_k x^*_0>0 $ (recall that $x^*_0>0$).

We next argue that
$$\liminf_{T\to \infty} \frac{\sum_{k=1}^{\infty}p_{k}^{T}(V\mu _{k}-kC)}{\sum_{k=0}^{\infty} p_k^T  \lambda_k x^*_k} = \frac{\sum_{k=1}^{\infty}p_{k}^{*}(V\mu_{k}-kC)}{\sum_{k=0}^{\infty} p_k^*  \lambda_k x^*_k}.$$
This follows from the fact that
$\sum_{k=1}^{\infty}p_{k}^{T}(V\mu _{k}-kC)$ converges to $\sum_{k=1}^{\infty}p_{k}^{*}(V\mu_{k}-kC)$ as $T\to\infty$.\footnote{Recall from the previous section that the term $C\sum_{k}p_{k}^{T} k$ converges to $C\sum_{k}p_{k}^{*}k$.}

Combining the arguments,
$$U^{\infty}=\liminf_{T\to \infty} U^T=\frac{\sum_{k=1}^{\infty}p_{k}^{*}(V\mu_{k}-kC)}{\sum_{k=0}^{\infty} p_k^*  \lambda_k x^*_k}\ge 0,$$
where the inequality follows from \Cref{IR}.  Hence, we have proven that \Cref{icstar} holds at $t=0$.

\subsubsection{\cref{icstar} at $t\geq 0$.}

We know that \cref{icstar} at $t=0$ holds.\ So we need to show that \cref{icstar}
holds at $t>0$. One can obtain the system of ODEs governing the evolution of
agents' beliefs following the very same argument as in \Cref{online_app:infty} (where it
is assumed that $K^{\ast }=\infty )$:%
\begin{equation*}
\dot{\gamma}_{k}^{t}=-\mu _{k}\gamma _{k}^{t}+\mu _{k}\gamma
_{k+1}^{t}+\gamma _{k}^{t}\left[ \sum_{i=1}^{K^{\ast }}\gamma _{i}^{t}(\mu
_{i}-\mu _{i-1})\right]
\end{equation*}%
for all $k=0,...,K^{\ast }$. Using the very same arguments as in \Cref{online_app:infty} (when $K^{\ast }=\infty $), one can show that the system satisfies the
conditions to apply Gr\"{o}nwall's inequality.
In addition, we can show that $\gamma^{0,T}$ converges to $\gamma ^{0}$ in $\ell^1-$norm. To see this, we need to show that $\gamma ^{0,T}\rightarrow \gamma ^{0}$ implies that $\left\Vert
\gamma ^{0,T}-\gamma ^{0}\right\Vert =\dsum\limits_{k=0}^{\infty }\left\vert
\gamma _{k}^{0,T}-\gamma _{k}^{0}\right\vert \rightarrow 0$. Fix $%
\varepsilon >0$, we show that
\begin{equation*}
\left\Vert \gamma ^{0,T}-\gamma ^{0}\right\Vert <\varepsilon
\end{equation*}%
for $T$ large enough. Note first that there is a finite $K$ large
enough so that $\dsum\limits_{k=K}^{\infty }\gamma _{k}^{0}<\frac{%
\varepsilon }{4}$. By Portmanteau Theorem, $\dsum\limits_{k=K}^{\infty
}\gamma _{k}^{0,T}\rightarrow \dsum\limits_{k=K}^{\infty }\gamma _{k}^{0}$. Hence, for
any $T$ large enough, $\dsum\limits_{k=K}^{\infty }\gamma _{k}^{0,T}<\frac{%
\varepsilon }{4}$. Given this, we have
\begin{eqnarray*}
\left\Vert \gamma ^{0,T}-\gamma ^{0}\right\Vert
&=&\dsum\limits_{k=0}^{\infty }\left\vert \gamma _{k}^{0,T}-\gamma
_{k}^{0}\right\vert  \\
&=&\dsum\limits_{k=0}^{K-1}\left\vert \gamma _{k}^{0,T}-\gamma
_{k}^{0}\right\vert +\dsum\limits_{k=K}^{\infty }\left\vert \gamma
_{k}^{0,T}-\gamma _{k}^{0}\right\vert  \\
&\leq &\dsum\limits_{k=0}^{K-1}\left\vert \gamma _{k}^{0,T}-\gamma
_{k}^{0}\right\vert +\dsum\limits_{k=K}^{\infty }\left\vert \gamma
_{k}^{0,T}\right\vert +\dsum\limits_{k=K}^{\infty }\left\vert \gamma
_{k}^{0}\right\vert  \\
&<&\dsum\limits_{k=0}^{K-1}\left\vert \gamma _{k}^{0,T}-\gamma
_{k}^{0}\right\vert +\frac{\varepsilon }{2}<\varepsilon
\end{eqnarray*}%
for $T$ large enough. The first inequality is by the triangular inequality
while the last inequality holds because for $T$ large enough, $%
\dsum\limits_{k=0}^{K-1}\left\vert \gamma _{k}^{0,T}-\gamma
_{k}^{0}\right\vert <\frac{\varepsilon }{2}$ since $\gamma ^{0,T}\rightarrow
\gamma ^{0}$.

Thus, we can apply Gr\"{o}nwall's inequality to  obtain $\gamma ^{t,T}\rightarrow \gamma
^{t}$. Now, since for $\alpha >0$, $K^{\ast }$ is finite (see next subsection). In
addition, the support of $\gamma ^{t,T}$ must be included in the support of $%
p^{T}$, hence in $\{0,...,.K^{\ast }\}$. Thus, the function $k\longmapsto
V\cdot \sigma _{k,\ell }-C\cdot \tau _{k,\ell }$ is (lower and upper)
bounded over $\{0,...,K^{\ast }\}$ and so by definition of weak convergence
of measures, we obtain
\begin{equation*}
\lim_{T\rightarrow \infty }\sum_{\ell }\gamma _{k,\ell }^{t,T}(V\sigma
_{k,\ell }-C\cdot \tau _{k,\ell })=\sum_{k,\ell }\gamma _{k,\ell
}^{t}(V\cdot \sigma _{k,\ell }-C\cdot \tau _{k,\ell })\geq 0
\end{equation*}%
where the inequality comes from \Cref{thm:dyn-fcfs} in the paper which
states that \cref{IC} holds for all $t\geq 0$ in the problem $[P]$.

\subsection{Proof of finite $K^{\ast }$ when $\protect\alpha >0$}

In this section, we show that the maximum queue length at the optimal policy
is finite whenever $\alpha >0$. In the sequel, given $K\in \mathbb{Z}%
_{+}\cup \{\infty \}$ and $x\in (0,1]$, we consider the probability
distribution over $\mathbb{Z}_{+}$ defined by%
\begin{equation}
p_{k}=p_{0}\dprod\limits_{\ell =1}^{k}\frac{\lambda _{\ell -1}}{\mu _{\ell }}
\label{eq: p_k}
\end{equation}%
for any $k=1,...,K-1\ $and if $K<\infty $%
\begin{equation*}
p_{K}=p_{0}\dprod\limits_{\ell =1}^{K-1}\frac{\lambda _{\ell -1}}{\mu _{\ell
}}\frac{\lambda _{K-1}x}{\mu _{K}}
\end{equation*}%
while $p_{0}$ is defined to ensure that the total mass of $p_{k}$'s is equal
to $1$. Given $K\in \mathbb{Z}_{+}\cup \{\infty \}$ and $x$, whenever
well-defined, such a distribution will be denoted by $p(K,x)$.\footnote{%
We note that for $K=\infty $, $p_{0}$ may not always be well-defined.} By %
\cref{thm:cutoff} we know that, when $\mu $ is regular, there is an optimal
solution $p^{\ast }$ of $[P^{\prime }]$ which can be implemented by a cutoff
policy. Recall that this implies that \cref{B'} binds for all $k=0,...,K-1$
and holds with weak inequality for $k=K-1$ where $K$ is the largest state in
the support of $p^{\ast }$. A simple inductive argument yields that $p^{\ast
}$ is equal to $p(K^{\ast },x)$ for some $K\in \mathbb{Z}_{+}\cup \{\infty
\} $ and $x^{\ast }\in (0,1]$. We say $p(K^{\ast },x^{\ast })$ is optimal in
that case.

Central to our analysis is the following function $\psi :\mathbb{Z}_{+}\cup
\{\infty \}\rightarrow \mathbb{R}$ defined as
\begin{equation*}
\psi (K)\triangleq \sum_{k=1}^{K}\dprod\limits_{\ell =1}^{k}\left( \frac{%
\lambda _{\ell -1}}{\mu _{\ell }}\right) \left[ \mu _{k}V-Ck\right] \text{.}
\end{equation*}
The following lemma states that this function is single-peaked under the regularity of the service process.

\begin{lemma}
\label{lem: w single peaked} Assume $\mu $ is regular. Function $\psi $ is
single-peaked.
\end{lemma}

\begin{proof}
Single-peakedness of $\psi$ is equivalent to: $\psi (K-1)\le(<)\psi (K)$ implies $\psi
(K^{\prime }-1)\le(<)\psi (K^{\prime })$ for all $K^{\prime }\leq K$.  We just prove the strict inequality version; the argument for the weak inequality version is identical.  Assume that $\psi (K-1)<\psi (K)
$. This is equivalent to
\begin{equation*}
\dprod\limits_{\ell =1}^{K}\left( \frac{\lambda _{\ell -1}}{\mu _{\ell }}%
\right) \left[ \mu _{K}V-CK\right] >0
\end{equation*}%
which in turn is equivalent to $\mu _{K}V-CK>0$. We claim that this implies $%
\mu _{K^{\prime }}V-CK^{\prime }>0$ for any $K^{\prime }\leq K$---which by
the above reasoning will imply $\psi (K^{\prime }-1)<\psi (K^{\prime })$.

First, by the regularity of $\mu $,
recall \cref{lem: single-peakedness} (with $\alpha =1$ and $\xi =0$) which proves that $h(K)\triangleq \mu _{K}V-CK$ is single-peaked.
Now, observe that function $h$ is equal to $0$ at $K=0$. Hence, if $h(K)=\mu _{K}V-CK>0$, given single-peakedness of $h$, we must have
$h(K^{\prime })=\mu _{K^{\prime }}V-CK^{\prime }>0$ for any $K^{\prime
}=1,...,K$, as claimed.
\end{proof}

We can now state our main proposition in this section.

\begin{proposition}
\label{prop: main 2} Assume $\mu $ is regular and $\alpha \in (0,1]$. Let $%
p(K^{\ast },x^{\ast })$ be the optimal policy. We have $K^{\ast }<\infty $.
\end{proposition}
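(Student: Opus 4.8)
\textbf{Proof plan for \Cref{prop: main 2}.}

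The plan is to show that the optimal maximum queue length $K^{\ast}$ cannot exceed a certain threshold $\bar{K}_{2}$ determined by the sign of the objective's per-state coefficient, and that this threshold is finite whenever $\alpha \in (0,1]$. First I would introduce the function $g:\mathbb{Z}_{+}\rightarrow \mathbb{R}$ defined by $g(K) \triangleq (1-\alpha)\mu_{K}R + \alpha[\mu_{K}V - CK]$, which is precisely the coefficient of $p_{K}$ in the objective of $[P^{\prime}]$ (it equals $f(K;0)$ in the notation of \Cref{lem: single-peakedness}). By \Cref{lem: single-peakedness} with $\xi = 0$, regularity of $\mu$ implies $g$ is single-peaked. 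Since $g(0) = 0$ (as $\mu_{0} = 0$), single-peakedness forces $g(K) \geq 0$ if and only if $K \leq \bar{K}_{2}$, where $\bar{K}_{2} \triangleq \sup\{K' : g(K') \geq 0\}$.

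Next I would prove the key intermediate claim: $K^{\ast} \leq \bar{K}_{2}$. The argument is a swap/truncation argument analogous to Case 2 in the proof of \Cref{prop:cutoff finite dim}. Suppose for contradiction $K^{\ast} > \bar{K}_{2}$. Writing the objective value at $p(K^{\ast},x^{\ast})$ in the explicit product form $\sum_{k=1}^{K^{\ast}} p_{k}(K^{\ast},x^{\ast}) g(k)$, I would compare with the distribution $p(\bar{K}_{2},1)$, which deletes all mass on states $k > \bar{K}_{2}$ (where $g(k) < 0$ by single-peakedness) and, via renormalization, increases the mass on every state $k \leq \bar{K}_{2}$ (where $g(k) \geq 0$); this strictly raises the objective. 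It then remains to check that $p(\bar{K}_{2},1)$ is feasible for $[P^{\prime}]$, i.e.\ satisfies \cref{IR}. For this I would use that \cref{IR} holds at $p(K^{\ast},x^{\ast})$, which — expanding in the product form with $h(k) = \mu_{k}V - Ck$ — gives $\psi(K^{\ast}) \geq 0$ when $h(K^{\ast}) \geq 0$ and $\psi(K^{\ast}-1) \geq 0$ when $h(K^{\ast}) \leq 0$; by \Cref{lem: w single peaked} ($\psi$ single-peaked) together with $\psi(0) = 0$, this yields $\psi(K') \geq 0$ for all $K' < K^{\ast}$, in particular $\psi(\bar{K}_{2}) \geq 0$, which is exactly \cref{IR} for $p(\bar{K}_{2},1)$. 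This contradicts optimality of $p(K^{\ast},x^{\ast})$.

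Finally, I would conclude by showing $\bar{K}_{2} < \infty$ whenever $\alpha \in (0,1]$. Since $\mu_{k}$ is uniformly bounded (by assumption $\sup_{k}\mu_{k} < \infty$) and $\alpha > 0$, the term $\alpha(\mu_{K}V - CK) \to -\infty$ as $K \to \infty$ (the $-\alpha CK$ term dominates), while $(1-\alpha)\mu_{K}R$ stays bounded; hence $g(K) \to -\infty$, so $g$ is eventually negative and $\bar{K}_{2}$ is finite. Combined with $K^{\ast} \leq \bar{K}_{2}$, this gives $K^{\ast} < \infty$.

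The main obstacle is the swap argument establishing $K^{\ast} \leq \bar{K}_{2}$ — specifically, being careful that the renormalized truncation $p(\bar{K}_{2},1)$ genuinely increases the objective strictly (one needs $g(1) > 0$, or more carefully that there is at least one state below $\bar{K}_{2}$ with strictly positive coefficient on which mass is strictly increased, which follows from \Cref{lem: charact stoch dom} with the observation $\mu_{1}R > \mu_{0}R = 0$ when $\alpha < 1$, and directly from $h$-single-peakedness when $\alpha = 1$), and that it remains feasible. The finiteness of $\bar{K}_{2}$ and the single-peakedness inputs are routine given the lemmas already proved.
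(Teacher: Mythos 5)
Your proposal is correct and follows the paper's own proof essentially step for step: you introduce the same coefficient function $g(K)=(1-\alpha)\mu_K R+\alpha[\mu_K V-CK]$, invoke \cref{lem: single-peakedness} with $\xi=0$ for its single-peakedness, define $\bar{K}_2$ identically, establish $K^*\le\bar{K}_2$ by truncating to $p(\bar{K}_2,1)$ and verifying \cref{IR} via single-peakedness of $\psi$ (\cref{lem: w single peaked}), and conclude $\bar{K}_2<\infty$ from $\alpha>0$ and $\sup_k\mu_k<\infty$. Your explicit attention to the strictness of the objective improvement is a worthwhile detail the paper's proof of \cref{lem: upper bound on queue} states more tersely, but the method is the same.
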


Define function $g:\mathbb{Z}_{+}\rightarrow \mathbb{R}$ as
\begin{equation*}
g(K)\triangleq \left( 1-\alpha \right) \mu _{K}R+\alpha \left[ \mu _{K}V-CK%
\right] \text{.}
\end{equation*}

\cref{lem: single-peakedness} (with $\xi =0$) implies that function $g$ is
single-peaked when $\mu $ is regular. Define $\bar{K}_{2}\triangleq \sup
\{K^{\prime }:g(K^{\prime })\geq 0\}$, where $\bar{K}_{2} \triangleq \infty $
whenever $g(K^{\prime })\geq 0$ for all $K^{\prime }\in \mathbb{Z}_{+}$.
Observe that, when $\mu $ is regular, because $g$ is single-peaked (and
since $g(0)=0$), $g(K)\geq 0$ if and only if $K\leq \bar{K}_{2}$. We state
the following result.

\begin{lemma}
\label{lem: upper bound on queue} Assume $\mu $ is regular. Let $p(K^*,x^*)$
be the optimal policy. We have $K^*\leq \bar{K}_{2}$.
\end{lemma}

\begin{proof}
The value of the objective of $[P^{\prime }]$ is given by
\begin{equation*}
\sum_{k=1}^{K^*}p_{k}(K^*,x^*)g(k)=p_{0}(K^*,x^*)\sum_{k=1}^{K^*-1}\dprod\limits_{\ell
=1}^{k}\frac{\lambda _{\ell -1}}{\mu _{\ell }}g(k)+p_{0}(K^*,x^*)\dprod\limits_{%
\ell =1}^{K^*-1}\frac{\lambda _{\ell -1}}{\mu _{\ell }}\frac{\lambda _{K^*-1}x^*}{%
\mu _{K^*}}g(K^*)\text{.}
\end{equation*}%
By way of contradiction, assume that $K^*>\bar{K}_{2}$. As we already
stated, by definition of $\bar{K}_{2}$ and single-peakedness of $g$, it
must be that $g(K^{\prime })<0$ for $K^{\prime }=\bar{K}_{2}+1,...,K^*$.
Consider the distribution $p(\bar{K}_{2},1)$. Compared to $p(K^*,x^*)$, this
distribution removes all weight on negative values and, for each positive
value, increases its weight. This must strictly increase the value of the
objective.

Now, it remains to show that \cref{IR} is satisfied under $p(\bar{K}_{2},1)$.
Since $p(K^*,x^*)$ is optimal, \cref{IR} holds at $p(K^*,x^*)$:
\begin{equation*}
p_{0}(K^*,x^*)\sum_{k=1}^{K^*-1}\dprod\limits_{\ell =1}^{k}\left( \frac{\lambda
_{\ell -1}}{\mu _{\ell }}\right) h(k)+p_{0}(K^*,x^*)\dprod\limits_{\ell
=1}^{K^*-1}\left( \frac{\lambda _{\ell -1}}{\mu _{\ell }}\right) \left( \frac{%
\lambda _{K^*-1}x^*}{\mu _{K^*}}\right) h (K^*)\geq 0
\end{equation*}%
where $h(k) \triangleq \mu_k V-Ck$.
From this, it follows that $\psi (K^*)\geq 0$ if $h (K^*)\geq 0$ and that $%
\psi (K^*-1)\geq 0$ if $h (K^*)\leq 0$. Since, by \cref{lem: w single
peaked}, $\psi $ is single-peaked and $\psi (0)=0$, this implies that $\psi
(K^{\prime })\geq 0$ for any $K^{\prime }<K^*$. In particular, $\psi
(\bar{K}_{2})\geq 0$. This implies that \cref{IR} holds at $p(\bar{K}_{2},1)$. This contradicts our assumption that $p(K^*,x^*)$ is optimal.%
\end{proof}

\begin{proof}[Completion of the proof of \cref{prop: main 2}]
Given \cref{lem: upper bound on queue}, it is enough to prove that $%
\bar{K}_{2}<\infty $. This follows since $\alpha \in (0,1]$ and $\mu _{k}$ is uniformly
bounded, which imply $g(K)\to -\infty $ a $K \to \infty$. \end{proof}

\section{Formal Arguments for Dynamic Matching with Overloaded Lists}

\label{app: formal arg discussion sec}

In this section, we explain how our results can be obtained in the setting
with overloaded waiting-lists as proposed by \cite{leshno2019dynamic}.
Consider an infinite discrete time horizon model where at each period a
number of agents are waiting on a wait-list. Each period $t$ begins with
arrival of an item which can be of two types, either $A$ with probability $%
\mu _{A}$ or $B$ with probability $\mu _{B}=1-\mu _{A}$ independently across
periods. Period $t$ ends when the item is assigned to an agent. Agents can
be either type $\alpha $ or $\beta $ each with probability $\mu _{\alpha }$
and $\mu _{\beta }=1-\mu _{\alpha }$. Type $\alpha $ agents prefer $A$ items
to $B$ items while type $\beta $ agents prefer $B$ over $A$. Agents'
non-preferred item is referred to as a mismatched item. As in our main
setting, all agents are infinitely lived, risk neutral, and incur a common
linear waiting cost $C>0$ per period until they are assigned. Opting out of
the waiting-list is assumed to be equivalent to never getting assigned and
entails a utility of $-\infty $. An agent's value of being assigned an item
is $V$ if the item is his most preferred and value $0$ if assigned a
mismatched item. As we already mentioned in the main text, since agents
prefer to receive a mismatched item over never being assigned, taking a
mismatched item for an agent could simply be intrepreted as choosing an
outside option.

A mechanism decides at each date, to which agent waiting on the list is
assigned the arriving item. As in \cite{leshno2019dynamic}, we restrict our
attention to \textit{buffer-queue mechanisms} where a separate buffer-queue
is held for each item. $A$ ($B$) items arriving are assigned to the agents
waiting on the $A$ ($B$) buffer-queue if it is non-empty and to agents on
the waiting-list otherwise. Those, agents from the wait-list can either
accept the $A$ ($B$) item or refuse. If they refuse a $A$ ($B$) item, these
agents are identified as $\beta $ ($\alpha $) type agents and enter the $B$ (%
$A$) buffer-queue. The buffer-queue mechanism specifies the queueing
discipline (i.e., how to prioritize agents within the buffer-queue) and so
uses positions within the buffer-queue to decide who gets assigned the
arriving item. In addition, a buffer-queue mechanism specifies the maximum
number of agents in each buffer queue, say $K_{A}$ ($K_{B}$) for the $A$ ($B$%
) buffer-queue. Any buffer-queue mechanism induces a stochastic process over
the number of agents in each buffer-queue. Note that at each date, one of
the two buffer-queues must be empty. Hence, we can think of the state space
as $\{-K_{B},...,-1,0,1,...,K_{A}\}$ where $k\geq 0$ means that the $B$
buffer-queue is empty and that there are $k$ agents in the A buffer-queue.
The invariant distribution of this process is denoted by $%
p=(p_{-K_{B}},...,p_{-1},p_{0},p_{1},...,p_{K_{A}})$\ and characterized in
\cite{leshno2019dynamic}. The stochastic processes induced by buffer-queue
mechanisms are not birth-death processes. Hence, our results do not directly
apply to the environment under study. However, in the sequel, we explain how
these can be adapted.

The social planner's goal is to allocate items to maximize total utility. We
follow \cite{leshno2019dynamic} and assume that the waiting-list is
\textquotedblleft overloaded\textquotedblright , i.e., no mechanism will
ever exhaust the waiting-list. In this context, any allocation reduces total
waiting costs by the same amount. Hence, the social planner can ignore
waiting costs when comparing different allocations and his goal boils down
to minimizing misallocations under incentive constraints.

So far the setup is the same as \cite{leshno2019dynamic}. However, \cite%
{leshno2019dynamic} assumes that upon entering a buffer-queue, agents are
informed of the length of the buffer queue and, hence, perfectly know their
positions in that queue at all subsequent periods. We, however, depart from
the full information rule. We allow similar information policy as in
previous sections and, hence, impose similar obedience constraint, i.e., we
require that conditional on the information released to agents, any agent
recommended to join a buffer-queue or to stay in that queue must have an
incentive to follow that recommendation. Consistently with what we proved in
previous sections, we will show that FCFS with no information is optimal.
Under FCFS with no information, the obedience constraints for the two
buffer-queues can be written as follows. For agents recommended to enter the
$A$ buffer-queue:%
\begin{equation*}
V-C\sum_{\ell =1}^{K_{A}}\tilde{\gamma}_{\ell }^{t}\tau _{\ell }^{\ast }\geq
0,\forall t\geq 0
\end{equation*}%
where $\tilde{\gamma}_{\ell }^{t}$ stands for an agent's belief on having
position $\ell $ in the $A$ buffer-queue after spending $t$ periods on this
buffer-queue while $\tau _{\ell }^{\ast }$ is the expected waiting time
induced by the policy of an agent having position $\ell $ in the $A$
buffer-queue. A similar condition applies to agents recommended to join the $%
B$ buffer-queue.

Intuitively, if one wants to minimize misallocations, the problem of
deriving the optimal buffer-queue mechanism reduces to finding the maximal
size of an incentive-compatible buffer-queue mechanism. These maximal sizes $%
K_{A}^{\ast }$ and $K_{B}^{\ast }$, for buffer-queues $A$ and $B$
respectively, are identified in \cite{leshno2019dynamic}.\ In the sequel, we
show that under FCFS with no information, when the size of the buffer-queues
are set to these maximal sizes, obedience constraints are satisfied at all $%
t\geq 0$.

\begin{theorem}
\label{thm:leshno} Assume that the maximal sizes of buffer-queues are given
by $K_{A}^{\ast }$ and $K_{B}^{\ast }$. FCFS with no information satisfies
the obedience constraints.
\end{theorem}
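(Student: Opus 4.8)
The plan is to reduce \Cref{thm:leshno} to the machinery already developed for the birth-and-death setting, exploiting the fact that each buffer-queue, viewed in isolation and conditional on being the ``active'' one, behaves like an $M/M/1$-type queue with state-dependent rates. First I would set up the effective arrival/service processes for, say, the $A$ buffer-queue: when there are $k\ge 1$ agents in the $A$ buffer-queue (so the $B$ buffer-queue is empty), an $A$ item arrives with probability $\mu_A$ and is assigned to the agent at the head of the queue (service), while a $B$ item arriving gets refused by the wait-list agent, who is revealed to be type $\beta$ and enters the $A$ buffer-queue (arrival); thus the $A$ buffer-queue grows at rate $\lambda^A_k=\mu_B$ and shrinks at rate $\mu^A_k=\mu_A$, for all $k\ge 1$. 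Crucially these effective rates are \emph{constant} in $k$ on the relevant range, so the ``primitive process'' $(\lambda^A,\mu^A)$ is trivially regular: $\mu^A_k-\mu^A_{k-1}=0$ is nonincreasing, and $\lambda^A_k-\lambda^A_{k-1}=0\le 0=\mu^A_k-\mu^A_{k-1}$. The same holds for the $B$ buffer-queue with the roles of $\mu_A,\mu_B$ swapped.

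Next I would argue that, once we fix the cutoff $K_A^*$ (resp.\ $K_B^*$) and FCFS with no information, the conditional belief dynamics of an agent inside the $A$ buffer-queue about her position $\ell$ are governed by exactly the same recursion as \cref{Eq: cond beliefs}/\cref{eq:lr} in \Cref{app:dyn-incentives}, with $\mu_\ell$ there replaced by $\ell\mapsto \ell\cdot\mu_A$ (since under FCFS the first $\ell$ agents are collectively served at the constant rate $\mu_A$, i.e.\ $\sum_{i=1}^\ell q^*_i=\mu_A$ is \emph{not} what happens here — rather only the head agent is served, so $\sum_{i=1}^\ell q^*_i=\mu_A$ only for $\ell=1$; more precisely the effective service process is $\mu^A_k\equiv \mu_A$ so FCFS gives $q^*_1=\mu_A$, $q^*_\ell=0$ for $\ell\ge 2$, hence $\sum_{i=1}^\ell q^*_i=\mu_A$ for every $\ell\ge 1$). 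With that identification, \Cref{lem:waiting-time-FCFS} gives $\tau^*_\ell=\ell/\mu_A$, nondecreasing (indeed strictly increasing) in $\ell$, and \Cref{lem:dyn-incentives} applies verbatim to conclude that the likelihood ratios $r^t_\ell=\tilde\gamma^t_\ell/\tilde\gamma^t_{\ell-1}$ are nonincreasing in $t$, because the regularity hypothesis it needs holds for the constant effective process. Then, exactly as in the proof of \Cref{thm:dyn-fcfs} in \Cref{app:dyn-incentives}, the monotone likelihood ratios plus monotone $\tau^*_\ell$ give
\begin{equation*}
\sum_{\ell=1}^{K_A^*}\tilde\gamma^t_\ell\,\tau^*_\ell\le \sum_{\ell=1}^{K_A^*}\tilde\gamma^0_\ell\,\tau^*_\ell,
\end{equation*}
so the obedience constraint at $t>0$ follows from the one at $t=0$.

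The remaining ingredient is the base case $t=0$, i.e.\ that FCFS with no information makes an agent willing to \emph{enter} the $A$ buffer-queue of maximal length $K_A^*$. Here I would invoke the characterization of $K_A^*$ from \cite{leshno2019dynamic}: $K_A^*$ is by definition the largest buffer size for which \emph{some} incentive-compatible mechanism exists, and under full information the binding constraint is the one faced by the agent who enters last, $V-C\cdot K_A^*/\mu_A\ge 0$. The same argument as in \Cref{lem:ic0-fcfs} (online appendix \Cref{online_app:IC_0}) shows that under no information the entry constraint is the \emph{average} of the position-$\ell$ constraints weighted by the entry-conditional prior $\tilde\gamma^0_\ell$, and since $\tau^*_\ell=\ell/\mu_A$ is a convex (in fact linear) increasing function of $\ell$, this average constraint is implied by — and is in fact weaker than — the worst-case constraint $V\ge C K_A^*/\mu_A$ that defines $K_A^*$; alternatively one shows directly that the no-information entry payoff equals $\bigl(\sum_i p_i\mu_A\bigr)^{-1}\sum_k p_k(\mu_A V - kC)\ge 0$ whenever the corresponding (IR)-type inequality holds at $p=p(K_A^*,\cdot)$, which it does by the definition of $K_A^*$. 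An entirely symmetric argument handles the $B$ buffer-queue.

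The main obstacle I anticipate is \emph{not} the belief dynamics — those transfer cleanly because the effective rates are constant — but rather making precise the claim that the $A$- and $B$-buffer-queue subproblems genuinely decouple, i.e.\ that an agent in the $A$ buffer-queue never needs to reason about excursions into $B$-buffer-queue states. Because the joint process on $\{-K_B^*,\dots,K_A^*\}$ is \emph{not} a birth-and-death process (transitions can jump across $0$ when the sign of the imbalance flips), one must check that, conditional on entering the $A$ buffer-queue, the agent's residual-waiting-time calculation only ever involves states $1,\dots,K_A^*$: this holds because once an agent is in the $A$ buffer-queue she is served (and departs) strictly before the queue can empty past her, so her sojourn is completed within a single ``active-$A$'' excursion and the state never crosses $0$ during her stay. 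Establishing this cleanly — ideally by the same discrete-time first-step analysis used for $\tau^*_\ell$ — and then confirming that the entry-time prior $\tilde\gamma^0$ is the one induced by the invariant distribution restricted to the $A$-side via the discrete-time analogue of PASTA, is where the real care is needed; everything downstream is a direct appeal to \Cref{lem:waiting-time-FCFS}, \Cref{lem:dyn-incentives}, and the argument of \Cref{thm:dyn-fcfs}.
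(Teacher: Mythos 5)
There is a genuine gap in the proposed reduction, and it is not the one you flagged at the end.

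Your reduction treats the $A$-buffer-queue as a birth--death process with per-period effective rates $\lambda^A_k=\mu_B$, $\mu^A_k=\mu_A$, and then plans to read off the entry-time likelihood ratio $r^0_\ell = \tilde\lambda_{\ell-1}/\mu_{\ell-1} = \mu_B/\mu_A$ from the main-text formula. But the $A$-buffer-queue is \emph{not} a birth--death chain: when a $B$ item arrives (probability $\mu_B$), it is offered sequentially to main-list agents until a $\beta$ type accepts, so a single period can push a geometrically distributed number of newly revealed $\alpha$ types into the $A$-buffer-queue. Leshno's Lemma 2 gives $p_\ell/p_{\ell-1}=\mu_\alpha/\mu_A$, not $\mu_B/\mu_A$, and the entry-conditional prior satisfies $r^0_\ell = \mu_\alpha/\mu_A$ (this is \Cref{lem:constant_lr}, whose derivation requires conditioning on the main-queue draw and the types of the agents ahead, not just the buffer-queue transition kernel). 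Since $\mu_\alpha/\mu_A$ and $\mu_B/\mu_A$ generally differ, your boundary condition is wrong, and the $t=0$ obedience check would not recover Leshno's formula for $K_A^*$.

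The $t=0$ step also contains an inverted claim. You write that the no-information entry constraint is ``implied by --- and weaker than --- the worst-case constraint $V-CK_A^*/\mu_A\ge 0$ that defines $K_A^*$.'' That inequality is the full-information FCFS constraint for the last entrant, and it \emph{fails} at $K_A^*$: Leshno's $K_A^*$ is defined as the maximal buffer size achievable by \emph{any} incentive-compatible mechanism (e.g.\ LIEW), which is strictly larger than the FCFS/full-information threshold $\lfloor V\mu_A/C\rfloor$ in general (e.g.\ $\lfloor 2\mu_AV/C\rfloor - 1$ when $\mu_\alpha=\mu_A$). If your implication were correct, FCFS under full information would already achieve $K_A^*$ and the theorem would be vacuous. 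What actually happens is an \emph{equality}: once one plugs the correct $r^0_\ell=\mu_\alpha/\mu_A$ into $\sum_\ell \tilde\gamma^0_\ell\,\tau^*_\ell$ with $\tau^*_\ell=\ell/\mu_A$, the resulting expression matches Leshno's defining formula for $K_A^*$ term by term, so the $t=0$ constraint holds precisely because $K\le K_A^*$, not because of any dominance by a stronger full-information inequality.

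Finally, a smaller but real mismatch: invoking \Cref{lem:dyn-incentives} ``verbatim'' applies a continuous-time ODE result to a discrete-time model. The paper instead derives the discrete one-step map
\begin{equation*}
r_{\ell}^{t+1}=\frac{\mu_B+\mu_A\,r_{\ell+1}^{t}}{\mu_B/r_{\ell}^{t}+\mu_A},
\end{equation*}
shows it is monotone in the product order, and checks the base step $r^1_\ell\le r^0_\ell$ directly from $r^0_\ell$ being constant in $\ell$. Your instinct that constancy of $r^0$ is the key input is right, but you should argue through the discrete recursion rather than the ODE system. Your caveat about sojourns never crossing state $0$ under FCFS is correct and is indeed what makes the one-sided analysis legitimate.
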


In the full information context, \cite{leshno2019dynamic} proves that FCFS\
is not optimal among incentive compatible buffer-queue mechanisms and,
further, that it can be dominated by SIRO. SIRO is not incentive compatible
when the maximal sizes of buffer-queues are set to $K_{A}^{\ast }$ and $%
K_{B}^{\ast }$.\footnote{%
SIRO is not optimal in general but it is optimal within belief-free
incentive compatible mechanisms. We observe that, trivially, FCFS is
belief-free incentive compatible in \cite{leshno2019dynamic}'s terminology.}
Hence, \Cref{thm:leshno} not only shows that FCFS becomes optimal with a no
information policy but it also shows that FCFS under no information
outperforms SIRO under full information. Further, one can show that, under
the no information policy, SIRO may violate obedience constraints for $t>0$.

We prove that FCFS with these sizes of buffer-queues is incentive compatible
under the no information policy. Our argument parallels that of %
\Cref{thm:dyn-fcfs}. Indeed, we first prove that the obedience constraints
hold at $t=0$. In a second step, we show that the likelihood ratio of
beliefs about being in queue position $\ell $ cersus being in queue position
$\ell -1 $ after spending time $t$ on the queue.declines as $t$ increases,
meaning one's belief about getting served improves over time under FCFS with
no information proving that the obedience constraints hold at all $t\geq 0$.

In order to prove both of these results, the lemma below stating that the
likelihood ratio $r_{\ell }^{0}$ does not depend on $\ell $ is helpful. \cite%
{leshno2019dynamic} defines an extended Markov chain to describe the
evolution of the number of agents in the buffer-queues both across periods
and within a period and characterizes the invariant distribution of the
process. Upon entering the $A$ buffer-queue at a date $t$, an agent knows
that a $B$ item arrived and holds some beliefs over the number of agents who
are ahead of him (including those who entered before him at the current date
$t$). Conditional on a $B$ item arriving, \cite{leshno2019dynamic}'s
characterization states that the ratio of the likelihood of having $\ell $
agents over the likelihood of having $\ell -1$ agents in the $A$-queue is a
constant equal to $\frac{\mu _{\alpha }}{\mu _{A}}$. This yields the lemma
below whose proof is provided for completeness.\footnote{%
We simply focus on the likelihood ratio of beliefs of agents entering the $A$%
-buffer queue. A symmetric argument holds for agents entering the $B$
buffer-queue.}

\begin{lemma}
\label{lem:constant_lr} Under FCFS with no information, we have $r_{\ell
}^{0}=\frac{\tilde{\gamma}_{\ell }^{0}}{\tilde{\gamma}_{\ell -1}^{0}}=\frac{%
\mu _{\alpha }}{\mu _{A}}$ for all $\ell =2,...,K_{A}$.
\end{lemma}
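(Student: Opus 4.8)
\textbf{Proof plan for \Cref{lem:constant_lr}.}

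The plan is to reduce the claim to \cite{leshno2019dynamic}'s characterization of the invariant distribution of the extended Markov chain tracking the buffer-queue occupancies both across periods and within a period. First I would recall that, under FCFS with no information, an agent entering the $A$ buffer-queue at some date conditions only on the event that a $B$ item just arrived (this is precisely what triggers entry into the $A$-queue) together with the recommendation to join. Thus $\tilde{\gamma}^0_\ell$ is the probability that there are $\ell-1$ agents strictly ahead of him in the $A$-queue, conditional on a $B$ item arriving and on his being recommended to enter. The key observation is that, along the within-period extended chain of \cite{leshno2019dynamic}, the conditional distribution over the number of agents in the $A$-queue (given a $B$ arrival) has a geometric-type form: the ratio of the probability of occupancy $\ell$ to that of occupancy $\ell-1$ is the constant $\mu_\alpha/\mu_A$, reflecting that each additional agent in the $A$-queue corresponds to one extra $\alpha$-type arrival (probability $\mu_\alpha$) that did not get matched because no $A$ item showed up (each period an $A$ item arrives with probability $\mu_A$). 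I would therefore state this ratio as a direct consequence of their balance equations for the extended chain.

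The concrete steps, in order, would be: (i) write out the relevant detailed-balance / flow equations for the extended Markov chain restricted to states where the $B$-queue is empty and the $A$-queue has $\ell$ agents, following \cite{leshno2019dynamic}; (ii) solve the recursion to obtain $p_\ell / p_{\ell-1} = \mu_\alpha/\mu_A$ for the invariant occupancy distribution of the $A$-queue; (iii) use the PASTA-type argument (as in the main text, see the justification of \cref{eq:belief0}) to pass from the unconditional invariant distribution to the entrant's belief conditional on a $B$ arrival and on the join recommendation, noting that the conditioning events act multiplicatively and uniformly across adjacent occupancy levels, so the likelihood ratio is preserved; (iv) conclude $r^0_\ell = \tilde{\gamma}^0_\ell / \tilde{\gamma}^0_{\ell-1} = \mu_\alpha/\mu_A$ for every $\ell = 2,\dots,K_A$, with the symmetric statement for the $B$-queue obtained by swapping the roles of $A$ and $B$ and of $\alpha$ and $\beta$.

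The main obstacle I anticipate is step (iii): unlike the birth-and-death setting of the main text, here the underlying process is the extended two-scale chain (periods and within-period assignment events), so I need to be careful that the conditioning on "a $B$ item arrived" and on "recommended to enter" does not distort the ratio between occupancy levels $\ell$ and $\ell-1$ differently. The cleanest way around this is to observe that the recommendation to enter the $A$-queue is issued exactly when a $B$ item arrives and the $A$-queue is not full, so for all $\ell \le K_A$ the conditioning event is the same ($B$ arrival, $A$-queue below capacity) and therefore rescales $p_\ell$ and $p_{\ell-1}$ by the identical factor; the ratio $\mu_\alpha/\mu_A$ is thus inherited verbatim. Everything else is a routine transcription of the invariant-distribution formula already established in \cite{leshno2019dynamic}, so I would cite it rather than rederive it in full.
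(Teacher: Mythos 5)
Your proposal gets steps (i) and (ii) right by invoking Leshno's Lemma 2, which gives $p_{\ell}=\mu_{B}\left(\tfrac{\mu_{\alpha}}{\mu_{A}}\right)^{\ell}p_{0}$ for $\ell\ge 1$ and hence $p_{\ell}/p_{\ell-1}=\mu_{\alpha}/\mu_{A}$ for $\ell\ge 2$. But step (iii) contains a genuine gap. You assert that the entrant's belief about his own \emph{position} $\ell$ is obtained from the invariant \emph{occupancy} distribution by a single conditioning event that rescales adjacent levels identically, so that the ratio $\mu_{\alpha}/\mu_{A}$ is ``inherited verbatim.'' That is not what happens in Leshno's model, because multiple agents can enter the $A$ buffer-queue within a single period. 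When a $B$ item arrives while the $A$-queue holds $\ell'$ agents, it is offered to main-queue agents in order, and every $\alpha$-type refuser ahead of our agent also enters the $A$-queue before him. Thus position $\ell$ is achieved whenever the pre-existing occupancy is some $\ell'\le \ell$ \emph{and} the first $\ell-\ell'$ main-queue offers (including his own) all land on $\alpha$-types. The belief is therefore a convolution,
\begin{equation*}
\tilde{\gamma}^{0}_{\ell}\;\propto\;\sum_{\ell'=0}^{\ell} p_{\ell'}\,\mu_{B}\,\tfrac{1}{\mathbf{M}}\,\mu_{\alpha}^{\ell-\ell'},
\end{equation*}
not a rescaling of $p_{\ell}$ or $p_{\ell-1}$ alone, so likelihood-ratio preservation does not follow ``for free'' from uniform conditioning.

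The paper's proof does exactly the computation you skipped: it substitutes Leshno's invariant distribution into the convolution and verifies by direct algebra that the ratio comes out to $\mu_{\alpha}/\mu_{A}$. The verification is not a formal triviality; it relies on two features that happen to line up: the invariant distribution carries an extra factor $\mu_{B}$ at levels $\ell'\ge 1$ relative to $\ell'=0$, and $\mu_{A}+\mu_{B}=1$. Writing $\tilde{\gamma}^{0}_{\ell}\propto \mu_{\alpha}^{\ell}\bigl[1+\mu_{B}\sum_{\ell'=1}^{\ell}\mu_{A}^{-\ell'}\bigr]$, one finds that the bracketed term at level $\ell$ equals $\mu_{A}^{-1}$ times the bracketed term at level $\ell-1$ precisely because $\mu_{A}+\mu_{B}=1$ absorbs the boundary term. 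If you drop either the $\mu_{B}$ asymmetry at $\ell'=0$ or the identity $\mu_{A}+\mu_{B}=1$, the ratio is no longer constant, so you cannot appeal to a general conditioning-preserves-ratios principle. To repair your argument you would need to write out the convolution explicitly and carry out the cancellation, at which point you would be reproducing the paper's proof rather than shortcutting it.
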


\begin{proof}
Recall that our goal here is to show that the likelihood ratio $r^{\ell}$ of
beliefs of agents entering the $A$-buffer queue does not depend on $\ell$. A
symmetric argument clearly holds for agents entering into the $B$-buffer
queue.

Let $\mathbf{M}$ be the size of the main queue at $t=0$ (i.e., the date at
which the agent enters into the $A$ buffer-queue) and recall that there are
caps $K^{A}$ and $K^{B}$ on the $A$-buffer and $B$-buffer queues. For a
given $\alpha $-type agent, we compute the probability $\tilde{\gamma}_{\ell
}^{0}$ that he gets position $\ell $ in the $A$-buffer queue conditional on
the agent entering into the $A$-buffer queue,
\begin{equation*}
\tilde{\gamma}_{\ell }^{0}=\frac{\dsum\limits_{\ell ^{\prime }=0}^{\ell
}p_{\ell ^{\prime }}\mu _{B}\frac{1}{\mathbf{M}}\mu _{\alpha }^{\ell -\ell
^{\prime }}}{\dsum\limits_{\ell =1}^{K}\dsum\limits_{\ell ^{\prime
}=0}^{\ell }p_{\ell ^{\prime }}\mu _{B}\frac{1}{\mathbf{M}}\mu _{\alpha
}^{\ell -\ell ^{\prime }}}
\end{equation*}%
for all $\ell =1,...,K_{A}$. The probability that there are $\ell ^{\prime }$
agents in the $A$-queue at the begining of the period is $p_{\ell ^{\prime }}
$. Given that there are $\ell ^{\prime }$ agents in the $A$-queue, the event
that the agent enters in the $A$-queue and has position $\ell $ in this
buffer-queue corresponds to the joint event that (1) a $B$-item arrived
(which occurs with probability $\mu _{B}$), (2) the agents has queue
position exactly $\ell -\ell ^{\prime }$ in the main queue and all agents
ahead of him in the $A$-queue and himself are of $\alpha $-type (which
occurs with probability $(1/\mathbf{M})\times \mu _{\alpha }^{\ell -\ell
^{\prime }}$).\footnote{%
Note that these agents have never been offered any item. Otherwise, they
would have been matched or they would be in a buffer queue. So, wlog, we can
consider that we are drawing their types and positions only at the current
period. Further, we are assuming that each agent believes that the position
he holds in the main queue is a uniform draw over all possible positions.}
This explains the numerator. The denominator is simply the sum over $\ell $
of probabilities in the numerator.

Hence,%
\begin{equation*}
\tilde{\gamma}_{\ell }^{0}=\frac{\mu _{B}\frac{1}{\mathbf{M}}\left( p_{0}\mu
_{\alpha }^{\ell }+\dsum\limits_{\ell ^{\prime }=1}^{\ell }p_{\ell ^{\prime
}}\mu _{\alpha }^{\ell -\ell ^{\prime }}\right) }{\dsum\limits_{\ell
=1}^{K}\dsum\limits_{\ell ^{\prime }=0}^{\ell }p_{\ell ^{\prime }}\mu _{B}%
\frac{1}{\mathbf{M}}\mu _{\alpha }^{\ell -\ell ^{\prime }}}=\frac{\mu _{B}%
\frac{1}{\mathbf{M}}p_{0}\left( \mu _{\alpha }^{\ell }+\dsum\limits_{\ell
^{\prime }=1}^{\ell }\left( \frac{\mu _{\alpha }}{\mu _{A}}\right) ^{\ell
^{\prime }}\mu _{B}\mu _{\alpha }^{\ell -\ell ^{\prime }}\right) }{%
\dsum\limits_{\ell =1}^{K}\dsum\limits_{\ell ^{\prime }=0}^{\ell }p_{\ell
^{\prime }}\mu _{B}\frac{1}{\mathbf{M}}\mu _{\alpha }^{\ell -\ell ^{\prime }}%
}
\end{equation*}%
where the second equality is obtained by Lemma 2 in \cite{leshno2019dynamic}
where it is proved that $p_{\ell }=\mu _{B}\left( \frac{\mu _{\alpha }}{\mu
_{A}}\right) ^{\ell }p_{0}$ for all $\ell =1,...,K_{A}$.

To prove the lemma, we note that for all $\ell =2,...,K_{A}$,
\begin{eqnarray*}
r_{\ell }^{0} &=&\frac{\tilde{\gamma}_{\ell }^{0}}{\tilde{\gamma}_{\ell
-1}^{0}} \\
&=&\frac{\mu _{\alpha }^{\ell }+\dsum\limits_{\ell ^{\prime }=1}^{\ell
}\left( \frac{\mu _{\alpha }}{\mu _{A}}\right) ^{\ell ^{\prime }}\mu _{B}\mu
_{\alpha }^{\ell -\ell ^{\prime }}}{\mu _{\alpha }^{\ell
-1}+\dsum\limits_{\ell ^{\prime }=1}^{\ell -1}\left( \frac{\mu _{\alpha }}{%
\mu _{A}}\right) ^{\ell ^{\prime }}\mu _{B}\mu _{\alpha }^{\ell -\ell
^{\prime }-1}} \\
&=&\frac{\mu _{\alpha }^{\ell }+\left( \frac{\mu _{\alpha }}{\mu _{A}}%
\right) \mu _{B}\mu _{\alpha }^{\ell -1}+\dsum\limits_{\ell ^{\prime
}=2}^{\ell }\left( \frac{\mu _{\alpha }}{\mu _{A}}\right) ^{\ell ^{\prime
}}\mu _{B}\mu _{\alpha }^{\ell -\ell ^{\prime }}}{\mu _{\alpha }^{\ell
-1}+\dsum\limits_{\ell ^{\prime }=1}^{\ell -1}\left( \frac{\mu _{\alpha }}{%
\mu _{A}}\right) ^{\ell ^{\prime }}\mu _{B}\mu _{\alpha }^{\ell -\ell
^{\prime }-1}} \\
&=&\frac{\mu _{\alpha }^{\ell -1}(\mu _{\alpha }+\left( \frac{\mu _{\alpha }%
}{\mu _{A}}\right) \mu _{B})+\dsum\limits_{\ell ^{\prime }=2}^{\ell }\left(
\frac{\mu _{\alpha }}{\mu _{A}}\right) ^{\ell ^{\prime }}\mu _{B}\mu
_{\alpha }^{\ell -\ell ^{\prime }}}{\mu _{\alpha }^{\ell
-1}+\dsum\limits_{\ell ^{\prime }=1}^{\ell -1}\left( \frac{\mu _{\alpha }}{%
\mu _{A}}\right) ^{\ell ^{\prime }}\mu _{B}\mu _{\alpha }^{\ell -\ell
^{\prime }-1}} \\
&=&\frac{\mu _{\alpha }^{\ell -1}\frac{\mu _{\alpha }}{\mu _{A}}(\mu
_{A}+\mu _{B})+\dsum\limits_{\ell ^{\prime }=2}^{\ell }\left( \frac{\mu
_{\alpha }}{\mu _{A}}\right) ^{\ell ^{\prime }}\mu _{B}\mu _{\alpha }^{\ell
-\ell ^{\prime }}}{\mu _{\alpha }^{\ell -1}+\dsum\limits_{\ell ^{\prime
}=1}^{\ell -1}\left( \frac{\mu _{\alpha }}{\mu _{A}}\right) ^{\ell ^{\prime
}}\mu _{B}\mu _{\alpha }^{\ell -\ell ^{\prime }-1}} \\
&=&\frac{\mu _{\alpha }}{\mu _{A}}\frac{\mu _{\alpha }^{\ell
-1}+\dsum\limits_{\ell ^{\prime }=2}^{\ell }\left( \frac{\mu _{\alpha }}{\mu
_{A}}\right) ^{\ell ^{\prime }-1}\mu _{B}\mu _{\alpha }^{\ell -\ell ^{\prime
}}}{\mu _{\alpha }^{\ell -1}+\dsum\limits_{\ell ^{\prime }=1}^{\ell
-1}\left( \frac{\mu _{\alpha }}{\mu _{A}}\right) ^{\ell ^{\prime }}\mu
_{B}\mu _{\alpha }^{\ell -\ell ^{\prime }-1}} \\
&=&\frac{\mu _{\alpha }}{\mu _{A}}\frac{\mu _{\alpha }^{\ell
-1}+\dsum\limits_{\ell ^{\prime }=1}^{\ell -1}\left( \frac{\mu _{\alpha }}{%
\mu _{A}}\right) ^{\ell ^{\prime }}\mu _{B}\mu _{\alpha }^{\ell -\ell
^{\prime }-1}}{\mu _{\alpha }^{\ell -1}+\dsum\limits_{\ell ^{\prime
}=1}^{\ell -1}\left( \frac{\mu _{\alpha }}{\mu _{A}}\right) ^{\ell ^{\prime
}}\mu _{B}\mu _{\alpha }^{\ell -\ell ^{\prime }-1}}=\frac{\mu _{\alpha }}{%
\mu _{A}}\text{.}
\end{eqnarray*}
\end{proof}

Now, we are in a position to prove the first step of our argument, i.e.,
that obedience constraints hold at $t=0$. (Recall that the maximal sizes of
the buffer-queues compatible with the obedience constraints are denoted $%
K_{A}^{\ast }$ and $K_{B}^{\ast }$, for buffer-queues $A$ and $B$
respectively).

\begin{proposition}
Assume that the maximal sizes of buffer-queues are given by $K_{A}^{\ast }$
and $K_{B}^{\ast }$. FCFS with no information satisfies the obedience
constraints at $t=0$.
\end{proposition}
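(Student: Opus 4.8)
The plan is to mimic the proof of $(IC_0)$ for FCFS with no information that was carried out in \Cref{lem:ic0-fcfs}, but adapted to the two-buffer-queue structure of \cite{leshno2019dynamic}'s overloaded model. Concretely, consider an $\alpha$-type agent who is recommended to enter the $A$ buffer-queue (the $B$-queue case is symmetric). His incentive constraint at $t=0$ is $V - C\sum_{\ell=1}^{K_A^*}\tilde\gamma_\ell^0\,\tau_\ell^*\ge 0$, where $\tilde\gamma^0$ is the entry belief on his position and $\tau_\ell^*$ is the FCFS waiting time from position $\ell$. The key simplification available here is \Cref{lem:constant_lr}: under FCFS with no information, the likelihood ratio $r_\ell^0=\tilde\gamma_\ell^0/\tilde\gamma_{\ell-1}^0$ equals the \emph{constant} $\mu_\alpha/\mu_A$ for all $\ell=2,\dots,K_A$. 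Hence $\tilde\gamma^0$ is a (truncated) geometric distribution with ratio $\rho:=\mu_\alpha/\mu_A$, and $\tilde\gamma_\ell^0 = \rho^{\ell-1}(1-\rho)/(1-\rho^{K_A^*})$ when $\rho<1$ (and uniform when $\rho=1$).

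First I would record the FCFS waiting-time formula in this discrete-period matching setting: an agent in position $\ell$ of the $A$ buffer-queue is served exactly when the $\ell$-th subsequent $B$-item arrives, so $\tau_\ell^* = \ell/\mu_B$ (the analogue of $\tau_\ell^*=\ell/\mu_\ell$ from \Cref{lem:waiting-time-FCFS} with constant service rate $\mu_B$). In particular $\tau_\ell^*$ is increasing and linear in $\ell$. Substituting, the expected waiting time conditional on entry is $\sum_\ell \tilde\gamma_\ell^0 \tau_\ell^* = (1/\mu_B)\,\mathbb{E}_{\tilde\gamma^0}[\ell]$, a closed-form expression in $\rho$ and $K_A^*$. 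Second, I would invoke \cite{leshno2019dynamic}'s characterization of the maximal incentive-compatible buffer size $K_A^*$: in their full-information setup, incentive compatibility requires that even the \emph{worst-positioned} agent (position $K_A^*$) be willing to wait, i.e. $V - C\,K_A^*/\mu_B \ge 0$, and $K_A^*$ is the largest integer with this property. The crucial observation is then that under no information the relevant constraint is the weaker \emph{average} one: $\sum_\ell \tilde\gamma_\ell^0 \tau_\ell^* \le \tau_{K_A^*}^* = K_A^*/\mu_B$, because $\tilde\gamma^0$ is supported on $\{1,\dots,K_A^*\}$ and $\tau_\ell^*$ is nondecreasing in $\ell$. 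Combining, $V - C\sum_\ell \tilde\gamma_\ell^0\tau_\ell^* \ge V - C\,K_A^*/\mu_B \ge 0$, which is exactly $(IC_0)$ for the $A$-queue; the $B$-queue is handled identically with $\rho$ replaced by $\mu_\beta/\mu_B$ and $\mu_B$ by $\mu_A$.

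The main obstacle I anticipate is pinning down precisely what \cite{leshno2019dynamic}'s $K_A^*$, $K_B^*$ are and verifying that the full-information incentive bound they characterize really does reduce to $V \ge C\,K_A^*/\mu_B$ in the notation of this section — i.e. that the ``worst-position'' agent's constraint is the binding one under full information and that it takes this linear form. This requires carefully importing their Lemma/Proposition on optimal buffer sizes and translating between their parametrization and ours (arrival probabilities $\mu_A,\mu_B$ of items versus type probabilities $\mu_\alpha,\mu_\beta$ of agents, and the fact that one buffer-queue is always empty). Once that translation is in place, the argument is essentially the one-line domination $\text{average waiting time} \le \text{worst-case waiting time}$ together with the geometric-belief formula from \Cref{lem:constant_lr}, so no delicate estimates are needed; it is purely a matter of bookkeeping to align the two models' conventions. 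A secondary, very minor point is to handle the boundary case $\rho \ge 1$ (equivalently $\mu_\alpha \ge \mu_A$), where the truncated-geometric formula degenerates but the domination inequality $\sum_\ell \tilde\gamma_\ell^0\tau_\ell^* \le K_A^*/\mu_B$ still holds trivially since the support is contained in $\{1,\dots,K_A^*\}$.
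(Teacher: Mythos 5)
There is a substantive gap, not just a bookkeeping issue, at exactly the step you flagged as the ``main obstacle.'' You assume that Leshno's $K_A^*$ is the largest integer satisfying the \emph{worst-case FCFS full-information} constraint $V - C\,K_A^*/\mu_A \ge 0$ (I correct your $\mu_B$ to $\mu_A$ below). That is false. Leshno's $K_A^*$ is the maximal buffer size achievable by the \emph{best} incentive-compatible mechanism (LIEW, not FCFS), which is strictly larger than the FCFS full-information cutoff. Concretely, the paper records that when $\mu_\alpha=\mu_A$, $K_A^*=\lfloor 2\mu_A V/C\rfloor-1$, roughly \emph{twice} $\mu_A V/C$. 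So your second inequality
\[
V - C\sum_{\ell=1}^{K_A^*}\tilde\gamma_\ell^0\,\tau_\ell^* \;\ge\; V - C\,\frac{K_A^*}{\mu_A}\;\ge\;0
\]
fails at the last step: with $K_A^*\approx 2\mu_A V/C-1$, one has $V-C K_A^*/\mu_A \approx -V + C/\mu_A < 0$ for any parametrization with $V>C/\mu_A$. The worst-case bound ``average waiting time $\le$ longest waiting time'' is simply too lossy here, because the whole point of the result is that FCFS with no information can support a buffer roughly twice as long as FCFS with full information could. What is true, and what the paper's proof uses, is that the \emph{average} waiting time under the geometric entry belief from \Cref{lem:constant_lr} exactly matches the equalized waiting time under LIEW, and Leshno's $K_A^*$ formula is precisely the largest $K$ for which that average respects the participation bound. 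So the correct argument is the direct computation: plug $\tilde\gamma_\ell^0 \propto (\mu_\alpha/\mu_A)^{\ell-1}$ and $\tau_\ell^*=\ell/\mu_A$ into $\sum_\ell\tilde\gamma_\ell^0\tau_\ell^*$, get a closed form, and check it against the definition of $K_A^*$ (yielding $V - \tfrac{C}{\mu_A}\tfrac{K_A^*+1}{2}\ge 0$ in the $\mu_\alpha=\mu_A$ case, and the corresponding expression with $\tfrac{\mu_A}{\mu_A-\mu_\alpha}$ and $\left(\tfrac{\mu_\alpha}{\mu_A}\right)^{K_A^*}$ terms when $\mu_\alpha\ne\mu_A$). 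There is no shortcut via the worst-position agent.

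A secondary error: $\tau_\ell^*=\ell/\mu_A$, not $\ell/\mu_B$. Agents in the $A$ buffer-queue are $\alpha$-type agents who have refused a $B$-item and are waiting for the next $\ell$ arrivals of $A$-items; the relevant arrival rate is $\mu_A$. This sign flip propagates through your formulas but is easily fixed; the gap above is the one that breaks the proof.
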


\begin{proof}
In the sequel, we simply focus on the obedience constraint at $t=0$ for
agents recommended to join the $A$ buffer-queue. A symmetric argument holds
for the other obedience constraint. The maximal sizes of the buffer-queues
satisfying the obedience constraints are identified in \cite%
{leshno2019dynamic}. In case $\mu _{\alpha }=\mu _{A}$, it is equal to $%
\left\lfloor 2\mu _{A}\frac{V}{C}\right\rfloor -1\equiv K_{A}^{\ast }$ for
the $A$ buffer-queue. In case $\mu _{\alpha }\neq \mu _{A}$, $K_{A}^{\ast }$
is equal to $\sup \{K\in \mathbb{Z}_+\mid K+\frac{\mu _{A}}{\mu _{A}-\mu
_{\alpha }}+\frac{K}{\left( \frac{\mu _{\alpha }}{\mu _{A}}\right) ^{K}-1}%
\leq \frac{V}{C}\mu _{A}\}$.

First, it is clear that $\tau _{\ell }^{\ast }=\ell /\mu _{A}$. Indeed, an
agent in position $\ell $ in the $A$-buffer-queue will have to wait the
arrival of $\ell $ items $A$ to get matched. Since, at each date, the
likelihood that an item $A$ arrives is $\mu _{A}$, the expected waiting time
for this agent must be $\ell /\mu _{A}$.

\underline{Case 1: $\mu _{\alpha }=\mu _{A}$.} \Cref{lem:constant_lr}
implies that $\tilde{\gamma}_{\ell }^{0}=\frac{1}{K_{A}^{\ast }}$ for all $%
\ell =1,...,K_{A}^{\ast }$. Hence, we can rewrite the obedience constraint
at $t=0$ as follows%
\begin{eqnarray*}
V-C\sum_{\ell =1}^{K_{A}^{\ast }}\tilde{\gamma}_{\ell }^{0}\tau _{\ell
}^{\ast } &=&V-C\frac{1}{K_{A}^{\ast }}\sum_{\ell =1}^{K_{A}^{\ast }}\frac{%
\ell }{\mu _{A}} \\
&=&V-\frac{1}{\mu _{A}}C\frac{K_{A}^{\ast }+1}{2}\geq 0\text{.}
\end{eqnarray*}%
Note that this inequality holds true if and only if $K_{A}^{\ast }\leq 2%
\frac{V}{C}\mu _{A}-1$. Since $K_{A}^{\ast }$ is an integer, this is
equivalent to $K_{A}^{\ast }\leq \left\lfloor 2\mu _{A}\frac{V}{C}%
\right\rfloor -1$ which holds by definition of $K_{A}^{\ast }$.

\underline{Case 2: $\mu _{\alpha }\neq \mu _{A}$.} \Cref%
{lem:constant_lr} implies that $\tilde{\gamma}_{\ell }^{0}=\tilde{\gamma}%
_{1}^{0}\left( \frac{\mu _{\alpha }}{\mu _{A}}\right) ^{\ell -1}$\ for all $%
\ell =1,...,K_{A}^{\ast }$ and $\tilde{\gamma}_{1}^{0}=\frac{1}{\sum_{\ell
=1}^{K_{A}^{\ast }}\left( \frac{\mu _{\alpha }}{\mu _{A}}\right) ^{\ell -1}}$%
. Hence, we can rewrite the obedience constraint at $t=0$ as follows%
\begin{eqnarray*}
V-C\sum_{\ell =1}^{K_{A}^{\ast }}\tilde{\gamma}_{\ell }^{0}\tau _{\ell
}^{\ast } &=&V-C\frac{1}{\mu _{A}}\frac{\sum_{\ell =1}^{K_{A}^{\ast }}\left(
\frac{\mu _{\alpha }}{\mu _{A}}\right) ^{\ell -1}\ell }{\sum_{\ell
=1}^{K_{A}^{\ast }}\left( \frac{\mu _{\alpha }}{\mu _{A}}\right) ^{\ell -1}}
\\
&=&V-C\frac{1}{\mu _{A}}\frac{\sum_{\ell =1}^{K_{A}^{\ast }}\left( \frac{\mu
_{\alpha }}{\mu _{A}}\right) ^{\ell }\ell }{\sum_{\ell =1}^{K_{A}^{\ast
}}\left( \frac{\mu _{\alpha }}{\mu _{A}}\right) ^{\ell }} \\
&=&1-C\frac{1}{\mu _{A}}\frac{\mu _{A}}{\mu _{A}-\mu _{\alpha }}\frac{\mu
_{\alpha }}{\mu _{A}}\frac{1-(K_{A}^{\ast }+1)\left( \frac{\mu _{\alpha }}{%
\mu _{A}}\right) ^{K_{A}^{\ast }}+K_{A}^{\ast }\left( \frac{\mu _{\alpha }}{%
\mu _{A}}\right) ^{K_{A}^{\ast }+1}}{\left( \frac{\mu _{\alpha }}{\mu _{A}}%
\right) -\left( \frac{\mu _{\alpha }}{\mu _{A}}\right) ^{K_{A}^{\ast }+1}} \\
&=&V-C\frac{1}{\mu _{A}}[K_{A}^{\ast }+\frac{\mu _{A}}{\mu _{A}-\mu _{\alpha
}}+\frac{K_{A}^{\ast }}{\left( \frac{\mu _{\alpha }}{\mu _{A}}\right)
^{K_{A}^{\ast }}-1}]\geq 0
\end{eqnarray*}%
where the third equality uses basic properties of power series. Note that
this inequality holds true since $K_{A}^{\ast }=\sup \{K\in \mathbb{Z}_+\mid K+%
\frac{\mu _{A}}{\mu _{A}-\mu _{\alpha }}+\frac{K}{\left( \frac{\mu _{\alpha }%
}{\mu _{A}}\right) ^{K}-1}\leq \frac{V}{C}\mu _{A}\}$ and since, as proved
in \cite{leshno2019dynamic}, the function $K\mapsto K+\frac{\mu _{A}}{\mu
_{A}-\mu _{\alpha }}+\frac{K}{\left( \frac{\mu _{\alpha }}{\mu _{A}}\right)
^{K}-1}$ is monotonically increasing in $K$ and goes to infinity when $K$
grows large.

\end{proof}

Finally, we need to show that the obedience constraints hold for all $t>0$.
In order to do so, we simply show that the likelihood ratio decreases over
time.

\begin{proposition}
$r_{\ell }^{t}\leq r_{\ell }^{0}$ for all $t\geq 0$ for all $\ell \in
\{2,...,K_{A}\}$.
\end{proposition}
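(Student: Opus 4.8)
The plan is to mirror the argument behind \Cref{lem:dyn-incentives}, now in discrete time and exploiting the extra structure supplied by \Cref{lem:constant_lr}. Fix an agent recommended to enter the $A$ buffer-queue at date $0$ under FCFS with no information, let $\tilde{\gamma}^{t}_{\ell}$ denote his belief at date $t$ of occupying position $\ell$ conditional on not yet being assigned, and adopt the convention $\tilde{\gamma}^{t}_{K_{A}+1}\equiv 0$. Since exactly one item arrives per period, our agent's position can change only when an $A$-item arrives (probability $\mu_{A}$), in which case it drops by one (or he is assigned, if he is at position $1$); when a $B$-item arrives (probability $\mu_{B}$) his position is unchanged, irrespective of whether new agents join the queue behind him. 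Conditioning on ``not yet assigned,'' the same bookkeeping as in \Cref{online_app:infty} gives, for $\ell=1,\dots,K_{A}$,
\begin{equation*}
\tilde{\gamma}^{t+1}_{\ell}=\frac{\mu_{B}\,\tilde{\gamma}^{t}_{\ell}+\mu_{A}\,\tilde{\gamma}^{t}_{\ell+1}}{1-\mu_{A}\tilde{\gamma}^{t}_{1}}.
\end{equation*}
Positivity of $\tilde{\gamma}^{0}_{\ell}$ for $\ell\le K_{A}$ (immediate from \Cref{lem:constant_lr}) propagates through this recursion, so the likelihood ratios $r^{t}_{\ell}=\tilde{\gamma}^{t}_{\ell}/\tilde{\gamma}^{t}_{\ell-1}$ are well-defined; dividing numerator and denominator of $r^{t+1}_{\ell}$ by $\tilde{\gamma}^{t}_{\ell-1}$ and noting that the normalization cancels yields the clean recursion
\begin{equation*}
r^{t+1}_{\ell}=r^{t}_{\ell}\cdot\frac{\mu_{B}+\mu_{A}\,r^{t}_{\ell+1}}{\mu_{B}+\mu_{A}\,r^{t}_{\ell}},\qquad \ell=2,\dots,K_{A},
\end{equation*}
with $r^{t}_{K_{A}+1}=0$.

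The key observation is that monotonicity of $\ell\mapsto r^{t}_{\ell}$ is self-propagating and implies the claim. I would prove by induction on $t$ that the vector $(r^{t}_{2},\dots,r^{t}_{K_{A}},r^{t}_{K_{A}+1})$ is nonincreasing in $\ell$. The base case $t=0$ is exactly \Cref{lem:constant_lr}: $r^{0}_{\ell}=\mu_{\alpha}/\mu_{A}$ for $\ell=2,\dots,K_{A}$ while $r^{0}_{K_{A}+1}=0$. For the inductive step, fix $\ell$ and set $a=r^{t}_{\ell}\ge b=r^{t}_{\ell+1}\ge c=r^{t}_{\ell+2}\ge 0$ (using the hypothesis, with $c=0$ when $\ell+1=K_{A}$). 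Clearing the positive denominators, $r^{t+1}_{\ell}\ge r^{t+1}_{\ell+1}$ is equivalent to $a(\mu_{B}+\mu_{A}b)^{2}\ge b(\mu_{B}+\mu_{A}c)(\mu_{B}+\mu_{A}a)$; since $c\le b$ it suffices to prove $a(\mu_{B}+\mu_{A}b)^{2}\ge b(\mu_{B}+\mu_{A}b)(\mu_{B}+\mu_{A}a)$, which after cancelling $\mu_{B}+\mu_{A}b>0$ simplifies to $a\mu_{B}\ge b\mu_{B}$, i.e.\ $a\ge b$. Hence $(r^{t+1}_{\ell})_{\ell}$ is again nonincreasing, closing the induction.

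The claim now follows: for every $\ell\in\{2,\dots,K_{A}\}$ and every $t$ we have $r^{t}_{\ell+1}\le r^{t}_{\ell}$, so the factor $\tfrac{\mu_{B}+\mu_{A}r^{t}_{\ell+1}}{\mu_{B}+\mu_{A}r^{t}_{\ell}}$ in the ratio recursion is at most $1$, giving $r^{t+1}_{\ell}\le r^{t}_{\ell}$; iterating yields $r^{t}_{\ell}\le r^{0}_{\ell}$ for all $t\ge0$. (Combined with $\tau^{*}_{\ell}=\ell/\mu_{A}$ being increasing in $\ell$, this is precisely what pushes the $A$-queue obedience constraint from $t=0$ to all $t>0$, exactly as in the proof of \Cref{thm:dyn-fcfs}.) The main obstacle is the self-propagation step: unlike in the continuous-time argument of the main text, one cannot run a single ``first time monotonicity fails'' contradiction, so the monotonicity-in-$\ell$ property must be carried explicitly as an invariant through the induction, and the delicate point is the order of the two comparisons---first use $c\le b$ to dominate the right-hand side, only then cancel the common factor and invoke $a\ge b$---together with the correct handling of the boundary term $r^{t}_{K_{A}+1}=0$.
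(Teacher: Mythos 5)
Your proof is correct, and it derives the same one-step likelihood-ratio recursion as the paper's own argument; the difference lies in how the induction is closed. The paper observes that the map $r^{t}\mapsto r^{t+1}$ defined by the recursion is increasing in the componentwise (product) order, verifies directly from \Cref{lem:constant_lr} that $r^{1}_{\ell}\le r^{0}_{\ell}$ for every $\ell$ (which boils down to $r^{0}_{\ell+1}\le r^{0}_{\ell}$, immediate since the initial likelihood ratio is constant in $\ell$ with a drop to $0$ at the boundary), and then gets $r^{t+1}\le r^{t}\le\cdots\le r^{0}$ componentwise by composing the monotone map. You instead carry the shape constraint ``$\ell\mapsto r^{t}_{\ell}$ nonincreasing'' as an explicit invariant, prove that it propagates through one step of the recursion via the $a\ge b\ge c$ comparison, and then read off $r^{t+1}_{\ell}\le r^{t}_{\ell}$ pointwise from the fact that the multiplicative factor $(\mu_{B}+\mu_{A}r^{t}_{\ell+1})/(\mu_{B}+\mu_{A}r^{t}_{\ell})$ is at most $1$. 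Both routes ultimately rest on the same two facts and both handle the boundary term $r^{t}_{K_{A}+1}=0$ correctly; the paper's monotone-iteration argument is marginally shorter since it only needs the $t=0\to1$ comparison plus a one-line observation about the map, whereas your version trades that for a bit more algebra in the inductive step but makes fully explicit which structural property of the belief profile the dynamics preserve, without appealing to monotone-operator folklore.
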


\begin{proof}
Note that one can write $\{\tilde{\gamma}_{\ell }^{t+1}\}_{\ell }$ as a
function of $\{\tilde{\gamma}_{\ell }^{t}\}_{\ell }$ as follows%
\begin{equation*}
\tilde{\gamma}_{\ell }^{t+1}=\frac{\tilde{\gamma}_{\ell }^{t}\mu _{B}+\tilde{%
\gamma}_{\ell +1}^{t}\mu _{A}}{\tilde{\gamma}_{1}^{t}\mu _{B}+\sum_{i=2}^{K}%
\tilde{\gamma}_{i}^{t}}
\end{equation*}%
for $\ell =1,...,K_{A}$ where we recall that $\tilde{\gamma}_{K_{A}+1}^{t}=0$%
. Indeed, the numerator is the probability that the agent's queue position
is $\ell $ after staying in the queue for $t+1$ periods. This event occurs
either if (1) the agent has already $\ell -1$ agents ahead of him in the
queue at time $t$ and none of them as well as himself are served at time $t$%
; or (2) if there are $\ell $ agents ahead of him at $t$ and at least one
agent ahead of him is served at $t$. The denominator in turn gives the
probability that the agent has not been served by time $t+1$. Hence, given
that an agent has not been served when period $t+1$ starts, the above
expression gives the conditional belief that his position in the queue is $%
\ell $ when period $t+1$ starts.

Thus, we obtain%
\begin{eqnarray*}
r_{\ell }^{t+1} &=&\frac{\tilde{\gamma}_{\ell }^{t+1}}{\tilde{\gamma}_{\ell
-1}^{t+1}} \\
&=&\frac{\tilde{\gamma}_{\ell }^{t}\mu _{B}+\tilde{\gamma}_{\ell +1}^{t}\mu
_{A}}{\tilde{\gamma}_{\ell -1}^{t}\mu _{B}+\tilde{\gamma}_{\ell }^{t}\mu _{A}%
} \\
&=&\frac{\mu _{B}+r_{\ell +1}^{t}\mu _{A}}{\frac{1}{r_{\ell }^{t}}\mu
_{B}+\mu _{A}}
\end{eqnarray*}%
for $\ell =2,...,K_{A}$ where we recall that $r_{K_{A}+1}^{t}=0$. Hence, we
have a mapping from $r^{t}\equiv \{r_{\ell }^{t}\}_{\ell }$ to $%
r^{t+1}\equiv \{r_{\ell }^{t+1}\}_{\ell }$. Clearly, this mapping is
increasing (in the product order). Thus, if we can show that $r_{\ell
}^{1}\leq r_{\ell }^{0}$ for $\ell =2,...,K_{A}$, then the sequence $%
\{r_{\ell }^{t}\}_{t}$ will be decreasing for each $\ell =2,...,K_{A}$, and
so the proof will be complete.

Fix any $\ell =2,...,K_{A}$, we want to show that $r_{\ell }^{1}\leq r_{\ell
}^{0}$, i.e.,
\begin{equation*}
\frac{\mu _{B}+r_{\ell +1}^{0}\mu _{A}}{\frac{1}{r_{\ell }^{0}}\mu _{B}+\mu
_{A}}\leq r_{\ell }^{0}
\end{equation*}%
which in turn is equivalent to
\begin{equation*}
r_{\ell +1}^{0}\leq r_{\ell }^{0}\text{.}
\end{equation*}%
The above holds true by \Cref{lem:constant_lr} (we recall that $%
r_{K_{A}+1}^{0}=0$). \end{proof}


\end{document}